\documentclass[11pt,letterpaper]{article}
\usepackage[lmargin=1.0in,rmargin=1.0in,bottom=1.0in,top=1.0in,twoside=False]{geometry}

\usepackage{fullpage,amssymb,amsmath}
\usepackage{graphicx}
\usepackage{tikz}
\usetikzlibrary{shapes}

\usepackage{skull}
\usepackage[T1]{fontenc}

 \usepackage{xcolor}
 \usepackage{mathtools}
\usepackage{microtype}
\usepackage{amsfonts}
\usepackage{comment}
\usepackage[english]{babel}
\usepackage{mathrsfs}
\usepackage[vlined, ruled, linesnumbered]{algorithm2e}
\DontPrintSemicolon

\usepackage{trimspaces}
\usepackage{nccfoots}
\usepackage{setspace}
\usepackage{inconsolata}
\usepackage{libertine}
\usepackage[absolute]{textpos}
\usepackage{thmtools}
\usepackage{thm-restate}

\usepackage{xspace}
\usepackage{enumitem}

\usepackage{todonotes}

\usepackage{longtable}

\definecolor{blue}{rgb}{0.1,0.2,0.5}
\definecolor{brown}{rgb}{0.6,0.6,0.2}
\usepackage[ocgcolorlinks, linkcolor={blue}, citecolor={brown}]{hyperref}

\usepackage{comment}

\usepackage[amsmath,thmmarks,hyperref]{ntheorem}
\usepackage{cleveref}

\crefformat{page}{#2page~#1#3}%
\Crefformat{page}{#2Page~#1#3}%
\crefformat{equation}{#2(#1)#3}%
\Crefformat{equation}{#2(#1)#3}%
\crefformat{figure}{#2Figure~#1#3}%
\Crefformat{figure}{#2Figure~#1#3}%
\crefformat{section}{#2Section~#1#3}
\Crefformat{section}{#2Section~#1#3}
\crefformat{chapter}{#2Chapter~#1#3}
\Crefformat{chapter}{#2Chapter~#1#3}
\crefformat{chapter*}{#2Chapter~#1#3}
\Crefformat{chapter*}{#2Chapter~#1#3}
\crefformat{part}{#2Part~#1#3}
\Crefformat{part}{#2Part~#1#3}
\crefformat{enumi}{#2(#1)#3}
\Crefformat{enumi}{#2(#1)#3}

\usepackage{latexsym}


\theoremnumbering{arabic}
\theoremstyle{plain}
\theoremsymbol{}
\theorembodyfont{\itshape}
\theoremheaderfont{\normalfont\bfseries}
\theoremseparator{.}

\newtheorem{theorem}{Theorem}
\crefformat{theorem}{#2Theorem~#1#3}
\Crefformat{theorem}{#2Theorem~#1#3}

\newcommand{\newtheoremwithcrefformat}[2]{%
  \newtheorem{#1}[theorem]{#2}%
  \crefformat{#1}{##2\MakeUppercase#1~##1##3}%
  \Crefformat{#1}{##2\MakeUppercase#1~##1##3}%
}
\newcommand{\newseptheoremwithcrefformat}[2]{%
  \newtheorem{#1}{#2}%
  \crefformat{#1}{##2\MakeUppercase#1~##1##3}%
  \Crefformat{#1}{##2\MakeUppercase#1~##1##3}%
}

\newtheoremwithcrefformat{lemma}{Lemma}
\newtheoremwithcrefformat{proposition}{Proposition}
\newtheoremwithcrefformat{observation}{Observation}
\newtheoremwithcrefformat{conjecture}{Conjecture}
\newtheoremwithcrefformat{corollary}{Corollary}
\newseptheoremwithcrefformat{claim}{Claim}
\theorembodyfont{\upshape}
\newtheoremwithcrefformat{example}{Example}
\newtheoremwithcrefformat{remark}{Remark}
\newseptheoremwithcrefformat{definition}{Definition}
\newseptheoremwithcrefformat{question}{Question}

\theoremstyle{nonumberplain}
\theoremheaderfont{\scshape}
\theorembodyfont{\normalfont}
\theoremsymbol{\ensuremath{\square}}
\newtheorem{proof}{Proof}

\theoremsymbol{\ensuremath{\lrcorner}}
\newtheorem{claimproof}{Proof of Claim}

\def\cqedsymbol{\ifmmode$\lrcorner$\else{\unskip\nobreak\hfil
\penalty50\hskip1em\null\nobreak\hfil$\lrcorner$
\parfillskip=0pt\finalhyphendemerits=0\endgraf}\fi}

\tikzset{
    position/.style args={#1:#2 from #3}{
        at=(#3.#1), anchor=#1+180, shift=(#1:#2)
    }
}


\newcommand{\Bc}{\mathcal{B}}

\newcommand{\Cc}{\mathcal{C}}

\newcommand{\Lc}{\mathcal{L}}

\newcommand{\Fc}{\mathcal{F}}
\newcommand{\Rc}{\mathcal{R}}
\newcommand{\Wc}{\mathcal{W}}
\newcommand{\dist}{\mathsf{dist}}

\newcommand{\Oh}{\mathcal{O}}
\newcommand{\eps}{\epsilon}
\renewcommand{\epsilon}{\varepsilon}



\let\originalleft\left
\let\originalright\right
\renewcommand{\left}{\mathopen{}\mathclose\bgroup\originalleft}
\renewcommand{\right}{\aftergroup\egroup\originalright}

\renewcommand{\leq}{\leqslant}
\renewcommand{\geq}{\geqslant}

\renewcommand{\setminus}{-}

\newcommand{\MIP}{\textsc{Maximum Induced Packing}\xspace}
\newcommand{\MIF}{\textsc{Max Induced Forest}\xspace}
\newcommand{\MIS}{\textsc{MWIS}\xspace}

\newcommand{\mso}{$\mathsf{MSO}$\xspace}
\newcommand{\msotwo}{$\mathsf{MSO}_2$\xspace}
\newcommand{\cmsotwo}{$\mathsf{CMSO}_2$\xspace}
\newcommand{\cpmsotwo}{$\mathsf{C}_{\leq p}\mathsf{MSO}_2$\xspace}

\newcommand{\Bb}{\mathcal{B}}

\newcommand{\inc}{\mathrm{inc}}
\newcommand{\wh}[1]{\widehat{#1}}
\newcommand{\forked}[1]{\Breve{#1}}

\newcommand{\Blob}[1]{#1^\circ}

\newcommand{\VtxTaken}{A}
\newcommand{\VtxDeleted}{X}
\newcommand{\VtxFree}{R}

\newcommand{\rcall}{\mathcal{R}}
\newcommand{\qcall}{\mathcal{Q}}
\newcommand{\VtxActive}{W}
\newcommand{\rlevel}{\ell}
\newcommand{\pivot}{\nu}
\newcommand{\quota}{\gamma}
\newcommand{\degord}{\eta}
\newcommand{\poslimit}{\zeta}

\newcommand{\wei}{\mathfrak{w}}

\newcommand{\reach}{\mathrm{set}}
\newcommand{\tw}{\operatorname{tw}}

\newcommand{\Sentences}{\mathsf{Sentences}}
\newcommand{\msotypes}{\mathsf{Types}}
\newcommand{\msotype}{\mathsf{type}}

\newcommand{\typetree}{\mathsf{TypeTree}}
\newcommand{\cc}{\mathrm{cc}}
\newcommand{\extS}{\mathsf{extS}}
\newcommand{\branchpart}{\mathcal{B}}
\newcommand{\dom}{\mathrm{dom}}

\newcommand{\Stab}{\mathsf{S}}

\renewcommand{\phi}{\varphi}

\usepackage{todonotes}

\newcommand{\prz}[2][]{}
\newcommand{\map}[2][]{}
\newcommand{\mip}[2][]{}

\usepackage{lineno}

\begin{document}

\title{Finding large induced sparse subgraphs in $C_{>t}$-free graphs in quasipolynomial time}

\author{
Peter Gartland\thanks{University of California, Santa Barbara, USA, \texttt{petergartland@ucsb.edu}.}
\and
Daniel Lokshtanov\thanks{University of California, Santa Barbara, USA, \texttt{daniello@ucsb.edu}.}
\and
Marcin{}~Pilipczuk\thanks{
 Institute of Informatics, University of Warsaw, Poland, \texttt{malcin@mimuw.edu.pl}.
 This work is 
a part of project CUTACOMBS that has received funding from the European Research Council (ERC) 
under the European Union's Horizon 2020 research and innovation programme (grant agreement No.~714704).
}
\and
Micha\l{}~Pilipczuk\thanks{
 Institute of Informatics, University of Warsaw, Poland, \texttt{michal.pilipczuk@mimuw.edu.pl}.
 This work is 
a part of project TOTAL that has received funding from the European Research Council (ERC) 
under the European Union's Horizon 2020 research and innovation programme (grant agreement No.~677651).
}
\and
Pawe\l{} Rz\k{a}\.{z}ewski\thanks{
 Faculty of Mathematics and Information Science, Warsaw University of Technology, Poland, and Institute of
Informatics, University of Warsaw, Poland, \texttt{p.rzazewski@mini.pw.edu.pl}.
Supported by Polish National Science Centre grant no. 2018/31/D/ST6/00062.
}
}

\begin{titlepage}
\def\thepage{}
\thispagestyle{empty}
\maketitle

\begin{textblock}{20}(0, 11.7)
\includegraphics[width=40px]{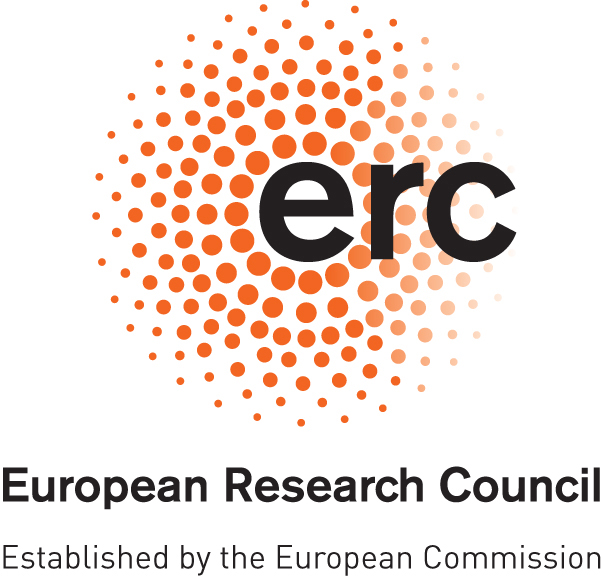}%
\end{textblock}
\begin{textblock}{20}(-0.25, 12.1)
\includegraphics[width=60px]{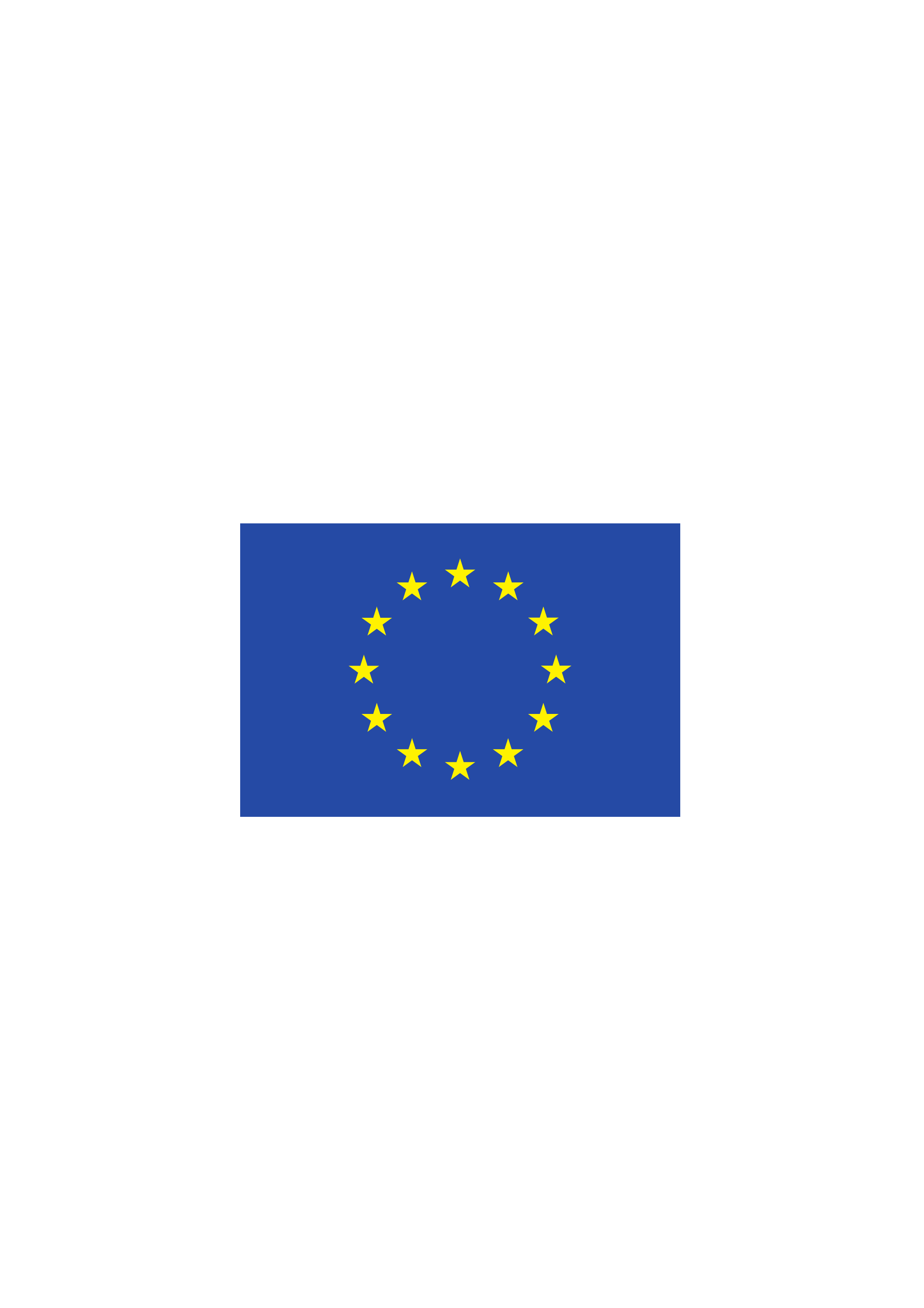}%
\end{textblock}
\begin{abstract}
For an integer $t$, a graph $G$ is called {\em{$C_{>t}$-free}} if $G$ does not contain any induced cycle on more than~$t$ vertices. We prove the following statement: for every pair of integers $d$ and $t$ and a \cmsotwo statement~$\phi$, there exists an algorithm that, given an $n$-vertex $C_{>t}$-free graph $G$ with weights on vertices, finds in time $n^{\Oh(\log^3 n)}$ a maximum-weight vertex subset $S$ such that $G[S]$ has degeneracy at most $d$ and satisfies $\phi$. The running time can be improved to $n^{\Oh(\log^2 n)}$ assuming $G$ is $P_t$-free, that is, $G$ does not contain an induced path on $t$ vertices. This expands the recent results of the authors [to appear at FOCS 2020 and SOSA 2021] on the {\sc{Maximum Weight Independent Set}} problem on $P_t$-free graphs in two directions: by encompassing the more general setting of $C_{>t}$-free graphs, and by being applicable to a much wider variety of problems, such as {\sc{Maximum Weight Induced Forest}} or {\sc{Maximum Weight Induced Planar Graph}}.
\end{abstract}

\end{titlepage}

\section{Introduction}\label{sec:intro}
Consider the {\sc{Maximum Weight Independent Set}} ({\sc{MWIS}}) problem: given a vertex-weighted graph $G$, find an independent set in $G$ that has the largest possible weight. While ${\mathsf{NP}}$-hard in general, the problem becomes more tractable when structural restrictions are imposed on the input graph $G$. In this work we consider restricting $G$ to come from a fixed {\em{hereditary}} (closed under taking induced subgraphs) class $\Cc$. The goal is to understand how the complexity of {\sc{MWIS}}, and of related problems, changes with the class $\Cc$. A concrete instance of this question is to consider {\em{$H$-free graphs}} --- graphs that exclude a fixed graph $H$ as an induced subgraph --- and classify for which $H$, {\sc{MWIS}} becomes polynomial-time solvable in $H$-free graphs.

Somewhat surprisingly, we still do not know the complete answer to this question. A classic argument of Alekseev~\cite{Alekseev82} shows that {\sc{MWIS}} is ${\mathsf{NP}}$-hard in $H$-free graphs, unless $H$ is a forest of paths and {\em{subdivided claws}}: graphs obtained from the claw $K_{1,3}$ by subdividing each of its edges an arbitrary number of times. The remaining cases are still open apart from several small ones: of $P_5$-free graphs~\cite{LVV}, $P_6$-free graphs~\cite{GrzesikKPP19}, claw-free graphs~\cite{Sbihi80,Minty80}, and fork-free graphs~\cite{Al04,LozinM08}. Here and further on, $P_t$ denotes a path on~$t$~vertices.

On the other hand, there are multiple indications that {\sc{MWIS}} indeed has a much lower complexity in $H$-free graphs, whenever $H$ is a forest of paths and subdivided claws, than in general graphs. Concretely, in this setting the problem is known to admit both a subexponential-time algorithm~\cite{BacsoLMPTL19,qptas-arxiv} 
and a QPTAS~\cite{ChudnovskyPPT20,qptas-arxiv}; note that the existence of such algorithms for general graphs is excluded under standard complexity assumptions. Very recently, the first two authors gave a {\em{quasipolynomial-time}} algorithm for {\sc{MWIS}} in $P_t$-free graphs, for every fixed $t$~\cite{GL20}. The running time was $n^{\Oh(\log^3 n)}$, which was subsequently improved to $n^{\Oh(\log^2 n)}$ by the last three authors~\cite{PPR20SOSA}.

A key fact that underlies most of the results stated above is that $P_t$-free graphs admit the following balanced separator theorem (see \cref{thm:Pt-free-balanced}): In every $P_t$-free graph, we can find a connected set $X$ consisting of at most $t$ vertices, such that the number of vertices in every connected component of $G-N[X]$ is at most half of the number of vertices of $G$. It has been observed by Chudnovsky et al.~\cite{ChudnovskyPPT20} that the same statement is true also in the class of {\em{$C_{>t}$-free graphs}}: graphs that do not contain an induced cycle on more than $t$ vertices. Note here that, on one hand, every $P_t$-free graph is $C_{>t}$-free as well, and, on the other hand, $C_{>t}$-free graphs generalize the well-studied class of {\em{chordal graphs}}, which are exactly $C_{>3}$-free. Using the separator theorem, Chudnovsky et al.~\cite{ChudnovskyPPT20,qptas-arxiv} gave a subexponential-time algorithm and a QPTAS for {\sc{MWIS}} on $C_{>t}$-free graphs, for every fixed $t$.

The basic toolbox developed for {\sc{MWIS}} can also be  applied to other problems of similar nature. Consider, for instance, the {\sc{Maximum Weight Induced Forest}} problem: in a given vertex-weighted graph $G$, find a maximum-weight vertex subset that induces a forest; note that by duality, this problem is equivalent to {\sc{Feedback Vertex Set}}. By lifting techniques used to solve {\sc{MWIS}} in polynomial time in $P_5$-free and $P_6$-free graphs~\cite{LVV,GrzesikKPP19}, Abrishami et al.~\cite{ACPRzS} showed that {\sc{Maximum Weight Induced Forest}} is polynomial-time solvable both in $P_5$-free and in $C_{>4}$-free graphs. In fact, the result is even more general: it applies to every problem of the form ``find a maximum-weight induced subgraph of treewidth at most $k$''; {\sc{MWIS}} and {\sc{Maximum Weight Induced Forest}} are particular instantiations for $k=0$ and~$k=1$,~respectively. 

As far as subexponential-time algorithms are concerned, Novotn\'a et al.~\cite{NovotnaOPRLW19} showed how to use separator theorems to get subexponential-time algorithms for any problem of the form ``find the largest induced subgraph belonging to $\Cc$'', where $\Cc$ is a fixed hereditary class of graphs that have a linear number of edges. The technique applies both to $P_t$-free and $C_{>t}$-free graphs under the condition that the problem in question
 admits an algorithm which is single-exponential in the treewidth of the instance graph.

\paragraph*{Our results.} We extend the recent results on quasipolynomial-time algorithms for {\sc{MWIS}} in $P_t$-free graphs~\cite{GL20,PPR20SOSA} in two directions:
\begin{itemize}[nosep]
 \item[(a)] We expand the area of applicability of the techniques to $C_{>t}$-free graphs.
 \item[(b)] We show how to solve in quasipolynomial time not only the {\sc{MWIS}} problems, but a whole family of problems that can be, roughly speaking, described as finding a maximum-weight induced subgraph that is sparse and satisfies a prescribed property. 
\end{itemize}
Both of these extensions require a significant number of new ideas.
Formally, we prove the following.

\begin{restatable}{theorem}{mainthm}
\label{thm:main}
Fix a pair of integers $d$ and $t$ and a \cmsotwo sentence $\phi$. Then there exists an algorithm that, given a $C_{>t}$-free $n$-vertex graph $G$
and a weight function $\wei\colon V(G) \to \mathbb{N}$, in time $n^{\Oh(\log^3 n)}$ finds a subset $S$ of vertices such that $G[S]$ is $d$-degenerate, $G[S]$ satisfies $\phi$, and, subject to the above, $\wei(S)$ is maximum~possible; the algorithm may also conclude that no such vertex subset exists. The running time can be improved to $n^{\Oh(\log^2 n)}$ if $G$ is $P_t$-free.
\end{restatable}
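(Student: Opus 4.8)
The plan is to follow the Gartland--Lokshtanov branching paradigm, generalized in two orthogonal directions. At the heart of the argument is the balanced separator theorem for $C_{>t}$-free graphs (\cref{thm:Pt-free-balanced} in the $P_t$-free case): every $C_{>t}$-free graph contains a connected set $X$ of at most $t$ vertices such that each component of $G - N[X]$ has at most $n/2$ vertices. To turn this into an algorithm one maintains a recursive state that records a partial solution: a set $\VtxTaken$ of vertices already committed to the output induced subgraph, a set $\VtxDeleted$ of vertices excluded from it, and the remaining ``free'' vertex set $\VtxFree = V(G) \setminus (\VtxTaken \cup \VtxDeleted)$. The branching step guesses, for a well-chosen bounded-size connected separator-witness $X \subseteq \VtxFree$ together with its neighborhood, the intersection of the optimal solution with $N[X]$; since $|N[X]|$ need not be bounded, the crucial trick (as in~\cite{GL20,PPR20SOSA}) is to guess this intersection not vertex-by-vertex but via a logarithmic-depth recursion on the neighborhood, exploiting that the solution restricted to $N[X]$ is itself $d$-degenerate and $C_{>t}$-free, hence again admits balanced separators. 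Each guess fixes part of $\VtxTaken/\VtxDeleted$ and, by balancedness, splits the free part into pieces of at most half the size; the recursion tree therefore has depth $O(\log n)$ along the ``size-halving'' axis and $O(\log n)$ (or $O(\log^2 n)$ in the $C_{>t}$ case, which is where the extra logarithmic factor in the exponent comes from) along the ``neighborhood-peeling'' axis, giving quasipolynomially many leaves, each reached by a branch of quasipolynomial arity, so $n^{O(\log^2 n)}$ resp.\ $n^{O(\log^3 n)}$ total.

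The second direction --- replacing ``independent set'' by ``$d$-degenerate induced subgraph satisfying $\phi$'' --- is handled by carrying the solution's structure through the recursion rather than just its weight. Because the target subgraph is $d$-degenerate, it has an elimination ordering, and one can store in the recursive state a bounded amount of information about how the already-committed part $\VtxTaken$ sits in such an ordering and how it interfaces with $\VtxFree$; this is what keeps the number of distinct relevant states polynomial per recursion node. For the \cmsotwo constraint I would combine this with the standard Courcelle-style machinery: since $G[S]$ is $d$-degenerate it has bounded expansion, and more to the point, whenever we restrict to a ``local'' portion of the construction (a subgraph separated from the rest by a bounded-size set $X$ plus its neighborhood-interface) the relevant part genuinely has bounded treewidth, so we may track, by a finite-state automaton on types, the \cmsotwo-type of the partial solution as it is assembled from the recursive pieces glued along bounded-size boundaries. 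One combines partial types by the usual Feferman--Vaught-type composition lemma, selecting at the end a combination that realizes $\phi$ and has maximum total weight. The main engineering is to define the recursive state so that (i) it has polynomially many values, (ii) it exposes enough of the boundary structure that partial $d$-degeneracy certificates and \cmsotwo-types compose correctly across the separator, and (iii) it is preserved by the size-halving and neighborhood-peeling steps.

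The hardest part, I expect, is not the \cmsotwo bookkeeping but making the neighborhood-peeling recursion work in the $C_{>t}$-free (as opposed to $P_t$-free) setting while the solution being guessed is merely $d$-degenerate, not independent. In the $P_t$-free independent-set case, guessing the solution inside $N[X]$ is clean because an independent subset of $N[X]$ has no edges to worry about and $N[X]$ inherits the hereditary $P_t$-freeness directly. Here $G[N[X] \cap S]$ can be a genuine, connected, $d$-degenerate graph, so one must simultaneously (a) argue that a suitable ``peeled'' subinstance remains in the relevant graph class, (b) control the interaction between the elimination ordering of $S \cap N[X]$ and the ordering of the rest, and (c) ensure the \cmsotwo-type of $S \cap N[X]$ can be recovered by the same recursive scheme --- which forces the peeling recursion itself to carry the degeneracy-certificate-and-type state. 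Getting all three to cohere, and verifying that the extra $C_{>t}$-specific loss (the separator theorem for $C_{>t}$-free graphs is genuinely weaker, producing a separator only after also deleting a bounded number of ``long-induced-path endpoints'' worth of structure) only costs one more logarithmic factor in the exponent, is where the real work lies; the rest is a careful but routine assembly of known components.
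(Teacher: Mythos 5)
Your proposal captures the general paradigm (separator-guided branching with a taken/deleted/free state, plus type composition for \cmsotwo), but the mechanism you give for achieving quasipolynomial time is not the paper's and, as described, does not work. You propose to guess $S\cap N[X]$ for a separator-witness $X$ ``via a logarithmic-depth recursion on the neighborhood.'' This is essentially the naive Divide\&Conquer that the paper explicitly rejects: $N[X]$ can have linear size, the trace of a merely $d$-degenerate solution on $N[X]$ can also have linear size, and there are exponentially many candidate traces; the fact that $G[S\cap N[X]]$ again has balanced separators does not bound the number of guesses, and you give no measure of progress showing the recursion tree stays quasipolynomial. The actual algorithm never guesses the solution on a whole neighborhood. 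It branches on a single, polynomial-time-detectable \emph{heavy} vertex, chosen so that its closed neighborhood hits a constant fraction of objects in a constant fraction of buckets (induced paths between vertex pairs in the $P_t$-free case; \emph{tripods}, i.e.\ bounded-size cores of connectors indexed by vertex triples, in the $C_{>t}$-free case), and progress is measured by a potential on these buckets, refined with quotas $d-|M_y|$ to handle degeneracy (the success branch also guesses the pivot's left neighbors, their positions, and their left neighbors --- a constant-size guess, not a guess over $N[X]$). Your explanation of the extra logarithmic factor is also based on a false premise: the balanced separator theorem for $C_{>t}$-free graphs (\cref{lem:separator}) is \emph{identical} in strength to the $P_t$-free one (\cref{thm:Pt-free-balanced}); the extra factor comes instead from the fact that a heavy vertex may fail to exist (e.g.\ when the active graph is a long path), which forces a \emph{secondary} branching on links of a ``chip'' between two far-apart separators, contributing $\Oh(\log^2 n)$ success branches per level on top of $\Oh(\log n)$ levels.

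On the \cmsotwo side there are two further gaps. First, you assert that the ``local'' pieces of the solution have bounded treewidth; this is not automatic from $d$-degeneracy and is exactly the content of \cref{thm:deg2tw}, a nontrivial result proved via brambles of bounded multiplicity, a true-twin blowup, bounded-depth (topological) clique minors, and a probabilistic construction of a long induced cycle --- your sketch does not identify that this needs proof. Second, type composition requires interfaces of bounded size, but the neighborhood of even a single taken vertex inside the active region is unbounded; the paper maintains the invariant that every component of $G[\VtxActive\cap S^\star]$ meets $\VtxTaken$ in at most $4k$ vertices lying in one bag of a logarithmic-size tree decomposition of $G[\VtxTaken]$, and preserving this invariant after a success branch requires additionally guessing a treewidth-balanced separator $K$ of size at most $k$ inside the affected component (\cref{lem:tw-balanced-sep}) and adding it to the new bag. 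Without a concrete replacement for the heavy-vertex/bucket potential, the secondary branching, the degeneracy-to-treewidth theorem, and the boundary-size invariant, the proposal does not yield the stated $n^{\Oh(\log^2 n)}$ and $n^{\Oh(\log^3 n)}$ bounds.
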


Recall here that a graph $G$ is {\em{$d$-degenerate}} if every subgraph of $G$ contains a vertex of degree at most~$d$; for instance, $1$-degenerate graphs are exactly forests and every planar graph is $5$-degenerate. Also, \cmsotwo is the {\em{Monadic Second Order}} logic of graphs with quantification over edge subsets and modular predicates, which is a standard logical language for formulating graph properties. In essence, the logic allows quantification over single vertices and edges as well as over subsets of vertices and of edges. In atomic expressions one can check whether an edge is incident to a vertex, whether a vertex/edge belongs to a vertex/edge subset, and whether the cardinality of some set is divisible by a fixed modulus. We refer to~\cite{CourcelleE12} for a broader introduction.

\paragraph*{Corollaries.}
By applying \cref{thm:main} for different sentences $\phi$, we can model various problems of interest. For instance, as $1$-degenerate graphs are exactly forests, we immediately obtain a quasipolynomial-time algorithm for the {\sc{Maximum Weight Induced Forest}} problem in $C_{>t}$-free graphs. Further, as being planar is expressible in \cmsotwo and planar graphs are $5$-degenerate, we can conclude that the problem of finding a maximum-weight induced planar subgraph can be solved in quasipolynomial time on $C_{>t}$-free graphs. In \cref{sec:generalization} we give a generalization of \cref{thm:main} that allows counting the weights only on a subset of $S$. From this generalization it follows that  for instance the following problem can be solved in quasipolynomial time on $C_{>t}$-free graphs: find the largest collection of pairwise nonadjacent induced~cycles.\prz{perhaphs some explanation why this is possible with the more general statement, but not with the basic one?}\mip{I think it's not worth delving into at this point.}

Let us point out a particular corollary of \cref{thm:main} of a more general nature.
It is known that for every pair of integers $d$ and $t$ there exists $\ell = \ell(d,t)$ such that every graph that contains $P_{\ell}$ \emph{as a subgraph}, contains either $K_{d+2}$, or $K_{d+1,d+1}$, or $P_t$ as an induced subgraph~\cite{AtminasLR12}.
Since the degeneracy of $K_{d+2}$ and $K_{d+1,d+1}$ is larger than $d$, we conclude that every $P_t$-free graph of degeneracy at most $d$ does not contain $P_{\ell}$ as a subgraph.
On the other hand, for every integer $\ell$, the class of graphs that do not contain $P_{\ell}$ as a subgraph is well-quasi-ordered by the induced subgraph relation~\cite{Ding92}. 
It follows that for every pair of integers $t$ and $d$ and every hereditary class $\Cc_d$ such that every graph in $\Cc_d$ has degeneracy at most $d$,
the class $\Cc_d\cap (P_t\textrm{-free})$ of $P_t$-free graphs from $\Cc_d$ is characterized by a finite number of forbidden induced subgraphs: there exists a finite list $\mathcal{F}$ of graphs such that a graph $G$ belongs to $\Cc_d\cap (P_t\textrm{-free})$ if and only if $G$ does not contain any graph from $\mathcal{F}$ as an induced subgraph. As admitting a graph from $\mathcal{F}$ as an induced subgraph can be expressed by a \cmsotwo sentence, from \cref{thm:main} we can conclude the following.

\begin{theorem}
 Let $\Cc$ be a hereditary graph class such that each member of $\Cc$ is $d$-degenerate, for some integer $d$. Then for every integer $t$ there exists algorithm that, given a $P_t$-free $n$-vertex graph $G$ and a weight function $\wei\colon V(G) \to \mathbb{N}$, in time $n^{\Oh(\log^2 n)}$ finds a subset $S$ of vertices such that $G[S]\in \Cc$ and, subject to this, $\wei(S)$ is maximum~possible.
\end{theorem}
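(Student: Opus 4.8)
The plan is to reduce the problem to a single invocation of \cref{thm:main}. The point is that, under the promise that the host graph $G$ is $P_t$-free, membership of an induced subgraph of $G$ in $\Cc$ is governed by finitely many forbidden induced subgraphs, and this finite condition can be encoded by a \cmsotwo sentence. Concretely, first apply the theorem of Atminas, Lozin and Razgon~\cite{AtminasLR12} to obtain an integer $\ell=\ell(d,t)$ such that every graph containing $P_\ell$ as a subgraph contains $K_{d+2}$, $K_{d+1,d+1}$, or $P_t$ as an induced subgraph; this statement only strengthens as $\ell$ grows, so we may assume $\ell>\max(t,2d+2)$. Let $\Dd_\ell$ be the (hereditary) class of graphs with no $P_\ell$ subgraph, which is well-quasi-ordered by the induced subgraph relation $\preceq$ by Ding's theorem~\cite{Ding92}, and set $\Cc^\star:=\Cc\cap(P_t\text{-free})$, which is hereditary as an intersection of two hereditary classes.

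Since $K_{d+2}$ and $K_{d+1,d+1}$ have degeneracy exceeding $d$ and hence lie outside $\Cc$, while every graph of $\Cc^\star$ is additionally $P_t$-free, the dichotomy above yields $\Cc^\star\subseteq\Dd_\ell$. Being a $\preceq$-downward closed subclass of the well-quasi-ordered class $\Dd_\ell$, the family $\mathcal{F}$ of $\preceq$-minimal elements of $\Dd_\ell\setminus\Cc^\star$ is a finite antichain, and I claim that an arbitrary graph $H$ lies in $\Cc^\star$ if and only if it contains no member of $\mathcal{F}$ as an induced subgraph. The forward implication is immediate from $\mathcal{F}\cap\Cc^\star=\emptyset$ and $\Cc^\star$ being hereditary. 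For the converse, observe that $P_t$, $K_{d+2}$, and $K_{d+1,d+1}$ all belong to $\Dd_\ell\setminus\Cc^\star$ (this is where $\ell>\max(t,2d+2)$ is used), so a graph avoiding every member of $\mathcal{F}$ in particular avoids these three graphs as induced subgraphs; by the Atminas--Lozin--Razgon dichotomy such a graph then lies in $\Dd_\ell$, and a graph in $\Dd_\ell$ avoiding $\mathcal{F}$ cannot lie in $\Dd_\ell\setminus\Cc^\star$, hence lies in $\Cc^\star$.

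With the claim in hand, let $\phi$ be the first-order (and therefore \cmsotwo) sentence asserting that none of the finitely many graphs in $\mathcal{F}$ occurs as an induced subgraph. If $G$ is $P_t$-free then every induced subgraph $G[S]$ is $P_t$-free, so $G[S]\in\Cc$ if and only if $G[S]\in\Cc^\star$, which by the claim holds if and only if $G[S]$ satisfies $\phi$; moreover any such $G[S]$ is automatically $d$-degenerate, being a member of $\Cc$. Hence the feasible solutions of our problem are exactly the sets $S$ for which $G[S]$ is $d$-degenerate and satisfies $\phi$, so running the algorithm of \cref{thm:main} with parameters $d$, $t$ and the sentence $\phi$ on the $P_t$-free graph $G$ returns a maximum-$\wei$ feasible $S$ in time $n^{\Oh(\log^2 n)}$; infeasibility is never reported, as $S=\emptyset$ is always feasible (assuming $\Cc$ is nonempty, else the statement is vacuous). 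There is no genuinely hard step here: the only subtlety is that the finite family $\mathcal{F}$ a priori characterizes $\Cc^\star$ only within the well-quasi-ordered universe $\Dd_\ell$, and it takes the dichotomy of~\cite{AtminasLR12} — together with the harmless enlargement of $\ell$ — to promote this to a characterization valid for all graphs. One should also note that $\ell$, $\mathcal{F}$, and $\phi$ are permitted to depend on $\Cc$, $d$, and $t$, so the non-constructive origin of $\mathcal{F}$ in the well-quasi-ordering argument is not an obstacle to obtaining the claimed algorithm for each fixed such class.
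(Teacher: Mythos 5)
Your proposal is correct and follows essentially the same route as the paper: the Atminas--Lozin--Razgon dichotomy to show that $P_t$-free $d$-degenerate graphs exclude $P_\ell$ as a subgraph, Ding's well-quasi-ordering theorem to extract a finite family of forbidden induced subgraphs characterizing $\Cc\cap(P_t\textrm{-free})$, and a single application of \cref{thm:main} with the corresponding \cmsotwo (indeed first-order) sentence. Your write-up merely makes explicit the minimal-antichain argument and the promotion of the characterization from the $P_\ell$-subgraph-free universe to all graphs, which the paper leaves as an assertion.
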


\paragraph*{Degeneracy and treewidth.} Readers familiar with the literature on algorithmic results for \cmsotwo logic might be slightly surprised by the statement of \cref{thm:main}. Namely, \cmsotwo is usually associated with graphs of bounded treewidth, where the tractability of problems expressible in this logic is asserted by Courcelle's Theorem~\cite{DBLP:journals/iandc/Courcelle90}. \cref{thm:main}, however, speaks about \cmsotwo-expressible properties of graphs of bounded degeneracy. While degeneracy is upper-bounded by treewidth, in general there are graphs that have bounded degeneracy and arbitrarily high treewidth.
However, we prove that in the case of $C_{>t}$-free graphs, the two notions are functionally equivalent.

\begin{restatable}{theorem}{degistw}
\label{thm:deg2tw}
For every pair of integers $d$ and $t$, there exists an integer $k = (dt)^{\Oh(t)}$
such that every $C_{>t}$-free graph of degeneracy at most $d$
has treewidth at most $k$.
\end{restatable}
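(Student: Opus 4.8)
The plan is to reduce the statement to finding small balanced separators. It suffices to show: for every $C_{>t}$-free, $d$-degenerate graph $G$ and every $W\subseteq V(G)$ there is a set $S$ with $|S|\le(dt)^{\Oh(t)}$ such that each connected component of $G-S$ contains at most $\tfrac{2}{3}|W|$ vertices of $W$. Indeed, by the standard duality between treewidth and balanced separators, a graph in which every vertex subset admits such a balanced separator of size at most $s$ has treewidth $\Oh(s)$.

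To produce $S$, I would apply the $C_{>t}$-free balanced separator theorem with the weight function $\mathbf 1_W$: it yields a connected set $X$ with $|X|\le t$ such that every component of $G-N[X]$ carries at most $\tfrac12|W|$ of the weight. The set $N[X]$ is itself a balanced separator, but it has uncontrolled size — a star has $N[x]=V(G)$, and this phenomenon survives into $d$-degenerate $C_{>t}$-free graphs — so the real task is to trim $N[X]$ to a subset of size $(dt)^{\Oh(t)}$ that still cuts $G$ into $W$-balanced pieces. I would keep $X$ (cost at most $t$) together with a bounded ``core'' $Y\subseteq N(X)$ chosen so that the vertices of $N(X)\setminus Y$, once returned to $G-(X\cup Y)$, neither reconnect two components of $G-N[X]$ nor push any component above $\tfrac23|W|$ — the latter also requiring some control over how much of $W$ hides inside $N(X)$, which I expect can be arranged by iterating the separator theorem within $G[N[X]]$ (a graph that, crucially, has a dominating set of size at most $t$). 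The place where $C_{>t}$-freeness is spent is the claim that a bounded core $Y$ exists: if $N(X)$ were large and badly behaved, one should be able to trace a long induced cycle by leaving one component of $G-N[X]$, crossing $N(X)$ and the connected set $X$, and re-entering another component, the cycle length being controlled by how spread out in the graph the relevant attachments are.

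The main obstacle is precisely this trimming step, and it is where both hypotheses are indispensable. On one side, $d$-degeneracy alone does not bound treewidth (grids are $2$-degenerate); on the other, $C_{>t}$-freeness alone does not — not even tree-independence number is bounded, as one sees by blowing up each vertex of a long path into an independent set and completely joining consecutive blobs: the result is $C_{>4}$-free, contains induced copies of $K_{n,n}$, hence has unbounded treewidth and tree-independence number, while being very far from $d$-degenerate. The delicate point is that a single vertex of a component of $G-N[X]$ can be adjacent to arbitrarily many vertices of $N(X)$, so the core $Y$ cannot be taken to be a maximal independent set of $N(X)$ or the set of ``bridging'' vertices; obtaining a bounded $Y$ seems to need an auxiliary treewidth bound for $C_{>t}$-free $d$-degenerate graphs that carry a dominating set of size at most $t$ (provable by induction on the size of the dominating set), or a Ramsey-type / layered-BFS analysis of how the small components attach to $N[X]$, and it is the quantitative form of that argument that yields the $(dt)^{\Oh(t)}$ bound. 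A conceptually different route avoids the balanced separator theorem entirely: large treewidth gives, via the Grid Minor Theorem, a large wall minor, which — a wall having maximum degree three — promotes to a subdivided wall subgraph $\wh W$ of $G$; since $G[V(\wh W)]$ carries only $\Oh(d\cdot|V(\wh W)|)$ edges it has few chords, so one cleans $\wh W$ into an \emph{induced} subdivision of a still-large wall, whose subdivided rim is an induced cycle on more than $t$ vertices — contradicting $C_{>t}$-freeness. Here the cleaning is the analogous main obstacle.
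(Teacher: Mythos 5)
Both of your routes stop exactly where the actual work begins, and in each case the missing step is the whole content of the theorem, so as it stands this is a genuine gap rather than a proof. In the separator route, the reduction ``balanced $W$-separators of size $s$ for all $W$ implies treewidth $\Oh(s)$'' is fine, and so is invoking the $C_{>t}$-free separator theorem to get a connected $X$ with $|X|\le t$; but the trimming of $N[X]$ to a core $Y$ of size $(dt)^{\Oh(t)}$ is asserted, not proved. The heuristic you offer for it does not work as stated: two components of $G-N[X]$ can both attach to vertices of $N(X)$ that are close (in $G$) to $X$, in which case the cycle you would trace through $X$ is short, so $C_{>t}$-freeness gives no contradiction; what one can extract from this kind of argument is only that separators hit connectors near their centers (this is exactly Lemma~\ref{lem:tripod3} in the paper, used for the branching analysis), not that the attachment structure on $N(X)$ is bounded. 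The auxiliary lemma you propose to fill the hole (a treewidth bound for $C_{>t}$-free $d$-degenerate graphs with a dominating set of size at most $t$, ``by induction on the size of the dominating set'') is essentially the theorem itself in a special guise, and the suggested induction makes no visible progress: deleting a dominating vertex leaves a graph that is still $C_{>t}$-free and $d$-degenerate but need not have any smaller dominating structure. In the wall route, the cleaning step is likewise the hard part, and ``only $\Oh(d\cdot|V(\wh W)|)$ chords'' is not enough by itself: with degeneracy (as opposed to maximum degree) bounded, a single vertex may send arbitrarily many chords into the subdivided wall, and a linear number of chords can meet every long cycle of the wall, so an averaging statement does not yield an induced long cycle without a further argument about where the chords live.

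For comparison, the paper's proof (Section~\ref{sec:deg2tw}) is close in spirit to your second route but fills the gap differently. Instead of the Grid Minor Theorem it uses the bramble theorem of Hatzel et al.\ (\cref{lem:bramble}), which gives a bramble of large order in which every vertex lies in at most two branch sets; minimality of branch sets plus $C_{>t}$-freeness bounds their radius by $t$, and passing to the true-twin double $G\bullet K_2$ (still $C_{>t}$-free, $(2d+1)$-degenerate) turns the bramble into a depth-$t$ clique minor, hence a depth-$(3t+1)$ topological clique minor. The ``cleaning'' is then done quantitatively with degeneracy: non-root vertices of degree more than $200dt^2$ are few and are discarded together with the paths through them, and finally $t/2+1$ roots are chosen uniformly at random; a union bound over the $\Oh(t^2)$ pairs of short connecting paths shows that with positive probability their concatenation is an induced cycle on more than $t$ vertices, a contradiction. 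If you want to salvage your sketch, this probabilistic selection over a bounded-depth topological clique minor is the concrete mechanism that replaces both your ``trimming'' and your ``cleaning'' steps; neither of your routes currently contains a substitute for it.
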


As the properties of having treewidth at most $k$ and having degeneracy at most $d$ are expressible in \cmsotwo, from \cref{thm:deg2tw} it follows that in the statement of \cref{thm:main}, assumptions ``$G[S]$ has degeneracy at most $d$'' and ``$G[S]$ has treewidth at most $k$'' could be replaced by one another. Actually, both ways of thinking will become useful in the proof.

\paragraph*{Simple QPTASes.} As an auxiliary result, we also show a simple technique for turning algorithms for {\sc{MWIS}} in $P_t$-free and $C_{>t}$-free graphs into approximation schemes for (unweighted) problems of the following form: in a given graph, find the largest induced subgraph belonging to $\Cc$, where $\Cc$ is a fixed graph class that is closed under taking disjoint unions and induced subgraphs and is {\em{weakly hyperfinite}}~\cite[Section 16.2]{sparsity}. This last property is formally defined as follows: for every $\eps>0$, there exists a constant $c(\eps)$ such that from every graph $G\in \Cc$ one can remove an $\eps$ fraction of vertices so that every connected component of the remaining graph has at most $c(\eps)$ vertices. Weak hyperfiniteness is essentially equivalent to admitting sublinear balanced separators, so all the well-known classes of sparse graphs, e.g. planar graphs or all proper minor-closed classes, are weakly hyperfinite. We present these results in \cref{sec:packing}.

\paragraph*{3-Coloring.} In~\cite{PPR20SOSA}, it is shown how to modify the quasipolynomial-time algorithm for \textsc{MWIS} in $P_t$-free graphs
to obtain an algorithm for \textsc{3-Coloring} with the same asymptotic running time bound in the same graph class. 
We remark here that the same modification can be applied to the algorithm of \cref{thm:main}, obtaining the following:
\begin{theorem}
For every integer $t$ there exists an algorithm that, given an $n$-vertex $C_{>t}$-free graph $G$,
    runs in time $n^{\Oh(\log^3 n)}$ and verifies whether $G$ is 3-colorable.
\end{theorem}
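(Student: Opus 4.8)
The plan is to obtain this statement exactly as \cite{PPR20SOSA} obtains the analogous result for $P_t$-free graphs: by a light modification of the recursive algorithm behind \cref{thm:main}, leaving its overall skeleton intact and changing only the combinatorial object that is threaded through the recursion. Recall that the \MIS-style algorithm behind \cref{thm:main} processes the input $C_{>t}$-free graph $G$ by repeatedly applying the balanced-separator theorem for $C_{>t}$-free graphs to find a connected set $X$ with $|X|\le t$ such that every connected component of $G-N[X]$ has at most half as many vertices as $G$; it then branches over the (at most $2^{t}$) ways a hypothetical solution can intersect $X$, discards the decided vertices, and recurses on the components of $G-N[X]$, the whole recursion tree having quasi-polynomial size by a potential/measure argument. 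We would run the same recursion, but instead of maintaining a partial maximum-weight sparse induced subgraph we maintain a partial proper $3$-colouring of the vertices decided so far.

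Concretely, first pass to the equivalent list-colouring formulation in which every vertex has palette $\{1,2,3\}$. At a recursion node, take a separator $X$ as above and branch over all at most $3^{t}$ proper colourings of $G[X]$ compatible with the colours already fixed on the active boundary. Fixing the colours on $X$ deletes, for every vertex of $N(X)\setminus X$, at least one colour from its palette (the colour of one of its neighbours in $X$), so once $X$ is removed the set $N(X)\setminus X$ becomes a layer of vertices with palettes of size at most two, while the vertices inside the small components of $G-N[X]$ keep palettes of size three. Vertices whose palette has size at most two are never branched on: list-colouring a graph all of whose lists have size at most two is equivalent to $2$-SAT, hence solvable in polynomial time, and these accumulated constraints are discharged once, at the leaves of the recursion, where one simultaneously checks that the component-wise colourings glue together consistently with the colourings chosen for the separators. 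Thus the recursion branches only on separator vertices that still carry full palettes, and it descends into the small components of $G-N[X]$ exactly as in \cref{thm:main}, each component inheriting the palette restrictions propagated from its neighbourhood.

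Since $C_{>t}$-free graphs form a hereditary class, every subinstance is again $C_{>t}$-free and the balanced-separator theorem keeps applying; and since the only change to the branching is a factor $3^{\Oh(t)}$ in place of $2^{\Oh(t)}$, which is absorbed into the constant hidden in the exponent, the bound on the size of the recursion tree, and hence the $n^{\Oh(\log^3 n)}$ running time (respectively $n^{\Oh(\log^2 n)}$ when $G$ is also $P_t$-free), is inherited verbatim from \cref{thm:main}. The step that requires the most care, and the reason the description above is deliberately informal, is the bookkeeping of the active boundary. For \MIS one uses that independent sets in vertex-disjoint parts of the graph combine with no interaction, whereas a proper colouring imposes equality/inequality constraints along every edge; so one must verify that the set of already-decided vertices a recursive call must remember still has the poly-logarithmic size guaranteed by the analysis of \cref{thm:main}, now measured after the size-$\le 2$-list layer has been contracted into the global $2$-SAT instance, and that the per-leaf gluing check can indeed be carried out within this budget. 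This is exactly the adaptation performed in \cite{PPR20SOSA} for the $P_t$-free algorithm; it is orthogonal to the passage from $P_t$-free to $C_{>t}$-free graphs, and therefore composes with \cref{thm:main} to give the claimed algorithm.
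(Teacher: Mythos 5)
Your proposal rests on a misreading of the skeleton of the algorithm behind \cref{thm:main}. That algorithm does \emph{not} ``branch over the ways a solution intersects a balanced separator $X$ and recurse on the components of $G-N[X]$''; \cref{sec:Pt-free} explicitly points out that this naive Divide\&Conquer fails, and the actual algorithm instead branches on a single \emph{heavy} pivot vertex at a time, chosen via the buckets of tripods, with progress measured by a potential function and with a secondary branching strategy when no heavy vertex exists (\cref{lem:branching-strategy}, \cref{lem:success-quota}). Consequently, the modification the paper alludes to (following~\cite{PPR20SOSA}) is a modification of this pivot-by-pivot branching: one maintains lists $L(v)\subseteq\{1,2,3\}$, branches on the at most three colors of the pivot, lets every such branch shrink the lists of the pivot's neighbours (this replaces the quota decrease of \cref{lem:success-quota} in the potential), recurses only on genuine connected components of the whole remaining graph, and solves the instance by $2$-SAT once every list has size at most two.

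The scheme you describe instead --- color all of $X$ in $3^{|X|}$ ways, recurse independently on the components of $G-N[X]$, and discharge the layer $N(X)\setminus X$ of list-size-$\le 2$ vertices in one global $2$-SAT check at the leaves --- has a genuine gap: those component subproblems are \emph{not} independent. Two components $C_1,C_2$ of $G-N[X]$ both have neighbours in $N(X)\setminus X$, so the colorings chosen inside $C_1$ constrain, through that layer, which colorings of $C_2$ are feasible; a recursive call cannot commit to internal branching choices and leave consistency to a later gluing check. To decouple the calls, each would have to return feasibility information for every behaviour on its interface with the small-list layer, but that interface may contain $\Theta(n)$ vertices, so the number of interface states is exponential and cannot be enumerated within the claimed budget; the polylogarithmic bookkeeping bound you hope to ``inherit'' from \cref{thm:main} applies to the set $\VtxTaken$ built by success branches of the heavy-vertex algorithm, not to separator neighbourhoods. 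A telltale sign that something is wrong: if your recursion were sound, the recurrence $T(n)\le 3^{t}\sum_i T(n_i)$ with all $n_i\le n/2$ would give an $n^{\Oh(t)}$-time, i.e.\ polynomial, algorithm for \Col on $C_{>t}$-free graphs, far stronger than the quasipolynomial bound claimed here and rendering the paper's entire bucket/potential machinery unnecessary. This is also not ``exactly the adaptation performed in~\cite{PPR20SOSA}'': that adaptation keeps the heavy-vertex branching precisely to avoid the boundary-state explosion that your separator-based divide-and-conquer runs into.
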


\section{Overview}\label{sec:overview}
In this section we present an overview of the proof of our main result, \cref{thm:main}. We try to keep the description non-technical, focusing on explaining the main ideas and intuitions. Complete and formal proofs follow in subsequent sections.

\subsection{Approach for $P_t$-free graphs}\label{sec:Pt-free}

We need to start by recalling the basic idea of the quasipolynomial-time algorithm for {\sc{MWIS}} in $P_t$-free graphs~\cite{GL20,PPR20SOSA}; we choose to follow the exposition of~\cite{PPR20SOSA}. The main idea is to exploit the following balanced separator theorem.

\begin{theorem}[Gy\'arf\'as~\cite{gyarfas}, Bacs\'o et al.~\cite{BacsoLMPTL19}]\label{thm:Pt-free-balanced}
 Let $G$ be an $n$-vertex $P_t$-free graph. Then there exists a set $X$ consisting of at most $t$ vertices of $G$ such that $G[X]$ is connected and every connected component of $G-N[X]$ has at most $n/2$~vertices.
Furthermore, such a set can be found in polynomial time.
\end{theorem}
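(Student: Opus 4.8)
The plan is to prove this by the classical \emph{Gy\'arf\'as path} argument. First I would dispose of the disconnected case: if $G$ has a connected component $G'$ with more than $n/2$ vertices, then all other components of $G$ together have fewer than $n/2$ vertices, so any balanced separator of $G'$ (itself connected and $P_t$-free) is also one for $G$; and if every component of $G$ has at most $n/2$ vertices, any single vertex already serves as $X$. So assume $G$ is connected. I would then greedily build a sequence of vertices $v_1,v_2,\dots$ together with a nested family of connected sets $V(G)=A_0\supseteq A_1\supseteq A_2\supseteq\cdots$ as follows. Pick $v_1\in A_0$ arbitrarily. Given $v_i\in A_{i-1}$, look at the graph $G[A_{i-1}\setminus N[v_i]]$: if each of its connected components has at most $n/2$ vertices (in particular if it is empty), stop and output $X:=\{v_1,\dots,v_i\}$; otherwise let $A_i$ be its unique component with more than $n/2$ vertices (two such components would already exceed $n$ vertices), let $v_{i+1}$ be any neighbour of $v_i$ inside $A_{i-1}$ --- one exists since $G[A_{i-1}]$ is connected on at least two vertices --- and continue.

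The key invariant, proved by induction using $A_i\subseteq A_{i-1}\setminus N[v_i]$, is that $A_i$ is disjoint from $N[v_1]\cup\cdots\cup N[v_i]$. Consequently each chosen $v_{i+1}\in A_{i-1}$ is a new vertex that is non-adjacent to $v_1,\dots,v_{i-1}$ and adjacent to $v_i$, so $v_1v_2\cdots v_i$ is always an \emph{induced} path of $G$. As $G$ is $P_t$-free, such a path has at most $t-1$ vertices, so the construction halts after at most $t-1$ steps, and the output $X$ is a connected set with at most $t-1\le t$ vertices.

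The remaining, and most delicate, point is that $X$ is balanced. Suppose the construction halted at step $\ell$, and for $1\le i\le\ell$ let $S_i$ be the union of the components of $G[A_{i-1}\setminus N[v_i]]$ other than $A_i$ (for $i=\ell$ this is \emph{all} of them, each of size at most $n/2$). Peeling the active regions one layer at a time yields the partition $V(G)=\bigcup_{i=1}^{\ell}\bigl((N[v_i]\cap A_{i-1})\cup S_i\bigr)$, hence $V(G)\setminus N[X]\subseteq\bigcup_{i=1}^{\ell}S_i$. Now take any component $D$ of $G-N[X]$. If $D$ met both $S_i$ and $S_j$ with $i<j$, then by connectivity $G$ would contain an edge from $S_j$ to $S_i$; but $S_j\subseteq A_{j-1}\subseteq A_i$, while $S_i$ is a union of components of $G[A_{i-1}\setminus N[v_i]]$ disjoint from $A_i$, so this edge would join the component $A_i$ of $G[A_{i-1}\setminus N[v_i]]$ to a different component of that same graph --- impossible. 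Hence $D$ lies inside a single $S_i$, therefore inside a single component of $G[A_{i-1}\setminus N[v_i]]$, which has at most $n/2$ vertices; so $|D|\le n/2$.

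Every step above (choosing a vertex, deleting a closed neighbourhood, computing connected components) runs in polynomial time, which gives the ``furthermore'' clause. I expect the main obstacle to be exactly the last paragraph: the construction only controls the \emph{local} components obtained when a single $N[v_i]$ is peeled off the current active region, whereas the statement concerns the \emph{global} graph $G-N[X]$; bridging this gap requires the nesting $A_0\supseteq A_1\supseteq\cdots$ to rule out a component of $G-N[X]$ straddling the discarded pieces of two different levels.
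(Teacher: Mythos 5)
Your overall plan --- the Gy\'arf\'as path argument --- is the right one (the paper itself only cites this statement, and the proofs in the cited works follow exactly this scheme), and both your reduction to the connected case and your final peeling/partition argument for balancedness are sound. The genuine gap is in the extension step. You take $v_{i+1}$ to be \emph{any} neighbour of $v_i$ inside $A_{i-1}$, justified by ``$G[A_{i-1}]$ is connected on at least two vertices''; but for $i\geq 2$ we have $v_i\in N(v_{i-1})$ while $A_{i-1}\cap N[v_{i-1}]=\emptyset$, so $v_i\notin A_{i-1}$ and this justification does not apply --- indeed $v_i$ need not have any neighbour in $A_{i-1}$ at all, and nothing in your rule forces the next vertex to make progress towards the big component. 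Concretely, let $G$ consist of a vertex $c$ adjacent to a pendant vertex $b$ and to a vertex $a$, where $a$ is complete to a large clique $K$; this graph is $P_5$-free. Starting from $v_1=c$ you get $A_1=K$, and the (permitted) choice $v_2=b$ gives $A_2=K$ again; the stopping condition never triggers, yet $b$ has no neighbour in $A_1=K$, so $v_3$ does not exist and the construction stalls (and $\{c,b\}$ would not be a valid output anyway, since $G-N[\{c,b\}]=K$ is a single component with more than $n/2$ vertices).

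The fix is the standard one and is local: choose $v_{i+1}$ to be any vertex of $A_{i-1}$ that has a neighbour in the new big component $A_i$. Such a vertex exists because $G[A_{i-1}]$ is connected and $A_i$ is a nonempty \emph{proper} subset of $A_{i-1}$ (properness is exactly the invariant ``$N[v_i]\cap A_{i-1}\neq\emptyset$'', which this very choice maintains for the next step); and any such vertex is automatically adjacent to $v_i$, since a vertex of $A_{i-1}\setminus N[v_i]$ with a neighbour in $A_i$ would itself lie in the component $A_i$. With this modification your induced-path invariant, the bound of $t-1$ on the number of steps via $P_t$-freeness, the balancedness accounting via the sets $S_i$, and the polynomial running time all go through unchanged.
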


In the {\sc{MWIS}} problem, there is a natural branching strategy that can be applied on any vertex $u$. Namely, branch into two subproblems: in one subproblem --- {\em{success branch}} --- assume that $u$ is included in an optimal solution, and in the other --- {\em{failure branch}} --- assume it is not. In the success branch we can remove both $u$ and all its neighbors from the consideration, while in the failure branch only $u$ can be removed. Hence, \cref{thm:Pt-free-balanced} suggests the following naive Divide\&Conquer strategy: find a set $X$ as provided by the Theorem and branch on all the vertices of $X$ as above in order to try to disconnect the graph. This strategy does {\em{not}} lead to any reasonable algorithm, because the graph would get shattered only in the subproblem corresponding to success branches for all $x\in X$. However, there is an intuition that elements of $X$ are reasonable candidates for {\em{branching pivots}}: vertices such that branching on them leads to a significant progress of the algorithm.

The main idea presented in~\cite{PPR20SOSA} is to perform branching while measuring the progress in disconnecting the graph in an indirect way. Let $G$ be the currently considered graph. For a pair of vertices $u$ and $v$, let the {\em{bucket}} of $u$ and $v$ be defined as:
$$\Bc^G_{u,v}\coloneqq \{\,P\ \colon\ P \textrm{ is an induced path in }G\textrm{ with endpoints }u\textrm{ and }v\}.$$
Observe that since $G$ is $P_t$-free, every element of $\Bc^G_{u,v}$ is a path on fewer than $t$ vertices, hence $\Bc^G_{u,v}$ has always at most $n^{t-1}$ elements and can be computed in polynomial time (for a fixed $t$). On the other hand, $\Bc^G_{u,v}$ is nonempty if and only if $u$ and $v$ are in the same connected component of $G$.

Let $X$ be a set whose existence is asserted by \cref{thm:Pt-free-balanced}.
Observe that if $u$ and $v$ are in different components of $G-N[X]$, then {\em{all}} the paths of $\Bc^G_{u,v}$ are intersected by $N[X]$. Moreover, as every connected component of $G-N[X]$ has at most $n/2$ elements, this happens for at least half of the pairs $\{u,v\}\in \binom{V(G)}{2}$. Since $X$ has only at most $t$ vertices, by a simple averaging argument we conclude the following.

\begin{claim}\label{cl:heavy-Pt}
 There is a vertex $x$ such that $N[x]$ intersects at least a $\frac{1}{t}$ fraction of paths in at least $\frac{1}{2t}$ fraction of~buckets.
\end{claim}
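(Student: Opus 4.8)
The plan is to run the averaging argument hinted at in the text, using the separator set $X$ from \cref{thm:Pt-free-balanced}. Call a pair $\{u,v\}\in\binom{V(G)}{2}$ \emph{captured} if every path in $\Bc^G_{u,v}$ meets $N[X]$ (in particular every pair with $\Bc^G_{u,v}=\emptyset$ is captured vacuously). First I would show that a pair can fail to be captured only when $u$ and $v$ lie in the same connected component of $G-N[X]$: if $u\in N[X]$ or $v\in N[X]$ then every $u$--$v$ path trivially meets $N[X]$, and if $u,v\notin N[X]$ lie in different components of $G-N[X]$ then \emph{every} $u$--$v$ path in $G$ must cross the vertex separator $N[X]$.

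Next I would bound the number of non-captured pairs. Letting $C_1,\dots,C_m$ be the components of $G-N[X]$, the non-captured pairs are exactly those with both endpoints in a common $C_i$, of which there are $\sum_i\binom{|C_i|}{2}$. Since $|C_i|\le n/2$ for every $i$ and $\sum_i|C_i|\le n$, the elementary estimate $\binom{a}{2}\le\tfrac{a}{4}(n-2)$ valid for $0\le a\le n/2$ gives $\sum_i\binom{|C_i|}{2}\le\tfrac{n-2}{4}\sum_i|C_i|\le\tfrac{n(n-2)}{4}\le\tfrac12\binom{n}{2}$. Hence at least $\tfrac12\binom{n}{2}$ of all pairs are captured.

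Finally I would extract a single pivot by a charging step over $X$. For each captured pair $\{u,v\}$ with $\Bc^G_{u,v}\ne\emptyset$, the at most $t$ sets $\{N[x]\colon x\in X\}$ together meet all of $\Bc^G_{u,v}$, so by pigeonhole some $x\in X$ meets at least a $\tfrac1t$ fraction of $\Bc^G_{u,v}$; for captured pairs with empty bucket this holds vacuously for every $x$. Charging each captured pair to such an $x$ and summing over $x\in X$ (with $|X|\le t$), some single vertex $x$ receives at least $\tfrac1t\cdot\tfrac12\binom{n}{2}=\tfrac{1}{2t}\binom{n}{2}$ captured pairs; for each such pair $N[x]$ meets at least a $\tfrac1t$ fraction of the bucket, which is exactly the claim.

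There is no serious obstacle here; the points to be careful about are (i) that one needs \emph{all} $u$--$v$ paths, not just some, to be cut when the endpoints are separated, which is precisely because $N[X]$ is a vertex separator, and (ii) the bookkeeping of the two ``fraction'' pigeonholes, which compose cleanly only once one agrees that empty buckets are satisfied vacuously (alternatively, if one wants the statement phrased over nonempty buckets only, the same proof goes through verbatim with $\binom{n}{2}$ replaced everywhere by the number of nonempty buckets).
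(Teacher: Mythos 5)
Your proof is correct and follows essentially the same route the paper sketches: count the pairs all of whose connecting induced paths are hit by $N[X]$ (at least half of $\binom{n}{2}$, since components of $G-N[X]$ have at most $n/2$ vertices), then apply the two-level pigeonhole over the at most $t$ vertices of $X$. The only caveat is your closing parenthetical: restricting to nonempty buckets works verbatim only when $G$ is connected (the case actually used by the algorithm), since for disconnected $G$ the nonempty buckets inside a component untouched by $N[X]$ can all be uncaptured; as stated over all $\binom{n}{2}$ buckets, your argument is fine.
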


A vertex $x$ having the property mentioned in \cref{cl:heavy-Pt} shall be called {\em{$\frac{1}{2t}$-heavy}}, or just {\em{heavy}}. Then \cref{cl:heavy-Pt} asserts that there is always a heavy vertex; note that such a vertex can be found in polynomial time by inspecting the vertices of $G$ one by one.

We may now present the algorithm:
\begin{enumerate}[nosep]
 \item If $G$ is disconnected, then apply the algorithm to every connected component of $G$ separately.
 \item Otherwise, find a heavy vertex in $G$ and branch on it.
\end{enumerate}

We now sketch a proof of the following claim: on each root-to-leaf path in the recursion tree, this algorithm may execute only $\Oh(\log^2 n)$ success branches. By \cref{cl:heavy-Pt}, in each success branch a constant fraction of buckets get their sizes reduced by a constant multiplicative factor. Since buckets are of polynomial size in the first place, after $\Omega(\log n)$ success branches a $\frac{1}{10}$ fraction of the initial buckets must become empty. Since in a connected graph all the buckets are nonempty, it follows that after $\Omega(\log n)$ success branches, the vertex count of the connected graph we are working on must have decreased by at least a multiplicative factor of $0.01$ with respect to the initial graph. As this can happen only $\Oh(\log n)$ times, the claim follows.

Now the recursion tree has depth at most $n$ and each root-to-leaf path contains at most $\Oh(\log^2 n)$ success branches.
Therefore, the total size of the recursion tree is $n^{\Oh(\log^2 n)}$, which implies the same bound on the running time. This concludes the description of the algorithm for $P_t$-free~graphs; let us recall that this algorithm was already presented in~\cite{PPR20SOSA}.

\subsection{Lifting the technique to $C_{>t}$-free graphs}

We now explain how to lift the technique presented in the previous section to the setting of $C_{>t}$-free graphs. As we mentioned before, the main ingredient --- the balanced separator theorem --- remains true.

\begin{theorem}[Gy\'arf\'as~\cite{gyarfas}, Chudnovsky et al.~\cite{ChudnovskyPPT20}]\label{lem:separator}
Let $G$ be an $n$-vertex $C_{>t}$-free graph.
Then there is a set $X$ consisting of at most $t$ vertices of $G$ such that $G[X]$ is connected and every connected component of $G - N[X]$ has at most $n/2$ vertices.
Furthermore, such a set can be found in polynomial time.
\end{theorem}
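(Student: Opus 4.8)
This statement is classical: it is exactly the Gy\'arf\'as path argument for long‑hole‑free graphs, as used by Chudnovsky et al.~\cite{ChudnovskyPPT20}, and it parallels the proof of \cref{thm:Pt-free-balanced} with the role of ``$P_t$‑free $\Rightarrow$ every induced path has fewer than $t$ vertices'' taken over by ``$C_{>t}$‑free $\Rightarrow$ every induced cycle has at most $t$ vertices''. So I would first reduce to the connected case. If $G$ is disconnected and has a component $C$ with $|C|>n/2$ (there is at most one such, since two would contain more than $n$ vertices together), apply the connected case to $G[C]$ to obtain $X\subseteq C$ with $|X|\le t$, $G[X]$ connected, and every component of $G[C]-N[X]$ of size at most $|C|/2\le n/2$; since there are no edges between $C$ and $V(G)\setminus C$ we have $N_G[X]\subseteq C$, so the components of $G-N[X]$ are exactly those together with the components of $G-C$, the latter each of size $n-|C|<n/2$, and $X$ works for $G$. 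If instead every component of $G$ has at most $n/2$ vertices, then $X=\{v\}$ for any single vertex already works. Hence assume $G$ connected.

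Next I would build an induced path $w_1,w_2,\dots$ greedily, ``chasing'' the heavy component. Pick $w_1$ arbitrarily. Having built an induced path $P_i=w_1\cdots w_i$, look at $G-N[P_i]$: if it has no component on more than $n/2$ vertices, stop and output $X:=V(P_i)$; otherwise there is a unique such component $C_i$, and since $G$ is connected and $C_i$ is disjoint from $N[P_i]$, some vertex of $N(P_i)\setminus V(P_i)$ is adjacent to $C_i$. The point --- and here is already a small subtlety --- is that this extending vertex can be chosen adjacent to $w_i$ and to none of $w_1,\dots,w_{i-1}$, so that appending it keeps the path induced; let $w_{i+1}$ be such a vertex. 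Deleting $N[w_{i+1}]$ removes at least one vertex of $C_i$, and since deleting vertices neither merges components nor creates a new component of size larger than $n/2$, the next heavy component $C_{i+1}$ is a proper subset of $C_i$. Thus the big regions $C_1\supseteq C_2\supseteq\cdots$ strictly shrink, the process halts, and the output $X$ is connected (it carries the path $P_i$) with every component of $G-N[X]$ of size at most $n/2$ by the stopping rule. Every step --- finding components, identifying the heavy one, finding a legal extension --- is polynomial, so the whole procedure runs in polynomial time.

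The remaining and main task is to show that this process runs for at most $t$ steps, so that $|X|\le t$; this is the only place where $C_{>t}$‑freeness is used. The key claim, due to Gy\'arf\'as, is that if the process ran for more than $t$ steps, then the long induced path $w_1\cdots w_k$ together with a carefully chosen path inside the still‑large region $C_{k-1}$ joining the two ends of $w_1\cdots w_k$ externally would close up into an induced cycle on more than $t$ vertices --- contradicting $C_{>t}$‑freeness, just as in the $P_t$‑free case the induced path itself becomes the forbidden object once it reaches $t$ vertices. The hard part is making this precise: choosing the extension vertices $w_{i+1}$ and the closing path inside $C_{k-1}$ so that no chords appear, which requires tracking, along the path, which portions of the nested regions $C_1\supseteq C_2\supseteq\cdots$ each $w_j$ can see. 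I would carry out this bookkeeping following~\cite{ChudnovskyPPT20} (and the original argument of Gy\'arf\'as~\cite{gyarfas}), or simply invoke those references for it.
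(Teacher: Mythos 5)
There is a genuine gap, and it sits exactly where you deferred to the references: the claim that your greedy walk stops within $t$ steps is false, so the construction as described does not prove the theorem. The analogy with \cref{thm:Pt-free-balanced} breaks precisely here: in the $P_t$-free case the walk $w_1\cdots w_k$ is itself an induced path, so $P_t$-freeness caps $k$ at $t-1$; but $C_{>t}$-freeness puts no bound on the length of induced paths. Take $G=P_n$ (or any long spider/tree), which is $C_{>t}$-free for every $t\geq 3$: starting your walk at an endpoint, the component of $G-N[P_i]$ on the far side has more than $n/2$ vertices until $i\approx n/2$, so the process runs for $\Theta(n)$ steps and outputs a set $X$ with $\Theta(n)$ vertices, while no induced cycle whatsoever exists to yield the promised contradiction. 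So the ``key claim'' that a walk of length more than $t$ could be closed, through the still-large region, into a hole of length more than $t$ cannot be right in general --- in a tree there is nothing to close up. The actual argument for $C_{>t}$-free graphs (Chudnovsky et al.~\cite{ChudnovskyPPT20}) must and does choose $X$ differently: one keeps the connected set at size at most $t$ throughout (intuitively, a bounded window of the walk that is slid forward, dropping the tail while extending the head), and long-hole-freeness is used to show that dropping the tail cannot let a large component re-form or merge around the window --- if a component could attach both near the new head and near the discarded tail, the two attachment routes together with the window would close into a hole of length more than $t$. This is the same mechanism the paper itself exploits later (cf.\ \cref{cl:tripod-hit} and \cref{lem:secondary-comps}), and it is quite different from bounding the length of an unboundedly growing walk.

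Two smaller points. First, your ``small subtlety'' --- that the extension vertex can be taken adjacent to $w_i$ and to no earlier $w_j$ --- is true but not automatic from what you wrote; it needs the standard invariant that $w_{i+1}$ is chosen inside the previous big component $C_{i-1}$ and adjacent to the current one $C_i$ (then $w_{i+1}\notin N[P_{i-1}]$ forces its only path-neighbor to be $w_i$). Second, for calibration: the paper does not prove this statement at all --- it imports it from Gy\'arf\'as and Chudnovsky et al.~\cite{gyarfas,ChudnovskyPPT20} --- so simply citing those works would have been consistent with the paper; but the proof sketch you do give, with $X$ equal to the entire walk and termination after $t$ steps, is not a correct route to the statement.
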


However, in the previous section we used the $P_t$-freeness of the graph in question also in one other place: to argue that the buckets $\Bc^G_{u,v}$ are of polynomial size. This was crucial for the argument that $\Omega(\log n)$ success branches on heavy vertices lead to emptying a significant fraction of the buckets. Solving this issue requires reworking the concept of buckets. 

The idea is that in the $C_{>t}$-free case, the objects placed in buckets will connect triples of vertices, rather than pairs. Formally, a {\em{connector}} is a graph formed from three disjoint paths $Q_1,Q_2,Q_3$ by picking one endpoint $a_i$ of $Q_i$, for each $i=1,2,3$, and either identifying vertices $a_1,a_2,a_3$ into one vertex, or turning $a_1,a_2,a_3$ into a triangle; see \cref{fig:tripod}. The paths $Q_i$ are the {\em{legs}} of the connector, the other endpoints of the legs are the {\em{tips}}, and the (identified or not) vertices $a_1,a_2,a_3$ are the {\em{center}} of the connector. We remark that we allow the degenerate case when one or more paths $Q_1,Q_2,Q_3$ has only one vertex, but we require the tips to be pairwise distinct.

The following claim is easy to prove by considering any inclusion-wise minimal connected induced subgraph containing $u,v,w$.

\begin{claim}\label{cl:connector-connected}
 If vertices $u,v,w$ belong to the same connected component of a graph $G$, then in $G$ there is an induced connector with tips $u,v,w$.
\end{claim}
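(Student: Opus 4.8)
The plan is to take $H$ to be an inclusion-wise minimal connected induced subgraph of $G$ that contains $u,v,w$ --- such an $H$ exists because $u,v,w$ lie in a common connected component of $G$ --- and to prove that $H$ is itself an induced connector with tips $u,v,w$. Call $u,v,w$ the \emph{terminals}, and assume they are pairwise distinct (otherwise the statement is read in the obvious degenerate way). The whole argument is powered by two consequences of minimality. First, every vertex of $V(H)\setminus\{u,v,w\}$ is a cut vertex of $H$: if some non-terminal vertex $x$ were not a cut vertex, then $H-x$ would be a strictly smaller connected induced subgraph still containing all terminals. Second, for every cut vertex $c$ of $H$, each connected component of $H-c$ contains a terminal: otherwise one could delete the vertex set of a terminal-free component (the rest of $H$ stays connected through $c$), again contradicting minimality. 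In particular $H-c$ has at most three components, at most two when $c$ is a terminal, and since a vertex of degree one is never a cut vertex, every leaf of $H$ is a terminal, so $H$ has at most three leaves.

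I would then dispose of the $2$-connected case: then $H$ has no cut vertices, so by the first property $V(H)=\{u,v,w\}$, hence $H=K_3$, which is a connector with a triangular centre and three trivial legs. Otherwise $H$ has a nontrivial block decomposition, and the central step is to show that every block of $H$ has at most three vertices --- so each block is an edge or a triangle --- and that at most one block is a triangle. The bookkeeping is as follows. For a block $B$ and a cut vertex $c$ of $H$ with $c\in V(B)$, $2$-connectivity keeps $B-c$ inside a single component of $H-c$, so the remaining components of $H-c$ form a nonempty set $J_c$; one verifies that, as $c$ ranges over the cut vertices of $H$ inside $B$, the sets $J_c$ are pairwise disjoint and disjoint from $V(B)$, and each contains a terminal by the second property. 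Hence the number of cut vertices inside $B$ is at most the number of terminals outside $B$; since every vertex of $B$ is a terminal or a cut vertex by the first property, this forces $|V(B)|\le 3$. The same accounting rules out two triangular blocks (they would put two terminals in a common $J_c$) and shows that, in the unique triangular block $\{x,y,z\}$ if one exists, each of $x,y,z$ is either a terminal or a cut vertex whose $J_c$ contains exactly one terminal.

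Finally I would read off the connector. If $H$ has no triangular block then $H$ is a tree, and a tree with at most three leaves is a path or a subdivided claw; since its leaves are exactly the terminals (and the branch vertex of a claw, having degree three, cannot be a terminal by the second property), $H$ is a connector whose centre is a single vertex, with one trivial leg in the path case. If $H$ has a triangular block $\{x,y,z\}$, then for each of $x,y,z$ that is a cut vertex, the piece of $H$ hanging off it lies inside its $J_c$, contains exactly one terminal, and --- applying the two properties to its internal vertices --- has no branch vertex, hence is a path terminating at that terminal; together with the trivial legs at triangle vertices that are themselves terminals, this exhibits $H$ as a connector with triangular centre $\{x,y,z\}$ and tips $u,v,w$. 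Along the way one should check that the object produced matches the definition of a connector literally, in particular handling legs of length zero and the case where a terminal coincides with a vertex of the central triangle.

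The step I expect to be most delicate is precisely this block-size analysis: verifying carefully that the pieces of $H$ dangling from the different cut vertices of a common block are pairwise disjoint and each captures its own terminal --- this is exactly what confines $H$ to a family of $K_2$'s together with at most one $K_3$ --- and that around a triangular block these pieces are paths rather than branching trees.
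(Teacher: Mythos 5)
Your argument is correct, and it starts from exactly the object the paper itself suggests --- an inclusion-wise minimal connected induced subgraph $H$ containing $u,v,w$ (this is formalized in the paper as \cref{lem:connector}) --- but the way you analyse $H$ is genuinely different. The paper's proof is short and constructive: inside the minimal set it fixes a shortest $u$--$v$ path $P_{uv}$ and a shortest path $P_w$ from $w$ to $V(P_{uv})$, observes that minimality forces $H$ to be exactly their union, and then a brief local analysis of how the last two vertices of $P_w$ attach to $P_{uv}$ (the attachment vertex $q$ and its predecessor $p$, whose neighbours on $P_{uv}$ must be $q$ and at most one vertex consecutive with $q$, else one could shortcut and delete vertices) yields either the identified-centre or the triangle-centre connector. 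You instead argue globally through the block structure: minimality makes every non-terminal vertex a cut vertex and forces every branch at a cut vertex to contain a terminal, and counting the three terminals against the pairwise disjoint branches $J_c$ hanging off the cut vertices of a block shows that every block has at most three vertices, with at most one triangle block; reading off the block tree (paths or a subdivided claw in the tree case, paths pendant on the unique triangle otherwise) then exhibits $H$ as a connector. I checked the delicate steps you flag --- the disjointness of the $J_c$'s within a block, the exclusion of a second triangle, and the fact that each piece hanging off a triangle vertex is a path ending in its unique terminal --- and they do go through, although writing them out in full takes noticeably more space than the paper's two paragraphs. What your route buys is a complete structural classification of the minimal subgraph via its blocks, independent of any choice of paths; what the paper's route buys is brevity and an essentially algorithmic construction of the legs.
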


A {\em{tripod}} is a connector in which every leg has length at most $t/2+1$ (w.l.o.g. $t$ is even). Every connector contains a {\em{core}}: the tripod induced by the vertices at distance at most $t/2$ from the center. The next claim is the key observation that justifies looking at connectors and tripods.  

\begin{claim}\label{cl:tripod-hit}
 Let $G$ be a $C_{>t}$-free graph, let $T$ be an induced connector in $G$, and let $X$ be a subset of vertices such that $G[X]$ is connected and no two tips of $T$ are in the same connected component of $G-N[X]$. Then $N[X]$ intersects the core of $T$. 
\end{claim}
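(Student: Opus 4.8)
The plan is to argue by contradiction. Suppose $N[X]$ is disjoint from the core of $T$; I will build from this an induced cycle on more than $t$ vertices, contradicting the $C_{>t}$-freeness of $G$.

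First I would locate the center of $T$. Since the center belongs to the core of $T$, it avoids $N[X]$, and hence lies in some connected component $C_0$ of $G - N[X]$. The three tips of $T$ lie in pairwise distinct components of $G-N[X]$, so at most one of them lies in $C_0$; relabel the legs so that $Q_2$ and $Q_3$ have their tips outside $C_0$. For $i\in\{2,3\}$, traverse $Q_i$ starting from the center towards the tip of $Q_i$: the traversal begins in $C_0$ and ends outside $C_0$, so it leaves $C_0$ at some first vertex $p_i$. The vertex preceding $p_i$ on $Q_i$ lies in $C_0$ and is adjacent to $p_i$, so $p_i$ must lie in $N[X]$ (otherwise it would be in $C_0$). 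Moreover, every vertex of $Q_i$ strictly between the center and $p_i$ lies in $C_0$, hence avoids $N[X]$. Finally, since $p_i\in N[X]$ while the core avoids $N[X]$, the vertex $p_i$ lies outside the core; as $Q_i$ is an induced path, this means $p_i$ appears only after the first $t/2$ vertices of $Q_i$, i.e.\ the subpath of $Q_i$ from the center to $p_i$ has at least $t/2+1$ edges.

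Next I would assemble a long induced path and close it through $X$. Let $P$ be the path obtained by concatenating the subpath of $Q_2$ from $p_2$ to the center, the center itself (a single vertex, or the edge $a_2a_3$ when the center is a triangle), and the subpath of $Q_3$ from the center to $p_3$. As a subgraph of the induced connector $T$, $P$ is an induced path in $G$, and by the previous paragraph it has at least $(t/2+1)+(t/2+1)=t+2$ edges, hence at least $t+3$ vertices. Every internal vertex of $P$ either lies on the center (hence in the core) or lies strictly between the center and $p_2$ or $p_3$ on its leg; in either case it avoids $N[X]$, while the endpoints $p_2,p_3$ lie in $N[X]$. Since $G[X]$ is connected and $p_2,p_3\in N[X]$, the graph $G[X\cup\{p_2,p_3\}]$ is connected; let $\sigma$ be a shortest path from $p_2$ to $p_3$ in it, so that every internal vertex of $\sigma$ lies in $X$. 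Note $\sigma$ has at least one internal vertex, since $p_2$ and $p_3$ sit on distinct legs of the induced connector $T$ and are therefore non-adjacent.

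It remains to verify that $C\coloneqq P\cup\sigma$ is an induced cycle. The internal vertices of $P$ avoid $N[X]$, hence avoid $X$, so they are distinct from and non-adjacent to the internal vertices of $\sigma$. The path $P$ is induced, so it has no chords, and in particular $p_2$ and $p_3$ are non-adjacent. The path $\sigma$ is a shortest path, so it is chordless, and neither $p_2$ nor $p_3$ can be adjacent to an internal vertex of $\sigma$ other than its own neighbour along $\sigma$. Consequently the only edges spanned by $V(P)\cup V(\sigma)$ are the edges of $C$, so $C$ is an induced cycle; it has at least $(t+3)+1=t+4>t$ vertices, a contradiction. The step I expect to be the main obstacle is precisely this chord accounting --- in particular ruling out edges between $X$ and the interior of $P$, which is exactly why $p_i$ must be chosen as the first vertex leaving $C_0$ and why it is crucial that $p_i$ lies beyond the core --- together with carrying the two possible shapes of the center (an identified vertex versus a triangle) uniformly through the argument.
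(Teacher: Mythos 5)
Your proof is correct and follows essentially the same route as the paper's: both arguments build an induced cycle from segments of two legs of the connector together with a path whose interior lies in $X$, and then invoke $C_{>t}$-freeness to force $N[X]$ to hit the core. The only difference is organizational — you argue by contradiction and select the first vertices of the legs leaving the component of the center, whereas the paper argues directly by choosing the vertices of $N[X]$ on two legs closest to the center — and your chord-accounting is, if anything, a more careful write-up of the same construction.
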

\begin{claimproof}
 Since no two tips of $T$ lie in the same component of $G-N[X]$, it follows that $N[X]$ intersects at least two legs of $T$, say $Q_1$ and $Q_2$ at vertices $q_1$ and $q_2$, respectively. We may choose $q_1$ and $q_2$ among $N[X]\cap V(Q_1)$ and $N[X]\cap V(Q_2)$ so that they are as close in $T$ as possible to the center of $T$. Since $G[X]$ is connected, there exists a path $P$ with endpoints $q_1$ and~$q_2$ such that all the internal vertices of $P$ belong to $X$. Now $P$ together with the shortest $q_1$-$q_2$ path within $T$ form an induced cycle in~$G$. As this cycle must have at most $t$ vertices, we conclude that $q_1$ or $q_2$ belongs to the core of $T$.
\end{claimproof}

\cref{cl:tripod-hit} suggests that in $C_{>t}$-free graphs, cores of connectors are objects likely to be hit by balanced separators provided by \cref{lem:separator}, similarly as in $P_t$-free graphs, induced paths were likely to be hit by balanced separators given by \cref{thm:Pt-free-balanced}. Let us then use cores as objects for defining buckets.

Let $G$ be a $C_{>t}$-free graph.
For an unordered triple $\{u,v,w\}\in \binom{V(G)}{3}$ of distinct vertices, we define the bucket $\Bc^G_{u,v,w}$ as the set of all cores of all induced connectors with tips $u,v,w$. Let us stress here that $\Bc^G_{u,v,w}$ is a set, not a multiset, of tripods: even if some tripod is the core of multiple connectors with tips $u,v,w$, it is included in $\Bc^G_{u,v,w}$ only once. Therefore, as each tripod has $\Oh(t)$ vertices, the buckets are again of size $n^{\Oh(t)}$ and can be enumerated in polynomial time. By \cref{cl:connector-connected}, the bucket $\Bc^G_{u,v,w}$ is nonempty if and only if $u,v,w$ are in the same connected component of $G$. Moreover, from \cref{cl:tripod-hit} we  infer the following.

\begin{claim}
 Let $\{u,v,w\}\in \binom{V(G)}{3}$ be a triple of vertices of $G$ and let $X$ be a vertex subset such that $G[X]$ is connected and no two vertices out of $u,v,w$ belong to the same connected component of $G-N[X]$. Then $N[X]$ intersects all the tripods in the bucket $\Bc^G_{u,v,w}$.
\end{claim}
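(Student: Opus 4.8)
The plan is to derive this claim as an immediate corollary of \cref{cl:tripod-hit}. I would proceed as follows. First, fix an arbitrary tripod $T'\in \Bc^G_{u,v,w}$; the goal is to show that $N[X]$ intersects $T'$. Second, extract the witnessing connector: by the very definition of the bucket, $T'$ is the core of some induced connector $T$ in $G$ whose three tips are precisely $u$, $v$, and $w$. Third, check that the pair $(T,X)$ satisfies the hypotheses of \cref{cl:tripod-hit}: the graph $G$ is $C_{>t}$-free and $T$ is an induced connector in $G$ by the choice of $T$; $G[X]$ is connected by assumption; and, since the tips of $T$ are $u,v,w$ and by assumption no two of $u,v,w$ lie in the same connected component of $G-N[X]$, no two tips of $T$ lie in the same component of $G-N[X]$. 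Fourth, apply \cref{cl:tripod-hit}: it yields that $N[X]$ intersects the core of $T$, which is exactly $T'$. Since $T'$ was an arbitrary element of $\Bc^G_{u,v,w}$, it follows that $N[X]$ intersects every tripod in the bucket, as desired.

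The only points requiring any care are bookkeeping ones. One needs to note that every element of $\Bc^G_{u,v,w}$ is, by construction, the core of at least one induced connector whose tips are $u,v,w$ (so that \cref{cl:tripod-hit} is applied with the correct triple of tips, and it does not matter if the same tripod also arises as the core of other connectors with different tips). One also uses that a core is itself a tripod, so the target statement ``$N[X]$ intersects a tripod in the bucket'' is literally the conclusion ``$N[X]$ intersects the core of $T$'' furnished by \cref{cl:tripod-hit}.

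In particular, I do not expect any genuine obstacle here: all the combinatorial content --- the induced-cycle argument bounding how close to the center of a connector a separator must land --- has already been isolated in \cref{cl:tripod-hit}, and the present claim is merely its reformulation at the level of buckets. This is exactly parallel to the $P_t$-free case, where the statement that $N[X]$ hits every path in $\Bc^G_{u,v}$ followed at once from the boundedness of induced $u$--$v$ paths; here the analogous ``bounded object'' that must be hit is the core of a connector, whose boundedness is precisely what makes it a tripod.
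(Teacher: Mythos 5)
Your proposal is correct and is exactly the derivation the paper intends: with the overview's definition of $\Bc^G_{u,v,w}$ as the set of cores of induced connectors with tips $u,v,w$, the claim is an immediate consequence of \cref{cl:tripod-hit}, which is all the paper itself says ("from \cref{cl:tripod-hit} we infer the following"). Your bookkeeping remarks (each bucket element is the core of at least one connector with the right tips, and a core is itself a tripod) are precisely the points needed, so nothing is missing.
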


Now we would like to obtain an analogue of \cref{cl:heavy-Pt}, that is, find a vertex $x$ such that $N[x]$ intersects a significant fraction of tripods in a significant fraction of buckets. Let then $X$ be a set provided by \cref{lem:separator} for $G$. For a moment, let us assume optimistically that each connected component of $G-N[X]$ contains at most $n/10$ vertices, instead of $n/2$ as promised by \cref{lem:separator}. Observe that if we choose a triple of distinct vertices uniformly at random, then with probability at least $\frac{1}{2}$ no two of these vertices will lie in the same connected component of $G-N[X]$. By \cref{cl:tripod-hit}, this implies that $N[X]$ intersects all the tripods in at least half of the buckets. By the same averaging argument as before, we get the following.

\begin{claim}\label{cl:very-shattered}
 Suppose that in $G$ there is a set $X$ consisting of at most $t$ vertices such that $G[X]$ is connected and every connected component of $G-N[X]$ has at most $n/10$ vertices. Then there is a heavy vertex in $G$. 
\end{claim}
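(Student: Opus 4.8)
The plan is to prove this by the same two‑step scheme that underlies \cref{cl:heavy-Pt}: first show that a constant fraction of the buckets have \emph{all} of their tripods intersected by $N[X]$, and then extract a single heavy vertex from $X$ by averaging over the at most $t$ vertices of $X$.

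For the first step, I would show that at least half of the unordered triples $\{u,v,w\}\in\binom{V(G)}{3}$ are \emph{good}, in the sense that no two of $u,v,w$ lie in the same connected component of $G-N[X]$. Let $C_1,\dots,C_m$ be the components of $G-N[X]$; by hypothesis $|C_i|\le n/10$ for each $i$, while $\sum_i|C_i|\le n$. A triple that is not good has two of its vertices in some common $C_i$, and counting such triples by the component responsible for the bad pair gives at most $\sum_i\binom{|C_i|}{2}\,(n-2)\le \tfrac{n-2}{2}\sum_i|C_i|^2\le \tfrac{n-2}{2}\cdot\tfrac{n}{10}\cdot n=\tfrac{(n-2)n^2}{20}$, which is strictly smaller than $\tfrac12\binom{n}{3}$ for every $n\ge 3$ (a one‑line computation, after cancelling $n(n-2)$). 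Hence at least half of the triples are good. Now fix a good triple $\{u,v,w\}$. Since $G[X]$ is connected and no two of $u,v,w$ lie in a common component of $G-N[X]$, \cref{cl:tripod-hit} applies to every induced connector with tips $u,v,w$, so $N[X]$ meets the core of each such connector; equivalently, $N[X]$ intersects every tripod in $\Bc^G_{u,v,w}$. Therefore $N[X]$ hits all tripods in at least half of all buckets.

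For the second step I would argue exactly as in \cref{cl:heavy-Pt}. Fix a good bucket $B=\Bc^G_{u,v,w}$: each tripod $T\in B$ is hit by $N[X]=\bigcup_{x\in X}N[x]$, hence by $N[x]$ for at least one $x\in X$, and summing over $x$ shows that some $x_B\in X$ hits at least $|B|/|X|\ge|B|/t$ of the tripods of $B$, i.e.\ a $\tfrac1t$‑fraction of them. Assigning such an $x_B$ to each good bucket and pigeonholing over the at most $t$ possible values, some single vertex $x^\star\in X$ equals $x_B$ for at least a $\tfrac1{2t}$‑fraction of all buckets, and for each of these buckets $N[x^\star]$ intersects a $\tfrac1t$‑fraction of the tripods. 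Thus $x^\star$ is heavy; moreover, since the buckets have size $n^{\Oh(t)}$ and can be listed in polynomial time, such a vertex can even be found efficiently. (If $G$ is disconnected some buckets are empty; this causes no difficulty, as an empty bucket is vacuously ``fully intersected'' and the averaging goes through verbatim.)

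I do not expect any individual step to be a genuine obstacle: the conceptual work has already been absorbed into \cref{cl:connector-connected} and \cref{cl:tripod-hit}, which let triples of vertices together with their connectors and cores play the role that pairs of vertices together with induced paths played in the $P_t$‑free case. The one point that needs care is the numerical constant: with only the $n/2$ bound on component sizes supplied by \cref{lem:separator}, the counting estimate above fails (the fraction of good triples need not exceed $\tfrac12$), which is exactly why \cref{cl:very-shattered} is stated under the optimistic hypothesis that every component of $G-N[X]$ has at most $n/10$ vertices. Producing a separator with this stronger balance guarantee is the task left to the next part of the argument.
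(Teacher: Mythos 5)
Your proposal is correct and follows essentially the same route the paper sketches for this claim: a counting (or probabilistic) argument showing that at least half of the vertex triples have no two members in a common component of $G-N[X]$, then \cref{cl:tripod-hit} to conclude that $N[X]$ hits every tripod in at least half of the buckets, and finally the same two-level averaging over the at most $t$ vertices of $X$ as in \cref{cl:heavy-Pt} to extract a single heavy vertex. Your explicit bad-triple count and the remark about empty buckets are fine and match the intended argument; no gap.
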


Here, we define a heavy vertex as before: it is a vertex $x$ such that $N[x]$ intersects at least a $\frac{1}{t}$ fraction of tripods in at least a $\frac{1}{2t}$ fraction of buckets.

Unfortunately, our assumption that every component of $G-N[X]$ contains at most $n/10$ vertices, instead of at most $n/2$ vertices, is too optimistic. Consider the following example: $G$ is a path on $n$ vertices. The cores of connectors degenerate to subpaths consisting of at most $t$ consecutive vertices of the path, and for every vertex $x$, the set $N[x]$ intersects any tripod in only an $\Oh(t/n)$ fraction of the buckets. Therefore, in this example there is no heavy vertex at all. We need to resort to a different strategy.

\paragraph*{Secondary branching.}
So let us assume that the currently considered graph $G$ is connected and has no heavy vertex --- otherwise we may either recurse into connected components or branch on the heavy vertex (detectable in polynomial time). We may even assume that there is no {\em{$(10^{-8}/t)$-heavy}} vertex: a vertex $x$ such that $N[x]$ intersects at least a $(10^{-8}/t)$ fraction of tripods in at least a $(10^{-8}/t)$ fraction of buckets buckets. Indeed, branching on such vertices also leads to quasipolynomial running time (with all factors in the analysis appropriately~scaled). 

Let us fix a set $X$ provided by \cref{lem:separator} for $G$; then $G[X]$ is connected and each connected component of $G-N[X]$ has at most $n/2$ vertices. By \cref{cl:very-shattered}, there must be some components of $G-N[X]$ that have more than $n/10$ vertices, for otherwise there would be a heavy vertex.
Let $C$ be such a component and let us apply \cref{lem:separator} again, this time to $G[C]$, obtaining a connected set $Y$ of size at most $t$ such that every connected component of
$G[C] \setminus N[Y]$ has at most $|C|/2$ vertices. If the distance between $X$ and $Y$ is small, say at most $10t$, then one can replace $X$ with the union of $X$, $Y$, and a shortest
path between $X$ and $Y$, and repeat the argument. The new set $X$ is still of size $\Oh(t)$, so the argument of \cref{cl:very-shattered} applies with adjusted constants, 
and the absence of a heavy vertex gives another component $C'$ with more than $n/10$ vertices. 
This process can continue only for a constant number of steps. Hence, at some moment we end up with a connected set $X$ of size $\Oh(t)$ such that every connected component of $G\setminus N[X]$
has at most $n/2$ vertices, a connected component $C$ of $G\setminus N[X]$ with more than $n/10$ vertices, a connected set $Y \subseteq C$ of size at most $t$ such that every connected
component of $G[C] \setminus N[Y]$ has at most $|C|/2$ vertices and the distance between $X$ and $Y$ is more than $10t$. 

The crucial observation now is as follows: there exists exactly one connected component of $G[C] \setminus N[Y]$, call it $D_0$, that is adjacent to a vertex of $N[X]$. 
The existence of at least one such component follows from the connectivity of $G$. If there were two such components, say $D_0$ and $D_1$, then one can construct an induced cycle in $G$
by going from $X$ via $D_0$ to $Y$ and back to $X$ via $D_1$. This cycle is long since the distance between $X$ and $Y$ is more than $10t$, which contradicts $G$ being $C_{\geq t}$-free.
Denote $B := C \setminus D_0$. Note that $B$ is connected and $|B| =  |C|-|D_0| \geq |C|-|C|/2 = |C|/2 \geq n/20$.%
\footnote{We refer to Figure~\ref{fig:secondary} in Section~\ref{sec:secondary} for an illustration.
 Note that in the formal argument of Section~\ref{sec:secondary} the component 
 $C$ is called $C_2$ and the distance between $X$ and $Y$ is lower bounded by $8t$, not $10t$.}

Repeating the same proof as in the previous observation, note that for every induced subgraph $G'$ of $G$, there is at most one component of $G'[V(G') \cap C]$ that contains both a vertex
of $B$ and a neighbor of $N[X]$: If there were two such components, one could construct a long induced cycle by going from $X$ via the first component to $B$ and back to $X$ via the second one. 
If such a component exist, we call it \emph{the chip} of $G'$. 

Note that if $G'$ has no chip, then every connected component of $G'$ contains at most $0.95n$ vertices as $n/20 \leq |B| \leq n/2$.
Thus, the goal of the secondary branching is to get to an induced subgraph that contains no chip, that is, to separate $B$ from $N[X]$. 
The crucial combinatorial insight that we discuss in the next paragraph is that the area of the graph between $N[X]$ and $B$ behaves like a $P_t$-free graph and 
is amenable to the branching strategy for $P_t$-free graphs.

Consider the chip $C'$ in an induced subgraph $G'$ of $G$. A {\em{$C'$-link}} is a path in $G'$ with endpoints in $N[X]\cap N_{G'}(C')$ and all internal vertices in $C'$; this path should be induced, except that we allow the existence of an edge between the endpoints. Observe the following:

\begin{claim}\label{cl:link-short}
 Every $C'$-link has at most $t$ vertices.
\end{claim}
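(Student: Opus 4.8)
The plan is to argue exactly as in the proof of \cref{cl:tripod-hit}, using that a long $C'$-link would close up into a long induced cycle together with a connecting path inside $X$. So let $L$ be a $C'$-link in $G'$, with endpoints $p_1,p_2\in N[X]\cap N_{G'}(C')$. First I would recall that $C'$ is a connected induced subgraph of $G'$ all of whose vertices lie in $C$, and that $G[X]$ is connected and $X$ is disjoint from $C$ (indeed from $N[X]$, hence from $C\subseteq G-N[X]$). Next I would produce a path $P$ in $G$ with endpoints $p_1,p_2$ whose internal vertices all lie in $X$: since $p_1,p_2\in N[X]$, each $p_i$ is either in $X$ or has a neighbour in $X$, and since $G[X]$ is connected we may route through $X$; by taking a shortest such path we may assume $P$ is induced and that $X\cap V(P)$ induces no chords with $\{p_1,p_2\}$ beyond the edges of $P$ itself. (If $p_1=p_2$ we handle this trivial case separately — then the "link" is a single vertex and the bound is immediate, so assume $p_1\neq p_2$.)

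Now concatenating $L$ and $P$ gives a closed walk through $p_1$ and $p_2$. The key point is that $V(L)\setminus\{p_1,p_2\}\subseteq C'\subseteq C$ is disjoint from $X$ and from $N[X]$, so the only possible vertices shared by $L$ and $P$, and the only possible edges between $V(L)\setminus\{p_1,p_2\}$ and $V(P)\setminus\{p_1,p_2\}$, are controlled: internal vertices of $P$ lie in $X$, hence are non-adjacent to internal vertices of $L$ (which lie in $C$, a component of $G-N[X]$, so have no neighbour in $X$). Therefore $L\cup P$ is an induced cycle in $G$, except possibly for the single permitted chord $p_1p_2$; but that chord, if present, only shortens it to still-induced cycles, so in any case $G$ contains an induced cycle on $|V(L)|+|V(P)|-2\geq |V(L)|$ vertices (using $|V(P)|\geq 2$). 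Here I am using the definition of $C'$-link, which allows an edge between $p_1$ and $p_2$ precisely so that this concatenation is genuinely induced.

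Since $G$ is $C_{>t}$-free, this induced cycle has at most $t$ vertices, and hence $|V(L)|\leq t$, as required.

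The only mild subtlety — and the step I would be most careful about — is the degenerate configurations: when $p_1=p_2$ (so $L$ is a single vertex or a trivial path, and there is nothing to prove), when $p_1$ or $p_2$ already lies in $X$ (so the $X$-path $P$ may be as short as a single vertex or edge, and one must check $|V(P)|\geq 2$ still holds, which it does as soon as $p_1\neq p_2$), and when the permitted $p_1p_2$ edge is present (handled by passing to the chordless subcycle). None of these affects the bound; they just need to be stated so that the induced-cycle produced is genuinely on at least $|V(L)|$ vertices.

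\begin{claimproof}
If the two endpoints of the $C'$-link coincide, the link has a single vertex and the bound is trivial, so assume the endpoints $p_1\neq p_2$ are distinct vertices of $N[X]\cap N_{G'}(C')$. Since $G[X]$ is connected and $p_1,p_2\in N[X]$, there is a path $P$ in $G$ from $p_1$ to $p_2$ all of whose internal vertices lie in $X$; choosing $P$ shortest, we may assume $P$ is induced, and $|V(P)|\geq 2$. The internal vertices of the link lie in $C'\subseteq C$, a connected component of $G-N[X]$, so they have no neighbour in $X\cup N[X]$; in particular they are distinct from and non-adjacent to the internal vertices of $P$. Hence $V(L)\cap V(P)=\{p_1,p_2\}$ and the only edges of $G[V(L)\cup V(P)]$ not lying on $L$ or $P$ are the (at most one) possible chord $p_1p_2$. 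Consequently $L$ and $P$ together form an induced cycle of $G$, or, if the chord $p_1p_2$ is present, two induced cycles whose union is $L\cup P$; in either case $G$ contains an induced cycle on at least $|V(L)|$ vertices (using $|V(P)|\geq 2$). Since $G$ is $C_{>t}$-free, this cycle has at most $t$ vertices, so $|V(L)|\leq t$.
\end{claimproof}
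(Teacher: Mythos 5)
Your argument is correct and is essentially the paper's own proof: close the link into a cycle by routing an induced path between its endpoints through $X$ (the chord $p_1p_2$ being the one allowed exception), and invoke $C_{>t}$-freeness. One small wording slip: internal vertices of the link \emph{can} have neighbours in $N[X]$ (e.g.\ the endpoints themselves lie in $N[X]$ and are adjacent to them), so the justification should say they have no neighbour in $X$ --- which is all you actually use, since the internal vertices of $P$ lie in $X$ --- rather than no neighbour in $X\cup N[X]$.
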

\begin{claimproof}
 Let $P$ be a $C'$-link. Since the endpoints of $P$ are in $N[X]$ and $G[X]$ is connected, there exists an induced path $Q$ in $G[N[X]]$ with same endpoints as $P$ such that all the internal vertices of $P$ are in $X$. Then $P\cup Q$ is an induced cycle in $G$, hence both $P$ and $Q$ must have at most $t$ vertices.
\end{claimproof}

The idea is that in order to cut the chip away, we perform a secondary branching procedure, but this time we use $C'$-links as objects that are hit by neighborhoods of vertices. Formally, for a pair
$\{u,v\}\in \binom{N[X] \cap N_{G'}(C')}{2}$, we consider the {\em{secondary bucket}} $\Lc^{G'}_{u,v}$ consisting of all $C'$-links with endpoints $u$ and $v$.
Again, by \cref{cl:link-short}, each secondary bucket is of size at most $n^t$ and can be enumerated in polynomial time. Note that $\Lc^{G'}_{u,v}$ is nonempty for every distinct vertices $u,v \in N_{G'}(C')$.

We shall say that a vertex $z$ of $G$ is {\em{secondary-heavy}} if $N[z]$ intersects at least a $\frac{1}{t}$ fraction of links in at least a $\frac{1}{2t}$ fraction of nonempty secondary buckets. 

\begin{claim}\label{cl:secondary-heavy}
 If $|N_{G'}(C')| \geq 2$, then there is a secondary-heavy vertex.
\end{claim}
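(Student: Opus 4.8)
The plan is to mimic the proof of \cref{cl:heavy-Pt}, but working inside the part of the graph spanned by the chip $C'$ together with its neighbourhood, and with $C'$-links playing the role that induced paths played there. Write $S\coloneqq N_{G'}(C')$; by assumption $|S|\ge 2$, so there is at least one secondary bucket, and by the remark preceding the claim every secondary bucket $\Lc^{G'}_{u,v}$ with $\{u,v\}\in\binom{S}{2}$ is nonempty. Let $H\coloneqq G'[C'\cup S]$; as an induced subgraph of $G$, the graph $H$ is $C_{>t}$-free. It suffices to find a set $Z$ with $|Z|\le t$ such that $N_{G'}[Z]$ intersects \emph{every} $C'$-link in at least half of the secondary buckets: then the averaging/pigeonhole argument of \cref{cl:heavy-Pt} (over the at most $t$ vertices of $Z$) immediately produces a secondary-heavy vertex.

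To find $Z$, apply a weighted version of \cref{lem:separator} to $H$ with the weight function assigning $1$ to each vertex of $S$ and $0$ to each vertex of $C'$ (such a weighted balanced-separator statement follows from the same Gy\'arf\'as-type argument; alternatively one may apply \cref{lem:separator} unchanged to the graph obtained from $H$ by blowing up each vertex of $S$ into sufficiently many true twins, which preserves $C_{>t}$-freeness). This yields a connected set $Z\subseteq V(H)$ with $|Z|\le t$ such that every connected component of $H-N_H[Z]$ contains at most $|S|/2$ vertices of $S$. A standard computation then shows that the number of pairs $\{u,v\}\in\binom{S}{2}$ with $u,v$ in the same component of $H-N_H[Z]$ is at most $\tfrac12\binom{|S|}{2}$; call the remaining pairs (and the corresponding secondary buckets) \emph{good}, so at least half of the secondary buckets are good.

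The key point is that $N_{G'}[Z]$ meets every link in every good bucket. Let $\{u,v\}$ be a good pair and let $P$ be a $C'$-link with endpoints $u$ and $v$. All internal vertices of $P$ lie in $C'$ and $u,v\in S$, so $V(P)\subseteq V(H)$; moreover $P$ is induced in $G'$ except possibly for the edge $uv$, hence also in $H$. If $u\in N_H[Z]$ or $v\in N_H[Z]$ then already $N_H[Z]\cap V(P)\neq\emptyset$. Otherwise $u$ and $v$ lie in distinct components of $H-N_H[Z]$ (since $\{u,v\}$ is good), and $P$ is a path of $H$ joining them, so $P$ must pass through $N_H[Z]$. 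Either way $N_{G'}[Z]\supseteq N_H[Z]$ intersects $V(P)$, so some $z\in Z$ has $N_{G'}[z]\cap V(P)\neq\emptyset$.

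Now run the counting argument exactly as in \cref{cl:heavy-Pt}. For a good bucket $B$ and $z\in Z$, let $\lambda_B(z)$ be the fraction of links $P\in B$ with $N_{G'}[z]\cap V(P)\neq\emptyset$; by the previous paragraph $\sum_{z\in Z}\lambda_B(z)\ge 1$, so some $z(B)\in Z$ has $\lambda_B(z(B))\ge 1/|Z|\ge 1/t$. Since $|Z|\le t$ and at least half of all secondary buckets are good, pigeonhole gives a vertex $z^\star\in Z$ with $z(B)=z^\star$ for at least a $\tfrac{1}{2t}$ fraction of all secondary buckets, and for each such bucket $N[z^\star]$ hits at least a $\tfrac1t$ fraction of its links; thus $z^\star$ is secondary-heavy. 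The only step requiring thought is the first one: one must balance the separator with respect to $S$ (not with respect to $|V(H)|$, which could be dwarfed by $C'$), and then observe that a link between a separated pair is genuinely destroyed — this works precisely because a link is confined to $H$, its internal vertices living in $C'$. The rest is verbatim the $P_t$-free pigeonhole.
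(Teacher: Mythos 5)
Your proof is correct and follows essentially the same route as the paper's: apply a weighted/balanced variant of \cref{lem:separator} to $G'[N_{G'}[C']]$ so that every component of the remainder contains at most half of $N_{G'}(C')$, observe that the resulting set $Z$ hits every link in at least half of the secondary buckets (since links are paths inside $C'\cup N_{G'}(C')$ between the separated endpoints), and then run the same averaging over the at most $t$ vertices of $Z$ as in \cref{cl:heavy-Pt}. Your extra details (the pair-counting bound and the true-twin blow-up as a substitute for the weighted separator) are fine elaborations of steps the paper leaves implicit.
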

\begin{claimproof}[Sketch]
We apply a weighted variant of \cref{lem:separator} to the graph $G'[N_{G'}[C']]$ in order to find a set $Z\subseteq N_{G'}[C']$ of size at most $t$ such that every connected component of $G'[N_{G'}[C']]-N[Z]$ contains at most half of the vertices of $N_{G'}(C')$. Then $N[Z]$ intersects all the links in at least half of the buckets. The same averaging argument as used before shows that one of vertices of $Z$ is secondary-heavy.
\end{claimproof}

The secondary branching procedure now branches on a secondary-heavy vertex (detectable in polynomial time).
This is always possible by \cref{cl:secondary-heavy} as long as $N_{G'}(C')$ contains at least two vertices. 
If $N_{G'}(C') = \{v\}$ for some vertex $v$, we choose $v$ as the branching pivot and observe that both in the success and the failure branch there is no chip.

The same analysis as in \cref{sec:Pt-free} shows that branching on secondary-heavy vertices results in a recursion tree with $n^{\Oh(\log^2 n)}$ leaves. In each of these leaves 
there is no chip, so every connected component of $G'$ contains at most $0.95n$ vertices.

To summarize, we perform branching on $(10^{-8}/t)$-heavy vertices and recursing on connected components as long as a $(10^{-8}/t)$-heavy vertex can be found.
When this ceases to be the case, we resort to the secondary branching.
Such an application of secondary branching results in producing $n^{\Oh(\log^2 n)}$ subinstances to solve,
and in each of these subinstances the size of the largest connected component is at most $95\%$ of the vertex count of the graph for which the secondary branching was initiated.
We infer that the running time is $n^{\Oh(\log^3 n)}$. This concludes the description of an $n^{\Oh(\log^3 n)}$-time algorithm for \MIS on $C_{>t}$-free graphs.

\subsection{Degeneracy branching}\label{ov:degeneracy}

Our goal in this section is to generalize the approach presented in the previous section to an algorithm solving the following problem: given a vertex-weighted $C_{>t}$-free graph $G$, find a maximum-weight subset of vertices $S$ such that $G[S]$ is $d$-degenerate. Here $d$ and $t$ are considered fixed constants. Thus we allow the solution to be just sparse instead of independent, but, compared to \cref{thm:main}, so far we do not introduce \cmsotwo-expressible properties. Let us call the considered problem {\sc{Maximum Weight Induced $d$-Degenerate Graph}} ({\sc{MWID}}). The algorithm for {\sc{MWID}} that we are going to sketch is formally presented in \cref{sec:branching} and~\cref{sec:pivot}, and is the subject of \cref{thm:deg-branching} there.

Recall  that a graph $G$ is $d$-degenerate if every subgraph of $G$ has a vertex of degree at most $d$. We will rely on the following characterization of degeneracy, which is easy to prove.

\begin{claim}\label{cl:degenerate}
 A graph $G$ is $d$-degenerate if and only if there exists a function $\degord\colon V(G)\to \mathbb{N}$ such that for every $uv\in E(G)$ we have $\degord(u)\neq \degord(v)$ and for each $u\in V(G)$, $u$ has at most $d$ neighbors $v$ with $\degord(v)<\degord(u)$.
\end{claim}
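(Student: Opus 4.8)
The statement is a standard folklore characterization of $d$-degeneracy in terms of an acyclic orientation with bounded out-degree (here phrased via an injective-on-edges labelling $\degord$), so I would prove it by establishing the two implications separately.

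For the forward direction, suppose $G$ is $d$-degenerate. The plan is to construct $\degord$ greedily by peeling off low-degree vertices. Concretely, I would iteratively pick a vertex of minimum degree in the current graph; by $d$-degeneracy this vertex has degree at most $d$ in the current subgraph. Remove it, record it, and continue until the graph is empty. This produces an ordering $v_1, v_2, \ldots, v_n$ of $V(G)$ (the order of removal), and I would set $\degord(v_i) = i$. Since this is an injective function, certainly $\degord(u) \neq \degord(v)$ for every edge $uv$. Moreover, when $v_i$ was removed, its neighbors $v_j$ with $j < i$ had already been removed, so they do not count; the neighbors $v_j$ with $j > i$ are exactly the neighbors of $v_i$ still present when $v_i$ was peeled, and there are at most $d$ of them. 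Rephrasing: $v_i$ has at most $d$ neighbors $v_j$ with $\degord(v_j) > \degord(v_i)$. This is the ``mirror image'' of what the claim asks, so I would instead define $\degord(v_i) = n+1-i$ (reverse the peeling order), which makes each vertex have at most $d$ neighbors of \emph{strictly smaller} label, as required.

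For the backward direction, suppose such a function $\degord$ exists; I must show every subgraph $H$ of $G$ has a vertex of degree at most $d$. Take the vertex $u \in V(H)$ maximizing $\degord(u)$ (using that $\degord$ restricted to $V(H)$ still takes distinct values on the endpoints of every edge of $H$, and picking the maximizer; ties cannot occur along an edge, so the argument is clean). Every neighbor $v$ of $u$ in $H$ satisfies $\degord(v) < \degord(u)$ by maximality (here I use that $uv \in E(G)$ implies $\degord(u) \neq \degord(v)$, so equality is impossible). Hence all neighbors of $u$ in $H$ are among the at-most-$d$ neighbors of $u$ in $G$ with smaller $\degord$-value, giving $\deg_H(u) \le d$. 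Since $H$ was arbitrary, $G$ is $d$-degenerate.

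I do not expect a genuine obstacle here; the only point requiring minor care is the handling of the labelling direction (peeling order versus its reverse) so that the ``at most $d$ neighbors of smaller label'' condition comes out with the correct orientation, and the observation that the condition $\degord(u) \neq \degord(v)$ on edges lets us speak unambiguously of ``the'' $\degord$-maximal endpoint within any subgraph. Both implications are short, so the full proof is essentially the two paragraphs above with the epsilon of detail filled in.
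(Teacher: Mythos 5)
Your proof is correct and follows essentially the same route as the paper: the forward direction is the standard greedy peeling of minimum-degree vertices (the paper phrases it as appending the removed vertex at the last position, which is exactly your reversed labelling), and the backward direction picks the $\degord$-maximal vertex of an arbitrary subgraph, using edge-injectivity to conclude all its neighbors there have strictly smaller label. No gaps.
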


A function $\degord(\cdot)$ satisfying the premise of \cref{cl:degenerate} shall be called a {\em{degeneracy ordering}}. Note that we only require that a degeneracy ordering is injective on every edge of the graph, and not necessarily on the whole vertex set. For a vertex $u$, the value $\degord(u)$ is the {\em{position}} of $u$ and the set neighbors of $u$ with smaller positions is the {\em{left neighborhood}} of $u$.

\medskip

We shall now present a branching algorithm for the {\sc{MWID}} problem. For convenience of exposition, let us fix the given $C_{>t}$-free graph $G$, an optimum solution $S^\star$ in $G$, and a degeneracy ordering $\degord^\star$ of $G[S^\star]$. We may assume that the co-domain of $\degord^\star$ is $[n]\coloneqq \{1,\ldots,n\}$.

Recall that when performing branching for the {\sc{MWIS}} problem, say on a vertex $x$, in the failure branch we were removing $x$ from the graph, while in the success branch we were removing both $x$ and its neighbors. When working with {\sc{MWID}}, we cannot proceed in the same way in the second case, because the neighbors of $x$ can  be still included in the solution. Therefore, instead of modifying the graph $G$ along the recursion, we keep track of two disjoint sets $\VtxTaken$ and $\VtxActive$: $\VtxTaken$ consists of vertices already decided to be included in the solution, while $\VtxActive$ is the set of vertices that are still allowed to be taken to the solution in further steps. Initially, $\VtxTaken=\emptyset$ and $\VtxActive=V(G)$. We shall always branch on a vertex $x\in \VtxActive$: in the failure branch we remove $x$ from $\VtxActive$, while in the success branch we move $x$ from $\VtxActive$ to $\VtxTaken$. The intuition is that moving $x$ to $\VtxTaken$ puts more restrictions on the neighbors of $x$ that are still in $\VtxActive$. This is because they are now adjacent to one more vertex in $\VtxTaken$, and they cannot be adjacent to too many, at least as far as vertices with smaller positions are concerned.

For the positions, during branching we will maintain the following two pieces of information:
\begin{itemize}[nosep]
 \item a function $\degord\colon \VtxTaken\to [n]$  that is our guess on the restriction of $\degord^\star$ to $\VtxTaken$; and
 \item a function $\poslimit\colon \VtxActive\to [n]$ which signifies a {\em{lower bound}} on the position of each vertex of $\VtxActive$, assuming it is to be included in the solution. 
\end{itemize}
Initially, we set $\poslimit(v)=1$ for each $v\in V(G)$.
The quadruple $(\VtxTaken,\VtxActive,\degord,\poslimit)$ as above describes a subproblem solved during the recursion. We will say that such a subproblem is {\em{lucky}} if all the choices made so far are compliant with $S^\star$ and $\degord^\star$, that is,
$$\VtxTaken\subseteq S^\star\subseteq \VtxTaken \cup \VtxActive,\qquad \degord = \degord^\star|_{\VtxTaken},\qquad\textrm{and}\qquad \degord^\star(u)\geq \poslimit(u)\textrm{ for each }u\in S^\star\cap \VtxActive.$$
Additionally to the above, from a lucky subproblem we also require the following property:
\begin{equation}\label{eq:guessed-left}
\textrm{for each }v\in \VtxTaken\textrm{ and }u\in N(v)\cap \VtxActive\textrm{ such that }\poslimit(u)\leq \degord^\star(v)\textrm{, we have }u\in S^\star\textrm{ and }\degord^\star(u)<\degord^\star(v).
\end{equation}
In other words, all the neighbors of a vertex $v\in\VtxTaken$ should have their lower bounds larger than the guessed position of $v$, unless they will be actually included in the solution at positions smaller than that of $v$. The significance of this property will become clear in a moment.

First, observe that if $G[\VtxActive]$ is disconnected, then we can treat the different connected components of $G[\VtxActive]$ separately: for each component $D$ of $G[\VtxActive]$ we solve the subproblem $(\VtxTaken,D,\degord,\poslimit|_D)$ obtaining a solution $S_D$, and we return $\bigcup_D S_D$ as the solution to $(\VtxTaken,\VtxActive,\degord,\poslimit)$. Property~\eqref{eq:guessed-left} is used to guarantee the correctness of this step: it implies that when taking the union of solutions~$S_D$, the vertices of $\VtxTaken$ do not end up with too many left neighbors.

Thus, we may assume that $G[\VtxActive]$ is connected. In this case we execute branching on a vertex of $\VtxActive$. For the choice of the branching pivot $x$ we use exactly the same strategy as described in the previous section: having defined the buckets in exactly the same way, we always pick $x$ to be a heavy vertex in $G[\VtxActive]$, or resort to secondary branching in $G[\VtxActive]$ (which picks secondary-heavy pivots) in the absence of heavy~vertices.

An important observation is that in the success branch --- when the vertex $x\in \VtxActive$ is moved to $\VtxTaken$ --- the algorithm notes a significant progress that allows room for additional guessing (by branching). More precisely, on every root-to-leaf path in the recursion tree there are only $\Oh(\log^3 n)$ success branches, which means that following each success branch we can branch further into $n^{\Oh(1)}$ options, and the size of the recursion tree will be still $n^{\Oh(\log^3 n)}$. We use this power to guess (by branching) the following objects when deciding that $x$ should be included in the solution $S^\star$ (here, we assume that the current subproblem~is~lucky):
\begin{itemize}[nosep]
 \item the position $\degord^\star(x)$;
 \item the set of left neighbors $L=\{v\in \VtxActive\cap N(x)~|~\degord^\star(v)<\degord^\star(x)\}$;\map{Later it's $D$. Shouldn't we make it consistent?}\mip{I stand by my $L$, I want $D$ for components.}
 \item the positions $(\degord^\star(v)\colon v\in L)$; and
 \item for each $v\in L$, its left neighbors $L_v\coloneqq \{u\in \VtxActive\cap N(v)~|~\degord^\star(u)<\degord^\star(v)\}$.
\end{itemize}
This guess is reflected by the following clean-up operations in the subproblem:
\begin{itemize}[nosep]
 \item Move $\{x\}\cup L$ from $\VtxActive$ to $\VtxTaken$ and set their positions in $\degord(\cdot)$ as the guess prescribes. Note that the vertices of $\bigcup_{v \in L} L_v$ are {\em{not}} being moved to $\VtxTaken$.
 \item For each $w\in (N(x)\cap \VtxActive)\setminus L$, increase $\poslimit(w)$ to $\max(\poslimit(w),\degord(x)+1)$.
 \item For each $v\in L$ and $w\in (N(v)\cap \VtxActive)\setminus L_v$, increase $\poslimit(w)$ to $\max(\poslimit(w),\degord(v)+1)$.
\end{itemize}
It is easy to see that if $(\VtxTaken,\VtxActive,\degord,\poslimit)$ was lucky, then at least one of the guesses leads to considering a lucky subproblem. In particular, property~\eqref{eq:guessed-left} is satisfied in this subproblem. This completes the description of a branching step.

It remains to argue why it is still true that on every root-to-leaf path in the recursion tree there~are at most $\Oh(\log^3 n)$ success branches. Before, the key argument was that when a success branch is executed, a constant fraction of buckets (either primary or secondary) loses a constant fraction of elements. Now, the progress is explained by the following claim, which follows easily from the way we perform~branching.

\begin{claim}\label{cl:degenerate-progress}
 Suppose $(\VtxTaken,\VtxActive,\degord,\poslimit)$ is a lucky subproblem in which branching on $x$ is executed, and let $(\VtxTaken',\VtxActive',\degord',\poslimit')$ be any of the obtained child subproblems. Then for every $y\in N(x)\cap \VtxActive$, we either have
 $$y\notin \VtxActive'\qquad\textrm{or}\qquad |\{z\in \VtxTaken\cap N(y)~|~\degord(z)<\poslimit(y)\}| < |\{z\in \VtxTaken'\cap N(y)~|~\degord'(z)<\poslimit'(y)\}|.$$
\end{claim}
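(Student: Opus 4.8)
The plan is to prove \cref{cl:degenerate-progress} by a direct case analysis on the branching step, tracking how the quantity $|\{z\in \VtxTaken\cap N(y)\mid \degord(z)<\poslimit(y)\}|$ (call it the \emph{guessed left-degree} of $y$) changes as we pass from the parent subproblem to a child. The key point is that this quantity is monotone: every vertex that is added to $\VtxTaken$ during the clean-up stays in $\VtxTaken$, positions assigned to vertices in $\degord$ are never changed afterwards, and $\poslimit$ values only ever increase. So it suffices to exhibit, for each $y\in N(x)\cap \VtxActive$ that survives into $\VtxActive'$, at least one new vertex $z$ entering the set $\{z\in \VtxTaken'\cap N(y)\mid \degord'(z)<\poslimit'(y)\}$ that was not in the corresponding set for the parent, \emph{and} to check that no vertex already counted is lost.

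First I would record the monotonicity facts precisely. In any child subproblem $(\VtxTaken',\VtxActive',\degord',\poslimit')$ we have $\VtxTaken\subseteq\VtxTaken'$, $\degord'|_{\VtxTaken}=\degord$, and $\poslimit'(u)\geq\poslimit(u)$ for all $u\in\VtxActive'$ (in the success branch the clean-up only raises $\poslimit$; in the failure branch nothing changes except removing $x$ from $\VtxActive$). Consequently, for any $y\in\VtxActive'$, every $z$ counted in the parent's set is still in $\VtxTaken'$, still adjacent to $y$, and still satisfies $\degord'(z)=\degord(z)<\poslimit(y)\leq\poslimit'(y)$; hence the guessed left-degree of $y$ is nondecreasing. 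It remains to argue the strict increase for surviving neighbors $y$ of $x$.

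Next I would split according to the branch taken. In the \emph{failure} branch, $x$ is removed from $\VtxActive$, so for every $y\in N(x)\cap\VtxActive$ we have $x\notin\VtxActive'$ — but that is not what the claim asks; rather, if $y$ itself survives (is still in $\VtxActive'=\VtxActive\setminus\{x\}$) the claim's first disjunct $y\notin\VtxActive'$ fails, so I must produce the strict increase. Wait — in the failure branch $\VtxTaken'=\VtxTaken$, $\degord'=\degord$, $\poslimit'=\poslimit$ are unchanged, so there is no strict increase either. This tells me the intended reading: the failure branch must be excluded, i.e.\ "branching on $x$ is executed" together with the clean-up operations means the claim is really about the \emph{success} branch (or, the failure branch is handled by observing $y$ need not survive — but it does). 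I would resolve this by checking the formal statement in \cref{sec:branching}; most likely the claim is stated for the success child, where the clean-up genuinely acts, or "any child subproblem" is understood modulo the trivial failure branch on $x$ being subsumed. Assuming the success branch: fix $y\in N(x)\cap\VtxActive$ with $y\in\VtxActive'$. Then $y\neq x$ and $y\notin L$ (since $L$ is moved to $\VtxTaken$) and $y\notin\bigcup_{v\in L}L_v$ is possible or not — in either case $y\in(N(x)\cap\VtxActive)\setminus L$, so the clean-up sets $\poslimit'(y)\geq\degord(x)+1>\degord'(x)$. Since $x\in\VtxTaken'$, $x\in N(y)$, and $\degord'(x)<\poslimit'(y)$, the vertex $x$ now belongs to $\{z\in\VtxTaken'\cap N(y)\mid\degord'(z)<\poslimit'(y)\}$. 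But $x\notin\VtxTaken$ in the parent, so $x$ is genuinely new; combined with the monotonicity above, the guessed left-degree of $y$ strictly increases.

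The main obstacle I anticipate is pinning down exactly which child subproblems the claim quantifies over and making the failure-branch case vacuous in the way the formal development intends — the overview is slightly informal here. Once that is settled, the argument is the short monotonicity-plus-one-new-witness computation sketched above, with the only subtlety being that the "new witness" must be chosen to still be counted, i.e.\ its $\degord'$ value must lie below the (possibly raised) $\poslimit'(y)$; taking $x$ itself works precisely because the clean-up raises $\poslimit(y)$ above $\degord(x)$ for exactly the surviving non-left neighbors of $x$. I would present the proof as: (i) monotonicity of the three data along any branch; (ii) in the success branch, $x$ is a new element of the counted set for every surviving $y\in N(x)\cap\VtxActive\setminus L$, and $y\in L$ implies $y\notin\VtxActive'$; (iii) conclude.
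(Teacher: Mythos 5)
Your proof is correct and takes essentially the same route as the paper: the formal counterpart of this overview claim is \cref{lem:success-quota}, which is indeed stated only for success (grand)children — confirming your resolution of the failure-branch issue — and its proof is exactly your monotonicity-plus-new-witness computation, namely that a surviving neighbor $y$ of the pivot cannot be among the guessed left neighbors, so its lower bound is raised above $\degord(x)$ and $x$ itself enters the counted set while everything previously counted remains counted. Nothing further is needed.
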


Note that for $y\in \VtxActive$, if $y$ gets included in the solution, then the whole set $$M_y\coloneqq \{z\in \VtxTaken\cap N(y)~|~\degord(z)<\poslimit(y)\}$$ must become the left neighbors of $y$. So if the size of $M_y$ exceeds $d$, then we can conclude that $y$ cannot be included in the solution and we can safely remove $y$ from $\VtxActive$. Thus, the increase of the cardinality of $M_y$ for all neighbors $y$ of $x$ that do not get excluded from consideration is the progress achieved by the~algorithm.

Formally, we do as follows. Recall that before, we measured the progress in emptying a bucket $\Bc^G_{u,v,w}$ by monitoring its size. Now, we monitor the {\em{potential}} of $\Bc^G_{u,v,w}$ defined as
$$\Phi(\Bc^G_{u,v,w}) \coloneqq \sum_{T\in \Bc^G(u,v,w)}\  \sum_{y\in V(T)} (d-|M_y|).$$
Thus, $\Phi(\Bc^G(u,v,w))$ measures how much the vertices of tripods of $\Bc^G_{u,v,w}$ have left till saturating their ``quotas'' for the number of left neighbors. From \cref{cl:degenerate-progress} it can be easily inferred that when branching on a heavy vertex, a constant fraction of buckets lose a constant fraction of their potential, and the same complexity analysis as before goes through.

\subsection{\cmsotwo properties}

We now extend the approach presented in the previous section to a sketch of a proof of \cref{thm:main} in full generality. That is, we also take into account \cmsotwo-expressible properties. 

\paragraph*{Degeneracy and treewidth.}
The first step is to argue that degeneracy and treewidth are functionally equivalent in $C_{>t}$-free graphs, i.e., to prove \cref{thm:deg2tw}. This part of the reasoning is presented in \cref{sec:deg2tw}.

The argument goes roughly as follows. Suppose, for contradiction, that $G$ is a $C_{>t}$-free $d$-degenerate graph that has huge treewidth (in terms of $d$ and $t$). Using known results~\cite{DBLP:journals/corr/abs-2008-02133}, in $G$ we can find a huge {\em{bramble}}~$\Bc$ --- a family of connected subgraphs that pairwise either intersect or are adjacent --- such that every vertex of $G$ is in at most two elements of $\Bc$. This property 
means that $\Bc$ gives rise to a huge clique minor in $G'$, the graph obtained from $G$ by adding a copy of every vertex \prz{and making it adjacent to the original vertex?}\mip{This is said by saying true twin.} (the copy is a true twin of the original). Note that $G'$ is still $C_{>t}$-free and is $2d+1$-degenerate. Now, we can easily prove that the obtained clique minor in $G'$ can be assumed to have {\em{depth}} at most $t$: every branch set induces a subgraph of radius at most~$t$. Using known facts about bounded-depth minors~\cite[Lemma 2.19 and Corollary 2.20]{sparsity-notes}, it follows that $G'$ contains a topological minor model of a large clique that has depth at most $3t+1$: every path representing an edge has length at most~$6t+3$. Finally, we show that if we pick at random $t+1$ roots $v_0,\ldots,v_t$ of this topological minor model, and we connect them in order into a cycle in $G'$ using the paths from the model, then with high probability\prz{actually, this high is 0.5, I guess it should be 'positive'}\mip{We could make it as close to $1$ as we like} this cycle will be induced. This is because $G'$ is $(2d+1)$-degenerate, so two paths of the model chosen uniformly at random are with high probability nonadjacent, due to their shortness. Thus, we uncovered an induced cycle on more than $t$ vertices in $G'$, a contradiction.

\paragraph*{Boundaried graphs and types.} We proceed to the proof of \cref{thm:main}. By \cref{thm:deg2tw}, the subgraph $G[S]$ induced by the solution has treewidth smaller than $k$, where $k$ is a constant that depends only on $d$ and $t$. Therefore, we will use known compositionality properties of \cmsotwo logic on graphs of bounded~treewidth.

For an integer $\ell$, an {\em{$\ell$-boundaried graph}} is a pair $(H,\iota)$, where $H$ is a graph and $\iota$ is an injective partial function from $V(H)$ to $[\ell]$, called the {\em{labelling}}. The domain of $\iota$ is the {\em{boundary}} of $(H,\iota)$ and if $\iota(u)=\alpha$, then $u$ is a boundary vertex with label $\alpha$. On $\ell$-boundaried graphs we have two natural operations: {\em{forgetting}} a label --- removing a vertex with this label from the domain of $\iota$ --- and {\em{gluing}} two boundaried graphs --- taking their disjoint union and fusing boundary vertices with the same labels. It is not hard to see that a graph has treewidth less than $\ell$ if and only if it can be constructed from two-vertex $\ell$-boundaried graphs by means of these operations.

The crucial, well-known fact about \cmsotwo is that this logic behaves in a compositional way under the operations on boundaried graphs. Precisely, for each fixed $\ell$ and \cmsotwo sentence $\phi$ there is a finite set $\msotypes$ of {\em{types}}  such that to every $\ell$-boundaried graph $(H,\iota)$ we can assign $\msotype(H,\iota)\in \msotypes$ so that:
\begin{itemize}[nosep]
 \item Whether $H\models \phi$ can be uniquely determined by examining $\msotype(H,\iota)$.
 \item The type of the result of gluing two $\ell$-boundaried graphs depends only on the types of those graphs.
 \item The type of the result of forgetting a label in an $\ell$-boundaried graph depends only on the label in question and the type of this graph.
\end{itemize}
See \cref{lem:mso-type} for a formal statement. In our proof we will use $\ell\coloneqq 6k$, that is, the boundaries will by a bit larger than the promised bound on the treewidth.

\paragraph*{Enriching branching with types.} We now sketch how to enrich the algorithm from the previous section to the final branching procedure; this part of the reasoning is presented in \cref{sec:longhole}. The idea is that we perform branching as in the previous section (with significant augmentations, as will be described in a moment), but in order to make sure that the constructed induced subgraph $G[S]$ satisfies $\phi$, we enrich each subproblem with the following information:
\begin{itemize}[nosep]
 \item A rooted tree decomposition $(T,\beta)$ of $G[\VtxTaken]$ of width at most $\ell$ ($\beta\colon V(T)\to \VtxTaken$ is the bag function).
 \item For each node $a$ of $T$, a {\em{projected type}} $\msotype_a\in \msotypes$.
\end{itemize}
Again, we fix some optimum solution $S^\star$ together with a $d$-degeneracy ordering $\degord^\star$ of $G[S^\star]$.
Compared to the approach of the previous section, we extend the definition of a subproblem being lucky as follows:
\begin{itemize}[nosep]
 \item For each connected component $D$ of $G[\VtxActive\cap S^\star]$, we require that $N(D)\cap \VtxTaken$ is a set of size at most~$4k$ such that there exists a bag of $(T,\beta)$ that entirely contains it. For such a component $D$, let $a(D)$ be the topmost node of $T$ satisfying $N(D)\cap \VtxTaken\subseteq \beta(a(D))$.
 \item For each node $a$ of $T$, consider the graph $H_a$ induced by $\beta(a)$ and the union of all those components $D$ of $G[\VtxActive\cap S^\star]$ for which $a(D)=a$. Then the type of $H_a$ with $\beta(a)$ as the boundary is equal~to~$\msotype_a$.
\end{itemize}
Thus, one can imagine the solution $S^\star$ as $\VtxTaken$ plus several extensions into $\VtxActive$ --- vertex sets of components of $G[\VtxActive\cap S^\star]$. Each of such extensions $D$ is hanged under a single node $a(D)$ of $(T,\beta)$ and is attached to it through a neighborhood of size at most $4k$. For each node $a$ of $T$, we store the projected combined type $\msotype_a$ of all the extensions $D$ for which $a$ is the topmost node to which $D$ attaches. Note that as since the algorithm will always make only $\log^{\Oh(1)} n$ success branches along each root-to-leaf path, we maintain the invariant that $|\VtxTaken|\leq \log^{\Oh(1)} n$, which implies the same bound on the number of nodes of $T$. 

Recall that in the algorithm presented in the previous section, two basic operations were performed: (a) recursing on connected components of $G[\VtxActive]$ once this graph becomes disconnected; and (b) branching on a node~$x\in \VtxActive$. 

Lifting (a) to the new setting is conceptually simple. Namely, each graph $H_a$ is correspondingly split among the components of $G[\VtxActive]$, so we guess the projected types of those parts of $H_a$ so that they compose to $\msotype_a$. The caveat is that in order to make the time complexity analysis sound, we can perform such guessing only when a significant progress is achieved by the algorithm. This is done by performing (a) {\em{only}} when each connected component of $G[\VtxActive]$ contains at most $99\%$ of all the vertices of $\VtxActive$, which means that the number of active vertices after this step will drop by $1\%$ in each branch. This requires technical care.

More substantial changes have to be applied to lift operation (b), branching on a node $x\in \VtxActive$. Failure branch works the same way as before: we just remove $x$ from $\VtxActive$. As for success branches, recall that in each of them together with $x$ we move to $\VtxTaken$ the whole set $L$ of left neighbors of $x$ in $\VtxActive$. Clearly, the vertices of $L\cup \{x\}$ belong to the same component of $G[\VtxTaken\cap S^\star]$, say $D$. It would be natural to reflect the move of $L\cup \{x\}$ to $\VtxTaken$ in the decomposition $(T,\beta)$ as follows: create a new node $b$ with $\beta(b)=(L\cup \{x\})\cup (N(D)\cap \VtxTaken)$ and make it a child of $a(D)$ in $T$. Note that thus, $|\beta(b)|\leq d+1+4k\leq 5k$, so the bound on the width of $(T,\beta)$ is maintained (even with a margin). However, there is a problem: if by $\VtxTaken'$ we denote the updated~$\VtxTaken$, i.e., $\VtxTaken'=\VtxTaken\cup (L\cup \{x\})$, then the removal of $L\cup \{x\}$ breaks $D$ into several connected components whose neighborhoods in $\VtxTaken'$ are contained in $(N(D)\cap \VtxTaken)\cup (L\cup \{x\})$. This set, however, may have size as large as $4k+d+1$, so we do not maintain the invariant that every connected component of $G[\VtxActive\cap S^\star]$ has at most $4k$ neighbors in $\VtxTaken$.

We remedy this issue using a trick that is standard in the analysis of bounded-treewidth graphs. Since the graph $G[D\cup (N(D)\cap \VtxActive)]$ has treewidth smaller than $k$, there exists a set $K$ consisting of at most $k$ vertices of $D\cup (N(D)\cap \VtxActive)$ such that every connected component of $D-K$ contains at most $|N(D)\cap \VtxActive|/2$ vertices of $N(D)\cap \VtxActive$ (see \cref{lem:tw-balanced-sep}). The algorithm guesses $K$ along with $L$, moves $K$ to $\VtxActive$ along with $x$ and $L$, and and sets $\beta(b)\coloneqq (N(D)\cap \VtxTaken)\cup (L\cup \{x\})\cup K$; thus $|\beta(b)|\leq 4k+(d+1)+k\leq 6k$. Now it is easy to see that due to the inclusion of $K$, every connected component of $D-(K\cup L\cup \{x\})$ has only at most $2k+k+(d+1)\leq 4k$ neighbors in $\beta(b)$, and the problematic invariant is maintained.

This concludes the overview of the proof of \cref{thm:main}.

\section{Preliminaries}\label{sec:prelims}
We use standard graph notation. For an positive integer $p$, we write $[p]\coloneqq \{1,\ldots,p\}$. For a set $A$, $\binom{A}{p}$ denotes the set of all $p$-element subsets of $A$.

We say that two vertex subsets $X_1,X_2 \subseteq V(G)$ are \emph{adjacent} if either $X_1 \cap X_2 \neq \emptyset$ or there is an edge $x_1x_2 \in E(G)$, such that $x_1 \in X_1$ and $x_2 \in X_2$.
Otherwise, the sets are \emph{nonadjacent}.

For a path $P$, the \emph{length} of $P$ is the number of edges of $P$.
For a graph $G$, the {\em{radius}} of $G$ is $\min_{v \in V(G)} \max_{u \in V(G)} \dist(u,v)$, where $\dist(u,v)$ denotes the \emph{distance} between $u$ and $v$, i.e., the length of a shortest $u$-$v$-path in $G$. 

\subsection{Graph minors}

Let $H$ be a graph.\mip{I rewrote this part to have more concise definitions. I think it can stay here.}
A {\em{minor model}} of $H$ in a graph $G$ is a mapping $\eta$ that assigns to each $v\in V(H)$ a connected subgraph $\eta(v)$ of $G$ so that:
\begin{itemize}
 \item the subgraphs $\{\eta(v)\colon v\in V(H)\}$ are pairwise disjoint; and
 \item for every edge $v_1v_2\in E(H)$, there is an edge in $G$ with one endpoint in $\eta(v_1)$ and the other in $\eta(v_2)$.
\end{itemize}
Such a minor model $\eta$ has {\em{depth}} $t$ if every subgraph $\eta(v)$ has radius at most $t$. We say that $G$ contains $H$ as a (depth-$t$) minor if there is a (depth-$t$) minor model of $H$ in $G$.

%


A {\em{topological minor model}} of $H$ in $G$ is a mapping $\psi$ that assigns to each vertex $v\in V(H)$ a vertex $\psi(v)$ in $G$ and to each edge $e\in E(H)$ a path $\psi(e)$ in $G$ so that
\begin{itemize}
 \item vertices $\psi(v)$ are pairwise different for different $v\in V(H)$;
 \item for each edge $v_1v_2\in E(H)$, the path $\psi(v_1v_2)$ has endpoints $\psi(v_1)$ and $\psi(v_2)$ and does not pass through any of the vertices of $\{\psi(v)\colon v\in V(H)\}$ other than $\psi(v_1)$ and $\psi(v_2)$; and
 \item paths $\{\psi(e)\colon e\in E(H)\}$ are pairwise disjoint apart from possibly sharing endpoints.
\end{itemize}
 The vertices $\{\psi(v)\colon v\in V(H)\}$ are the {\em{roots}} of the topological minor model $\psi$. We say that $\psi$ has {\em{depth}} $t$ if each path $\psi(e)$ for $e\in E(H)$ has length at most $2t+1$. We say that $G$ contains $H$ as a (depth-$t$) topological minor if there is a depth-$t$ topological minor model of $H$ in $G$. It is easy to see that if $G$ contains $H$ as a depth-$t$ topological minor, then it also contains $H$ as a depth-$t$ minor.
%
%

\subsection{Treewidth and tree decompositions}

A \emph{tree decomposition} of a graph $G$ is a pair $(T, \beta)$, where $T$ is a tree and $\beta$ is a function that maps every vertex of $T$ to a subset of $V(G)$, such that the following properties hold:
\begin{itemize}
\item every edge of $G$ is contained in $\beta(a)$ for some $a \in V(T)$, and
\item for every $v \in V(G)$, the set $\{a \in V(T)~|~v \in \beta(a)\}$ is nonempty and induces a subtree of $T$.
\end{itemize} 
The sets $\beta(a)$ for $a \in V(T)$ are called the \emph{bags} of the decomposition $(T,\beta)$.
The \emph{width} of the decomposition $(T,\beta)$ is $\max_{a \in V(T)} |\beta(a)| - 1$ and the \emph{treewidth} of a graph is the minimum
possible width of its decomposition.

We will need the following well-known observation about graphs of bounded treewidth.
\begin{lemma}[see Lemma 7.19 of~\cite{platypus}]\label{lem:tw-balanced-sep}
Let $H$ be a graph of treewidth less than $k$ and let $A \subseteq V(H)$. 
Then there exists a set $X \subseteq V(H)$ of size at most $k$
such that every connected component of $H\setminus X$ has at most $|A|/2$ vertices of $A$.
\end{lemma}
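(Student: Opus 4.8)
The plan is to prove this by the classical ``centroid bag'' argument. Fix a tree decomposition $(T,\beta)$ of $H$ of width less than $k$, so that every bag has at most $k$ vertices; I will show that one of the bags $\beta(a)$ can be taken as the desired set $X$.

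To locate the right bag, I would orient the edges of $T$ as follows. For an ordered pair $(a,b)$ with $ab\in E(T)$, let $T_{a\to b}$ be the component of $T-ab$ containing $b$, and set
\[
f(a\to b)\coloneqq\Bigl|\,A\cap\Bigl(\textstyle\bigcup_{c\in V(T_{a\to b})}\beta(c)\Bigr)\setminus\beta(a)\,\Bigr|.
\]
The key elementary observation is that $f(a\to b)+f(b\to a)\leq|A|$: a vertex $v\in A$ counted on both sides would lie in some bag on $b$'s side and in some bag on $a$'s side of $ab$, hence---since the nodes whose bags contain $v$ induce a subtree of $T$---it would lie in both $\beta(a)$ and $\beta(b)$, contradicting that its contribution on each side explicitly excludes $\beta(a)$ (resp.\ $\beta(b)$). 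Consequently, for every edge $ab$ at most one of $f(a\to b),f(b\to a)$ can exceed $|A|/2$. Orient $ab$ towards that endpoint if one exists, and leave $ab$ unoriented otherwise. Because $T$ is a finite acyclic graph, a walk that repeatedly follows outgoing oriented edges visits pairwise distinct vertices and therefore terminates; hence there is a node $a^\star$ with no outgoing oriented edge, which means $f(a^\star\to b)\leq|A|/2$ for every neighbour $b$ of $a^\star$.

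Now set $X\coloneqq\beta(a^\star)$, so $|X|\leq k$. Let $b_1,\dots,b_m$ be the neighbours of $a^\star$, so the components of $T-a^\star$ are $T_{a^\star\to b_1},\dots,T_{a^\star\to b_m}$. Take any connected component $C$ of $H\setminus X$. By the standard property of tree decompositions that the set of nodes whose bags meet a connected subgraph is connected in $T$, together with $V(C)\cap\beta(a^\star)=\emptyset$, all bags meeting $V(C)$ lie in a single component $T_{a^\star\to b_i}$. Therefore $V(C)\subseteq\bigcup_{c\in V(T_{a^\star\to b_i})}\beta(c)$, and since $V(C)$ is disjoint from $\beta(a^\star)=X$ we obtain $|V(C)\cap A|\leq f(a^\star\to b_i)\leq|A|/2$, as required. (If $T$ is a single node the statement is trivial, since then $H\setminus X$ is empty.)

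I do not expect any genuine obstacle here: this is a routine use of facts about tree decompositions. The two points deserving a little care are the disjointness inequality $f(a\to b)+f(b\to a)\leq|A|$, which relies on the subtree‑of‑bags property, and the existence of the ``sink'' node $a^\star$, which is immediate from $T$ being a finite tree once the edges are oriented as above.
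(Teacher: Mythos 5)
Your proof is correct; the paper itself does not prove this lemma but only cites Lemma 7.19 of the Parameterized Algorithms textbook, and your centroid-bag argument (orienting edges of the decomposition tree via the quantity $f$ and taking the bag of a sink node) is essentially the standard proof given there. The two delicate points you flag — the inequality $f(a\to b)+f(b\to a)\leq|A|$ via the subtree property, and the existence of a node with no outgoing oriented edge — are both handled correctly, so no changes are needed.
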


\subsection{\msotwo and \msotwo types}
We assume the incidence encoding of graphs as relational structures: a graph $G$ is encoded as a relational structure whose universe consists of vertices and edges of $G$ (each distinguishable by a unary predicate), and there is one binary incidence relation binding each edge with its two endpoints.
With this representation, the \msotwo logic on graphs is the standard \mso logic on relational structures as above, which boils down to allowing the formulas to use vertex variables, edge variables,
vertex set variables, and edge set variables, together with the ability of quantifying over them. For a positive integer $p$, the \cpmsotwo logic extends \msotwo by allowing atomic formulas of the form $|X|\equiv a\bmod m$, where $X$ is a set variable and $0\leq a<m\leq p$ are  integers; denote $\mathsf{CMSO}_2\coloneqq \bigcup_{p>0} \mathsf{C}_{\leq p}\mathsf{MSO}_2$. The {\em{quantifier rank}} of a formula is the maximum number of nested quantifiers in it.

For a finite set $\Lambda$ of labels, a \emph{$\Lambda$-boundaried graph}
is a pair $(G,\iota)$ where $G$ is a graph and
$\iota \colon B \to \Lambda$ is an injective function for some $B \subseteq V(G)$; then $B=\dom \iota$ is called the {\em{boundary}} of $(G,\iota)$.
A \emph{$k$-boundaried graph} is a shorthand for a $[k]$-boundaried graph, where we denote $[k]=\{1,\ldots,k\}$.
For $v \in B$, $\iota(v)$ is the \emph{label} of $v$.
We define two operations on $\Lambda$-boundaried graphs.

If $(G_1,\iota_1)$ and $(G_2,\iota_2)$ are two $\Lambda$-boundaried graphs,
   then the result of \emph{gluing} them is the boundaried graph
   $(G_1,\iota_1) \oplus_\Lambda (G_2,\iota_2)$ that is obtained
   from the disjoint union of $(G_1, \iota_1) $ and $(G_2, \iota_2)$ by identifying vertices of the same label (so that the resulting labelling is again injective). 

If $(G,\iota)$ is a $\Lambda$-boundaried graph and $l \in \Lambda$, then
the result of \emph{forgetting $l$} is the $\Lambda$-boundaried graph
$(G,\iota|_{\iota^{-1}(\Lambda \setminus \{l\})})$. That is, if $l$ belongs
to the range of $\iota$, we remove the label from the corresponding vertex
(but we keep this vertex in $G$). 

By \cpmsotwo on $\Lambda$-boundaried graphs we mean the \cpmsotwo logic over graphs enriched with $|\Lambda|$ unary predicates: for each label $l\in \Lambda$ we have a unary predicate that selects the only vertex with label $l$, if existent. Effectively, this boils down to the possibility of using elements of the boundary as constants in \cpmsotwo formulas.

The following folklore proposition explains the compositionality properties of \cmsotwo on boundaried graphs. The statement and the proof is standard, see e.g.~\cite[Lemma~6.1]{GroheK09}, so we only sketch it.

\begin{proposition}\label{lem:mso-type}
For every triple of integers $k,p,q$, there exists a finite set $\msotypes^{k,p,q}$ and a function that assigns to every $k$-boundaried graph $(G,\iota)$ a \emph{type} $\msotype^{k,p,q}(G,\iota) \in \msotypes^{k,p,q}$ such that the following holds:
\begin{enumerate}
\item The types of isomorphic graphs are the same.
\item 
  For every \cpmsotwo sentence $\phi$ on $k$-boundaried graphs, whether $(G,\iota)$ satisfies $\phi$ depends only on the type $\msotype^{k,p,q}(G,\iota)$, where $q$ is the quantifier rank of $\phi$.
  More precisely,
  there exists a subset $\msotypes^{k,p,q}[\phi] \subseteq \msotypes^{k,p,q}$
  such that for every $k$-boundaried graph $(G,\iota)$ we have
  $$(G,\iota) \models \phi\qquad\textrm{if and only if}\qquad \msotype^{k,p,q}(G,\iota) \in \msotypes^{k,p,q}[\phi].$$
\item 
 The types of ingredients determine the type of the result of the gluing operation.
More precisely, for every two types $\tau_1,\tau_2 \in \msotypes^{k,p,q}$
there exists a type $\tau_1 \oplus_{k,p,q} \tau_2$ such that
for every two $k$-boundaried graphs $(G_1,\iota_1)$, $(G_2,\iota_2)$,
if $\msotype^{k,p,q}(G_i,\iota_i) = \tau_i$ for $i=1,2$, then
$$\msotype^{k,p,q}((G_1,\iota_1) \oplus_{[k]} (G_2,\iota_2)) = \tau_1 \oplus_{k,p,q} \tau_2.$$ 
Also, the operation $\oplus_{k,p,q}$ is associative and commutative.
\item 
 The type of the ingredient determines the type of the result of the forget label operation.
 More precisely, 
 for every type $\tau \in \msotypes^{k,p,q}$ and $l \in [k]$
there exists a type $\tau_{\neg l}$ such that
for every $k$-boundaried graph $(G,\iota)$,
if $\msotype^{k,p,q}(G,\iota) = \tau$
and $(G',\iota')$ is the result of forgetting $l$ in $(G,\iota)$, then
$$\msotype^{k,p,q}(G',\iota') = \tau_{\neg l}.$$ 
\end{enumerate}
\end{proposition}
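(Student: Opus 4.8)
The statement is folklore --- it is the Feferman--Vaught composition theorem for \cpmsotwo together with the standard finiteness of \cpmsotwo types --- so the plan is only to indicate the structure of the proof.

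\emph{Types, finiteness, and items (1)--(2).} Fix $k,p,q$ and work with the relational vocabulary $\sigma$ of $k$-boundaried graphs: the incidence vocabulary of graphs enriched with unary predicates $P_1,\dots,P_k$, where $P_i$ is interpreted as $\{\iota^{-1}(i)\}$ if $i$ is in the range of $\iota$ and as $\emptyset$ otherwise. The first step is the standard claim that, over $\sigma$-structures, there are only finitely many \cpmsotwo formulas of quantifier rank at most $q$ with free variables among a fixed finite tuple, up to logical equivalence. This is proved by induction on $q$: at rank $0$ a formula is a Boolean combination of atomic formulas, of which there are finitely many over a fixed variable tuple (the modular atoms $|X|\equiv a \bmod m$ with $0\leq a<m\leq p$ are themselves atomic, so they add only finitely many more and cause no complication); and a formula of rank $q+1$ is a Boolean combination of formulas $\exists x\,\psi$ and $\exists X\,\psi$ with $\psi$ of rank at most $q$ over one extra variable, of which there are finitely many classes by the induction hypothesis. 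In particular there is a finite set of pairwise inconsistent rank-$q$ \emph{Hintikka sentences} whose disjunction is valid and such that every sentence of rank at most $q$ is equivalent to a disjunction of them. We set $\msotypes^{k,p,q}$ to be this set and let $\msotype^{k,p,q}(G,\iota)$ be the unique Hintikka sentence satisfied by $(G,\iota)$. Item (1) is immediate since isomorphic structures satisfy the same sentences, and item (2) follows by letting $\msotypes^{k,p,q}[\phi]$ be the set of Hintikka sentences that imply $\phi$ (using that $\phi$ is equivalent to the disjunction of the Hintikka sentences it implies).

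\emph{Gluing, item (3).} This is the composition theorem. The key point is that gluing is a sum-like operation: in $(G_1,\iota_1)\oplus_{[k]}(G_2,\iota_2)$ every vertex or edge is inherited from $G_1$ or from $G_2$, the only shared elements being the boundary vertices, which are pinned down by the constants $P_1,\dots,P_k$; moreover every vertex/edge set $Y$ of the glued graph splits uniquely as $Y=Y_1\cup Y_2$ where $Y_i$ consists of the elements of $Y$ inherited from $G_i$, the two parts agreeing on the boundary. One proves by induction on $q$ that the rank-$q$ Hintikka formula of the glued structure under an assignment is a function of the rank-$q$ Hintikka formulas of $(G_1,\iota_1)$ and $(G_2,\iota_2)$ under the split assignments --- equivalently, Duplicator wins the $q$-round \cpmsotwo Ehrenfeucht--Fra\"{\i}ss\'{e} game on two glued structures by running the winning strategies on the two sides in parallel (an element move is relayed to the side that owns the element; a set move is split and relayed to both sides). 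The modular atoms are handled via $|Y|=|Y_1|+|Y_2|-|Y\cap B|$, where $B$ is the boundary: the residue of $|Y|$ modulo any $m\leq p$ is recovered from the residues of $|Y_1|$ and $|Y_2|$ together with the count $|Y\cap B|$, the latter being a quantifier-free count over the at most $k$ constants; consequently the parallel strategy must additionally remember these residues, which is exactly the information that the \cpmsotwo Hintikka formulas (equivalently, the counting variant of the game) carry. Associativity and commutativity of the resulting operation $\oplus_{k,p,q}$ follow from the fact that the glue operation on graphs is associative and commutative up to isomorphism, together with item (1).

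\emph{Forgetting, item (4).} This is the easy case: forgetting $l$ changes neither the universe nor the incidence relation nor any $P_i$ with $i\ne l$, and it empties $P_l$. Hence for every \cpmsotwo sentence $\phi$ we have $(G',\iota')\models\phi$ if and only if $(G,\iota)\models\phi^{\star}$, where $\phi^{\star}$ is obtained from $\phi$ by replacing every atom $P_l(y)$ by $y\ne y$; this substitution does not raise the quantifier rank. Thus the theory of $(G',\iota')$ restricted to sentences of rank at most $q$, and in particular $\msotype^{k,p,q}(G',\iota')$, is determined by the corresponding theory of $(G,\iota)$, so we may define $\tau_{\neg l}$ to be the Hintikka sentence that this determines.

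\emph{Main obstacle.} The only part that is not entirely routine is the composition argument for item (3), and within it the bookkeeping forced by the modular counting atoms: one has to set up the inductive invariant so that it records the residues of the sizes of the chosen sets on each side and their overlap on the boundary, so that the Hintikka formula of the glued graph becomes a function of those of the two sides. For plain \msotwo this is textbook; for \cpmsotwo it is standard but slightly more delicate, and I would invoke a reference such as~\cite[Lemma~6.1]{GroheK09} rather than reproduce the details.
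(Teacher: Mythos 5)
Your proposal is correct and follows essentially the same route as the paper's sketch: the paper also defines the type as the (finite) rank-$q$ \cpmsotwo theory of the boundaried structure, derives items (1)--(2) from finiteness up to equivalence, and handles gluing and forgetting by the standard Ehrenfeucht--Fra\"{\i}ss\'{e}/Feferman--Vaught argument, citing the same reference. Your use of Hintikka sentences is just an equivalent packaging of the paper's power-set-of-representatives construction, so there is nothing substantive to add.
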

\begin{proof}[sketch]
 It is well-known that there are only finitely many syntactically non-equivalent \cpmsotwo sentence over $k$-boundaried graphs and of quantifier rank at most $q$, so let $\Sentences^{k,p,q}$ be a set containing one such sentence from each equivalence class. We set $\msotypes^{k,p,q}$ as the power set (set of all subsets) of $\Sentences^{k,p,q}$. To each $k$-boundaried graph $(G,\iota)$ we define $\msotype^{k,p,q}(G,\iota)\subseteq \Sentences^{k,p,q}$ as the set of all sentences $\psi\in \Sentences^{k,p,q}$ satisfied in $(G,\iota)$. Thus, whether $(G,\iota)$ satisfies $\phi$ can be decided by verifying whether $\msotype^{k,p,q}(G,\iota)$ contains a sentence that is syntactically equivalent to $\phi$. The remaining two assertions --- about compositionality of the gluing and the forget label operations --- follow from a standard argument using Ehrenfeucht-Fra\"isse games.
\end{proof}

In our algorithms, we shall work with a fixed \cmsotwo sentence $\phi$ over graphs with treewidth upper-bounded by a fixed constant $k$. Note that $\phi$ belongs to \cpmsotwo for a fixed constant $p$, and the quantifier rank of $\phi$ is a fixed constant $q$. Hence, when working in $k$-boundaried graphs, whether $\phi$ is satisfied in a $k$-boundaried graph $(G,\iota)$ can be read from its type $\msotype^{k,p,q}(G,\iota)$. To facilitate the computation of types, we shall assume that our algorithms have a hard-coded set of types $\msotypes^{k,p,q}$, together with the subset $\msotypes^{k,p,q}[\phi]\subseteq \msotypes^{k,p,q}$ and functions
$$(\tau_1,\tau_2) \mapsto \tau_1 \oplus_{k,p,q} \tau_2\qquad\textrm{and}\qquad (\tau,l) \mapsto \tau_{\neg l},$$
as described in \cref{lem:mso-type}. Also, the algorithms have hard-coded the types of all $k$-boundaried graphs with $\Oh(k)$ vertices.

We will need also a simple observation that \cmsotwo types preserve connectivity properties,
as being in the same connected component can be easily expressed in \cmsotwo.
\begin{lemma}\label{lem:msotwo-connected}
Fix integers $k \geq 0$, $p \geq 0$, and
$q \geq 4$, and suppose that
$(G_1,\iota_1)$ and $(G_2,\iota_2)$ are two $k$-boundaried graphs
with $\msotype^{k,p,q}(G_1,\iota_1) = \msotype^{k,p,q}(G_2,\iota_2)$.
Then the ranges of $\iota_1$ and $\iota_2$ are equal. Furthermore, for every
two pairs $(u_1,u_2), (v_1,v_2) \in V(G_1) \times V(G_2)$ with
$\iota_1(u_1) = \iota_2(u_2)$ and $\iota_1(v_1) = \iota_2(v_2)$,
   vertices $u_1$ and $v_1$ are in the same connected component of $G_1$
  if and only if 
   vertices $u_2$ and $v_2$ are in the same connected component of $G_2$.
\end{lemma}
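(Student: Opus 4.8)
The plan is to exhibit \cmsotwo sentences (of quantifier rank at most $q$, for the stated bound $q\ge 4$) whose truth in a $k$-boundaried graph encodes exactly the two properties in question, and then invoke Proposition~\ref{lem:mso-type}(2): if two boundaried graphs have the same type, they satisfy the same sentences of quantifier rank up to $q$, hence agree on these properties. First I would handle the claim that the ranges of $\iota_1$ and $\iota_2$ coincide: for each label $l\in[k]$, the property ``there exists a vertex with label $l$'' is expressed by the sentence $\exists x\, \mathsf{lab}_l(x)$, using the unary predicate $\mathsf{lab}_l$ that the definition of \cpmsotwo on $\Lambda$-boundaried graphs attaches to each label. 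This sentence has quantifier rank $1\le q$, so equality of types forces $l$ to be in the range of $\iota_1$ iff it is in the range of $\iota_2$; taking the union over all $l\in[k]$ gives that the ranges are equal.

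Next I would encode the ``same connected component'' relation between the two labelled vertices $u$ (label $\alpha$) and $v$ (label $\beta$). The standard trick is: $u$ and $v$ lie in the same component iff every vertex subset $Z$ that contains $u$ and is closed under taking neighbours (i.e.\ $Z$ is a union of connected components) must also contain $v$. Formally, consider the sentence
\[
\phi_{\alpha,\beta}\ :=\ \forall Z\ \Big[\big(\exists x\,(\mathsf{lab}_\alpha(x)\wedge x\in Z)\ \wedge\ \mathrm{Closed}(Z)\big)\ \rightarrow\ \exists y\,(\mathsf{lab}_\beta(y)\wedge y\in Z)\Big],
\]
where $\mathrm{Closed}(Z)$ is the first-order formula ``$\forall a\,\forall b\,\big((a\in Z\wedge \exists e\,(\inc(e,a)\wedge\inc(e,b)))\rightarrow b\in Z\big)$'' expressing that $Z$ is closed under adjacency. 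One checks the nesting: $\mathrm{Closed}(Z)$ quantifies $a,b,e$ (depth $3$), the hypothesis additionally quantifies $x$ under the set quantifier, and the conclusion quantifies $y$; arranging the quantifiers carefully the total rank is at most $4\le q$. (If one is fussy about reaching exactly $4$, note that $\exists x$ and the body of $\mathrm{Closed}$ can share a quantifier block, and edge quantification can be folded in; in any case the rank is a universal constant, and the hypothesis $q\ge 4$ is what makes the argument go through.) It is routine to verify that $\phi_{\alpha,\beta}$ holds in a $k$-boundaried graph $(G,\iota)$ with $\alpha,\beta$ in the range of $\iota$ precisely when the vertices $\iota^{-1}(\alpha)$ and $\iota^{-1}(\beta)$ lie in the same connected component of $G$: the forward direction applies the sentence to $Z$ the component of $\iota^{-1}(\alpha)$, and the backward direction is immediate since any closed $Z$ containing $\iota^{-1}(\alpha)$ contains its whole component.

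Finally I would conclude: given $(u_1,u_2),(v_1,v_2)$ with $\iota_1(u_1)=\iota_2(u_2)=\alpha$ and $\iota_1(v_1)=\iota_2(v_2)=\beta$, both labels lie in the (common) range, so $\phi_{\alpha,\beta}$ is a bona fide sentence over both structures; by Proposition~\ref{lem:mso-type}(2) and the hypothesis $\msotype^{k,p,q}(G_1,\iota_1)=\msotype^{k,p,q}(G_2,\iota_2)$ we get $(G_1,\iota_1)\models\phi_{\alpha,\beta}$ iff $(G_2,\iota_2)\models\phi_{\alpha,\beta}$, which by the previous paragraph is exactly the desired equivalence. The only real point requiring care is the quantifier-rank bookkeeping for $\phi_{\alpha,\beta}$ — making sure the constant does not exceed the assumed threshold $q\ge 4$; everything else is a direct application of compositionality. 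Since no counting atoms are used, the argument works already in plain \msotwo, so any $p\ge 0$ is fine.
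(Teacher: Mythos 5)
Your proposal is correct and follows essentially the same route as the paper: the paper also notes that equal types force agreement on the rank-$1$ sentences ``there is a vertex with label $i$'' and then expresses same-component membership by a quantifier-rank-$4$ \msotwo formula (stated there in the negated form, via a set $A$ containing $x$ but not $y$ that is closed under edges), concluding by compositionality exactly as you do. Your quantifier-rank bookkeeping in fact works out to exactly $4$ as written, so no further care is needed.
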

\begin{proof}
That the ranges of $\iota_1$ and $\iota_2$ are equal is clear: $G_1$ and $G_2$ satisfy the same sentences of the form ``there exists a vertex with label $i\in [k]$''.
The assertion about having the same connectivity between boundary vertices follows from an analogous argument, supplied with the observation that being in the same connected component can be expressed by an
\msotwo formula of quantifier rank four:
\begin{eqnarray*}
\mathsf{Connected}(x,y) & = & \neg \exists_{A \subseteq V(G)} (x \in A) \wedge (y \notin A) \wedge \\
& & \left(\forall_{e \in E(G)}
\forall_{u\in V(G)}\forall_{v\in V(G)} (\inc(u,e)\wedge \inc(v,e))\Rightarrow ((u \in A) \Leftrightarrow (v \in A))\right).
\end{eqnarray*}
\end{proof}

\medskip

In our proofs, we will need to keep track of the exact value of the treewidth of a constructed induced subgraph of the given graph.
For this, we use the following observation.

\begin{lemma}\label{lem:mso-tw-formula}
For every integer $k$ there exists an \msotwo sentence $\phi_{\tw < k}$
  such that $G \models \phi_{\tw < k}$ if and only if the treewidth of $G$ is less than $k$.
\end{lemma}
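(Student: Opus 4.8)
The plan is to use the fact that having treewidth less than $k$ is a minor-closed property, together with the Graph Minor Theorem, and to express the avoidance of each obstruction in \msotwo.

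First I would recall that treewidth is minor-monotone: if $H$ is a minor of $G$, then $\tw(H)\le \tw(G)$. Hence the class $\Cc_k$ of graphs of treewidth at most $k-1$ is closed under taking minors. By the Robertson--Seymour Graph Minor Theorem, $\Cc_k$ is characterized by a finite family $\Fc_k$ of forbidden minors: a graph $G$ belongs to $\Cc_k$ if and only if no member of $\Fc_k$ is a minor of $G$. We do not need $\Fc_k$ explicitly; mere finiteness suffices, since the statement only asserts the existence of a sentence. (For small $k$ one could instead plug in a concrete obstruction set, e.g.\ $\Fc_1=\{K_3\}$ and $\Fc_2=\{K_4\}$, but this is not necessary.)

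Next, for a fixed graph $H$ with vertex set $\{h_1,\dots,h_r\}$ I would write an \msotwo formula $\mu_H$ asserting that $G$ contains $H$ as a minor. By definition this amounts to the existence of pairwise disjoint, nonempty vertex subsets $X_1,\dots,X_r\subseteq V(G)$ (the branch sets) such that each $G[X_i]$ is connected and, for every edge $h_ih_j\in E(H)$, there is an edge of $G$ with one endpoint in $X_i$ and the other in $X_j$. Each of these conditions is \msotwo-expressible, and in fact only vertex-set quantification is needed: disjointness and nonemptiness are first-order over the set variables; ``$G[X_i]$ is connected'' is obtained by relativizing to $X_i$ the connectivity formula used in the proof of \cref{lem:msotwo-connected}; and the edge-between-blobs condition is $\exists_{e\in E(G)}\exists_{u\in V(G)}\exists_{v\in V(G)}\ \inc(u,e)\wedge\inc(v,e)\wedge (u\in X_i)\wedge (v\in X_j)$. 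Since $H$ is finite, $\mu_H$ is a single \msotwo sentence obtained by existentially quantifying the $X_i$ over the conjunction of these subformulas.

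Finally I would set $\phi_{\tw<k}\coloneqq \bigwedge_{H\in\Fc_k}\neg\mu_H$. This is a finite conjunction, hence an \msotwo sentence, and by the choice of $\Fc_k$ we have $G\models\phi_{\tw<k}$ if and only if $G$ has no minor from $\Fc_k$, i.e.\ if and only if $\tw(G)\le k-1$. The only nontrivial ingredient is the Graph Minor Theorem, invoked solely to guarantee that the obstruction set is finite; since we merely need existence of the sentence, its non-constructive character is harmless. I expect no real obstacle beyond this: the remaining work is the standard, routine translation of ``contains $H$ as a fixed minor'' into \msotwo.
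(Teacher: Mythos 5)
Your proof is correct and follows exactly the paper's argument: invoke the Robertson--Seymour theorem to get a finite obstruction set for treewidth less than $k$, express exclusion of each fixed minor in \msotwo, and take the finite conjunction. The extra detail you give on writing the minor-containment formula is just an elaboration of what the paper calls "straightforward."
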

\begin{proof}
Let $\mathcal{F}_{\tw < k}$ be the set of all minor-minimal graphs of treewidth at least $k$. By the Robertson-Seymour Theorem~\cite{RobertsonS04}, $\mathcal{F}_{\tw < k}$ is finite.
Define $\phi_{\tw < k}$ to be the conjunction, over all $H \in \mathcal{F}_{\tw < k}$,
of sentences asserting that $H$ is not a minor of $G$. Note here that it is straightforward to express in \msotwo
    that a fixed graph $H$ is a minor of a given graph $G$.
\end{proof}

\section{Branching framework}\label{sec:branching}
In this section we prove \cref{thm:deg-branching} stated below, which is a weaker variant of \cref{thm:main} that does not speak about \cmsotwo properties.

\begin{theorem}
\label{thm:deg-branching}
Fix a pair of integers $d$ and $t$. Then there exists an algorithm that, given a $C_{>t}$-free $n$-vertex graph $G$
and a weight function $\wei\colon V(G) \to \mathbb{N}$, in time $n^{\Oh(\log^3 n)}$ finds a subset of vertices~$S$ such that $G[S]$ is $d$-degenerate and, subject to the above, $\wei(S)$ is maximum~possible. The running time can be improved to $n^{\Oh(\log^2 n)}$ if $G$ is $P_t$-free.
\end{theorem}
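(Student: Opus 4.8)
The plan is to implement the algorithm sketched in \cref{ov:degeneracy}: maintain a subproblem as a quadruple $(\VtxTaken,\VtxActive,\degord,\poslimit)$, where $\VtxTaken$ is the set of vertices already committed to the solution, $\VtxActive$ the set of vertices still eligible, $\degord\colon\VtxTaken\to[n]$ the guessed degeneracy positions of committed vertices, and $\poslimit\colon\VtxActive\to[n]$ the current lower bounds on positions. The recursion has three rules. First, if $G[\VtxActive]$ is disconnected, solve each connected component $D$ independently on the subproblem $(\VtxTaken,D,\degord,\poslimit|_D)$ and return the union; correctness here hinges on property~\eqref{eq:guessed-left}, which ensures that merging the component solutions never over-saturates the left-degree of a vertex in $\VtxTaken$. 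Second, if $G[\VtxActive]$ is connected and has a $(10^{-8}/t)$-heavy vertex $x$ (with respect to the primary buckets $\Bc^{G[\VtxActive]}_{u,v,w}$ of tripods), branch on $x$: in the failure branch remove $x$ from $\VtxActive$; in the success branch guess $\degord^\star(x)$, the left-neighbour set $L\subseteq N(x)\cap\VtxActive$, the positions of vertices in $L$, and for each $v\in L$ its left-neighbour set $L_v$, then perform the clean-up operations (move $\{x\}\cup L$ to $\VtxTaken$ with their positions, and raise $\poslimit$ on the relevant non-left-neighbours). Third, if $G[\VtxActive]$ is connected but has no such heavy vertex, invoke the secondary branching procedure of \cref{sec:secondary} on $G[\VtxActive]$ (choosing secondary-heavy pivots relative to the $C'$-link buckets $\Lc$) with the same success/failure clean-up as in the second rule, branching until the chip is removed, at which point every connected component of the active graph has at most $0.95$ of the vertices.

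The correctness argument is: one shows by induction that if the input subproblem is lucky (the invariants $\VtxTaken\subseteq S^\star\subseteq\VtxTaken\cup\VtxActive$, $\degord=\degord^\star|_{\VtxTaken}$, $\degord^\star(u)\ge\poslimit(u)$ on $S^\star\cap\VtxActive$, and~\eqref{eq:guessed-left}), then some branch produces a lucky child subproblem, so some leaf of the recursion reaches $\VtxActive=\emptyset$ with $\VtxTaken=S^\star$; taking the best $\wei(\VtxTaken)$ over all leaves where $G[\VtxTaken]$ is verified $d$-degenerate (via \cref{cl:degenerate}) yields an optimum. Conversely, any $\VtxTaken$ returned at a leaf induces a $d$-degenerate subgraph because we only ever commit vertices with consistent $\degord$-positions and bounded left-degree. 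One also needs the standing reduction that whenever $y\in\VtxActive$ has $|M_y|>d$ (where $M_y=\{z\in\VtxTaken\cap N(y)\mid\degord(z)<\poslimit(y)\}$), we delete $y$ from $\VtxActive$, which is safe and is exactly what turns \cref{cl:degenerate-progress} into measurable progress.

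For the running-time analysis, the heart is bounding the number of success branches on any root-to-leaf path by $\Oh(\log^3 n)$. Here one uses the potential $\Phi(\Bc^G_{u,v,w})=\sum_{T\in\Bc^G_{u,v,w}}\sum_{y\in V(T)}(d-|M_y|)$, which starts at most $d\cdot\Oh(t)\cdot n^{\Oh(t)}$ per bucket; \cref{cl:degenerate-progress} together with heaviness implies that each success branch on a primary-heavy vertex drops a constant fraction of a constant fraction of the primary buckets' potential, so after $\Oh(\log n)$ such success branches a constant fraction of buckets have zero potential — equivalently their tripods are all hit — and, as in \cref{sec:Pt-free}, the vertex count of the connected active graph drops by a constant factor; this can happen $\Oh(\log n)$ times, giving $\Oh(\log^2 n)$ primary success branches before a secondary phase is forced. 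Each secondary phase, by the $P_t$-like analysis of \cref{sec:secondary} applied to the $C'$-links, contributes $\Oh(\log^2 n)$ success branches and ends with the largest component at $95\%$ of the vertices, which can recur only $\Oh(\log n)$ times; combining, the total success-branch count along a path is $\Oh(\log^3 n)$. Since each success branch spawns only $n^{\Oh(1)}$ children (all the guesses are polynomially many, as $|L|,|L_v|,|\beta(\cdot)|$ are $\Oh(t)$ and tripods/links have $\Oh(t)$ vertices) and failure/component-splitting branches do not increase this count, the recursion tree has $n^{\Oh(\log^3 n)}$ leaves, each processed in polynomial time; under $P_t$-freeness the buckets connect pairs rather than triples and the secondary phase is unnecessary, so one phase of $\Oh(\log^2 n)$ success branches suffices and the bound improves to $n^{\Oh(\log^2 n)}$.

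\textbf{Main obstacle.} The delicate point is making the secondary branching compose correctly with the degeneracy bookkeeping and showing that the ``chip'' structure is preserved under the clean-up operations: one must verify that after moving $\{x\}\cup L\cup K$ around, the invariant that each active component has a bounded-size neighbourhood in $\VtxTaken$ (and that~\eqref{eq:guessed-left} survives) still holds, and that the potential-based progress of \cref{cl:degenerate-progress} is genuinely realised in \emph{every} child, including the failure branch where $x$ is merely deleted — this is where the standing rule ``delete $y$ once $|M_y|>d$'' must be shown to fire exactly when needed. Quantifying the interaction between the two nested logarithmic potentials (primary tripod potential and secondary link potential), so that their product gives $\log^3 n$ rather than something worse, is the part that requires the most care.
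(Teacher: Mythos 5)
Your overall route coincides with the paper's: lucky subproblems over $(\VtxTaken,\VtxActive,\degord,\poslimit)$, filtering of vertices with $|M_y|>d$, primary branching on tripod-heavy pivots, the chip/link secondary branching, and a tree-size bound of the form $n^{\Oh(\text{number of success branches on a path})}$; the correctness induction over lucky nodes is the same as in \cref{lem:lucky-branch} and the paper's proof of \cref{thm:deg-branching}. The gap is in the quantitative heart of the primary analysis. The potential you monitor, $\Phi(\Bc_{u,v,w})=\sum_{T\in\Bc_{u,v,w}}\sum_{y\in V(T)}(d-|M_y|)$, need not decrease at a success branch at all: \cref{lem:success-quota} only guarantees that every neighbour of the pivot either leaves $\VtxActive$ or has its residual quota reduced by one, and if the tripods hit by $N[\pivot^x]$ consist of vertices whose quota is already $0$, those vertices are simply filtered out and the vanishing tripods carry zero potential, so $\Phi$ does not move. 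Hence ``a constant fraction of buckets lose a constant fraction of their potential'' is unjustified, and with it your $\Oh(\log n)$-per-phase count. The paper's potential is instead $\mu=\sum_{\text{buckets}}\log_2\bigl[1+\sum_{T}\sum_{y\in V(T)}\bigl(1+\quota(\degord,y,\poslimit(y))\bigr)\bigr]$; the extra $+1$ per tripod vertex guarantees a drop of at least one unit per hit tripod (whether the tripod disappears or a quota decreases), hence a constant multiplicative drop in a constant fraction of buckets and an $\Omega(n^3)$ additive drop of $\mu$ per success branch, against $\mu=\Oh(n^3\log n)$.

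Second, your bridge from the potential to component shrinkage is false: a bucket with $\Phi(\Bc_{u,v,w})=0$ is not ``all tripods hit'' --- it may still contain all of its tripods, every vertex merely having exhausted its quota --- and such a bucket still certifies that $u,v,w$ lie in one connected component, so no drop in the size of the connected active graph follows from the primary potential argument. In the paper this shrinkage is never deduced from the potential: it is enforced by the explicit level/splittable bookkeeping (a split node fires once every component of $G[\VtxActive]$ is below the $0.99^{-\rlevel+1}$ threshold, and there are $\Oh(\log n)$ levels, cf.\ \cref{lem:splittable} and \cref{lem:branching-strategy}), while the secondary branching terminates exactly when the chip is cut off and the subproblem becomes splittable; the potential only bounds the number of success branches within a fixed-level segment. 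Your accounting can be repaired along these lines (restart the potential whenever $|\VtxActive|$ drops by a constant factor, which happens $\Oh(\log n)$ times), but as written both the choice of potential and the step ``zero potential $\Rightarrow$ the active component shrinks'' would fail. A minor point: no progress is needed in failure branches, since only success branches are charged against the tree size, so the obstacle you raise about the failure child is moot.
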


We present the strategy for our branching algorithm in a quite general fashion,
so that it can be reused later for the proof of \cref{thm:main}. For the description, we fix a positive integer $d$ that is the bound on the degeneracy of the sought induced subgraph.

We will rely on the following characterization of graphs of bounded degeneracy.
As the statement of \cref{lem:degeneracy} is a bit non-standard, we include a sketch of a proof.
\begin{lemma}\label{lem:degeneracy}
A graph $G$ has degeneracy at most $d$ if and only if
there exists a function $\degord \colon V(G) \to [|V(G)|]$
such that for every $uv \in E(G)$ we have $\degord(u) \neq \degord(v)$,
and for every $v \in V(G)$,
the set $\{ u \in N_G(v)~|~\degord(u) < \degord(v)\}$
has size at most $d$.
\end{lemma}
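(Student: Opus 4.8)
\textbf{Proof proposal for \cref{lem:degeneracy}.}
The plan is to prove the two implications separately; both are essentially the classical correspondence between degeneracy and ``elimination orderings'', with minor care taken because $\degord$ is required to be injective only on edges, not on all of $V(G)$.

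For the forward direction, assume $G$ is $d$-degenerate and construct $\degord$ by a greedy peeling procedure. Set $n=|V(G)|$ and process the vertices in $n$ rounds: in round $i$ (for $i=n,n-1,\dots,1$), let $G_i$ be the current graph (with $G_n=G$), pick a vertex $v_i$ of $G_i$ of degree at most $d$ — which exists precisely because every subgraph of $G$, in particular $G_i$, has such a vertex — set $\degord(v_i)\coloneqq i$, and delete $v_i$ from $G_i$ to obtain $G_{i-1}$. This yields a bijection $\degord\colon V(G)\to[n]$, so in particular $\degord(u)\neq\degord(v)$ for every edge $uv$. Moreover, for any vertex $v=v_i$, every neighbour $u$ of $v$ with $\degord(u)<\degord(v)=i$ was still present in $G_i$ at the moment $v_i$ was removed (it is deleted only in some round $j<i$), hence is one of the at most $d$ neighbours of $v_i$ in $G_i$. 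Therefore $|\{u\in N_G(v): \degord(u)<\degord(v)\}|\le d$, as required.

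For the backward direction, assume such a $\degord$ exists; we show every subgraph of $G$ has a vertex of degree at most $d$, which is the definition of $d$-degeneracy. It suffices to check this for induced subgraphs, since deleting edges only decreases degrees. Let $H=G[A]$ for some nonempty $A\subseteq V(G)$, and let $v\in A$ be a vertex maximizing $\degord(v)$ over $A$. Every neighbour $u$ of $v$ in $H$ satisfies $\degord(u)\le\degord(v)$ by the choice of $v$, and $\degord(u)\neq\degord(v)$ because $uv\in E(G)$; hence $\degord(u)<\degord(v)$. Thus the neighbourhood of $v$ in $H$ is contained in $\{u\in N_G(v): \degord(u)<\degord(v)\}$, which has size at most $d$ by hypothesis, so $\deg_H(v)\le d$. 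Since $H$ was an arbitrary induced subgraph, $G$ is $d$-degenerate.

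I do not anticipate a genuine obstacle here: the only points requiring a word of care are that $\degord$ produced in the forward direction is a bijection (so the ``$\degord(u)\neq\degord(v)$ on edges'' condition is automatic), and that in the backward direction it is enough to handle induced subgraphs because degeneracy is monotone under edge deletion. Both are routine and can be stated in a single sentence each in the write-up.
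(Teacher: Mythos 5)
Your proposal is correct and follows essentially the same argument as the paper: the forward direction is the standard peeling construction (the paper removes a minimum-degree vertex and appends it at the last position, which is the same greedy procedure), and the backward direction takes a vertex of maximum position in the subgraph, whose neighbours must all have strictly smaller positions by edge-injectivity. The only cosmetic difference is that the paper argues directly for arbitrary subgraphs rather than reducing to induced ones, but both variants are routine.
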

\begin{proof}[sketch]
If $G$ is $d$-degenerate, then we can construct an ordering $\degord$ by removing a vertex $v$ with minimum degree, 
inductively ordering the vertices of $G - v$, and appending $v$ at the last position.

On the other hand, consider any subgraph $G'$ of $G$ and let $\degord'$ be the ordering $\degord$ restricted to the vertices of $G'$.
Let $v \in V(G')$ be a vertex at the last position in $\degord'$; if there is more than one such vertex, we choose one arbitrarily.
Note that all neighbors of $v$ in $G'$ precede it in $\degord'$, so there are at most $d$ of them.
\end{proof}
Function $\degord$ as in \cref{lem:degeneracy} shall be called a {\em{$d$-degeneracy ordering}} of $G$, and the value $\degord(v)$ is the {\em{position}} of~$v$.
We remark here that, contrary to the usual definition, we do not require $\degord$
to be injective, but only to give different positions to endpoints of a single edge.
Henceforth, we say that a function $\degord \colon Z \to [|V(G)|]$ for some $Z \subseteq V(G)$
is \emph{edge-injective} if $\degord(u) \neq \degord(v)$ for every $uv \in E(G[Z])$. 

Let $G$ be an $n$-vertex $C_{>t}$-free graph and $\wei\colon V(G) \to \mathbb{N}$ be a weight function.
Without loss of generality, assume $t \geq 6$.
Fix a subset $S^\ast \subseteq V(G)$ such that $G[S^\ast]$ has degeneracy at most $d$
and fix a $d$-degeneracy ordering $\degord^\ast \colon S^\ast \to [|S^\ast|]$ of $G[S^\ast]$. We think of $S^\ast$
as of the intended optimum solution.

\subsection{Recursion structure}
A \emph{subproblem} $\rcall$ consists of 
\begin{itemize}
\item Two disjoint vertex sets $\VtxTaken^\rcall, \VtxDeleted^\rcall \subseteq V(G)$.
 We additionally denote
 $\VtxFree^\rcall \coloneqq V(G) \setminus (\VtxTaken^\rcall \cup \VtxDeleted^\rcall)$
and call $\VtxFree^\rcall$ the \emph{free vertices}.
\item An integer $\rlevel^\rcall \geq 0$, called the \emph{level} of the subproblem.
\item A set $\VtxActive^\rcall \subseteq \VtxFree^\rcall$ that is nonadjacent
to $\VtxFree^\rcall \setminus \VtxActive^\rcall$ 
and is of size less than $0.99^{-\rlevel^\rcall}$. The vertices of $\VtxActive^\rcall$
are called the \emph{active vertices};
\item An edge-injective function $\degord^\rcall\colon \VtxTaken^\rcall \to [n]$.
\item A function $\poslimit^\rcall\colon \VtxActive^\rcall \to [n+1]$. 
\end{itemize}
The superscript can be omitted if it is clear from the context.

In our recursive branching algorithms, one recursive call will treat one subproblem. The intended meaning of the components of the subproblem is as follows:
\begin{itemize}
\item $\VtxTaken^\rcall$ corresponds to the vertices already decided to be in the partial solution.
The value $\degord^\rcall(v)$ for $v \in \VtxTaken^\rcall$ indicates the final position
 in the degeneracy ordering.
\item $\VtxDeleted^\rcall$ corresponds to the vertices already decided to be not in the partial solution.
\item $\VtxFree^\rcall$ are vertices yet to be decided.
\end{itemize}
Relying on some limited dependence between the connected components of $G[\VtxFree^\rcall]$
in the studied problems, one recursive call focuses only on a group
of these components, whose union of vertex sets is denoted by $\VtxActive^\rcall$, and seeks for a best
way to extend the current partial solution into $\VtxActive^\rcall$. 
The integer $\poslimit^\rcall(v)$ for $v \in \VtxActive^\rcall$ indicates the minimum
position at which the vertex $v$ can be placed in the final degeneracy ordering.

This intuition motivates the following definition. A subproblem $\rcall$
is \emph{lucky} if 
\begin{itemize}
\item $\VtxTaken^\rcall \subseteq S^\ast$, $\VtxDeleted^\rcall \cap S^\ast = \emptyset$;
\item $\degord^\rcall = \degord^\ast|_{\VtxTaken^\rcall}$;
\item for every $u \in S^\ast \cap \VtxActive^\rcall$ we have
   $\poslimit^\rcall(u) \leq \degord^\ast(u)$; and
\item for every $u \in \VtxActive^\rcall$
   and every $v \in N_G(u) \cap \VtxTaken^\rcall$,
   if $\poslimit^\rcall(u) \leq \degord^\ast(v)$, then $u\in S^\ast$ and $\degord^\ast(u) < \degord^\ast(v)$.
\end{itemize}

We will also need the following notion: for an edge-injective function
$\degord\colon Z \to [n]$ for some $Z \subseteq V(G)$, $v \in V(G)$, and an integer $p$,
the \emph{quota} of $v$ in $\degord$ at position $p$ is defined
as 
\[ \quota(\degord, v, p) = d - |\{u \in N_G(v) \cap Z~|~\degord(u) < p\}|. \]
If $v \in Z$, then $\quota(\degord, v)$ is a shorthand for $\quota(\degord, v, \degord(v))$.
Intuitively, $\quota(\degord, v)$ measures the number of ``free slots'' for the neighbors of $v$ that precede it in $\degord$.
Note that an edge-injective function $\degord:V(G) \to [n]$ is a $d$-degeneracy ordering of $G$
if and only if $\quota(\degord,v) \geq 0$ for every $v \in V(G)$. 
Furthermore, if $\degord' : Z' \to [n]$ is an extension
of $\degord\colon Z \to [n]$ to some $Z' \supseteq Z$, 
   then for every $v \in V(G)$ and $p \in [n]$, it holds that 
   $\quota(\degord, v, p) \geq \quota(\degord', v, p)$. 

\paragraph{Extending.}
For subproblems $\rcall_1$ and $\rcall_2$, we say that
\begin{itemize}
\item $\rcall_2$ \emph{extends} $\rcall_1$
if \begin{itemize}
\item $\VtxTaken^{\rcall_1} \subseteq \VtxTaken^{\rcall_2}$,
$\VtxDeleted^{\rcall_1} \subseteq \VtxDeleted^{\rcall_2}$
  (and hence
 $\VtxFree^{\rcall_1} \supseteq \VtxFree^{\rcall_2}$); 
\item $\VtxTaken^{\rcall_2} \setminus \VtxTaken^{\rcall_1} \subseteq \VtxActive^{\rcall_1}$ and
$\VtxDeleted^{\rcall_2} \setminus \VtxDeleted^{\rcall_1} \subseteq \VtxActive^{\rcall_1}$;
\item 
 $\VtxActive^{\rcall_1} \supseteq \VtxActive^{\rcall_2}$;
 \item 
 $\degord^{\rcall_1} = \degord^{\rcall_2}|_{\VtxTaken^{\rcall_1}}$;
\item  for every $v \in \VtxActive^{\rcall_2}$ we have
 $\poslimit^{\rcall_2}(v) \geq \poslimit^{\rcall_1}(v)$; and
\item  for every $v \in \VtxTaken^{\rcall_2} \cap \VtxActive^{\rcall_1}$ we have
 $\degord^{\rcall_2}(v) \geq \poslimit^{\rcall_1}(v)$.
 \end{itemize}
\item $\rcall_2$ \emph{extends} $\rcall_1$ \emph{completely}
if additionally $(\VtxTaken^{\rcall_2} \setminus \VtxTaken^{\rcall_1}) \cup (\VtxDeleted^{\rcall_2} \setminus \VtxDeleted^{\rcall_1}) = \VtxActive^{\rcall_1}$ and $\rlevel^{\rcall_2} = 0$
(in particular, $\VtxActive^{\rcall_2} = \emptyset$).
\end{itemize}
One easy way to extend a subproblem $\rcall$ is to select a set $Z \subseteq \VtxActive^\rcall$
and move it to $\VtxDeleted$; formally, the operation of \emph{deleting $Z$}
creates a new subproblem $\rcall'$ extending $\rcall$ by keeping all the ingredients
the same, except for $\VtxDeleted^{\rcall'} = \VtxDeleted^\rcall \cup Z$,
$\VtxActive^{\rcall'} = \VtxActive^\rcall \setminus Z$, and
$\poslimit^{\rcall'} = \poslimit^{\rcall}|_{\VtxActive^{\rcall'}}$.
Clearly, if $\rcall$ is lucky and $Z \cap S^\ast = \emptyset$, then $\rcall'$ is lucky as well.

A second (a bit more complicated way) to extend a subproblem $\rcall$
is the following. First, select a set $Z \subseteq \VtxActive^\rcall$.
Second, select an edge-injective function $\degord\colon \VtxTaken^\rcall \cup Z \to [n]$
extending $\degord^\rcall$
such that $\degord(u) \geq \poslimit^\rcall(u)$ for every $u \in Z$;
we henceforth call such a function a \emph{position guess} for $Z$ and $\rcall$.
Third, for every $u \in Z$ select a set $D_u \subseteq \VtxActive^\rcall \setminus Z$
of size at most $\quota(\degord, u)$. The family $(D_u)_{u \in Z}$
is called the \emph{left neighbor guess} for $Z$, $\rcall$, and $\degord$. 
Finally, define the operation of \emph{taking $Z$ at positions $\degord$ with left neighbors
$(D_u)_{u \in Z}$}
as constructing a new subproblem $\rcall'$ created from $\rcall$ by keeping all the ingredients
the same, except for
$\VtxTaken^{\rcall'} = \VtxTaken^\rcall \cup Z$,
$\VtxActive^{\rcall'} = \VtxActive^\rcall \setminus Z$, 
$\degord^{\rcall'} = \degord$, and, for every $w \in \VtxActive^{\rcall'}$, taking
$$\poslimit^{\rcall'}(w) = 
 \max(\poslimit^\rcall(w), \max \{1 + \degord(u)\colon u \in Z \wedge w \in N_G(u) \setminus D_u\}).$$
We have the following immediate observation:
\begin{lemma}\label{lem:lucky-take}
Let $\rcall$ be lucky and $Z \subseteq S^\ast \cap \VtxActive^\rcall$.
Then $\degord|_{\VtxTaken^\rcall\cup Z} \coloneqq \degord^\ast|_{\VtxTaken^\rcall\cup Z}$ is a position guess for $\rcall$ and $Z$
and $(D_u)_{u \in Z}$ defined as 
$$D_u = \{w \in N_G(u) \cap S^\ast \cap \VtxActive^\rcall~|~\degord^\ast(w) < \degord^\ast(u)\}$$
is a left neighbor guess for $\rcall$, $Z$, and $\degord$. 
Furthermore, the result of taking $Z$ at positions $\degord$ and left neighbors $(D_u)_{u \in Z}$
is lucky as well.
\end{lemma}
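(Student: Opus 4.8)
The statement is a verification exercise against the three notions it names --- \emph{position guess}, \emph{left neighbor guess}, and \emph{lucky subproblem} --- and I would carry it out in that order. First I would check that $\degord \coloneqq \degord^\ast|_{\VtxTaken^\rcall\cup Z}$ is a position guess for $\rcall$ and $Z$. Since $\rcall$ is lucky we have $\VtxTaken^\rcall\subseteq S^\ast$, and $Z\subseteq S^\ast$ by hypothesis, so this restriction is well defined and takes values in $[|S^\ast|]\subseteq[n]$; it is edge-injective because $\degord^\ast$ is a $d$-degeneracy ordering of $G[S^\ast]$ (so edge-injective on $S^\ast$ by \cref{lem:degeneracy}) and restrictions of edge-injective functions remain edge-injective. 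It extends $\degord^\rcall$ because luckiness of $\rcall$ gives $\degord^\rcall=\degord^\ast|_{\VtxTaken^\rcall}$. Finally, for $u\in Z\subseteq S^\ast\cap\VtxActive^\rcall$, the third defining property of a lucky subproblem yields $\poslimit^\rcall(u)\le\degord^\ast(u)=\degord(u)$, which is the last requirement for a position guess.

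Next I would treat $(D_u)_{u\in Z}$. Each $D_u$ consists of vertices of $N_G(u)\cap S^\ast\cap(\VtxActive^\rcall\setminus Z)$, hence lies in $\VtxActive^\rcall\setminus Z$ as required (one may harmlessly discard from $D_u$ any vertex of $Z$, since the position update performed by the ``take'' operation only touches vertices of $\VtxActive^\rcall\setminus Z$). The one point carrying content is the size bound $|D_u|\le\quota(\degord,u)$. For this I would note that the full left neighborhood $L_u\coloneqq\{w\in N_G(u)\cap S^\ast: \degord^\ast(w)<\degord^\ast(u)\}$ has at most $d$ elements, by \cref{lem:degeneracy} applied to $G[S^\ast]$ and $u\in S^\ast$. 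Now $D_u$ and $L_u'\coloneqq\{w\in N_G(u)\cap(\VtxTaken^\rcall\cup Z): \degord^\ast(w)<\degord^\ast(u)\}$ are disjoint subsets of $L_u$ --- disjoint because $\VtxTaken^\rcall\cup Z$ and $\VtxActive^\rcall\setminus Z$ are disjoint, and contained in $L_u$ because $\VtxTaken^\rcall\cup Z\subseteq S^\ast$. Since $\degord$ agrees with $\degord^\ast$ on $\VtxTaken^\rcall\cup Z=\dom\degord$, we have $|L_u'|=d-\quota(\degord,u)$, and therefore $|D_u|\le d-|L_u'|=\quota(\degord,u)$.

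Finally, let $\rcall'$ be the subproblem produced by taking $Z$ at positions $\degord$ with left neighbors $(D_u)_{u\in Z}$, so $\VtxTaken^{\rcall'}=\VtxTaken^\rcall\cup Z$, $\VtxDeleted^{\rcall'}=\VtxDeleted^\rcall$, $\VtxActive^{\rcall'}=\VtxActive^\rcall\setminus Z$, $\degord^{\rcall'}=\degord$, $\rlevel^{\rcall'}=\rlevel^\rcall$, and $\poslimit^{\rcall'}$ is as in the definition of the operation. I would verify the four conditions of luckiness. The first two are immediate: $\VtxTaken^{\rcall'}\subseteq S^\ast$ and $\VtxDeleted^{\rcall'}\cap S^\ast=\emptyset$ from luckiness of $\rcall$ and $Z\subseteq S^\ast$, and $\degord^{\rcall'}=\degord^\ast|_{\VtxTaken^{\rcall'}}$ by definition of $\degord$. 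For the third condition, fix $u\in S^\ast\cap\VtxActive^{\rcall'}$; luckiness of $\rcall$ bounds $\poslimit^\rcall(u)$ by $\degord^\ast(u)$, and for every $v\in Z$ with $u\in N_G(v)\setminus D_v$, edge-injectivity of $\degord^\ast$ on the edge $uv$ together with the observation that $\degord^\ast(u)<\degord^\ast(v)$ would place $u$ into $D_v$ (contradiction) forces $\degord^\ast(u)\ge 1+\degord^\ast(v)=1+\degord(v)$; hence $\poslimit^{\rcall'}(u)\le\degord^\ast(u)$. For the fourth condition, fix $u\in\VtxActive^{\rcall'}$ and $v\in N_G(u)\cap\VtxTaken^{\rcall'}$ with $\poslimit^{\rcall'}(u)\le\degord^\ast(v)$: if $v\in\VtxTaken^\rcall$ then $\poslimit^\rcall(u)\le\poslimit^{\rcall'}(u)\le\degord^\ast(v)$ and the fourth condition of luckiness of $\rcall$ gives $u\in S^\ast$ and $\degord^\ast(u)<\degord^\ast(v)$; if $v\in Z$ then $u\notin D_v$ would give $\poslimit^{\rcall'}(u)\ge 1+\degord(v)>\degord^\ast(v)$, a contradiction, so $u\in D_v$, and the definition of $D_v$ gives $u\in S^\ast$ and $\degord^\ast(u)<\degord^\ast(v)$. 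Along the way one also notes $\rcall'$ is a well-formed subproblem: $\VtxActive^{\rcall'}\subseteq\VtxActive^\rcall$ keeps the size bound as $\rlevel$ is unchanged, $\VtxActive^{\rcall'}$ stays nonadjacent to $\VtxFree^{\rcall'}\setminus\VtxActive^{\rcall'}=\VtxFree^\rcall\setminus\VtxActive^\rcall$, $\degord^{\rcall'}$ is edge-injective and $[n]$-valued, and $\poslimit^{\rcall'}$ is $[n+1]$-valued.

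There is no genuine obstacle; \cref{lem:lucky-take} is pure bookkeeping. The only two steps that carry real content are the quota inequality --- which is exactly where $d$-degeneracy of $\degord^\ast$ enters, through a disjoint splitting of the left neighborhood of $u$ into the part already recorded in $\dom\degord$ and the part captured by $D_u$ --- and the fourth luckiness condition in the case $v\in Z$, which is precisely the reason the ``take'' operation is defined to raise $\poslimit^{\rcall'}(w)$ to $1+\degord(u)$ for neighbors $w$ of a taken vertex $u\in Z$ that are not guessed to be left neighbors of $u$.
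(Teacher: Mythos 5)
Your verification is correct; the paper states \cref{lem:lucky-take} without proof (as an ``immediate observation''), and your direct check of the definitions of position guess, left neighbor guess, quota, and luckiness is exactly the routine argument that is being left implicit. Your parenthetical fix --- replacing $D_u$ by $D_u \cap (\VtxActive^\rcall \setminus Z)$ so that it formally satisfies the requirement $D_u \subseteq \VtxActive^\rcall \setminus Z$, which affects neither the quota bound nor the resulting $\poslimit^{\rcall'}$ --- correctly handles the only imprecision in the statement.
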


\paragraph{Filtering.}
We will need a simple filtering step. Consider a subproblem $\rcall$.
\begin{itemize}
\item A vertex $v \in \VtxTaken^\rcall$ is \emph{offending} if there are more than $d$ vertices $u \in N_G(v)$
such that either $u \in \VtxTaken^\rcall$ and $\degord^\rcall(u) < \degord^\rcall(v)$,
or $u \in \VtxActive^\rcall$ and $\poslimit^\rcall(u) \leq \degord^\rcall(v)$. 
\item A vertex $v \in \VtxActive^\rcall$ is \emph{offending} if $\poslimit^\rcall(v) > n$ or
$\quota(\degord^\rcall, v, \poslimit^\rcall(v)) < 0$, where $\quota$ is defined w.r.t. $Z=\VtxTaken^\rcall$.
\item The subproblem $\rcall$ is \emph{clean} if there are no offending vertices.
\end{itemize}
We observe the following.

\begin{lemma}\label{lem:filtering}
 If in a subproblem $\rcall$ there is an offending vertex $v\in \VtxTaken^\rcall$
or $v \in \VtxActive^\rcall \cap S^\ast$, then $\rcall$ is not lucky.
\end{lemma}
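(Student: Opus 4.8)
The statement to prove is \cref{lem:filtering}: if in a subproblem $\rcall$ there is an offending vertex $v \in \VtxTaken^\rcall$ or $v \in \VtxActive^\rcall \cap S^\ast$, then $\rcall$ is not lucky. The plan is to prove the contrapositive: assuming $\rcall$ \emph{is} lucky, I will show that no vertex of $\VtxTaken^\rcall$ and no vertex of $\VtxActive^\rcall \cap S^\ast$ can be offending. Throughout, I will repeatedly use the defining properties of a lucky subproblem, namely $\VtxTaken^\rcall \subseteq S^\ast$, $\degord^\rcall = \degord^\ast|_{\VtxTaken^\rcall}$, the inequality $\poslimit^\rcall(u) \le \degord^\ast(u)$ for $u \in S^\ast \cap \VtxActive^\rcall$, and the fourth bullet, which says that a neighbor $u \in \VtxActive^\rcall$ of some $v \in \VtxTaken^\rcall$ with $\poslimit^\rcall(u) \le \degord^\ast(v)$ must actually lie in $S^\ast$ with $\degord^\ast(u) < \degord^\ast(v)$.

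\textbf{Case $v \in \VtxTaken^\rcall$.} Suppose $v$ is offending, i.e.\ there are more than $d$ vertices $u \in N_G(v)$ such that either (i) $u \in \VtxTaken^\rcall$ and $\degord^\rcall(u) < \degord^\rcall(v)$, or (ii) $u \in \VtxActive^\rcall$ and $\poslimit^\rcall(u) \le \degord^\rcall(v)$. I will argue that every such $u$ is a neighbor of $v$ in $G[S^\ast]$ with $\degord^\ast(u) < \degord^\ast(v)$; since $\degord^\ast$ is a $d$-degeneracy ordering of $G[S^\ast]$, there are at most $d$ such vertices, a contradiction. For $u$ of type (i): since $\rcall$ is lucky, $u, v \in \VtxTaken^\rcall \subseteq S^\ast$ and $\degord^\ast(u) = \degord^\rcall(u) < \degord^\rcall(v) = \degord^\ast(v)$, so $u$ is a left neighbor of $v$ in $G[S^\ast]$. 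For $u$ of type (ii): here $u \in \VtxActive^\rcall$, $v \in \VtxTaken^\rcall$, and $\poslimit^\rcall(u) \le \degord^\rcall(v) = \degord^\ast(v)$; the fourth bullet of luckiness then gives $u \in S^\ast$ and $\degord^\ast(u) < \degord^\ast(v)$, again a left neighbor in $G[S^\ast]$. Since the sets of type-(i) and type-(ii) vertices are disjoint (as $\VtxTaken^\rcall$ and $\VtxActive^\rcall$ are disjoint), together they give more than $d$ distinct left neighbors of $v$ in $G[S^\ast]$, contradicting the $d$-degeneracy ordering property of $\degord^\ast$.

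\textbf{Case $v \in \VtxActive^\rcall \cap S^\ast$.} Suppose $v$ is offending, i.e.\ $\poslimit^\rcall(v) > n$ or $\quota(\degord^\rcall, v, \poslimit^\rcall(v)) < 0$ (with $\quota$ taken w.r.t.\ $Z = \VtxTaken^\rcall$). The first alternative is immediately impossible for a lucky subproblem: $\poslimit^\rcall(v) \le \degord^\ast(v) \le |S^\ast| \le n$. For the second alternative, recall $\quota(\degord^\rcall, v, p) = d - |\{u \in N_G(v) \cap \VtxTaken^\rcall : \degord^\rcall(u) < p\}|$. I claim that every $u \in N_G(v) \cap \VtxTaken^\rcall$ with $\degord^\rcall(u) < \poslimit^\rcall(v)$ is a left neighbor of $v$ in $G[S^\ast]$: indeed such $u$ lies in $\VtxTaken^\rcall \subseteq S^\ast$, and $v \in \VtxActive^\rcall \cap S^\ast$, so $\poslimit^\rcall(v) \le \degord^\ast(v)$; hence $\degord^\ast(u) = \degord^\rcall(u) < \poslimit^\rcall(v) \le \degord^\ast(v)$, so $u \in N_{G[S^\ast]}(v)$ with $\degord^\ast(u) < \degord^\ast(v)$. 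Since $\degord^\ast$ is a $d$-degeneracy ordering, there are at most $d$ such vertices, so $\quota(\degord^\rcall, v, \poslimit^\rcall(v)) \ge 0$, contradicting offendingness.

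This completes the contrapositive, hence the lemma. I do not anticipate a genuine obstacle here; the only point requiring a little care is bookkeeping which clause of the definition of ``offending'' and which clause of ``lucky'' is invoked in each sub-case, and making sure the type-(i)/type-(ii) vertices in the first case are counted without double-counting (which follows from $\VtxTaken^\rcall \cap \VtxActive^\rcall = \emptyset$). Everything else is a direct substitution using $\degord^\rcall = \degord^\ast|_{\VtxTaken^\rcall}$ and the two position inequalities in the definition of luckiness.
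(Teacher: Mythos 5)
Your proof is correct and follows essentially the same argument as the paper: the $\VtxTaken^\rcall$ case uses the fourth bullet of luckiness to convert both kinds of counted neighbors into left neighbors of $v$ in $G[S^\ast]$ and contradicts the degeneracy bound, and the $\VtxActive^\rcall \cap S^\ast$ case uses $\poslimit^\rcall(v) \le \degord^\ast(v)$ to bound the quota from below, exactly as in the paper (your contrapositive phrasing versus the paper's proof by contradiction is only a cosmetic difference).
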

\begin{proof}
 For contradiction, suppose $\rcall$ is lucky. Assume first there is an offending vertex $v \in \VtxActive^\rcall \cap S^\ast$. Since $\rcall$ is lucky, we have $\poslimit^\rcall(v)\leq \degord^\ast(v)\leq n$. Furthermore, 
 \begin{eqnarray*}
d & \geq & |\{u\in N_G(v)\cap S^\ast~|~\degord^\ast(u)<\degord^\ast(v)\}|\\ & \geq & |\{u\in N_G(v)\cap \VtxTaken^\rcall~|~\degord^\ast(u)<\degord^\ast(v)\}|\\
& \geq & |\{u\in N_G(v)\cap \VtxTaken^\rcall~|~\degord^\ast(u)<\poslimit^\rcall(v)\}|\\
& = &d-\gamma(\degord^\rcall,v,\poslimit^\rcall(v)),  
 \end{eqnarray*}
 which implies that $\gamma(\degord^\rcall,v,\poslimit^\rcall(v))\geq 0$. This contradicts the assumption that $v$ is offending.
 
 Assume now that there is an offending vertex $v\in \VtxTaken^\rcall$. Note that if $u\in N_G(v)\cap \VtxTaken^\rcall$ satisfies $\degord^\rcall(u)<\degord^\rcall(v)$, then $u\in S^\ast$ and $\degord^\ast(u)<\degord^\ast(v)$.
 Further, if $u\in N_G(v)\cap \VtxActive^\rcall$ satisfies $\poslimit^\rcall(u)\leq \degord^\rcall(v)$, then by the last item of the definition of being lucky we infer that $u\in S^\ast$ and $\degord^\ast(u)<\degord^\ast(v)$. Consequently, if $v$ is offending, then $|\{u\in N_G(v)\cap S^\ast~|~\degord^\ast(u)<\degord^\ast(v)\}|>d$, a contradiction.
\end{proof}

A \emph{filtering step} applied to a subproblem $\rcall$ creates a subproblem
$\rcall'$ that is the result of deleting all offending vertices $v \in \VtxActive^\rcall$
from $\rcall$.
By \cref{lem:filtering} we infer that if $\rcall$ is lucky, then $\rcall'$ is lucky, too. 

\subsection{Subproblem tree}
Note that subproblems of level $0$ have necessarily $\VtxActive^\rcall = \emptyset$. These subproblems shall correspond to the leaves of the recursion. 
Let us now proceed to the description of the recursion tree. 

A \emph{subproblem tree} is a rooted tree where every node $x$ 
is labelled with a subproblem $\rcall(x)$
and is one of the following five types: leaf node, filter node, split node, branch node, and free node.
We require that the root of the tree is labelled with a clean subproblem and that for every $x$ and its child $y$, $\rcall(y)$ extends $\rcall(x)$.

For brevity, we say that $x$ is clean if $\rcall(x)$ is clean.
Similarly, we say that $y$ (completely) extends $x$ if
$\rcall(y)$ (completely) extends $\rcall(x)$.
Also, the level of $x$ is the level of $\rcall(x)$. 

\paragraph{Leaf node.}
A leaf node $x$ has no children, is of level $\rlevel^{\rcall(x)} = 0$, and
  hence has $\VtxActive^{\rcall(x)} = \emptyset$.

\paragraph{Filter node.}
A filter node $x$ has no or one child. 
If $\rcall(x)$ contains an offending vertex $v \in \VtxTaken^{\rcall(x)}$, 
   then $x$ has no children.
Otherwise, $x$ contains a single child $y$ with $\rcall(y)$ being the result
of the filtering step applied to $\rcall(x)$. Note that the child of a filter node
is always clean.

We additionally require that a parent of a filter node is not a filter node.

\paragraph{Split node.}
We say that a subproblem $\rcall$ is \emph{splittable}
if $\rlevel^\rcall \geq 1$ and every connected component of
$G[\VtxActive^\rcall]$ has size less than $0.99^{-\rlevel^\rcall+1}$.
We observe that it is straightforward to split a splittable subproblem into 
a constant number of subproblems of smaller level.
\begin{lemma}\label{lem:splittable}
If $\rcall$ is a splittable subproblem, then there exists a family $\mathcal{F}$ of one or two
subproblems of level $\rlevel^\rcall-1$ that all extend $\rcall$, so that for every $\qcall \in \mathcal{F}$ we have $\VtxTaken^{\qcall} = \VtxTaken^\rcall$, $\VtxDeleted^{\qcall} = \VtxDeleted^{\rcall}$, and $\degord^{\qcall} = \degord^{\rcall}$,
and furthermore $\{\VtxActive^{\qcall}\colon \qcall \in \mathcal{F}\}$ is a partition of $\VtxActive^{\rcall}$
with $\poslimit^{\qcall} = \poslimit^\rcall|_{\VtxActive^{\qcall}}$ for every $\qcall \in \mathcal{F}$. 
\end{lemma}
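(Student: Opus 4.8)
The statement of \cref{lem:splittable} is a simple packing lemma: given that every connected component of $G[\VtxActive^\rcall]$ is small (of size less than $0.99^{-\rlevel^\rcall+1}$), we want to partition the set of components into one or two groups, each of whose total size is less than $0.99^{-\rlevel^\rcall}$ (the size bound required for a subproblem of level $\rlevel^\rcall-1$), in such a way that every resulting subproblem genuinely extends $\rcall$. First I would observe that since $\rlevel^\rcall\ge 1$ we have $|\VtxActive^\rcall|<0.99^{-\rlevel^\rcall}$, and the target bound for level $\rlevel^\rcall-1$ is $0.99^{-(\rlevel^\rcall-1)}=0.99\cdot 0.99^{-\rlevel^\rcall}$, which is \emph{smaller} than $|\VtxActive^\rcall|$ in the worst case. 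So a single group need not suffice, and we genuinely may need two.

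**Main step: the greedy bin assignment.** I would enumerate the connected components $C_1,\ldots,C_m$ of $G[\VtxActive^\rcall]$ and process them greedily, maintaining two bins $A_1,A_2$. Put each $C_i$ into $A_1$ as long as the running total stays below $0.99^{-(\rlevel^\rcall-1)}$; once it would overflow, switch to $A_2$ and put the remaining components there. Since each individual component has size at most $0.99^{-(\rlevel^\rcall-1)}-1$ (this is exactly the splittable hypothesis, as $0.99^{-\rlevel^\rcall+1}$ is the stated strict upper bound on each component), the first bin ends up with total size at least $0.99^{-(\rlevel^\rcall-1)}-\bigl(0.99^{-(\rlevel^\rcall-1)}-1\bigr)\ge$ roughly half of the target, hence the leftover that goes into $A_2$ is at most $|\VtxActive^\rcall|-\bigl(\text{something close to }0.99^{-(\rlevel^\rcall-1)}\bigr)$, which is comfortably below $0.99^{-(\rlevel^\rcall-1)}$ as well. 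The precise inequality to check is
$$
|\VtxActive^\rcall| - \bigl(0.99^{-(\rlevel^\rcall-1)}-0.99^{-\rlevel^\rcall+1}\bigr) < 0.99^{-(\rlevel^\rcall-1)},
$$
which rearranges to $|\VtxActive^\rcall| < 2\cdot 0.99^{-(\rlevel^\rcall-1)} - 0.99^{-\rlevel^\rcall+1}$; since $|\VtxActive^\rcall|<0.99^{-\rlevel^\rcall}<0.99^{-(\rlevel^\rcall-1)}$ and the right-hand side exceeds $0.99^{-(\rlevel^\rcall-1)}$, this holds with room to spare. If all components fit in one bin we take $\mathcal F$ to be a single subproblem; otherwise $\mathcal F$ has two. (A cleaner alternative that avoids any arithmetic: sort components by decreasing size and use the fact that one can always 2-partition items each of size $<M$ into two parts each of total size $<M$ when the grand total is $<2M-\max_i|C_i|$; but the greedy argument above is self-contained.)

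**Defining the subproblems and checking ``extends''.** For each bin $A_j$ define $\qcall_j$ by keeping $\VtxTaken,\VtxDeleted,\degord$ unchanged, setting $\rlevel^{\qcall_j}=\rlevel^\rcall-1$, $\VtxActive^{\qcall_j}=A_j$, and $\poslimit^{\qcall_j}=\poslimit^\rcall|_{A_j}$. I need to verify three things. First, $\qcall_j$ is a valid subproblem: $A_j\subseteq \VtxFree^\rcall$ and, crucially, $A_j$ is nonadjacent to $\VtxFree^\rcall\setminus A_j$ --- this is immediate because $A_j$ is a union of connected components of $G[\VtxActive^\rcall]$, hence nonadjacent to the rest of $\VtxActive^\rcall$, and $\VtxActive^\rcall$ was already nonadjacent to $\VtxFree^\rcall\setminus\VtxActive^\rcall$; the size bound $|A_j|<0.99^{-(\rlevel^\rcall-1)}$ was established above. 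Second, $\qcall_j$ extends $\rcall$: the containments $\VtxTaken^\rcall\subseteq\VtxTaken^{\qcall_j}$ etc.\ hold with equality, $\VtxActive^{\qcall_j}=A_j\subseteq\VtxActive^\rcall$, $\degord$ is unchanged, $\poslimit$ only restricted (so $\poslimit^{\qcall_j}(v)=\poslimit^\rcall(v)\ge\poslimit^\rcall(v)$), and the last bullet of ``extends'' is vacuous since $\VtxTaken^{\qcall_j}\cap\VtxActive^\rcall=\emptyset$. Finally, $\{A_j\}$ partitions $\VtxActive^\rcall$ and $\poslimit$ restricts correctly by construction.

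**Expected obstacle.** There is no deep obstacle here; the only thing to be careful about is the exact constant bookkeeping --- making sure the per-component bound $0.99^{-\rlevel^\rcall+1}$ (which the splittability hypothesis provides, as a \emph{strict} upper bound) is tight enough that the greedy split does not overflow the level-$(\rlevel^\rcall-1)$ capacity $0.99^{-(\rlevel^\rcall-1)}$, and handling the degenerate case where $\VtxActive^\rcall$ already fits in one bin (or is empty). I would state the arithmetic inequality explicitly once and then move on.
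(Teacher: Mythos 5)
Your reduction of the lemma to a two-bin packing of the connected components, and your verification that each resulting $\qcall$ is a legitimate subproblem extending $\rcall$ (bins are unions of components, hence nonadjacent to the rest of $\VtxFree^\rcall$; $\VtxTaken,\VtxDeleted,\degord$ unchanged; $\poslimit$ restricted; the last condition of ``extends'' vacuous), are fine and match the paper. The gap is in the packing step itself, which is the actual content of \cref{lem:splittable}. Note first that $0.99^{-(\rlevel^\rcall-1)}$ and $0.99^{-\rlevel^\rcall+1}$ are the \emph{same} number: the per-component bound provided by splittability coincides exactly with the capacity of a level-$(\rlevel^\rcall-1)$ subproblem, so a single component may essentially fill a whole bin. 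Consequently your ``precise inequality'' is broken: the slack $0.99^{-(\rlevel^\rcall-1)}-0.99^{-\rlevel^\rcall+1}$ you subtract is $0$, the displayed bound reduces to $|\VtxActive^\rcall|<0.99^{-(\rlevel^\rcall-1)}$ (precisely what is \emph{not} guaranteed), and the chain $|\VtxActive^\rcall|<0.99^{-\rlevel^\rcall}<0.99^{-(\rlevel^\rcall-1)}$ has the inequality between the two capacities backwards (you state the correct direction yourself in the opening paragraph). Likewise, $0.99^{-(\rlevel^\rcall-1)}-\bigl(0.99^{-(\rlevel^\rcall-1)}-1\bigr)=1$, not ``roughly half of the target''. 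More importantly, the unsorted prefix greedy you describe (``once it would overflow, switch to $A_2$ and put the remaining components there'') genuinely fails. With $M\coloneqq 0.99^{-\rlevel^\rcall+1}>198$, take components of sizes $2$, then the largest integer below $M$, then $1$, enumerated in that order: the total is at most $M+2<0.99^{-\rlevel^\rcall}$ and each component has size less than $M$, so $\rcall$ is splittable; yet $A_1$ receives only the size-$2$ component and $A_2$ receives at least $M$ vertices, exceeding the level-$(\rlevel^\rcall-1)$ capacity. The parenthetical ``cleaner alternative'' does not repair this: the quoted 2-partition fact requires the grand total to be below $2M-\max_i|C_i|$, but here the total may be as large as roughly $M+M/99$ while $\max_i|C_i|$ may be arbitrarily close to $M$, so its hypothesis can fail.

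The missing ingredient — and the one extra idea in the paper's proof — is to process the components in \emph{non-increasing} order of size. Then, letting $j$ be maximal with $\sum_{i\le j}|C_i|<0.99^{-\rlevel^\rcall+1}$ (splittability gives $j\ge 1$), the overflowing component $C_{j+1}$ is no larger than any of the $j$ components already placed, so $\sum_{i\le j}|C_i|\ge\tfrac12\cdot 0.99^{-\rlevel^\rcall+1}$; hence the suffix satisfies $\sum_{i>j}|C_i|<0.99^{-\rlevel^\rcall}-0.495\cdot 0.99^{-\rlevel^\rcall}=0.505\cdot 0.99^{-\rlevel^\rcall}<0.99^{-\rlevel^\rcall+1}$, and both bins respect the required capacity. (A first-fit variant, in which later components may still be added to $A_1$, also works, but it is not the procedure you describe.) With that correction the rest of your argument goes through as written.
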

\begin{proof}
Let $C_1,\ldots,C_p$ be connected components of $G[\VtxActive^{\rcall}]$ in a non-increasing
order of their number of vertices.
Since $\rcall$ is of level $\rlevel^\rcall$, it holds that $\sum_{i=1}^p |C_i| < 0.99^{-\rlevel^\rcall}$.
Let $j$ be the maximum index for which it holds that $\sum_{i=1}^j |C_i| <0.99^{-\rlevel^\rcall+1}$.
Then the assumption that $\rcall$ is splittable implies $j \geq 1$.

If $j=p$, then we can set $\Fc$ to be a singleton that contains a copy of $\Rc$. In particular, the set of active vertices remains unchanged.
So now assume $j < p$. Since we ordered $C_i$s in  a non-increasing order of the number of vertices, the maximality of $j$ implies that 
$\sum_{i=1}^j |C_i| \geq \frac{1}{2} \cdot 0.99^{-\rlevel^\rcall+1}$. 
Consequently,
$$\sum_{i=j+1}^p |C_i| < (1-0.495) \cdot 0.99^{-\rlevel^{\rcall}} < 0.99^{-\rlevel^{\rcall}+1}.$$
Hence, we can split $\VtxActive^\rcall$ into $\bigcup_{i=1}^j C_i$
and $\bigcup_{i=j+1}^p C_i$ for the active sets of the subproblems of $\mathcal{F}$.
Note that in this case $\mathcal{F}$ is of size two.
\end{proof}
A split node $x$ satisfies the following properties:
$\rcall(x)$ is a splittable subproblem and the family of 
subproblems of children of $x$ satisfy the properties promised by \cref{lem:splittable} for the family $\mathcal{F}$. Moreover, we require that all the children of a split node are free nodes.

We remark that if a split node $x$ is clean, then all the children of $x$ are clean as well.

We also make the following observation.
\begin{lemma}\label{lem:split-compose} 
Let $x$ be a clean split node with children $\mathcal{Y}$. 
For every $y \in \mathcal{Y}$, let $\rcall'_y$ be
a clean subproblem extending $\rcall(y)$ completely.
Define a subproblem $\rcall'$ as 
\begin{align*}
\VtxTaken^{\rcall'} &= \bigcup_{y \in \mathcal{Y}} \VtxTaken^{\rcall'_y}, & 
\VtxDeleted^{\rcall'} &= \bigcup_{y \in \mathcal{Y}} \VtxDeleted^{\rcall'_y}, & 
\degord^{\rcall'} &= \bigcup_{y \in \mathcal{Y}} \degord^{\rcall'_y},\\
\VtxActive^{\rcall'} &= \emptyset, & \poslimit^{\rcall'} & = \emptyset, & 
\rlevel^{\rcall'} &= 0. 
\end{align*}
Then, $\rcall'$ is clean and extends completely $\rcall(x)$. 
\end{lemma}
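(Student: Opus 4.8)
The statement asserts three things about $\rcall'$: that it is a well-formed subproblem at all, that it extends $\rcall(x)$ completely, and that it is clean. I would verify these in that order. Throughout I will write $A_y \coloneqq \VtxTaken^{\rcall'_y}\setminus\VtxTaken^{\rcall(x)}$ and $B_y \coloneqq \VtxDeleted^{\rcall'_y}\setminus\VtxDeleted^{\rcall(x)}$ for $y\in\mathcal{Y}$. The facts I will lean on are: (i) by the split-node definition together with \cref{lem:splittable}, the sets $\VtxActive^{\rcall(y)}$ partition $\VtxActive^{\rcall(x)}$, each being a union of connected components of $G[\VtxActive^{\rcall(x)}]$, so there is no $G$-edge between $\VtxActive^{\rcall(y_1)}$ and $\VtxActive^{\rcall(y_2)}$ for distinct $y_1,y_2$; (ii) since $\rcall'_y$ extends $\rcall(y)$ completely and $\VtxTaken^{\rcall(y)}=\VtxTaken^{\rcall(x)}$, $\VtxDeleted^{\rcall(y)}=\VtxDeleted^{\rcall(x)}$, the sets $A_y$ and $B_y$ are disjoint with $A_y\cup B_y = \VtxActive^{\rcall(y)}$, and $\degord^{\rcall'_y}|_{\VtxTaken^{\rcall(x)}}=\degord^{\rcall(x)}$, $\poslimit^{\rcall'_y}$ extends $\poslimit^{\rcall(y)}=\poslimit^{\rcall(x)}|_{\VtxActive^{\rcall(y)}}$; (iii) $\rcall(x)$ is clean and each $\rcall'_y$ is clean; and (iv) for every $v\in\VtxActive^{\rcall(y)}$, $N_G(v)\subseteq\VtxTaken^{\rcall(x)}\cup\VtxDeleted^{\rcall(x)}\cup\VtxActive^{\rcall(y)}$, by the nonadjacency of $\VtxActive^{\rcall(x)}$ to $\VtxFree^{\rcall(x)}\setminus\VtxActive^{\rcall(x)}$ combined with (i).

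\emph{Well-formedness.} The sets $\VtxTaken^{\rcall'_y}$ pairwise intersect exactly in $\VtxTaken^{\rcall(x)}$ (as the $A_y$ are pairwise disjoint and disjoint from $\VtxTaken^{\rcall(x)}$), and there $\degord^{\rcall'_y}=\degord^{\rcall(x)}$, so $\degord^{\rcall'}$ is a well-defined function into $[n]$ restricting to $\degord^{\rcall'_y}$ on each $\VtxTaken^{\rcall'_y}$. Disjointness of $\VtxTaken^{\rcall'}$ and $\VtxDeleted^{\rcall'}$ follows from (ii), the partition property, and the disjointness of $\VtxTaken^{\rcall(x)},\VtxDeleted^{\rcall(x)},\VtxActive^{\rcall(x)}$. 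For edge-injectivity of $\degord^{\rcall'}$: an edge of $G[\VtxTaken^{\rcall'}]$ with an endpoint in $A_y$ cannot have its other endpoint in $A_{y'}$ for $y'\neq y$ by (i), hence it lies in $\VtxTaken^{\rcall'_y}$ and edge-injectivity of $\degord^{\rcall'_y}$ applies; an edge inside $\VtxTaken^{\rcall(x)}$ is handled by edge-injectivity of $\degord^{\rcall(x)}$.

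\emph{Complete extension.} All six conditions of ``$\rcall'$ extends $\rcall(x)$'' reduce to the split-node identities combined with ``$\rcall'_y$ extends $\rcall(y)$''; in particular the clause ``$\degord^{\rcall'}(v)\ge\poslimit^{\rcall(x)}(v)$ for $v\in\VtxTaken^{\rcall'}\cap\VtxActive^{\rcall(x)}$'' is exactly the analogous clause for $\rcall'_y$ extending $\rcall(y)$ applied to the unique $y$ with $v\in A_y$, using $\poslimit^{\rcall(y)}=\poslimit^{\rcall(x)}|_{\VtxActive^{\rcall(y)}}$. Finally $(\VtxTaken^{\rcall'}\setminus\VtxTaken^{\rcall(x)})\cup(\VtxDeleted^{\rcall'}\setminus\VtxDeleted^{\rcall(x)})=\bigcup_y(A_y\cup B_y)=\bigcup_y\VtxActive^{\rcall(y)}=\VtxActive^{\rcall(x)}$ and $\rlevel^{\rcall'}=0$, so the extension is complete.

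\emph{Cleanliness.} Since $\VtxActive^{\rcall'}=\emptyset$, there are no offending active vertices, and a vertex $v\in\VtxTaken^{\rcall'}$ is offending iff $|\{u\in N_G(v)\cap\VtxTaken^{\rcall'}:\degord^{\rcall'}(u)<\degord^{\rcall'}(v)\}|>d$. If $v\in A_y$, then by (iv) and the fact that $\VtxTaken^{\rcall'}$ is disjoint from $\VtxDeleted^{\rcall(x)}$ while $\VtxActive^{\rcall(y)}\cap\VtxTaken^{\rcall'}=A_y$, we get $N_G(v)\cap\VtxTaken^{\rcall'}\subseteq\VtxTaken^{\rcall(x)}\cup A_y=\VtxTaken^{\rcall'_y}$; since $\degord^{\rcall'}$ restricts to $\degord^{\rcall'_y}$ there and $\rcall'_y$ is clean, $v$ is not offending in $\rcall'$. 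If $v\in\VtxTaken^{\rcall(x)}$, let $P_v$ be the set of $u\in N_G(v)$ witnessing that $v$ is not offending in $\rcall(x)$ (so $|P_v|\le d$); I claim every $u\in N_G(v)\cap\VtxTaken^{\rcall'}$ with $\degord^{\rcall'}(u)<\degord^{\rcall'}(v)=\degord^{\rcall(x)}(v)$ lies in $P_v$. Indeed, if $u\in\VtxTaken^{\rcall(x)}$ then $\degord^{\rcall(x)}(u)=\degord^{\rcall'}(u)<\degord^{\rcall(x)}(v)$, so $u\in P_v$; and if $u\in A_y$ then $u\in\VtxActive^{\rcall(x)}$ and, using that $\rcall'_y$ extends $\rcall(y)$, $\poslimit^{\rcall(x)}(u)=\poslimit^{\rcall(y)}(u)\le\degord^{\rcall'_y}(u)=\degord^{\rcall'}(u)<\degord^{\rcall(x)}(v)$, so $u$ meets the active clause in the definition of offending for $\rcall(x)$ and hence $u\in P_v$. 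Thus the left-neighbor count at $v$ in $\rcall'$ is at most $|P_v|\le d$, so $\rcall'$ is clean.

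\textbf{Main obstacle.} Everything except the last case of cleanliness is mechanical bookkeeping with the split-node identities and the ``extends'' relation. The one genuinely non-routine step is the treatment of $v\in\VtxTaken^{\rcall(x)}$: one must see that the new left neighbors of $v$ contributed by the various children $\rcall'_y$ were already charged against the bound certifying $\rcall(x)$ clean, which is precisely what the inequality $\poslimit^{\rcall(x)}(u)\le\degord^{\rcall'}(u)$ furnished by the ``extends'' relation encodes.
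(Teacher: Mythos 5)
Your proof is correct and follows essentially the same route as the paper: well-definedness of $\degord^{\rcall'}$ via disjointness of the children's active sets, and cleanliness split into the two cases $v\in\VtxTaken^{\rcall'_y}\setminus\VtxTaken^{\rcall(x)}$ (neighbors confined to $\VtxTaken^{\rcall'_y}$ by nonadjacency, then cleanliness of $\rcall'_y$) and $v\in\VtxTaken^{\rcall(x)}$ (charging new left neighbors against the offending test in the clean $\rcall(x)$ via $\degord^{\rcall'}(u)\geq\poslimit^{\rcall(x)}(u)$). You merely spell out a few bookkeeping steps (edge-injectivity, the extension clauses) that the paper declares straightforward.
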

\begin{proof}
Most of the asserted properties follow from the definitions in a straightforward manner; here we only
discuss the nontrivial ones.

First, note that $\degord^{\rcall'}$ is a well-defined function with domain
$\VtxTaken^{\rcall'}$. This is because each $\degord^{\rcall'_y}$ extends $\degord^{\rcall(x)}$,
and the sets $\{\VtxTaken^{\rcall'_y} \setminus \VtxTaken^{\rcall(x)}\colon y \in \mathcal{Y}\}$
are pairwise disjoint, because
$\{\VtxActive^{\rcall(y)}\colon y \in \mathcal{Y}\}$ is a partition of $\VtxActive^{\rcall(x)}$. 
Thus, $\rcall'$ is indeed a subproblem extending completely $\rcall(x)$. 
It remains to show that it is clean, that is, there are no offending vertices in 
$\VtxTaken^{\rcall'}$ (note here that $\VtxActive^{\rcall'}$ is empty).

Consider first a vertex $v \in \VtxTaken^{\rcall'_y} \setminus \VtxTaken^{\rcall(x)}$ for some $y \in \mathcal{Y}$. 
Then, since $\VtxActive^{\rcall(y)}$ is nonadjacent to $\VtxFree^{\rcall(y)} \setminus \VtxActive^{\rcall(y)}$, we have that $N_G(v) \subseteq \VtxActive^{\rcall(y)} \cup \VtxTaken^{\rcall(x)} \cup \VtxDeleted^{\rcall(x)}$. We 
infer that since $v$ is not offending in~$\rcall'_y$, due to this subproblem being clean, it is also not offending
in $\rcall'$.

Consider now a vertex $v \in \VtxTaken^{\rcall(x)}$.
Let $u \in N_G(v) \cap \VtxTaken^{\rcall'}$ be such that $\degord^{\rcall'}(u) < \degord^{\rcall'}(v)$. 
Then either $u \in \VtxTaken^{\rcall(x)}$ and $\degord^{\rcall(x)}(u) < \degord^{\rcall(x)}(v)$,
or $u \in \VtxActive^{\rcall(x)}$ and $\poslimit^{\rcall(x)}(u) < \degord^{\rcall(x)}(v)$.
Since $v$ is not offending in $\rcall(x)$, the number of such vertices $u$
is bounded by $d$. Hence, $v$ is not offending in $\rcall'$. 
This completes the proof of the lemma.
\end{proof}

\paragraph{Branch node.} 
Branch node $x$ is additionally labelled with a \emph{pivot} $\pivot^x \in \VtxActive^{\rcall(x)}$.
Intuitively, in the branching we decide whether $\pivot^x$ belongs to the constructed solution or not.
In case we decide to include $\pivot^x$ in $\VtxTaken$, we also guess its left neighbors and their positions in our fixed degeneracy ordering, as well as several other useful pieces of information.

Formally, a branch node $x$ has a child $y^x$ such that $\rcall(y^x)$ is the result of deleting $\pivot^x$
from $\rcall(x)$.
We call the edge pair $xy^x$ of the subproblem tree the \emph{failure branch} at node $x$, 
while $y^x$ is the \emph{failure child} of $x$. 
Note that if $x$ is clean, then the failure child $y^x$ is also clean.

Additionally, for every tuple $\mathcal{D} = (D,\degord,(D_u)_{u \in D'})$,
where
\begin{itemize}
 \item 
$D \subseteq \VtxActive^{\rcall(x)} \cap N_G(\pivot^x)$ is
of size at most $\quota(\degord^{\rcall(x)}, \pivot^x,\poslimit^{\rcall(x)}(\pivot^x))$, 
\item
$\degord$ is a position guess for $\rcall$ and $D'$, where $D' \coloneqq D \cup \{\pivot^x\}$,
and 
\item $(D_u)_{u \in D'}$ is a left neighbor guess for $\rcall$ and $D'$
such that $D_{\pivot^x} = \emptyset$,
\end{itemize} 
the branch node $x$ has a child $z^x_{\mathcal{D}}$. This child is associated with the subproblem $\rcall(z^x_{\mathcal{D}})$
that is the result
of taking $D'$ at positions $\degord$ with left neighbors $(D_u)_{u \in D'}$
in the subproblem $\rcall$.
We call each edge $xz^x_{\mathcal{D}}$ of the subproblem tree a \emph{success branch} at node $x$,
and $z^x_{\mathcal{D}}$ is a \emph{success child} of $x$.

We remark that the children $z^x_{\mathcal{D}}$ may not be clean even if $x$ is clean.
Therefore, we require that every child $z^x_{\mathcal{D}}$ is a filter node. 
The child of $z^x_{\mathcal{D}}$, if present, is denoted as $s^x_{\mathcal{D}}$
and is called a \emph{success grandchild} of $x$. We require that all the success grandchildren of $x$ are free nodes.

Observe that for success children of a branch node, there are at most $n^d$ choices for $D$, at most $n^{d+1}$ choices for $\degord$, and at most $n^d$ choices for each $D_u$, $u \in D$. This
    gives at most $n^{d^2+2d+1} = n^{(d+1)^2}$ choices for the tuple~$\mathcal{D}$, and consequently this is an upper bound on the number of success children.

Finally, we make the following observation that follows immediately from \cref{lem:lucky-take}:
\begin{lemma}\label{lem:lucky-branch}
Assume that a branch node $x$ is lucky.
If $\pivot^x \notin S^\ast$, then the failure child $y^x$ is lucky.
Otherwise, if $\pivot^x \in S^\ast$, then for
\begin{align*}
D &= \{u \in N_G(\pivot^x) \cap \VtxActive^{\rcall(x)}\cap S^\ast~|~\degord^\ast(u) < \degord^\ast(\pivot^x)\},\\
\degord &= \degord^\ast|_{\VtxTaken^{\rcall(x)} \cup D'},\\
D_u &= \{w \in N_G(u) \cap \VtxActive^{\rcall(x)}\cap S^\ast \setminus D'~|~\degord^\ast(w) < \degord^\ast(u)\},
\end{align*}
we have $|D| \leq d$, $\degord$ is a position guess for $\rcall(x)$
 and $D'$, $(D_u)_{u \in D'}$ is a left neighbor guess for $\rcall(x)$, $D'$, and~$\degord$
 satisfying $D_{\pivot^x} = \emptyset$, and $z^x_{\mathcal{D}}$ is a lucky success child of $x$.
\end{lemma}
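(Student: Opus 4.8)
The statement unwinds the definitions of \emph{branch node}, \emph{lucky}, \emph{position guess}, \emph{left neighbor guess}, and \emph{quota}, and then invokes \cref{lem:lucky-take}. The failure case is immediate: the failure child $y^x$ is the result of deleting $\pivot^x$ from $\rcall(x)$, and since $\pivot^x\notin S^\ast$ we have $\{\pivot^x\}\cap S^\ast=\emptyset$, so $y^x$ is lucky by the observation recorded right after the definition of the deletion operation. For the success case, put $D'=D\cup\{\pivot^x\}$; note first that $D'\subseteq S^\ast\cap\VtxActive^{\rcall(x)}$, because $\pivot^x\in S^\ast\cap\VtxActive^{\rcall(x)}$ by assumption and $D\subseteq N_G(\pivot^x)\cap\VtxActive^{\rcall(x)}\cap S^\ast$ by definition. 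It then remains to check: (i) $|D|\le\quota(\degord^{\rcall(x)},\pivot^x,\poslimit^{\rcall(x)}(\pivot^x))$, so that $\mathcal{D}=(D,\degord,(D_u)_{u\in D'})$ is a legal label of a success child of $x$ (and in particular $|D|\le d$); (ii) $\degord=\degord^\ast|_{\VtxTaken^{\rcall(x)}\cup D'}$ is a position guess for $\rcall(x)$ and $D'$; (iii) $(D_u)_{u\in D'}$ is a left neighbor guess for $\rcall(x)$, $D'$, and $\degord$ with $D_{\pivot^x}=\emptyset$; and (iv) the resulting subproblem $z^x_{\mathcal{D}}$ is lucky.

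The easy parts first. Item (ii): $\degord$ is edge-injective because $\degord^\ast$ is edge-injective on $S^\ast\supseteq\VtxTaken^{\rcall(x)}\cup D'$ (and takes values in $[n]$ since $\degord^\ast$ maps into $[|S^\ast|]$), it extends $\degord^{\rcall(x)}$ because luckiness gives $\degord^{\rcall(x)}=\degord^\ast|_{\VtxTaken^{\rcall(x)}}$, and $\degord(u)=\degord^\ast(u)\ge\poslimit^{\rcall(x)}(u)$ for $u\in D'$ by luckiness (third bullet, using $D'\subseteq S^\ast\cap\VtxActive^{\rcall(x)}$). In (iii), $D_u\subseteq\VtxActive^{\rcall(x)}\setminus D'$ by construction, and $D_{\pivot^x}=\emptyset$ since every $w\in N_G(\pivot^x)\cap\VtxActive^{\rcall(x)}\cap S^\ast$ with $\degord^\ast(w)<\degord^\ast(\pivot^x)$ already lies in $D\subseteq D'$. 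The two cardinality bounds (in (i) and (iii)) I would prove by a common counting argument. Fix $u\in D'$ and let $N^{<}_u=\{w\in N_G(u)\cap S^\ast:\degord^\ast(w)<\degord^\ast(u)\}$; since $\degord^\ast$ is a $d$-degeneracy ordering of $G[S^\ast]$ (\cref{lem:degeneracy}), $|N^{<}_u|\le d$. As $u\in\VtxActive^{\rcall(x)}$ and $\VtxActive^{\rcall(x)}$ is nonadjacent to $\VtxFree^{\rcall(x)}\setminus\VtxActive^{\rcall(x)}$, every neighbor of $u$ lies in $\VtxTaken^{\rcall(x)}\cup\VtxActive^{\rcall(x)}\cup\VtxDeleted^{\rcall(x)}$; and luckiness gives $\VtxTaken^{\rcall(x)}\subseteq S^\ast$ and $\VtxDeleted^{\rcall(x)}\cap S^\ast=\emptyset$, so $N^{<}_u$ is the disjoint union of $N^{<}_u\cap(\VtxTaken^{\rcall(x)}\cup D')$ and $N^{<}_u\cap(\VtxActive^{\rcall(x)}\setminus D')$. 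For $u\in D$ the latter set is exactly $D_u$ while the former is exactly the set subtracted in the definition of $\quota(\degord,u)$ (recall $\degord=\degord^\ast$ on its domain $\VtxTaken^{\rcall(x)}\cup D'$), giving $|D_u|\le\quota(\degord,u)$, which is (iii). For $u=\pivot^x$ the latter set is empty and $N^{<}_{\pivot^x}\cap\VtxActive^{\rcall(x)}=D$, so $|D|=|N^{<}_{\pivot^x}|-|N^{<}_{\pivot^x}\cap\VtxTaken^{\rcall(x)}|$; hence it suffices to show $\{v\in N_G(\pivot^x)\cap\VtxTaken^{\rcall(x)}:\degord^{\rcall(x)}(v)<\poslimit^{\rcall(x)}(\pivot^x)\}\subseteq N^{<}_{\pivot^x}$, which holds because for such $v$ we have $\degord^\ast(v)=\degord^{\rcall(x)}(v)<\poslimit^{\rcall(x)}(\pivot^x)\le\degord^\ast(\pivot^x)$, the last step again by luckiness. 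This proves (i).

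Finally, for (iv): by construction $z^x_{\mathcal{D}}$ is the result of taking $D'$ at positions $\degord$ with left neighbors $(D_u)_{u\in D'}$ in the lucky subproblem $\rcall(x)$, where $\degord$ and $(D_u)_{u\in D'}$ are exactly the guesses furnished by \cref{lem:lucky-take} for $Z:=D'$ — the left-neighbor sets of that lemma being understood as restricted to $\VtxActive^{\rcall(x)}\setminus D'$, as the definition of a left neighbor guess requires; this restriction changes nothing, since the excluded vertices of $D'$ do not occur among the active vertices of $z^x_{\mathcal{D}}$ and hence do not affect $\poslimit^{z^x_{\mathcal{D}}}$. Therefore luckiness of $z^x_{\mathcal{D}}$ is exactly the final assertion of \cref{lem:lucky-take}. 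The one genuinely delicate step in the whole argument is the translation in (i): one must check that a neighbor of $\pivot^x$ already committed to $\VtxTaken^{\rcall(x)}$ is counted against precisely the term that appears in $\quota(\degord^{\rcall(x)},\pivot^x,\poslimit^{\rcall(x)}(\pivot^x))$ — and this is exactly where the second and third bullets of the definition of \emph{lucky} (the ones tying $\degord^{\rcall(x)}$ and $\poslimit^{\rcall(x)}$ to $\degord^\ast$) are used.
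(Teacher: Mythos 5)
Your proof is correct and follows the same route the paper intends: the paper states \cref{lem:lucky-branch} as an immediate consequence of \cref{lem:lucky-take}, and your argument is exactly the verification of that — checking via luckiness and the degeneracy bound on $\degord^\ast$ that $(D,\degord,(D_u)_{u\in D'})$ is a legal tuple labelling a success child, and then invoking \cref{lem:lucky-take} with $Z=D'$ for luckiness of $z^x_{\mathcal{D}}$. Your extra remark that restricting the left-neighbor sets of \cref{lem:lucky-take} to $\VtxActive^{\rcall(x)}\setminus D'$ does not change the resulting subproblem (since $\poslimit$ is only updated on $\VtxActive^{\rcall(x)}\setminus D'$) correctly handles the one small mismatch between the two lemmas' definitions of $D_u$, which the paper glosses over.
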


\paragraph{Free node.}
For free nodes, we put three restrictions:
\begin{itemize}
\item a free node is a success grandchild of a branch node or a child of a split node;
\item the children of a free node have the same level as the free node itself; and
\item a child of a free node is clean or is a filter node.
\end{itemize}
Recall also the requirements stated in the above sections:
\begin{itemize}
 \item Every child of a split node is a free node.
 \item Every success grandchild of a branch node is a free node.
\end{itemize}

Free nodes are essentially not used in the proof of \cref{thm:deg-branching}. Precisely, we use a trivial subroutine of handling them that just passes the same instance to the child, which can be a node of a different type. 
Free nodes will play a role in further results, precisely in the proof of \cref{thm:main}, where they will be used as placeholders for additional branching steps, implemented using non-trivial subroutines for handling free nodes. The requirements stated above boil down to placing free nodes as children of split nodes and as success grandchildren of branch nodes, which means that the algorithm is allowed to perform the discussed additional guessing at these points. As we will see in the analysis, both passing through a split node and through a success branch correspond to some substantial
progress in the recursion, which gives space for those extra branching steps.

\paragraph{Final remarks.}
The requirements that the root node is clean, that success children of
a branch node are filter nodes, and that a child of a free node is either clean or a filter node, ensure the following property: every node of the subproblem tree is clean
unless it is a filter node. 

In a subproblem tree, the level of a child node equals the level of the parent,
unless the parent node is a split node, in which case the level of a child is one less
than the level of a parent. In particular, on a root-to-leaf path in a subproblem tree 
the levels do not increase.

Also, if $y$ is a child of $x$, then $\VtxActive^{\rcall(y)} \subseteq \VtxActive^{\rcall(x)}$. 
Furthermore, $\VtxActive^{\rcall(y)} = \VtxActive^{\rcall(x)}$
can happen only when $x$ is a split node, a filter node, or a free node. 
Taking into account the restrictions on the parents of filter and free nodes, 
       we infer the following:
\begin{lemma}\label{lem:subproblem-tree-depth}
The depth of a subproblem tree rooted in a node $x$
is at most $\Oh(|\VtxActive^{\rcall(x)}| + \rlevel^{\rcall(x)})$.
\end{lemma}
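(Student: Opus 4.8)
The plan is to bound the length of an arbitrary root-to-leaf path in the subproblem tree rooted at $x$, since the depth is exactly the maximum such length; so fix one path $x = x_0, x_1, \ldots, x_m$ with $x_m$ a leaf node. I would lean on the two monotonicity facts isolated just before the lemma: the level is nonincreasing along a path and drops (by exactly one) precisely at split nodes, so the path contains at most $\rlevel^{\rcall(x)}$ split nodes; and the active set is nonincreasing, with $\VtxActive^{\rcall(x_{i+1})} = \VtxActive^{\rcall(x_i)}$ possible only when $x_i$ is a split, filter, or free node. For the active set I would add one observation: passing from a branch node $x_i$ to its child in the path always removes at least the pivot $\pivot^{x_i}$ from $\VtxActive$ (the failure child deletes $\pivot^{x_i}$, a success child takes a set containing $\pivot^{x_i}$), so $\VtxActive$ strictly shrinks there; since it is nonincreasing, nonnegative, and starts of size $|\VtxActive^{\rcall(x)}|$, the path has at most $|\VtxActive^{\rcall(x)}|$ branch nodes. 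Together with the unique leaf node, this gives at most $1 + \rlevel^{\rcall(x)} + |\VtxActive^{\rcall(x)}|$ nodes on the path that are leaf, split, or branch nodes; call these the \emph{milestones}.

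It then remains to bound the number of nodes strictly between consecutive milestones (and before the first one); these are necessarily filter or free nodes, and the key combinatorial claim is that no four consecutive nodes of a path can all be filter or free nodes. I would derive this from the allowed parent/child types: the parent of a filter node is never a filter node, and the parent of a free node --- which is either a child of a split node or a success grandchild of a branch node, hence a split node or the success child ($z^b_{\mathcal{D}}$) of a branch node $b$ --- is a split or filter node but never a free node. Consequently any two consecutive filter/free nodes consist of one of each, so four consecutive such nodes would have to alternate as $(\text{filter},\text{free},\text{filter},\text{free})$ or $(\text{free},\text{filter},\text{free},\text{filter})$; in either alternation some free node among the four has a filter node as parent, and the only way that happens is if that free node is a success grandchild of a branch node, forcing its grandparent (two steps up the path, hence among the four nodes) to be a branch node --- which is neither a filter nor a free node, a contradiction. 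This bounds each inter-milestone block by three nodes, so $m \leq 4\bigl(1 + \rlevel^{\rcall(x)} + |\VtxActive^{\rcall(x)}|\bigr) = \Oh(|\VtxActive^{\rcall(x)}| + \rlevel^{\rcall(x)})$, and taking the maximum over paths proves the lemma.

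I expect the main obstacle to be purely bookkeeping: carefully verifying, from the various requirements scattered through the definitions of the five node types (``a parent of a filter node is not a filter node'', ``a free node is a success grandchild of a branch node or a child of a split node'', ``all the children of a split node are free nodes'', ``all the success grandchildren of a branch node are free nodes'', and the shape of success children of branch nodes), exactly which parent/child type adjacencies are admissible, so that the four-in-a-row claim (and the handling of boundary cases such as a short path or a root that is itself a filter node) is airtight. Everything else is a direct application of the two monotonicity invariants.
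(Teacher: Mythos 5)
Your proposal is correct and follows essentially the same route as the paper: the paper derives the lemma from exactly the two monotonicity observations you cite (level drops only at split nodes, the active set strictly shrinks at every branch node since the pivot leaves $\VtxActive$ in both the failure and success branches) together with the parent restrictions on filter and free nodes, which your "no four consecutive filter/free nodes" claim makes explicit. The minor boundary issues you flag (a root-to-leaf path may end in a childless filter node rather than a leaf node, or start with filter/free nodes) only affect the additive constant and do not change the $\Oh(|\VtxActive^{\rcall(x)}| + \rlevel^{\rcall(x)})$ bound.
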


\subsection{Branching strategy}

A \emph{branching strategy} is a recursive algorithm that, given a 
subproblem $\rcall$, creates a node $x$ of a subproblem tree with $\rcall(x) = \rcall$,
decides the type of $x$, 
appropriately constructs subproblems for the children of $x$, and recurses on those children subproblems. The subproblem trees returned by the recursive calls are then attached by making their roots children of $x$.
A branching strategy can pass some additional information to the child subcalls; for instance, after creating a split node, the call should ask all the subcalls to create free nodes, while a branch node should tell all its success children to be filter nodes with free nodes as their children. 
We require that the subproblem tree constructed by a branching strategy is a subproblem tree, as defined in the previous section.

A few remarks are in place.
If the level of $\rcall$ is $0$, the branching strategy has no option but to create a leaf
node and stop (unless the parent or grandparent asks it to perform a filter or free node first).
If a branching strategy makes a decision to create a filter node or a split node, 
there are no more decisions to make: the filtering step works deterministically,
and for the split node we always create child subproblems using \cref{lem:splittable}. 

If a branching strategy decides to make a branch node, the only remaining decision
is to choose the pivot $\pivot^x$; after this, the failure branch and the success branches are defined deterministically.
The method of choosing the pivot is the cornerstone of our combinatorial analysis, and is presented in the remainder of this section.
(We will also use the freedom of sometimes \emph{not} choosing to create a split node, even if the
 current subproblem is splittable.)

Finally, the definition of a subproblem tree allows only a limited freedom of creating
free nodes: they have to be children of a split node or success grandchildren of a branch node. 
On the other hand, we would like to leave the freedom of how to handle free nodes to applications.
Thus, a branching strategy is parameterized by a subroutine that handles free nodes:
the subroutine is called by the branching strategy when handling a child of a split node
or a success grandchild of a branch node, and asked to create the family of subproblems for children.

For \cref{thm:deg-branching}, we do not use the possibility of creating free nodes. Formally, the subroutine handling free nodes, invoked at a subproblem $\rcall$, returns a one-element family $\{\rcall\}$, that is, creates a dummy free node $x$ with a single child $y$
such that $\rcall(x) = \rcall(y)$. 

An application of a branching strategy to a graph $G$ is a shorthand
for applying the branching strategy to a subproblem $\rcall$ defined by:
\begin{itemize}
 \item 
$\VtxTaken^\rcall = \VtxDeleted^\rcall = \emptyset$;
\item 
$\VtxActive^\rcall = V(G)$;
\item 
$\rlevel^\rcall = \lceil -\log_{0.99}(|V(G)|+1) \rceil$, and
\item $\poslimit^\rcall(v) = 1$ for every $v \in V(G)$.
\end{itemize}
Note that such a subproblem $\rcall$ is clean and lucky, regardless of the choice of $S^\ast$ and $\degord^\ast$.

\cref{sec:pivot} is devoted to branching strategies that lead to a quasipolynomial
running time bounds. Formally, we prove there the following:
\begin{lemma}\label{lem:branching-strategy}
For every fixed pair of integers $d$ and $t$, and every subroutine handling free nodes,
there exists a branching strategy that,
when applied to an $n$-vertex $C_{>t}$-free graph $G$, creates a recursion tree
that is a subproblem tree such that on every root-to-leaf path there
are $\Oh(\log n)$ split nodes and $\Oh(\log^3 n)$ success~branches.
If the input graph is $P_t$-free, the bound on the number of success~branches on a single root-to-leaf path improves to $\Oh(\log^2 n)$. 

Furthermore, the branching strategy takes polynomial time to decide on the type of the node
and on the choice of the branching pivot (in case the node is decided to be a branch node). 
\end{lemma}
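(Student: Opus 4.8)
The plan is to follow the structure sketched in Section~\ref{sec:overview}, combining primary branching on heavy vertices (used to shrink the vertex count of connected active sets) with secondary branching on secondary-heavy vertices (used to disconnect a stubborn large component from $N[X]$). First I would set up the \emph{primary bucket} potential: for a connected set $W=\VtxActive^\rcall$ in a $C_{>t}$-free graph, and for each triple $\{u,v,w\}\in\binom{W}{3}$, let $\Bc_{u,v,w}$ be the set of cores of induced connectors with tips $u,v,w$ inside $G[W]$, and assign to each tripod the weight $\sum_{y\in V(T)}(d-|M_y|)$ as in \cref{cl:degenerate-progress}, where $M_y$ is taken with respect to the current subproblem. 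Since cores have $\Oh(t)$ vertices and $d-|M_y|\le d$, each bucket has potential at most $n^{\Oh(t)}$; since $W$ is connected, all $\binom{|W|}{3}$ buckets are nonempty, hence have positive potential. The branching strategy is: if $G[W]$ is disconnected, delete all but a largest component and recurse (this is exactly what the split node does, once the current level permits); if $G[W]$ is connected, try to find a $(c/t)$-heavy vertex (for a suitable constant $c$; I would phrase ``heavy'' in terms of fraction of total potential rather than fraction of tripods, to match \cref{cl:degenerate-progress}) and, if one exists, make it the pivot of a branch node; otherwise invoke the secondary branching procedure.

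Next I would carry out the two counting arguments. For \textbf{primary branching}: by \cref{lem:separator} applied to $G[W]$ (with the iterative ``move $X$ closer to $Y$'' argument of Section~\ref{sec:secondary} guaranteeing either a heavy vertex or the promised configuration with a component $C$ of size $\ge n/10$ and a far-away $Y$), either a $(c/t)$-heavy vertex exists and we branch, or we fall through to secondary branching. Along a success branch at a heavy pivot $x$, \cref{cl:degenerate-progress} shows that for every $y\in N(x)\cap W$ either $y$ leaves the active set or $|M_y|$ strictly increases, so every tripod through such a $y$ loses at least one unit of weight (or is destroyed because $y$ left $W$); heaviness then says a constant fraction of buckets lose a constant fraction of their potential. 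Since the potential of each bucket is at most $n^{\Oh(t)}$ and strictly positive while $W$ is connected, after $\Oh(\log n)$ success branches a constant fraction of the original buckets must hit potential $0$, which forces $|W|$ to drop by a constant factor. As $|W|$ can drop by a constant factor only $\Oh(\log n)$ times before reaching a split node (which decrements the level and forces a restart of this accounting on a set of size at most $0.99\cdot$ the previous one), we get $\Oh(\log^2 n)$ success branches attributable to primary branching between two consecutive split nodes, and $\Oh(\log n)$ split nodes on a root-to-leaf path, giving $\Oh(\log^3 n)$ primary success branches in total. For \textbf{secondary branching}: once a chip $C'$ is identified, define \emph{secondary buckets} $\Lc_{u,v}$ of $C'$-links for $\{u,v\}\in\binom{N[X]\cap N_{G'}(C')}{2}$, each of size $\le n^t$ by \cref{cl:link-short}, with potential again weighted by $\sum_{y}(d-|M_y|)$. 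By \cref{cl:secondary-heavy} a secondary-heavy vertex always exists (as long as $|N_{G'}(C')|\ge 2$; otherwise the single neighbor is the pivot and both branches kill the chip), and the same $\Oh(\log^2 n)$-success-branch analysis as in Section~\ref{sec:Pt-free} applies, after which $G'$ has no chip and hence every component of $G'$ has size $\le 0.95n$, which is the progress a split node then capitalizes on. So secondary branching between consecutive split nodes contributes $\Oh(\log^2 n)$ success branches, again $\Oh(\log^3 n)$ in total; in the $P_t$-free case the primary buckets can be taken to be the pair-buckets $\Bc_{u,v}$ of induced paths (which are already of polynomial size without the connector machinery) and there is never a chip, so only the $\Oh(\log^2 n)$ bound survives.

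The bookkeeping I would then do is to translate ``branch on a heavy/secondary-heavy vertex'' into the branch-node formalism of Section~\ref{sec:branching}: the pivot $\pivot^x$ is the chosen heavy vertex, the failure child deletes it, and the $n^{(d+1)^2}$ success children cover all guesses $\mathcal{D}=(D,\degord,(D_u)_{u\in D'})$; luckiness is preserved by \cref{lem:lucky-branch}, and \cref{cl:degenerate-progress} holds for \emph{every} child subproblem of a branch node (both the failure child and every success child), which is exactly what the potential-drop argument needs. Disconnection handling is deferred: I only create a split node when the current subproblem is splittable (every component of $G[W]$ has size $<0.99^{-\rlevel+1}$), which by the primary analysis happens after $\Oh(\log n)$ success branches; between split nodes the algorithm may pass through free nodes (handled by the black-box subroutine, which only adds children of the same level and cannot increase the count of split nodes or success branches). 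Finally I would check the polynomial-time claims: buckets and their potentials are computable in $n^{\Oh(t)}$ time, heaviness/secondary-heaviness of a fixed vertex is checkable by recomputing these counts, the separator of \cref{lem:separator} is polynomial-time, and the iterative shrinking of $X$ towards $Y$ terminates in $\Oh(1)$ rounds — so each node's type and pivot are decided in polynomial time.

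The main obstacle I expect is the \textbf{interaction between the two potential functions and the level mechanism}: one has to show that the secondary branching, which is triggered when the \emph{primary} heavy vertex search fails, still reaches a state of ``no chip'' after only $\Oh(\log^2 n)$ success branches \emph{and} that reaching this state genuinely decreases the relevant size measure (the size of the largest component of $G[W]$) by a constant factor, so that the level can legitimately be decremented at the next split node and the primary accounting restarts on a strictly smaller instance. Making the constants in ``$(c/t)$-heavy'', the $n/10$ vs.\ $n/2$ gap, the $8t$ (or $10t$) distance bound, and the $0.95n$ / $0.99^{-\rlevel}$ thresholds mutually consistent — in particular so that the iterative $X$-towards-$Y$ process provably halts in a bounded number of rounds and so that the ``no chip $\Rightarrow$ all components $\le 0.95n$'' implication holds — is the delicate part; everything else is a routine repackaging of the $P_t$-free argument of Section~\ref{sec:Pt-free} with tripods/links in place of induced paths and with the degeneracy potential $\sum_y (d-|M_y|)$ in place of bare cardinality.
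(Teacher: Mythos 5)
Your plan follows the paper's proof of this lemma quite closely (heavy-vertex primary branching with quota-weighted tripod buckets, fallback to secondary branching on links, termination at splittable subproblems, a per-node polynomial-time check), but two of the steps you defer or assert are precisely where the actual work lies, and one of them, as you have written it, would not deliver the stated bound.

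First, the success-branch accounting is off. In the formalism of \cref{sec:branching}, a subproblem at level $\rlevel$ becomes splittable as soon as every component of $G[\VtxActive]$ drops below $0.99^{-\rlevel+1}$, i.e.\ after a roughly $1\%$ drop of the largest component relative to the level bound --- not after $\Oh(\log n)$ constant-factor drops, as your bookkeeping assumes. The correct per-level argument (the one the paper uses) is a potential argument done \emph{within} a maximal same-level path: each bucket contributes $\log_2$ of its quota-weighted size, the total is $\Oh(n_P^3\log n_P)$ (resp.\ $\Oh(n_P^2\log n_P)$ for pairs), and since all nodes on the path have $|\VtxActive|\geq 0.99 n_P$, each success branch on a heavy pivot decreases the potential by $\Omega(n_P^3/dt^3)$ via \cref{lem:success-quota}; this gives $\Oh(\log n)$ primary success branches \emph{per level}, hence $\Oh(\log^2 n)$ total in the $P_t$-free case. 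Your version ($\Oh(\log^2 n)$ primary success branches between consecutive split nodes) still sums to $\Oh(\log^3 n)$ for $C_{>t}$-free graphs, but it does not yield the claimed $\Oh(\log^2 n)$ bound for $P_t$-free graphs, which you assert without justification. Two smaller repairs in the same vein: the per-vertex weight must be $1+\quota$ (not $d-|M_y|$), since otherwise a bucket can have potential zero while its triple is still connected, and the quota-decrease guarantee of \cref{lem:success-quota} holds only in success branches, not in failure children.

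Second, the step you flag as ``the delicate part'' --- that in the absence of a $(10^{-8}/t)$-heavy vertex one reaches a configuration with $X$ and $Y$ far apart, and that the iterative ``move $X$ towards $Y$'' process halts --- is not something you can import from the overview; it is the combinatorial core that the paper proves and that your plan leaves open. The paper avoids the iteration altogether: it introduces connected three-way balanced separators, shows that a C3WBS of size $\Oh(t)$ forces a heavy vertex (\cref{lem:tripodheavy}), shows that if the separator of \cref{lem:separator} is not a C3WBS then there are exactly two components of size at least $0.4|C_1|$ (\cref{lem:sep-c3wbs}), and then deduces in one step that $\dist(X,Y)>8t$ (\cref{lem:secondary-XYfar}), since otherwise $X\cup Y\cup P$ would itself be a small connected C3WBS, contradicting the absence of a heavy vertex. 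This same distance bound is then what makes the chip unique and the secondary links short. Without an argument of this kind (or a worked-out termination proof for your iterative variant, with constants consistent with the $(10^{-8}/t)$-heaviness threshold), the fallback case of your strategy is not established, so the lemma is not yet proved by your plan.
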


Let us now show how \cref{lem:branching-strategy} implies \cref{thm:deg-branching}
\begin{proof}[of \cref{thm:deg-branching}.]
Fix a set $S^\ast \subseteq V(G)$ maximizing $\wei(S^\ast)$ subject to $G[S^\ast]$
being of degeneracy at most $d$. Fix also a degeneracy ordering $\degord^\ast$ of $G[S^\ast]$. 
This allows us to speak about lucky nodes.

Consider the branching strategy provided by \cref{lem:branching-strategy} with 
the discussed dummy subroutine handling free nodes.

By \cref{lem:subproblem-tree-depth}, the depth of the subproblem tree is $\Oh(n)$.
Furthermore, every node of the subproblem tree has at most $1 + n^{(d+1)^2}$ children, and the only nodes that may have more than one child are branch nodes and split nodes. Since each branch node has only one failure child, while on every root-to-leaf path there are $\Oh(\log n)$ split nodes and $\Oh(\log^3 n)$ success branches, it follows that the subproblem tree has $n^{\Oh(\log^3 n)}$ nodes. 

For every clean node $x$ in the generated subproblem tree, we would like
to compute a clean subproblem $\rcall'(x)$ that extends $\rcall(x)$ completely
and, subject to that, maximizes $\wei(\VtxTaken^{\rcall'(x)})$. 
Observe that for the answer to $\rcall'(r)$ at the root node $r$ 
we can return $\VtxTaken^{\rcall'(r)}$ as the sought subset $S$: the cleanness
of $\rcall'(r)$ implies that $G[\VtxTaken^{\rcall'(r)}]$ is $d$-degenerate,
while a subproblem of level $0$ with $\VtxTaken = S^\ast$,
$\VtxDeleted = V(G) \setminus S^\ast$, and $\degord = \degord^\ast$
extends completely $\rcall(r)$. 

For a leaf node $x$, the only option is to return $\rcall(x)$.
For a split node $x$ with children $\mathcal{Y}$, we apply
\cref{lem:split-compose} to $\{\rcall'(y)~|~y \in \mathcal{Y}\}$,
thus obtaining a clean subproblem $\rcall'(x)$. 
For a branch node $x$, we take $\rcall'(x)$ to be 
the subproblem with the maximum weight of $\VtxTaken$
among subproblem $\rcall'(y^x)$ for the failure child~$y^x$
and subproblems $\rcall'(s^x_{\mathcal{D}})$ for success grandchildren $s^x_{\mathcal{D}}$. 
Note that both $\rcall'(y^x)$ and every $\rcall'(s^x_{\mathcal{D}})$ are clean subproblems
that extending completely $\rcall(x)$. 
Finally, for a dummy free node $x$ with a child $y$, $\rcall'(x)$ equals $\rcall'(y)$.

To finish the proof, it suffices to argue that for every lucky node $x$, we have
\begin{equation}\label{eq:good-result}
\wei(\VtxTaken^{\rcall'(x)}) \geq \wei(S^\ast \cap (\VtxTaken^{\rcall(x)} \cup \VtxActive^{\rcall(x)})).
\end{equation}
We prove this fact by a bottom-up induction on the subproblem tree.

For a lucky leaf node the claim is straightforward, as being lucky implies that
$\VtxTaken^{\rcall(x)} \subseteq S^\ast$ while being a leaf node implies
$\VtxActive^{\rcall(x)} = \emptyset$. 
For a lucky split node, note that all children are also lucky;
the claim is immediate by the inductive assumption for the children and the construction
of \cref{lem:split-compose}. 
For a lucky branch node $x$, \cref{lem:lucky-branch} implies that $x$ has a lucky
failure child or a lucky success grandchild; the claim follows from the inductive assumption
of the said lucky (grand)child.
Finally, for a lucky dummy free node, the claim is immediate by the inductive assumption on its child.

This finishes the proof of \cref{thm:deg-branching}.
\end{proof}

\section{Branching strategies: choosing pivots in $P_t$-free and $C_{>t}$-free graphs}\label{sec:pivot}
This section is devoted to the proof of \cref{lem:branching-strategy}.
Recall that a branching strategy is essentially responsible for choosing whether the current node of the tree is a split node or a branch node and, in the latter case, choosing the branching pivot. 

The following observation, immediate from the definition of a success branch, 
will be the basic source of progress.

\begin{lemma}\label{lem:success-quota}
Let $x$ be a branch node and $y \coloneqq s^x_{\mathcal{D}}$ be a successful
grandchild of $x$, where $\mathcal{D} = (D, \degord, (D_u)_{u \in D'})$ and $D' = D \cup \{\pivot^x\}$. 
Then, for every $u \in N_G(\pivot^x)$, 
we have either $u \notin \VtxActive^{\rcall(y)}$ or
$\poslimit^{\rcall(y)}(u) > \degord^{\rcall(y)}(\pivot^x)$.
Consequently, 
for every $u \in N_G(\pivot^x) \cap \VtxActive^{\rcall(y)}$, it holds that
$$\quota(\degord^{\rcall(y)}, u, \poslimit^{\rcall(y)}(u)) \leq \quota(\degord^{\rcall(x)}, u, \poslimit^{\rcall(x)}(u)) - 1.$$
\end{lemma}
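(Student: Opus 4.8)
The plan is to unwind the two operations that turn $\rcall(x)$ into $\rcall(y)$ and to track how $\degord$ and $\poslimit$ change along the way. Since $y = s^x_{\mathcal{D}}$ is a success grandchild, $\rcall(y)$ is obtained from $\rcall(x)$ in two stages: first the operation of taking $D' = D \cup \{\pivot^x\}$ at positions $\degord$ with left neighbours $(D_u)_{u\in D'}$, which produces the success child $z^x_{\mathcal{D}}$, and then a filtering step, which produces $s^x_{\mathcal{D}}$ (if the latter does not exist the statement is vacuous, so I would dispose of that case immediately). Writing $\rcall_0 := \rcall(x)$, $\rcall_1 := \rcall(z^x_{\mathcal{D}})$ and $\rcall_2 := \rcall(y)$, the bookkeeping facts I would record up front are: $\VtxActive^{\rcall_2} \subseteq \VtxActive^{\rcall_1} = \VtxActive^{\rcall_0} \setminus D'$; $\degord^{\rcall_2} = \degord^{\rcall_1} = \degord$, which extends $\degord^{\rcall_0}$; and $\poslimit^{\rcall_2} = \poslimit^{\rcall_1}|_{\VtxActive^{\rcall_2}}$ with $\poslimit^{\rcall_1}(w) \geq \poslimit^{\rcall_0}(w)$ for all $w\in\VtxActive^{\rcall_1}$ (the last two directly from the definitions of taking and deleting). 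In particular $\degord^{\rcall_2}(\pivot^x) = \degord(\pivot^x)$, since $\pivot^x\in D'$.

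First I would prove the dichotomy. Fix $u\in N_G(\pivot^x)$ and suppose $u\in\VtxActive^{\rcall_2}$; then $u\in\VtxActive^{\rcall_1}=\VtxActive^{\rcall_0}\setminus D'$. Because the tuple $\mathcal{D}$ of a success child is required to satisfy $D_{\pivot^x}=\emptyset$, we have $u\in N_G(\pivot^x)\setminus D_{\pivot^x}$ with $\pivot^x\in D'$, so the formula defining $\poslimit^{\rcall_1}$ in the taking operation yields $\poslimit^{\rcall_1}(u)\geq 1+\degord(\pivot^x)$. Combined with $\poslimit^{\rcall_2}(u)=\poslimit^{\rcall_1}(u)$ and $\degord^{\rcall_2}(\pivot^x)=\degord(\pivot^x)$, this gives $\poslimit^{\rcall_2}(u) > \degord^{\rcall_2}(\pivot^x)$, which is the second alternative of the dichotomy.

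For the ``consequently'' part I would fix $u\in N_G(\pivot^x)\cap\VtxActive^{\rcall_2}$ and unfold the definition of $\quota$. Setting $A := \{w\in N_G(u)\cap\VtxTaken^{\rcall_0} : \degord^{\rcall_0}(w) < \poslimit^{\rcall_0}(u)\}$ and $B := \{w\in N_G(u)\cap\VtxTaken^{\rcall_2} : \degord^{\rcall_2}(w) < \poslimit^{\rcall_2}(u)\}$, the desired inequality is exactly $|B|\geq |A|+1$. Since $\VtxTaken^{\rcall_0}\subseteq\VtxTaken^{\rcall_2}$, $\degord^{\rcall_2}$ extends $\degord^{\rcall_0}$, and $\poslimit^{\rcall_2}(u)\geq\poslimit^{\rcall_0}(u)$, I get $A\subseteq B$. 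Moreover $\pivot^x\in B$: it is adjacent to $u$, it lies in $D'\subseteq\VtxTaken^{\rcall_2}$, and $\degord^{\rcall_2}(\pivot^x) < \poslimit^{\rcall_2}(u)$ by the dichotomy just proved; but $\pivot^x\notin A$ because $\pivot^x\in\VtxActive^{\rcall_0}$, which is disjoint from $\VtxTaken^{\rcall_0}$. Hence $A\cup\{\pivot^x\}\subseteq B$ and $\pivot^x\notin A$, so $|B|\geq |A|+1$, which is precisely $\quota(\degord^{\rcall(y)}, u, \poslimit^{\rcall(y)}(u)) \leq \quota(\degord^{\rcall(x)}, u, \poslimit^{\rcall(x)}(u)) - 1$.

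I do not expect a genuine obstacle: the whole argument is careful invocation of the definitions of taking, deleting, and filtering. The one place where I would be careful is to keep the three subproblems $\rcall_0,\rcall_1,\rcall_2$ distinct and to invoke the identities relating their $\degord$ and $\poslimit$ in the correct direction (all of $\degord$, $\poslimit$, $\VtxTaken$ only grow while $\VtxActive$ only shrinks), and to not forget the degenerate situation in which the success grandchild $s^x_{\mathcal{D}}$ is absent.
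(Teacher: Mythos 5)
Your proposal is correct and follows essentially the same route as the paper's proof: unwind the taking-then-filtering steps, use $D_{\pivot^x}=\emptyset$ to see that the term $1+\degord(\pivot^x)$ enters the maximum defining $\poslimit$ at the success child, and observe that $\pivot^x$ is counted toward the quota at $y$ but not at $x$ while the old counted neighbours persist. Your version merely spells out the final counting step (the sets $A\subseteq B$ with $\pivot^x\in B\setminus A$) that the paper leaves implicit.
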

\begin{proof}
If $u \in \VtxActive^{\rcall(y)}$, then in particular $y \notin D$. 
Then, in the definition of $\poslimit^{\rcall(z^x_{\mathcal{D}})}(u)$, 
one term over which the maximum is taken is equal to $1 + \degord(\pivot^x)$
and $\degord(\pivot^x) = \degord^{\rcall(y)}(\pivot^x)$. 
Thus, $\pivot^x$, which is in $\VtxActive^{\rcall(x)} \cap \VtxTaken^{\rcall(y)}$, is taken into account in
$\quota(\degord^{\rcall(y)}, u, \poslimit^{\rcall(y)}(u))$,
but does not contribute to
$\quota(\degord^{\rcall(x)}, u, \poslimit^{\rcall(x)}(u))$ due to being not included in $\VtxTaken^{\rcall(x)}$.
\end{proof}

\subsection{Quasipolynomial branching strategy in $P_t$-free graphs}

To obtain the promised bounds for $P_t$-free graphs,
we closely follow the arguments for \textsc{MWIS} from~\cite{PPR20SOSA}.

Let us fix $d$, $t$, and a subroutine handling free nodes. 

The branching strategy chooses a leaf node whenever possible (the level of the 
subproblem is zero) and a split node whenever possible (the subproblem is splittable). 
Given a subproblem of positive level that is not splittable, the branching strategy
creates a branch node.

For a $P_t$-free graph $G$ and a pair $\{u,v\} \in \binom{V(G)}{2}$, the \emph{bucket}
of $\{u,v\}$ is a set $\Bc^G_{u,v}$ that consists of all induced paths with endpoints $u$
and $v$. Two remarks are in place. First, $\Bc^G_{u,v} \neq \emptyset$ if and only if 
$u$ and $v$ are in the same connected component of $G$. Second, an $n$-vertex
$P_t$-free graph has at most $n^{t-1}$ induced paths; hence, all buckets can be enumerated
in polynomial time.

For $\varepsilon > 0$, a vertex $x \in V(G)$ is \emph{$\varepsilon$-heavy} if
the neighborhood $N[x]$ intersects more that $\varepsilon$ fraction of paths
from more than $\varepsilon$ fraction of buckets, that is, 
\[
  \left| \left\{\{u,v\} \in \binom{V(G)}{2}~\Big|~
\left|\{P \in \Bc^G_{u,v}~|~N[x] \cap V(P) \neq \emptyset\}\right| > \varepsilon |\Bc^G_{u,v}|\right\}\right| > \varepsilon \binom{|V(G)|}{2}.
\]
Note that empty buckets cannot contribute towards the left hand side of the inequality above.

We make use of the following lemma.
\begin{lemma}[\cite{PPR20SOSA}]\label{lem:Pt-free-heavy}
A connected $P_t$-free graph contains a $\frac{1}{2t}$-heavy vertex.
\end{lemma}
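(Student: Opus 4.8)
The plan is to combine the balanced separator theorem for $P_t$-free graphs (\cref{thm:Pt-free-balanced}) with a double-counting/averaging argument over the buckets. Let $G$ be a connected $n$-vertex $P_t$-free graph, and apply \cref{thm:Pt-free-balanced} to obtain a set $X$ with $|X|\le t$, $G[X]$ connected, and every connected component of $G-N[X]$ of size at most $n/2$. First I would observe that if $u$ and $v$ lie in different components of $G-N[X]$, then \emph{every} induced $u$--$v$ path must pass through $N[X]$, since $N[X]$ separates the two components; hence $N[X]\cap V(P)\ne\emptyset$ for all $P\in\Bc^G_{u,v}$. In particular, such a pair $\{u,v\}$ satisfies the (stronger) condition that $N[X]$ meets a $1$-fraction of $\Bc^G_{u,v}$.

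The next step is to count how many pairs $\{u,v\}$ are ``split'' by $N[X]$ in this sense. Since $G$ is connected, every bucket is nonempty, so we only need a lower bound on the number of pairs in different components of $G-N[X]$. Because each component of $G-N[X]$ has at most $n/2$ vertices, a standard convexity argument shows that the number of unordered pairs of vertices lying in the \emph{same} component of $G-N[X]$, plus the pairs involving $N[X]$, leaves at least (roughly) half of all $\binom{n}{2}$ pairs as ``split.'' Concretely, I would argue: let $V_0=N[X]$ and let the components of $G-N[X]$ be $C_1,\dots,C_m$ with $|C_i|\le n/2$; the number of pairs $\{u,v\}$ with $u,v$ in distinct $C_i$'s is $\binom{n-|V_0|}{2}-\sum_i\binom{|C_i|}{2}$, and using $\sum_i|C_i|=n-|V_0|$ together with $|C_i|\le n/2$ one gets this is at least $\tfrac12\binom{n}{2}$ (after absorbing lower-order terms; one may need to treat very small $n$ separately, where the statement is trivial since one can just take any vertex). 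So $N[X]$ witnesses the heaviness inequality (with $\varepsilon=\tfrac12$ even) for at least a $\tfrac12$-fraction of all pairs — but using the \emph{whole} set $X$, not a single vertex.

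Now comes the averaging over the $\le t$ vertices of $X$. Write $X=\{x_1,\dots,x_{|X|}\}$. For each split pair $\{u,v\}$ and each $P\in\Bc^G_{u,v}$, the path $P$ is hit by $N[X]=\bigcup_j N[x_j]$, so it is hit by $N[x_j]$ for some $j$. For a fixed pair $\{u,v\}$, since at least a $1$-fraction (all) of the paths in $\Bc^G_{u,v}$ are hit by $N[X]$, by averaging over the $|X|\le t$ choices of $j$ there is some $j$ such that $N[x_j]$ hits at least a $\tfrac{1}{t}$-fraction of $\Bc^G_{u,v}$; call such an $x_j$ ``good for $\{u,v\}$.'' Thus every split pair has at least one good vertex among the $\le t$ vertices of $X$; since there are at least $\tfrac12\binom{n}{2}$ split pairs, by pigeonhole one vertex $x\in X$ is good for at least a $\tfrac{1}{t}\cdot\tfrac12=\tfrac{1}{2t}$-fraction of all pairs. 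That is exactly the statement that $x$ is $\tfrac{1}{2t}$-heavy (note the definition of $\varepsilon$-heavy uses the same threshold $\varepsilon$ for both the fraction of paths within a bucket and the fraction of buckets, and here both are $\ge\tfrac{1}{2t}$, so taking $\varepsilon=\tfrac{1}{2t}$ works since $\tfrac1t\ge\tfrac1{2t}$).

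The main technical nuisance — not really an obstacle — is the bookkeeping with the counting of split pairs: making sure the convexity bound $\sum_i\binom{|C_i|}{2}$ against $|C_i|\le n/2$ really yields at least half of $\binom{n}{2}$ split pairs once the pairs touching $N[X]$ are discarded, and handling small $n$ (or the degenerate case $N[X]=V(G)$) separately, where the lemma holds trivially. Everything else is exactly the argument already sketched informally around \cref{cl:heavy-Pt} in \cref{sec:Pt-free}, so I would essentially cite that discussion and just make the two averaging steps (over paths within a bucket, then over buckets) precise; alternatively, one can simply cite~\cite{PPR20SOSA} verbatim, since the lemma statement explicitly attributes the result there and nothing about degeneracy or the $\degord$/$\poslimit$ machinery enters — the buckets $\Bc^G_{u,v}$ depend only on the graph $G=G[\VtxActive]$.
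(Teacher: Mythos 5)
Your proposal is correct and follows essentially the same route as the paper, which in fact only cites~\cite{PPR20SOSA} for this lemma and sketches exactly this argument (separator from \cref{thm:Pt-free-balanced}, then all buckets of pairs separated by $N[X]$ are fully hit, then averaging over the at most $t$ vertices of $X$) around \cref{cl:heavy-Pt}. One small correction to your bookkeeping: do not discard the pairs meeting $N[X]$ and lower-bound $\binom{n-|N[X]|}{2}-\sum_i\binom{|C_i|}{2}$, since that quantity can drop far below $\tfrac12\binom{n}{2}$ when $N[X]$ is large (it is $0$ when $N[X]=V(G)$); instead observe that any pair with an endpoint in $N[X]$ is also fully hit, so the only pairs not fully hit lie inside a single component, and $\sum_i\binom{|C_i|}{2}\le 2\binom{n/2}{2}<\tfrac12\binom{n}{2}$ gives strictly more than half fully-hit buckets, after which your two averaging steps go through verbatim.
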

In our case, a subproblem $\rcall$ at a branch node may not have
$G[\VtxActive^\rcall]$ connected, but $G[\VtxActive^\rcall]$ contains
a large connected component if $\rcall$ is not splittable.
\begin{corollary}\label{cor:Pt-free-heavy}
If $\rcall$ is a not splittable subproblem of positive level, then 
$G[\VtxActive^\rcall]$ contains a $\frac{1}{3t}$-heavy vertex.
\end{corollary}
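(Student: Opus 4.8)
The plan is to reduce the statement to Lemma~\ref{lem:Pt-free-heavy} applied to a single large connected component of $G[\VtxActive^\rcall]$. First I would observe that since $\rcall$ is not splittable and has level $\rlevel^\rcall\geq 1$, by the definition of splittable there must exist a connected component $C$ of $G[\VtxActive^\rcall]$ with $|C|\geq 0.99^{-\rlevel^\rcall+1} = 0.99\cdot 0.99^{-\rlevel^\rcall}$. On the other hand, $|\VtxActive^\rcall| < 0.99^{-\rlevel^\rcall}$ by the definition of a subproblem, so $|C| > 0.99\,|\VtxActive^\rcall|$; in particular $C$ contains more than a $0.99$ fraction of the vertices of $G[\VtxActive^\rcall]$.

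Next I would apply Lemma~\ref{lem:Pt-free-heavy} to the connected $P_t$-free graph $G[C]$ (it is $P_t$-free as an induced subgraph of $G$), obtaining a vertex $x\in C$ that is $\frac{1}{2t}$-heavy in $G[C]$. The task is then to transfer heaviness from $G[C]$ to $G[\VtxActive^\rcall]$. The key point is that every bucket of $G[C]$ is also a bucket of $G[\VtxActive^\rcall]$: for $\{u,v\}\in\binom{C}{2}$, an induced path of $G[C]$ with endpoints $u,v$ is the same thing as an induced path of $G[\VtxActive^\rcall]$ with endpoints $u,v$, since $C$ is a connected component of $G[\VtxActive^\rcall]$ and hence any such path stays inside $C$. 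Thus $\Bc^{G[C]}_{u,v}=\Bc^{G[\VtxActive^\rcall]}_{u,v}$ for all $\{u,v\}\in\binom{C}{2}$, and the sets of paths hit by $N[x]$ coincide as well (again because $C$ is a component, the closed neighborhood of $x$ inside $G[\VtxActive^\rcall]$ relevant to these paths is exactly $N_{G[C]}[x]$). Consequently, the number of ``good'' buckets (those where $N[x]$ hits more than a $\frac{1}{2t}$ fraction of paths) is the same whether computed in $G[C]$ or in $G[\VtxActive^\rcall]$, and it exceeds $\frac{1}{2t}\binom{|C|}{2}$.

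Finally I would convert the count $\frac{1}{2t}\binom{|C|}{2}$ of good buckets into a fraction of $\binom{|\VtxActive^\rcall|}{2}$. Since $|C| > 0.99\,|\VtxActive^\rcall| \geq \frac{1}{2}|\VtxActive^\rcall|$ (using $|\VtxActive^\rcall|\geq 2$, as otherwise there are no buckets to speak of and the statement is vacuous; one can also note $0.99^2 > 2/3$), we have $\binom{|C|}{2}\geq \frac{2}{3}\binom{|\VtxActive^\rcall|}{2}$ for $|C|$ large enough, and in any case $\binom{|C|}{2}/\binom{|\VtxActive^\rcall|}{2}\to 1$ as the sizes grow; a crude bound such as $\binom{|C|}{2}\geq (0.99)^2\binom{|\VtxActive^\rcall|}{2}\geq \frac{2}{3}\binom{|\VtxActive^\rcall|}{2}$ suffices. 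Hence the number of good buckets is more than $\frac{1}{2t}\cdot\frac{2}{3}\binom{|\VtxActive^\rcall|}{2} = \frac{1}{3t}\binom{|\VtxActive^\rcall|}{2}$, and in each such bucket $N[x]$ hits more than a $\frac{1}{2t}\geq\frac{1}{3t}$ fraction of paths, so $x$ is $\frac{1}{3t}$-heavy in $G[\VtxActive^\rcall]$, as desired. I expect the only mild subtlety to be the careful bookkeeping that buckets and hit-sets genuinely restrict to $C$ (so that heaviness in $G[C]$ is literally inherited), together with handling the small-size edge cases; the arithmetic converting $\frac{1}{2t}$ on a $0.99$-fraction of vertices into $\frac{1}{3t}$ overall is routine.
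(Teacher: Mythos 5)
Your proposal is correct and follows essentially the same route as the paper: non-splittability forces a connected component containing more than a $0.99$ fraction of $\VtxActive^\rcall$, Lemma~\ref{lem:Pt-free-heavy} gives a $\frac{1}{2t}$-heavy vertex there, and since buckets of pairs inside a component coincide in $G[C]$ and $G[\VtxActive^\rcall]$, a binomial-ratio bound transfers heaviness (the paper uses $\binom{|C|}{2}\geq 0.9\binom{|\VtxActive^\rcall|}{2}$, hence $\frac{0.9}{2t}>\frac{1}{3t}$, where you use a factor $\geq \frac{2}{3}$). Only nitpick: the intermediate inequality $\binom{|C|}{2}\geq(0.99)^2\binom{|\VtxActive^\rcall|}{2}$ is not exact for all sizes, but as you note any crude constant above $\frac{2}{3}$ suffices, so the argument stands.
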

\begin{proof}
Recall that $n^\rcall := |\VtxActive^\rcall| < 0.99^{-\rlevel^\rcall}$.
Since $\rcall$ is of positive level but not splittable, there is a connected component $D$
of $G[\VtxActive^{\rcall}|$ of size at least $0.99^{-\rlevel^\rcall+1}$. 
That is, $|D| \geq 0.99n^{\rcall}$. 

\cref{lem:Pt-free-heavy} asserts there is a $\frac{1}{2t}$-heavy vertex in $G[D]$. 
We have
$$\binom{|D|}{2} \geq \binom{\lceil 0.99n^{\rcall} \rceil }{2} \geq 0.9 \binom{n^\rcall}{2}.$$
Thus, $x$ is $\frac{0.9}{2t}$-heavy in $G$ and $\frac{0.9}{2t} > \frac{1}{3t}$. 
\end{proof}
The branching strategy chooses a $\frac{1}{3t}$-heavy vertex of $G[\VtxActive^{\rcall(x)}]$ as a branching pivot $\pivot^x$ at a branch node $x$.
Recall that at a branch node $x$ the subproblem $\rcall(x)$ is of positive level and not splittable. Hence, \cref{cor:Pt-free-heavy} asserts the existence of such a pivot. As already discussed, all buckets can be enumerated in polynomial time and thus such a heavy vertex can be identified.

The bound on the number of split nodes on any root-to-leaf path in the subproblem tree is immediate
from the fact that there are $\Oh(\log n)$ levels. 
It remains to argue that any root-to-leaf path has $\Oh(\log^2 n)$ success branches. 

To this end, let $Q$ be a maximal upward path in the subproblem tree that consists of nodes of the same level $\rlevel$. 
As there are $\Oh(\log n)$ possible levels, it suffices to prove that $Q$ contains $\Oh(\log n)$ success branches.

Consider then a success branch on $Q$: a branch node $x$ and its success grandchild 
$y \coloneqq s^x_{\mathcal{D}}$ for $\mathcal{D} = (D, \degord, (D_u)_{u \in D'})$, $D' = D \cup \{\pivot^x\}$. \cref{lem:success-quota} suggests the following potential at node $x$:

\[
\mu(x) \coloneqq \sum_{\{u,v\} \in \binom{\VtxActive^{\rcall(x)}}{2}} 
  \log_2\left[1+ \sum_{P \in \Bc_{u,v}^{G[\VtxActive^{\rcall(x)}]}} \sum_{u \in V(P)} (1+\quota(\degord^{\rcall(x)}, u, \poslimit^{\rcall(x)}(u)))\right].
\]

Since we branch on a $\frac{1}{3t}$-heavy vertex and the innermost sums in the definition above
are bounded by $\Oh(dt)$, we have that for some universal constant $c$:
\[
\mu(x) - \mu(y) \geq \frac{c}{dt^3} \binom{|\VtxActive^{\rcall(x)}|}{3}.
\]
Let $x_0$ be the topmost node of $Q$ and $n_Q = |\VtxActive^{\rcall(x_0)}|$.
Since all nodes of $x_0$ are of the same level, we have $|\VtxActive^{\rcall(x)}| \geq 0.99n_Q$
for every branch node $x$ on $Q$. 
We infer that 
\[
  \mu(x) - \mu(y) = \Omega(n_P^2).
\]
We have $\mu(x_0) = \Oh(n_P^2 \log n_P)$ while $\mu(x) \geq 0$ for any node $x$.
Consequently, $Q$ may contain $\Oh(\log n_P) = \Oh(\log n)$ success branches, as desired.

This finishes the proof of \cref{lem:branching-strategy} for $P_t$-free graphs.

\subsection{Quasipolynomial branching strategy in $C_{>t}$-free graphs}

We now prove \cref{lem:branching-strategy}
for $C_{>t}$-free graphs. 
Hence, let us fix $d$, $t$, and a subroutine handling free nodes. W.l.o.g. assume $t$ is even and $t\geq 6$. Recall that our goal is to design a branching strategy, which given a subproblem $\rcall$ should decide the type of the node created for $\rcall$ and, in case this type is the branch node, choose a suitable branching pivot.
We will measure the progress of our algorithm by keeping track of the number of some suitably defined objects in the graph.

A \emph{connector} is a graph with three designated vertices, called 
\emph{tips}, obtained in the following way.
Take three induced paths $Q_1,Q_2,Q_3$; 
here we allow degenerated, one-vertex paths.
The paths $Q_1,Q_2,Q_3$ will be called the \emph{legs} of the connector.
The endvertices of $Q_i$ are called $a_i$ and $b_i$.
Now, join these paths in one of the following ways:
\begin{enumerate}[label=\alph*)]
\item identify $a_1,a_2$, and $a_3$ into a single vertex, i.e., $a_1=a_2=a_3$, or
\item add edges $a_1a_2$, $a_2a_3$, and $a_1a_3$.
\end{enumerate}
Furthermore, if the endvertices are identified, then at most one leg may be degenerated; thus, $b_1,b_2,b_3$ are pairwise different after the joining.
There are no other edges between the legs of the connector. 
The vertices $b_1$, $b_2$, and $b_3$ are the \emph{tips} of the connector,
and the set $\{a_1,a_2,a_3\}$  is called the \emph{center} (this set can have either three or one element);
see~\cref{fig:tripod}.
If one of the paths forming the connector has only one vertex, and the endvertices were identified, then the connector is just an induced path with the tips being the endpoints of the path plus one internal vertex of the path. 
Note that, given a connector as a graph and its tips,
the legs and the center of the connector are defined uniquely. 

\begin{figure}[tb]
\begin{center}
\includegraphics{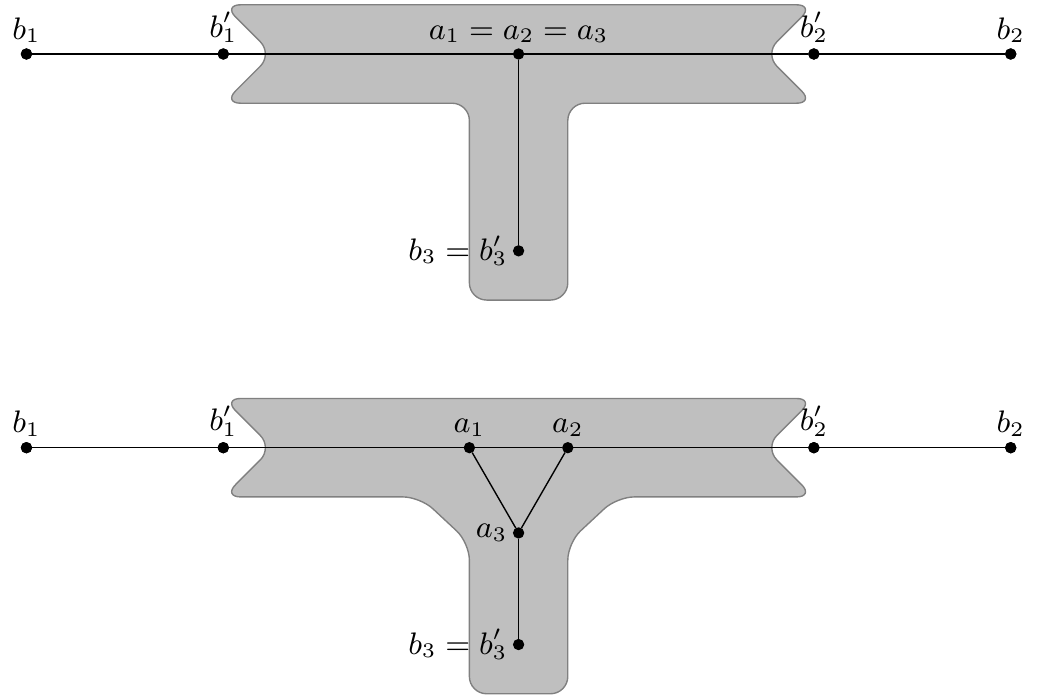}
\caption{Two connectors with two long legs and one short leg;
  one connector with $a_i$s identified and one with $a_i$s forming a triangle.
Vertices $b_i'$ are the tips of the core $T$ of the connector. The gray area
depicts the set $T^\ast$.}\label{fig:tripod}
\end{center}
\end{figure}

We will need the following folklore observation.
\begin{lemma}\label{lem:connector}
Let $G$ be a graph, $A \subseteq V(G)$ be a set consisting of exactly three vertices in the same connected component of $G$, and 
let $A \subseteq B \subseteq V(G)$ be an inclusion-wise minimal set such that $G[B]$ is connected. 
Then the graph $G[B]$ with the set $A$ as tips is a connector.
\end{lemma}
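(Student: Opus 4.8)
The plan is to prove \cref{lem:connector} by analyzing the structure forced by the minimality of $B$. Write $A=\{b_1,b_2,b_3\}$. The first step is to understand what minimality buys us: since $G[B]$ is connected and $B$ is inclusion-wise minimal with this property, deleting any vertex of $B\setminus A$ disconnects $G[B]$ or leaves some $b_i$ stranded — so every vertex of $B\setminus A$ is a cut vertex of $G[B]$ (or, more precisely, is needed to keep $A$ in one component). In particular $G[B]$ has no cycles through $B\setminus A$ that could be shortcut, and every leaf of any spanning structure of $G[B]$ must lie in $A$.

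Next I would pass to the block-cut tree of $G[B]$, or equivalently argue directly as follows. Consider the three vertices $b_1,b_2,b_3$ and, for each pair, a shortest path between them inside $G[B]$. Using minimality, I claim $G[B]$ is exactly the union of three induced paths meeting in the pattern described in the definition of a connector. Concretely: take an inclusion-wise minimal connected subgraph is a tree-like object; since $B$ is minimal, $G[B]$ itself is "almost a tree" — it is either a tree (in which case, having exactly the three "terminals" $b_1,b_2,b_3$ as the only vertices we are forced to connect, it must be a subdivided star, i.e.\ three paths sharing one common endpoint — case (a) with one possibly-degenerate leg), or it contains exactly one cycle. In the latter case, minimality forces that cycle to be a triangle on a set $\{a_1,a_2,a_3\}$, with three paths hanging off, one to each $b_i$ — case (b). The degeneracy condition (at most one leg one-vertex when the centers are identified, tips pairwise distinct) then needs to be checked: if two of $b_1,b_2,b_3$ coincided with the center in case (a), $B$ would not be minimal or the $b_i$ would not be distinct, contradicting that $A$ has three vertices.

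The remaining technical points are: (i) each of the three paths is \emph{induced} (no chords), which is immediate because a chord would give a shorter path, contradicting minimality of $B$; (ii) there are no edges between the legs other than those at the center, again by minimality (such an edge would allow deleting a vertex); and (iii) in case (a) the center really is a single vertex and in case (b) it really is a triangle and nothing more. Each of these is a short argument: assume the forbidden configuration, exhibit a proper connected subset of $B$ still containing $A$, contradiction.

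The main obstacle, I expect, is the careful case analysis ruling out all degenerate configurations — e.g.\ making sure that when a leg is a single vertex the construction still matches the definition, that $B$ cannot be a path through all three $b_i$ with one $b_i$ internal (this is allowed and is the degenerate case (a) explicitly mentioned in the text), and that a unique cycle, if present, cannot be longer than a triangle (a longer cycle would have a vertex whose removal keeps $G[B]$ connected and still containing $A$). Formalizing "minimal connected $\Rightarrow$ at most one cycle, and that cycle is a triangle carrying all the branching" is the crux; once that structural dichotomy is established, matching it to the definition of a connector is routine.
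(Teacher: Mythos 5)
Your route—deduce from minimality that $G[B]$ is a tree with all leaves in $A$ (hence a subdivided star) or unicyclic with a triangle carrying the branching—is genuinely different from the paper's and can be made to work, but as written it has a real gap exactly at what you yourself call the crux. The one-line justification you offer, that a cycle of length at least $4$ "would have a vertex whose removal keeps $G[B]$ connected and still containing $A$", does not follow from simply picking a cycle vertex outside $A$: such a vertex may still be a cut vertex of $G[B]$, because a leg leading to one of the tips may attach to the cycle precisely there, so its removal is not available to you. The missing idea is a counting argument driven by $|A|=3$: in a minimal $B$, every vertex $v$ of a $2$-connected block $D$ of $G[B]$ must either lie in $A$ or be a cut vertex one of whose split components is disjoint from $D$ and contains a vertex of $A$ (otherwise $v$, together with any hanging pieces avoiding $A$, could be deleted); these "private" components for distinct vertices of $D$ are pairwise disjoint, so $|D|\leq 3$ and every cycle is a triangle. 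Analogous counts are what rule out two triangles, two branch vertices, and edges between legs—none of these are one-line "exhibit a smaller connected set" steps until this bookkeeping is in place, though once it is, matching the structure to the definition of a connector (including the degenerate path case) is indeed routine, as you say.

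For comparison, the paper's proof of \cref{lem:connector} sidesteps the global block analysis entirely. It fixes a shortest $u$-$v$ path $P_{uv}$ in $G[B]$ and a shortest path $P_w$ from $w$ to $V(P_{uv})$, notes that minimality forces $B=V(P_{uv})\cup V(P_w)$, and then only analyses the vertex $p$ of $P_w$ preceding its attachment point $q$: if $p$ had two non-consecutive neighbours on $P_{uv}$, the vertices strictly between them could be deleted, contradicting minimality, so $p$ attaches at $q$ alone (identified center) or at $q$ and a neighbour of $q$ (triangle center). That yields the connector in a few lines, with the degenerate case $w\in V(P_{uv})$ handled separately. If you prefer your structural route, supply the block-counting argument above; otherwise the two-shortest-paths construction is the shorter path to the statement.
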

\begin{proof}
Let $A = \{u,v,w\}$, let $P_{uv}$ be a shortest path from $u$ to $v$ in $G[B]$ and let $P_w$ be a shortest path from $w$ to $V(P_{uv})$ in $G[B]$. 
By minimality of $B$, we have $B = V(P_{uv}) \cup V(P_w)$.

If $w \in V(P_{uv})$ (equivalently, $|V(P_w)| = 1$), then $G[B]$ is a path and we are done. Otherwise, 
let $q \in V(P_w) \cap V(P_{uv})$ be the endpoint of $P_w$ distinct than $w$ and let $p$ be the unique neighbor of $q$ on $P_w$. 
By the minimality of $P_w$, $p$ and $q$ are the only vertices of $P_w$ that may have neighbors on $P_{uv}$. 
If $p$ has two neighbors $x,y \in V(P_{uv})$ that are not consecutive on $P_{uv}$, then $G[B]$ remains connected after the deletion from $B$ of all vertices
on $P_{uv}$ between $x$ and $y$ (exclusive), a contradiction to the choice of $B$. 
Thus, $N(p) \cap V(P_{uv})$ consists of $q$ and possibly one neighbor of $q$ on $P_{uv}$. We infer that $G[B]$ is a connector with tips $u$, $v$, and $w$, as desired.
\end{proof}

A  \emph{tripod} is a connector where each of the paths $Q_1,Q_2,Q_3$ has at most $t/2+1$ vertices. 
A leg of a tripod is \emph{long} if it contains exactly $t/2+1$ vertices, and \emph{short} otherwise.
The \emph{core} of a connector $C$ with legs $Q_1,Q_2,Q_3$, is the tripod consisting of the first $t/2+1$ (or all of them, if the corresponding path $Q_i$ is shorter)
vertices of each path $Q_i$, starting from $a_i$.  
A tripod {\em{in $G$}} is a tripod that is an induced subgraph of~$G$.
Note that each tripod has at most $3t/2+3$ vertices, hence given $G$, we can enumerate all tripods in $G$ in time~$n^{\Oh(t)}$.

Let $T$ be a tripod in $G$ with legs $Q_1,Q_2,Q_3$. Let $L(T)$ denote the tips of $T$ which are endvertices of the long legs. We denote
$$T^\ast\coloneqq N_G[V(T) \setminus L(T)] \setminus L(T).$$ In other words, $T^\ast$ is the closed neighborhood of the tripod $T$ in $G$, except that we do not necessarily include the neighbors of tips of long legs and we exclude those tips themselves. Note that tips of short legs together with their neighborhoods are included in $T^\ast$.

We have the following simple observation.
\begin{lemma}\label{lem:tripod1}
For every tripod $T$ in $G$ and for every connected component $C$ of $G-T^\ast$, the component $C$ contains at most one tip of $L(T)$ and no tip of a short leg.
\end{lemma}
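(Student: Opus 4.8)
The plan is to prove the two assertions separately; the first is immediate from unwinding the definition of $T^\ast$, while the second uses the $C_{>t}$-freeness of $G$ via a long induced cycle.

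\textbf{Short-leg tips and, more generally, $V(T)\setminus L(T)$.} If $b$ is the tip of a short leg of $T$, then $b$ is not the tip of a long leg, so $b\in V(T)\setminus L(T)$. Since $b\in N_G[V(T)\setminus L(T)]$ and $b\notin L(T)$, we get $b\in T^\ast$, so $b$ is not a vertex of $G-T^\ast$ and cannot lie in any component of $G-T^\ast$. The same computation shows that \emph{every} vertex of $V(T)\setminus L(T)$ lies in $T^\ast$; consequently, for any component $C$ of $G-T^\ast$ we have $C\cap V(T)\subseteq L(T)$, and in particular $C$ contains no short-leg tip. I will also record the companion fact, equally immediate from the definition of $T^\ast$: for every $r\in V(T)\setminus L(T)$ we have $N_G(r)\subseteq T^\ast\cup L(T)$.

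\textbf{At most one long-leg tip.} Suppose towards a contradiction that a component $C$ of $G-T^\ast$ contains two distinct tips $b_i,b_j\in L(T)$, say tips of the long legs $Q_i,Q_j$. Inside the induced subgraph $T$, form the path $R$ from $b_i$ to $b_j$ that runs along $Q_i$ to the center and then along $Q_j$; since a connector has no edges between distinct legs other than through the center, $R$ is an induced path, and $|V(R)|\geq t+1$ because $Q_i$ and $Q_j$ each have $t/2+1$ vertices (this covers both the identified-center and the triangle-center cases, as well as the degenerate case where $T$ is itself an induced path). Moreover $b_i$ and $b_j$, being distinct tips of distinct non-degenerate legs, are non-adjacent in $T$, hence in $G$. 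Since $C$ is connected and contains $b_i,b_j$, let $P$ be a shortest $b_i$--$b_j$ path in $G[C]$; it is induced, has at least $3$ vertices (as $b_ib_j\notin E(G)$), and $V(P)\subseteq C\subseteq V(G)\setminus T^\ast$. Because $V(R)\cap L(T)=\{b_i,b_j\}$ while $V(T)\setminus L(T)\subseteq T^\ast$, the only vertices of $R$ outside $T^\ast$ are $b_i$ and $b_j$, so $R$ and $P$ meet exactly in $\{b_i,b_j\}$ and $R\cup P$ is a cycle. I claim it is induced. A chord would have to be an edge between an internal vertex $r$ of $R$ and an internal vertex $p$ of $P$; then $r\in V(T)\setminus L(T)$, so $N_G(r)\subseteq T^\ast\cup L(T)$, and since $p\notin T^\ast$ we get $p\in L(T)\setminus\{b_i,b_j\}$. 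In particular the third leg $Q_k$ is long and $p=b_k$. But $r\in N_G(b_k)\cap V(T)=N_T(b_k)$, which for the tip of the non-degenerate leg $Q_k$ is the single penultimate vertex of $Q_k$; since $Q_k$ has $t/2+1\geq 4$ vertices this vertex is internal to $Q_k$ and therefore not on $R$ --- a contradiction. Hence $R\cup P$ is an induced cycle of length $|V(R)|+|V(P)|-2\geq t+1>t$, contradicting the $C_{>t}$-freeness of $G$. Therefore $C$ contains at most one tip of $L(T)$.

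\textbf{Where the work lies.} The cycle argument itself is short once one has the two elementary structural facts about connectors used above: that a tip of a non-degenerate leg has exactly one neighbour inside the induced tripod $T$, and that the center-avoiding-free path $R$ between two long-leg tips is induced and has at least $t+1$ vertices. The only fiddly point is the bookkeeping over the degenerate shapes of a connector (a one-vertex leg; center being a single vertex versus a triangle), but these never interfere: a one-vertex leg is short, so its tip lies in $V(T)\setminus L(T)\subseteq T^\ast$, and in every case $R$ passes through the center and keeps the claimed length. I expect this case check, together with pinning down the inclusion $V(T)\setminus L(T)\subseteq T^\ast$ and the description of $N_G(r)$ for $r\in V(T)\setminus L(T)$, to be the main (though routine) obstacle.
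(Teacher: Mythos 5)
Your proof is correct and follows essentially the same route as the paper: short-leg tips lie in $T^\ast$ directly from its definition, and two long-leg tips in one component of $G-T^\ast$ would be joined by the two long legs through the center together with a shortest path inside the component into an induced cycle on more than $t$ vertices, contradicting $C_{>t}$-freeness. The only difference is that you spell out the chordlessness verification (the potential chord through the third tip $b_k$) that the paper's two-line proof leaves implicit.
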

\begin{proof}
First, note that tips of short legs are contained in $T^\ast$, hence they are not contained in $G-T^\ast$.

For contradiction, without loss of generality assume that $b_1,b_2 \in C \cap L(T)$. Then, $Q_1$, $Q_2$, and a shortest path from $b_1$ to $b_2$ in $C$ yield an induced cycle on more than $t$ vertices in $G$,
a contradiction.
\end{proof}

Suppose $T$ is a tripod in $G$.
With each tip $b_i$ of $T$ we associate a \emph{bag} $B_i$ defined as follows: 
\begin{itemize}
 \item if $b_i$ is the endpoint of a long leg, then $B_i$ is the vertex
set of the connected component of $G-T^\ast$ that contains $b_i$; and
\item otherwise, $B_i = \{b_i\}$.
\end{itemize}
Note that \cref{lem:tripod1} implies that the bags $B_1,B_2,B_3$ are pairwise disjoint and nonadjacent in $G$, except for the corner case of two adjacent tips of short legs.

We now define \emph{buckets} that group tripods.
Each bucket will be indexed by an unordered triple of distinct vertices of $G$. 
Every tripod $T$ in $G$ with bags $B_1,B_2,B_3$ belongs to the bucket $\Bc_{u,v,w}^G$ for all triples $u,v,w$ such that $u \in B_1$, $v \in B_2$, and $w \in B_3$.
Note that thus, the buckets do not have to be pairwise disjoint.
The superscript $G$ can be omitted if the graph is clear from the context.

Observe the following.
\begin{lemma}\label{lem:tripod2}
For every $\{u,v,w\} \in \binom{V(G)}{3}$ and every $T \in \Bc_{u,v,w}$, there exists a connector $T'$ with tips $u,v,w$
whose core equals $T$.
Consequently, the bucket $\Bc_{u,v,w}$ is nonempty if and only if $u,v,w$ lie in the same connected component of $G$.
\end{lemma}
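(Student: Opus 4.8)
The plan is to prove the two halves of the statement separately, the first being the main content and the second following easily from it together with \cref{lem:connector}.

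First I would establish the key claim: for every $\{u,v,w\}\in\binom{V(G)}{3}$ and every $T\in\Bc_{u,v,w}$, there is a connector $T'$ with tips $u,v,w$ whose core equals $T$. Fix such a $T$ with legs $Q_1,Q_2,Q_3$, tips $b_1,b_2,b_3$, and bags $B_1,B_2,B_3$, where by definition $u\in B_1$, $v\in B_2$, $w\in B_3$. The idea is to extend each long leg through its bag to reach the corresponding prescribed vertex, leaving the short legs untouched. Concretely, for each $i$ such that $b_i\in L(T)$, the bag $B_i$ is the vertex set of the connected component of $G-T^\ast$ containing $b_i$, so I can pick a shortest path $R_i$ from $b_i$ to the target vertex ($u$, $v$, or $w$ respectively) inside $G[B_i]$; then $Q_i':=Q_i\cup R_i$ is a walk from $a_i$ to the target. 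For each short leg, I set $Q_i':=Q_i$ and note the target equals $b_i$ already (since $B_i=\{b_i\}$). I then let $T'$ be the subgraph of $G$ on $V(Q_1')\cup V(Q_2')\cup V(Q_3')$ with the same center as $T$ (identified or triangle as in $T$), and claim $T'$ is an induced connector with tips $u,v,w$ and core $T$.

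The main obstacle, and the part requiring care, is verifying that $T'$ is genuinely an \emph{induced} connector: I must show that the extended legs $Q_i'$ are induced paths, that distinct $Q_i'$ are nonadjacent except at the center, and that no extension vertex creates a chord. This is exactly where the definition of $T^\ast$ does its work. Since $R_i\subseteq B_i\subseteq V(G)\setminus T^\ast$, no vertex of $R_i$ other than possibly $b_i$ itself is adjacent to any vertex of $V(T)\setminus L(T)$, hence $R_i$ attaches to $Q_i$ only at $b_i$ and has no edge to any other leg's core-part; and by \cref{lem:tripod1} the bags $B_1,B_2,B_3$ are pairwise nonadjacent, so the extensions of different long legs do not interfere (a short leg's tip lies in $T^\ast$, so it too is nonadjacent to the $R_i$'s apart from the allowed corner case already inside $T$). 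Taking $R_i$ to be a \emph{shortest} path in $G[B_i]$ ensures $R_i$ is induced; that the concatenation $Q_i\cup R_i$ is induced then follows since the only shared vertex is $b_i$ and there are no edges between $R_i\setminus\{b_i\}$ and $V(Q_i)\setminus\{b_i\}$ (again by $R_i\cap T^\ast=\emptyset$ and $V(Q_i)\setminus L(T)\subseteq T^\ast\cup L(T)$, with the endpoint $b_i$ handled directly). One should also double-check the non-degeneracy requirement on connectors with an identified center: at most one leg of $T$ was degenerate, and extending legs only lengthens them, so $Q_1',Q_2',Q_3'$ still satisfy the constraint and the tips $u,v,w$ are pairwise distinct by hypothesis. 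Finally, the core of $T'$ consists of the first $t/2+1$ vertices of each $Q_i'$ from $a_i$ (or all of $Q_i'$ if shorter); for a short leg this is all of $Q_i=Q_i'$, and for a long leg $Q_i$ already has exactly $t/2+1$ vertices, so the first $t/2+1$ vertices of $Q_i'$ are precisely $V(Q_i)$. Hence the core of $T'$ is $T$.

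For the ``consequently'' part: if $\Bc_{u,v,w}\neq\emptyset$, pick any $T\in\Bc_{u,v,w}$, use the claim to get a connector $T'$ with tips $u,v,w$; since a connector is connected, $u,v,w$ lie in a common connected component of $G$. Conversely, if $u,v,w$ lie in the same component, apply \cref{lem:connector} with $A=\{u,v,w\}$ to obtain a connector with tips $u,v,w$; its core is a tripod $T$ whose bags contain $u,v,w$ respectively (the center and the legs of a connector are determined by the tips, and restricting to the core only shortens long legs, placing their endpoints inside the appropriate components of $G-T^\ast$), so $T\in\Bc_{u,v,w}$ and the bucket is nonempty. This last containment — that the core of an arbitrary connector with tips $u,v,w$ lands in $\Bc_{u,v,w}$ — is essentially the converse bookkeeping to the claim and follows from the same analysis of how bags relate to $T^\ast$; I would state it as a short observation rather than belabor it.
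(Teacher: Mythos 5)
Your proposal is correct and follows essentially the same route as the paper: extend each long leg of $T$ by a shortest path inside its bag to the prescribed tip, use the definition of $T^\ast$ together with \cref{lem:tripod1} (pairwise nonadjacent bags) to see that the result is an induced connector whose core is $T$, and derive the ``consequently'' part from \cref{lem:connector}. The only difference is that you spell out the inducedness checks and the converse bookkeeping (that the core of an arbitrary connector with tips $u,v,w$ lands in $\Bc_{u,v,w}$) in more detail than the paper, which leaves these implicit; both verifications are accurate.
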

\begin{proof}
Let $Q_u, Q_v,Q_w$ be the legs of $T$ with tips $b_u$, $b_v$, $b_w$ and bags $B_u$, $B_v$, and $B_w$, respectively,
such that $u \in B_u$, $v \in B_v$, $w \in B_w$. Let $Q_u'$ be the concatenation of $Q_u$ and a shortest path from $b_u$ to $u$
in~$G[B_u]$. Similarly define $Q_v'$ and $Q_w'$. 

Recall that the bags $B_u$, $B_v$, and $B_w$ are pairwise
distinct and nonadjacent (except for the case of two adjacent tips of short legs). Hence, $Q_u'$, $Q_v'$, and $Q_w'$ form a connector $T'$ with tips $u$, $v$, and $w$.
Since $B_u \neq \{b_u\}$ only if the leg $Q_u$ is long, $T$ is the core of $T'$.
\end{proof}

The next combinatorial observation is critical to the complexity analysis.
\begin{lemma}\label{lem:tripod3}
Let $G$ be a connected $C_{> t}$-free graph, and let $u,v,w$ be three distinct vertices of $G$.
Let $X\subseteq V(G)$ be such that $G[X]$ is connected and no two of $u,v,w$ are in the same connected component of $G - N[X]$.
Then $N[X]$ intersects all tripods in $\Bc_{u,v,w}$.
\end{lemma}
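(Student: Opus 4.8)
The statement is a near-twin of \cref{cl:tripod-hit}, but now we must show that \emph{every} tripod in the bucket $\Bc_{u,v,w}$ is hit, not just that the core of a single connector is hit. So the plan is to fix an arbitrary $T\in \Bc_{u,v,w}$ and show $N[X]\cap V(T)\neq\emptyset$. Let $B_u,B_v,B_w$ be the bags of $T$ associated with the tips $b_u,b_v,b_w$ of the three legs, so that $u\in B_u$, $v\in B_v$, $w\in B_w$. First I would use \cref{lem:tripod2}: $T$ is the core of a connector $T'$ with tips $u,v,w$, obtained by extending each leg $Q_\bullet$ of $T$ through its bag $B_\bullet$ to reach the vertex $u$, $v$, or $w$. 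Crucially, the portion of $T'$ strictly beyond the core $T$ lies entirely inside $B_u\cup B_v\cup B_w$.

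The key observation is that by the hypothesis no two of $u,v,w$ lie in the same component of $G-N[X]$, and the three bags $B_u,B_v,B_w$ are pairwise disjoint \emph{and} pairwise nonadjacent (\cref{lem:tripod1}) except possibly for an edge between two tips of short legs; but a short leg's bag is a singleton $\{b_i\}$ contained in $T^\ast$, so in any case the extension paths of $T'$ outside the core stay within their own bags. Therefore, since $u$ and $v$ lie in different components of $G-N[X]$, the path in $T'$ joining $u$ to $v$ must meet $N[X]$; suppose $N[X]$ hits it inside $B_u$ (or symmetrically $B_v$). If $b_u$ is the tip of a short leg then $B_u=\{b_u\}\subseteq V(T)$ and we are done immediately. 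If $b_u$ is the tip of a long leg, then $B_u$ is a whole component of $G-T^\ast$; so $N[X]$ meets a component of $G-T^\ast$ — but that is exactly the setup of \cref{cl:tripod-hit} applied to the connector $T'$: the set $X$ separates the three tips $u,v,w$ of $T'$, hence $N[X]$ intersects the core of $T'$, which is $T$.

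Actually the cleanest route is to invoke \cref{cl:tripod-hit} directly: $T'$ is an induced connector in $G$ with tips $u,v,w$, $G[X]$ is connected, and by hypothesis no two of its tips lie in the same component of $G-N[X]$; the conclusion of \cref{cl:tripod-hit} is precisely that $N[X]$ intersects the core of $T'$, i.e.\ $N[X]\cap V(T)\neq\emptyset$. Since $T\in\Bc_{u,v,w}$ was arbitrary, $N[X]$ intersects all tripods in $\Bc_{u,v,w}$, as claimed.

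The main obstacle I anticipate is making sure the reduction to \cref{cl:tripod-hit} is airtight regarding which object plays the role of the ``core'': in \cref{cl:tripod-hit} the core is \emph{defined} as the tripod of vertices within distance $t/2$ of the center, whereas here $T$ is given abstractly as an element of $\Bc_{u,v,w}$. One needs \cref{lem:tripod2} to guarantee that $T$ genuinely is the core of the particular connector $T'$ built by leg-extension, so that ``$N[X]$ intersects the core of $T'$'' literally means ``$N[X]$ intersects $T$''. Once that identification is in place — together with the fact that $T'$ is induced, which follows from disjointness and nonadjacency of the bags plus the inducedness of the legs — the argument is just a one-line citation of \cref{cl:tripod-hit}. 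A minor point to double-check is the degenerate case where two tips of short legs of $T$ happen to be adjacent; but then those legs extend to $u,v$ trivially within singleton bags and $T'$ is still an induced connector, so the argument is unaffected.
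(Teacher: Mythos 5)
Your final argument --- obtain via \cref{lem:tripod2} a connector $T'$ with tips $u,v,w$ whose core is exactly $T$, then use the hypothesis that $X$ separates the three tips to force an induced cycle through $X$ that would be too long unless $N[X]$ meets the core --- is precisely the paper's proof of \cref{lem:tripod3}; the only difference is that the paper inlines the argument of \cref{cl:tripod-hit} (on two hit legs of $T'$ it picks the $N[X]$-vertices closest to the center and closes a cycle through $X$, which must have at most $t$ vertices), whereas you cite that overview claim, whose sketched proof is the same argument, so a fully formal write-up would just reproduce it. Your preliminary discussion of bags and $T^\ast$ is superfluous (and the jump from ``$N[X]$ meets a component of $G-T^\ast$'' to the setup of \cref{cl:tripod-hit} is a non sequitur), but it is superseded by the clean reduction you settle on, so the proposal is correct and essentially identical to the paper's.
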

\begin{proof}
Let $T \in \Bc_{u,v,w}$.
Let $Q_u, Q_v,Q_w$ be the legs of $T$ with tips $b_u$, $b_v$, $b_w$ and bags $B_u$, $B_v$, and $B_w$, respectively.
Let $T'$ be the connector for $T$ given by \cref{lem:tripod2} with legs $Q_u'$, $Q_v'$, and $Q_w'$. 
Since no two of $u,v,w$ lie in the same connected component of $G-N[X]$, the set $N[X]$ intersects at least two legs of $Q_u'$, $Q_v'$, and $Q_w'$.
Without loss of generality, assume that $N[X]$ intersects $Q_u'$ and $Q_v'$. 
Let $u' \in N[X] \cap V(Q_u')$ be the vertex of $N[X] \cap V(Q_u')$ that is farthest from $u$ on $Q_u'$
and similarly define $v' \in N[X] \cap V(Q_v')$. 
Then, the subpath of $Q_u'$ from $u'$ to the center of $T'$, the subpath of $Q_v'$ from $v'$ to the center of $T'$, and a shortest path from 
$u'$ to $v'$ with all internal vertices in $X$ form an induced cycle in $G$. This cycle has more than $t$ vertices
unless $u' \in V(Q_u)$ or $v' \in V(Q_v)$. Hence, $V(T) \cap N[X] \neq \emptyset$, as desired.
\end{proof}

For $\epsilon>0$, 
we will say that a vertex $v$ is \emph{$\epsilon$}-heavy (or just \emph{heavy}, if $\epsilon$ is clear from the context) if $N[v]$ intersects strictly more than $\epsilon$ fraction of tripods in at least $\epsilon$ fraction of buckets.
Note that the ``strictly more'' part makes empty buckets not count toward the $\epsilon$ fraction
of the buckets hit.

Intuitively, our branching strategy chooses as the
pivot a $(10^{-8}/t)$-heavy vertex of $G[\VtxActive^\rcall]$.
\cref{lem:success-quota} 
captures the source of the gain in a success~branch: the quotas of the neighbors of the pivot get reduced.

Unfortunately, it may happen that there is no $(10^{-8}/t)$-heavy vertex.
For an example, consider the case when $H \coloneqq G[\VtxActive^\rcall]$ is a long path. Then $\Bc_{u,v,w}$ consists of subpaths of $H$ of length at most $t$ containing the middle vertex of $\{u,v,w\}$, and
an arbitrary neighborhood $N[x]$ intersects tripods in roughly $(t/|V(H)|)$ fraction of all buckets. 

However, there is one particular scenario where a heavy vertex is guaranteed to~exist.
In a connected graph $G$, a set $X \subseteq V(G)$ is a \emph{connected three-way balanced separator} (C3WBS) if 
\begin{itemize}
\item $G[X]$ is connected;
\item the family of connected components of $G-N[X]$ can be partitioned into three sets $\mathcal{C}_1,\mathcal{C}_2,\mathcal{C}_3$
such that for every $i=1,2,3$ we have $|N[X] \cup \bigcup \mathcal{C}_i| \geq 0.1|V(G)|$. 
\end{itemize}
We observe that a small connected three-way balanced separator in a connected $C_{>t}$-free graph guarantees the existence of a heavy vertex.
\begin{lemma}\label{lem:tripodheavy}
Let $G$ be a connected $C_{> t}$-free graph and let $X\subseteq V(G)$ be a C3WBS. 
Then there exists a $(10^{-6}/|X|)$-heavy vertex in $G$.
\end{lemma}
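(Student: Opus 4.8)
The plan is to combine \cref{lem:tripod3} with a two-level double-counting argument, in the same spirit as how heavy vertices are produced for $P_t$-free graphs. First I set up notation. Write $n=|V(G)|$, and let $\mathcal{C}_1,\mathcal{C}_2,\mathcal{C}_3$ be the partition of the connected components of $G-N[X]$ promised by the definition of a C3WBS; set $V_i=\bigcup\mathcal{C}_i$, so that $V(G)=N[X]\sqcup V_1\sqcup V_2\sqcup V_3$ and $|N[X]|+|V_i|\geq 0.1n$ for $i=1,2,3$. Call a triple $\{u,v,w\}\in\binom{V(G)}{3}$ \emph{good} if no two of $u,v,w$ lie in one connected component of $G-N[X]$, and let $\mathcal{T}$ be the set of good triples. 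Since $G$ is connected, \cref{lem:tripod2} says that \emph{every} bucket $\Bc_{u,v,w}^G$ is nonempty, so there are exactly $\binom{n}{3}$ buckets; and by \cref{lem:tripod3}, for every good triple $\tau$ the set $N[X]=\bigcup_{x\in X}N[x]$ intersects \emph{every} tripod in $\Bc_\tau^G$. So it suffices to show that a constant fraction of all buckets are indexed by good triples, and then to extract a single heavy vertex of $X$.

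The first substantial step is to prove $|\mathcal{T}|\geq c_0\binom{n}{3}$ for an absolute constant $c_0>10^{-6}$. A triple with one vertex in each of $V_1,V_2,V_3$ is good (distinct $V_i$ forces distinct components), and so is any triple containing two vertices of $N[X]$ (a vertex of $N[X]$ lies in no component of $G-N[X]$); these two families of good triples are disjoint, so $|\mathcal{T}|\geq |V_1|\,|V_2|\,|V_3|+\binom{|N[X]|}{2}(n-2)$. Now split into cases: if $|N[X]|\geq\tfrac{1}{20}n$ the second term is $\Omega(n^3)$; otherwise $|N[X]|<\tfrac{1}{20}n$ and then each $|V_i|\geq 0.1n-|N[X]|>\tfrac{1}{20}n$, so the first term is $\Omega(n^3)$. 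Tracking constants, $|\mathcal{T}|\geq 10^{-4}\binom{n}{3}$ once $n$ exceeds an absolute threshold; the finitely many smaller $n$ are handled by direct inspection (and for $n<3$ the claim is vacuous, as there are no buckets).

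Next I would run the two-level pigeonhole. Fix a good triple $\tau\in\mathcal{T}$. Since each tripod of $\Bc_\tau^G$ meets $N[x]$ for some $x\in X$, we have $\sum_{x\in X}|\{T\in\Bc_\tau^G : N[x]\cap V(T)\neq\emptyset\}|\geq |\Bc_\tau^G|$, hence some $x(\tau)\in X$ meets at least a $\tfrac{1}{|X|}$ fraction of the tripods in $\Bc_\tau^G$. Assigning such an $x(\tau)$ to every $\tau\in\mathcal{T}$ and pigeonholing over the at most $|X|$ possible values, there is one $x^\star\in X$ with $x(\tau)=x^\star$ for at least $\tfrac{1}{|X|}|\mathcal{T}|\geq\tfrac{c_0}{|X|}\binom{n}{3}$ triples $\tau$. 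For each such $\tau$, the set $N[x^\star]$ meets at least a $\tfrac{1}{|X|}$ fraction of the tripods of $\Bc_\tau^G$, which is strictly more than a $\tfrac{10^{-6}}{|X|}$ fraction (because $10^{-6}<1$); and the number of these buckets is at least $\tfrac{c_0}{|X|}\binom{n}{3}\geq\tfrac{10^{-6}}{|X|}\binom{n}{3}$. Hence $x^\star$ is $(10^{-6}/|X|)$-heavy, as required.

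The main obstacle is the counting in the second step. A C3WBS puts no upper bound on the size of an individual component of $G-N[X]$, so one cannot simply argue that a uniformly random triple seldom has two vertices in a common component — a single giant component would destroy that estimate. The resolution, which is exactly the content of the case split, is that a giant component forces $N[X]$ itself to contain a constant fraction of $V(G)$: if some $V_i$ is huge, the other two groups are tiny, and then $N[X]$ must supply the deficit in the inequalities $|N[X]|+|V_j|\geq 0.1n$ for the other two indices $j$, whence the pairs inside $N[X]$ already furnish enough good triples. Making the thresholds and constants line up cleanly — including verifying the \emph{strictly more than $\epsilon$} clause of heaviness and disposing of the finitely many small $n$ — is the only fiddly part; everything else is the familiar averaging.
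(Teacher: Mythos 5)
Your proposal follows essentially the same route as the paper's proof: lower-bound by a constant fraction of $\binom{n}{3}$ the number of triples no two of whose vertices share a component of $G-N[X]$, apply \cref{lem:tripod3} to see that $N[X]$ hits every tripod in each such bucket, and then average over the at most $|X|$ vertices of $X$; the paper merely organizes the counting differently (it counts all triples with $u_i\in C_i\cup N[X]$ and bounds this in one stroke by $\tfrac16(|C_1|+|N[X]|)(|C_2|+|N[X]|-1)(|C_3|+|N[X]|-2)\geq 10^{-4}\binom{n}{3}$, whereas you split on whether $|N[X]|\geq n/20$), and the two counts are interchangeable. The only loose end is your treatment of small $n$: ``direct inspection of the finitely many smaller $n$'' is not literally available, since for each such $n$ there are still infinitely many graphs $G$ and separators $X$; the repair is the paper's one-liner, namely for $3\leq n\leq 100$ take any vertex $v$ with two neighbors $u_1,u_2$ (it exists as $G$ is connected with $n\geq 3$), apply \cref{lem:tripod3} with the connected set $\{v\}$ to see that $N[v]$ meets every tripod of the nonempty bucket $\Bc^G_{u_1,u_2,v}$, and note that a single bucket out of fewer than $10^6$ already supplies the required $10^{-6}/|X|$ fraction.
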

\begin{proof}
Let $n = |V(G)|$.
If $n \leq 2$, then any vertex of $G$ is heavy, as there are no buckets.
If $2 < n \leq 100$, then any vertex $v \in V(G)$ with at least two neighbors, say $u_1,u_2$, is $10^{-6}$-heavy
as $N[v]$ hits all tripods in the bucket $\Bc_{u_1,u_2,v}$ by \cref{lem:separator} and there are fewer than $10^{6}$ buckets. 
Hence, we can assume $n > 100$. 

Let $\mathcal{C}_1,\mathcal{C}_2,\mathcal{C}_3$ be the partition of the connected components of $G-N[X]$ promised by the definition of the C3WBS $X$. 
Let $C_i = \bigcup \mathcal{C}_i$ for $i=1,2,3$.
Let $\mathcal{X}$ be the family of triples $\{u_1,u_2,u_3\} \in \binom{V(G)}{3}$ with $u_i \in C_i \cup N[X]$ for $i=1,2,3$.
We can estimate the size of $\mathcal{X}$ as follows:
\begin{align*}
|\mathcal{X}| = & |C_1| \cdot |C_2| \cdot |C_3| + |N[X]| \cdot \left(|C_1| \cdot |C_2| + |C_2| \cdot |C_3| + |C_3| \cdot |C_1|\right) \\
& + \binom{|N[X]|}{2} \cdot \left(|C_1| + |C_2| + |C_3|\right) + \binom{|N[X]|}{3} \\
\geq &\frac{1}{6} \; (|C_1| + |N[X]|) \cdot (|C_2|+|N[X]|-1) \cdot (|C_3|+|N[X]|-2) \\
\geq &\frac{1}{6} \; 0.1n \cdot (0.1n - 1) \cdot (0.1n - 2) \\
\geq &\frac{10^{-3}}{6} \; n \cdot (n - 10) \cdot (n - 20) \\
\geq &10^{-4} \cdot \binom{n}{3}.
\end{align*}
In the last inequality we have used the assumption $n > 100$. 

By \cref{lem:tripod3}, the set $N[X]$ intersects all tripods in at least $10^{-4}$ fraction of the
buckets. Hence, there exists $w \in X$ such that $N[w]$ intersects
at least $1/|X|$ fraction of tripods in at least $10^{-4}/|X|$ fraction of the buckets.
\end{proof}
Unfortunately, \cref{lem:separator} for $A = V(G)$ does not give us a C3WBS, but only a connected set $X$
of size at most $t$ such that every component of $G-N[X]$ has at most $|V(G)|/2$ vertices.
The example of a long path shows that the fraction $1/2$ cannot be improved while keeping
$X$ both connected and of constant size. The next lemma describes the scenario when \cref{lem:separator} does not return 
a C3WBS.
\begin{lemma}\label{lem:sep-c3wbs}
Let $G$ be a connected graph and $X \subseteq V(G)$ be such that $G[X]$ is connected and every connected component of $G-N[X]$
has at most $|V(G)|/2$ vertices. 
If $X$ is not a C3WBS, then there exist exactly two connected components of $G-N[X]$, each containing at least $0.4|V(G)|$ vertices.
\end{lemma}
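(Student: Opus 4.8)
The plan is a pure counting argument on the sizes of the connected components of $G-N[X]$. Write $n=|V(G)|$, put $R=V(G)\setminus N[X]$, and let $C_1,\ldots,C_m$ be the connected components of $G[R]$ ordered so that $|C_1|\geq|C_2|\geq\cdots\geq|C_m|$; recall $\sum_i|C_i|=|R|$ and $|C_i|\leq n/2$ for all $i$. First I would record the ``cheap'' reason for $X$ to be a C3WBS: if $|N[X]|\geq 0.1n$, then taking $\mathcal{C}_1$ to be the set of all components and $\mathcal{C}_2=\mathcal{C}_3=\emptyset$ already witnesses the C3WBS property, since each of the three sets $N[X]\cup\bigcup\mathcal{C}_i$ contains $N[X]$. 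Hence, under the assumption that $X$ is \emph{not} a C3WBS, we may work with $|N[X]|<0.1n$, and therefore $|R|>0.9n$; moreover, producing a suitable $\mathcal{C}_1,\mathcal{C}_2,\mathcal{C}_3$ is exactly the same as partitioning $\{C_1,\ldots,C_m\}$ into three groups, each of total size at least $\tau:=0.1n-|N[X]|\in(0,0.1n]$.

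The ``at most two'' half is immediate: three components of size at least $0.4n$ would have total size at least $1.2n>n\geq|R|$, a contradiction. For the ``at least two'' half I would argue by contradiction: assume at most one component has size at least $0.4n$, equivalently $|C_2|<0.4n$ (and so $|C_i|<0.4n$ for every $i\geq 2$), and exhibit a partition of $\{C_1,\ldots,C_m\}$ into three groups of total size at least $\tau$ each, contradicting that $X$ is not a C3WBS. First note $m\geq 3$: if $m\leq 2$ then $|R|>0.9n$ together with $|C_i|\leq n/2$ forces $m=2$ and $|C_1|,|C_2|>0.4n$, i.e.\ two large components. Now split into cases on $|C_1|$. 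If $|C_1|\geq\tau$ and $|C_2|\geq\tau$, take the groups $\{C_1\}$, $\{C_2\}$, $\{C_3,\ldots,C_m\}$; the last one is nonempty since $m\geq 3$, and its total size is $|R|-|C_1|-|C_2|\geq |R|-n/2-|C_2|>(n-|N[X]|)-n/2-0.4n=\tau$, using $|C_1|\leq n/2$ and $|C_2|<0.4n$. If $|C_1|\geq\tau$ but $|C_2|<\tau$, then every component among $C_2,\ldots,C_m$ has size below $\tau\leq 0.1n$; keep $C_1$ alone in one group and distribute $C_2,\ldots,C_m$ greedily between the other two (always adding the next component to the currently lighter group), so that their totals differ by at most $\max_{i\geq 2}|C_i|=|C_2|<0.1n$; since their combined size is $|R|-|C_1|>0.4n$, each of these two groups has size more than $\frac12(0.4n-0.1n)>0.1n\geq\tau$. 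Finally, if $|C_1|<\tau\leq 0.1n$, then all components have size below $0.1n$; distributing all of $C_1,\ldots,C_m$ greedily among the three groups makes their sizes pairwise differ by at most $|C_1|<0.1n$, so each has size more than $\frac13(0.9n-0.2n)>0.2n\geq\tau$. In every case all three groups are nonempty (using $m\geq 3$) and of size at least $\tau$, the desired contradiction. Combining the two halves, the number of components of size at least $0.4n$ is exactly two.

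The only non-routine ingredient is the elementary ``balanced distribution'' fact used in the last two cases: if numbers are inserted one at a time into $k$ bins, each time into a currently lightest bin, then at the end the bin totals lie pairwise within $\max_i(\text{inserted number})$ of one another — for this, look at the heaviest bin at the end and the moment it received its last element, at which point it was lightest, so every other bin was within that element's value of it and can only have grown since; the case $k=2$ also follows by a one-line induction. Everything else is bookkeeping with the constants $0.1$, $0.4$, and $0.5$. The point to be careful about is that the partition of $C_2,\ldots,C_m$ must be chosen adaptively: a blind greedy split fails when $|C_2|$ is close to $0.4n$ (the light group can drop below $\tau$), which is precisely why the subcase $|C_2|\geq\tau$ is treated separately by isolating $C_2$ and invoking $|C_1|+|C_2|<0.9n$. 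One should also verify the endpoint of the argument — that the two guaranteed large components are distinct and that, by the ``at most two'' bound, there is no third of size $\geq 0.4n$ — so that their count is indeed exactly two.
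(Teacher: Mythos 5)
Your proof is correct, and at the strategic level it is the same as the paper's: prove the contrapositive by assuming that at most one component of $G-N[X]$ has at least $0.4|V(G)|$ vertices and then explicitly constructing a three-way partition of the components witnessing that $X$ is a C3WBS (the ``at most two'' direction is the same one-line volume count in both). The construction of the partition, however, is organized differently. The paper sorts the components by size and takes two consecutive prefix segments: the first stops as soon as it, together with $N[X]$, reaches $0.1|V(G)|$, which bounds its total by $0.5|V(G)|$; the second is bounded by $0.4|V(G)|$ using the at-most-one-large-component assumption; whatever remains, together with $N[X]$, then automatically has at least $0.1|V(G)|$ vertices --- no balancing sublemma is needed. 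You instead make the threshold $\tau=0.1n-|N[X]|$ explicit, dispose of the case $|N[X]|\geq 0.1n$ up front, isolate $C_1$ (and possibly $C_2$), and finish with a balanced-greedy bin-distribution fact for the remaining small components. Your route is a bit more case-heavy, but it has the merit of handling explicitly the degenerate situations (fewer than three components, $N[X]$ already large) that the paper's prefix argument passes over silently. One point worth making explicit: your cheap case sets $\mathcal{C}_2=\mathcal{C}_3=\emptyset$, so you are reading the C3WBS definition as allowing empty parts. This is the intended reading --- the lemma would be false under the stricter reading (e.g.\ when $N[X]=V(G)$ there are no components at all), the paper's own proof produces an empty third part in corner cases, and the use of C3WBS in \cref{lem:tripodheavy} is compatible with empty parts --- but since your argument leans on it, you should state the convention.
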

\begin{proof}
Clearly, there is not enough vertices in $G$ for three such components. Assume then there is at most one such component; we show that $X$ is a C3WBS.
Let $C_1,C_2,\ldots,C_k$ be the connected components of $G-N[X]$ in the nonincreasing order of their sizes, that is, $|C_1| \geq |C_2| \geq \ldots \geq |C_k|$. 

Let $i_1 \geq 1$ be the minimum index such that
\[ \left| N[X] \cup \bigcup_{j=1}^{i_1} C_j \right| \geq 0.1|V(G)|. \]
If $i_1 > 1$, then $|C_{i_1}| \leq |C_1| < 0.1|V(G)|$. By the minimality of $i_1$,
\[ \left| N[X] \cup \bigcup_{j=1}^{i_1-1} C_j \right| < 0.1|V(G)|. \]
Hence, $|\bigcup_{j=1}^{i_1} C_j| \leq 0.2|V(G)|$. 
If $i_1 = 1$, then $|\bigcup_{j=1}^{i_1} C_j| = |C_1| \leq 0.5|V(G)|$. Hence, in both cases,
\[ |\bigcup_{j=1}^{i_1} C_j| \leq 0.5|V(G)|. \]

This allows us to define $i_2 > i_1$ to be the minimum index such that
\[ \left| N[X] \cup \bigcup_{j=i_1+1}^{i_2} C_j \right| \geq 0.1|V(G)|. \]
Similarly as before, if $i_2 > i_1+1$, then $|\bigcup_{j=i_1+1}^{i_2} C_j| \leq 0.2|V(G)|$.
If $i_2 = i_1+1$, then, since only $C_1$ is allowed to be of size at least $0.4|V(G)|$, we have
$|\bigcup_{j=i_1+1}^{i_2} C_j| = |C_{i_1+1}| \leq 0.4|V(G)|$.
Thus, in both cases,
\[ |\bigcup_{j=i_1+1}^{i_2} C_j| \leq 0.4|V(G)|. \]
We infer that:
\[ |N[X] \cup \bigcup_{j=i_2+1}^{k} C_j| \geq 0.1|V(G)|. \]
Hence the partition
\[ \mathcal{C}_1 = \{C_j~|~1 \leq j \leq i_1\}, \quad \mathcal{C}_2 = \{C_j~|~i_1+1 \leq j \leq i_2\}, \quad \mathcal{C}_3 = \{C_j~|~i_2+1 \leq j \leq k\} \]
proves that $X$ is a C3WBS, as desired.
\end{proof}

In the absence of a heavy vertex, we shift to a secondary branching strategy.
The secondary branching strategy:
\begin{enumerate}[label=(\arabic*)]
\item is initiated with a subproblem $\rcall$
and a set $X \subseteq \VtxActive^\rcall$ of size at most $t$ such that 
\begin{itemize}
\item $\rcall$ is not splittable;
\item $G[\VtxActive^\rcall]$ does not admit a $(10^{-8}/t)$-heavy vertex,
\item $G[X]$ is connected, and
\item every connected component of $G[\VtxActive^\rcall \setminus N[X]]$ has at most $|C_0|/2$ vertices, where $C_0$ is the largest connected component
of $G[\VtxActive^\rcall]$;
\end{itemize}
\item terminates (i.e., falls back to the primary branching strategy) if and only if when called at a subproblem $\rcall'$ that is splittable;\label{s:sec-terminate}
\item on every root-to-leaf path in the subproblem tree created by the recursion
there are $\Oh(\log^2 n^\rcall)$ success branches. (This will be shown at the end of section \ref{sec:pivot})
\end{enumerate}
Note that, in particular, 
the secondary branching strategy never forms a split node; at every node it either terminates
or selects a branching pivot, makes a branch node and successive filter and free nodes
for success branches. Hence, all subproblems in a subproblem tree created by the secondary branching strategy are of the same level.

We postpone the description of the secondary branching strategy to \cref{sec:secondary}.
Now, using it as a blackbox, we describe our primary strategy.

For a subproblem $\rcall$, the primary branching strategy makes the following decisions:
\begin{enumerate}
\item If $\rcall$ is of level $0$, make a leaf node and terminate.
\item If $\rcall$ is splittable, make a split node using \cref{lem:splittable}
and recurse on the constructed children.
\item If $G[\VtxActive^{\rcall}]$ contains a $(10^{-8}/t)$-heavy vertex $w$, 
  create a branch node $x$ and choose $w$ as the branching pivot $\pivot^x$.\label{step:b1}
\item Otherwise, let $C_0$ be the largest connected component of $G[\VtxActive^\rcall]$
(as $\rcall$ is not splittable, there is such connected component with at least $0.99|\VtxActive^\rcall|$ vertices),
  construct a set $X$ from \cref{lem:separator} for the graph $G[C_0]$
  and invoke the secondary branching strategy on $\rcall$ and $X$.\label{step:b2}
\end{enumerate}

We proceed with the analysis.
Consider the subproblem tree of the algorithm applied to the graph $G$.
The claim that every root-to-leaf path contains $\Oh(\log n)$ split nodes
is straightforward, because the root node has level $\lceil -\log_{0.99}(n+1) \rceil = \Oh(\log n)$,
the level of a child is never higher than the level of the parent, 
and the level of the split node is one higher than the level of its children.
To show the more difficult claim that every root-to-leaf path contains $\Oh(\log^3 n)$
success branches, it suffices to show that any upward path in the subproblem tree
consisting of nodes of the same level $\rlevel$ contains $\Oh(\log^2 n)$ success branches.

Consider such a maximal upward path $P$ with all nodes of level $\rlevel$. 
If the path $P$ contains a node $x$ where the branching strategy invoked the secondary branching strategy, 
then, since the secondary branching strategy terminates at a splittable node, the entire subpath $P_2$ of $P$ from $x$ downwards is contained
in the subtree corresponding to the call to the secondary branching strategy. Hence, $P_2$ contains $\Oh(\log^2 n)$ success branches. 
Let $P_1$ be the subpath of $P$ from $x$ upwards, or $P_1 = P$ if $P$ does not contain a node where the secondary branching strategy is invoked. 

It suffices to show that $P_1$ contains $\Oh(\log^2 n)$ success branches; note that these success branches correspond to branch nodes
of the primary branching strategy. We will actually show a stronger bound of $\Oh(\log n)$ success branches.

Let $n_P = |\VtxActive^{\rcall(r)}|$ where $r$ is the topmost node of $P_1$.
By the threshold at which we use split node in the primary strategy,
for every $x$ on $P_1$ except for possibly the bottom-most one, we have
\begin{equation}\label{eq:branching:large}
|\VtxActive^{\rcall(x)}| \geq 0.99n_P.
\end{equation}

Motivated by \cref{lem:success-quota}, we measure the progress using the following potential at a node $x$ on $P_1$:
\[
\mu(x) \coloneqq \sum_{\{u,v,w\} \in \binom{\VtxActive^{\rcall(x)}}{3}} 
  \log_2\left[1+ \sum_{T \in \Bc_{u,v,w}^{G[\VtxActive^{\rcall(x)}]}} \sum_{u \in V(T)} \left(1+\quota(\degord^{\rcall(x)}, u, \poslimit^{\rcall(x)}(u))\right)\right].
\]
Observe that when $y$ is a child of $x$ on $P$, we have $\mu(x)\geq \mu(y)$.

Consider a branch node $x$ with a success grandchild $y$ in $P_1$. Recall that $x$ is a branch node of the primary branching strategy and
corresponds to branching on a $(10^{-8}/t)$-heavy pivot $\pivot^x$. 
Since each of the innermost sums in the definition of $\mu$ is upper bounded by $\Oh(dt)$, by \cref{lem:success-quota} we may infer that 
$$\mu(x)-\mu(y)\geq \frac{c}{dt^3}\binom{|\VtxActive^{\rcall(x)}|}{3},$$
for some universal constant $c>0$.

On the other hand, since every bucket is of size $n^{\Oh(t)}$, we have that
$\mu(x) = \Oh(n_P^3 \log n_P)$ for every $x$ on $P_1$. 
Hence, there are $\Oh(\log n_P)$ success branches on $P_1$, as desired.

\subsubsection{Secondary branching strategy}\label{sec:secondary}

We now move to the explanation of the secondary branching strategy.

For a graph $H$, a set $C\subseteq V(H)$ such that $H[C]$ is connected, and distinct vertices $u,v \in N_H(C)$, 
a \emph{$C$-link} between $u$ and $v$ is a path $P$ in $H$ with the following properties:
\begin{itemize}
 \item $P$ has endpoints $u$ and $v$ and length at least $2$;
 \item all internal vertices of $P$ belong to $C$; and
 \item $P$ is an induced path in $H-E(H[N_H(C)])$ (i.e. $P$ is an induced path in $H$, except that we allow the existence of the edge $uv$).
 \end{itemize}
We now make a combinatorial observation that is critical to the analysis:
\begin{lemma}\label{lem:tripod4}
Let $H$ be a $C_{>t}$-free graph,
    let $X\subseteq V(H)$ be such that $H[X]$ is connected,
    and let $C$ be a connected component of $H-N[X]$.
Then every $C$-link has at most $t$ vertices.
\end{lemma}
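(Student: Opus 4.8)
The plan is to reuse and make precise the argument sketched for \cref{cl:link-short}. Let $P$ be a $C$-link in $H$ between distinct vertices $u,v \in N_H(C)$. The first step I would take is a structural observation about $N_H(C)$: since $C$ is a connected component of $H-N[X]$, no vertex of $C$ has a neighbour in $X$ — a neighbour of $C$ lying in $X\subseteq N[X]$ would contradict $C\subseteq V(H)\setminus N[X]$. Hence $N_H(C)\subseteq N(X)\setminus X$; in particular $u,v\in N(X)\setminus X$, so each of $u$ and $v$ has at least one neighbour in $X$ but is itself not in $X$. Also, the internal vertices of $P$ lie in $C$, so none of them lies in $X$, and (by the same observation) none of them is adjacent to a vertex of $X$.

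Next I would split into two cases according to whether $uv\in E(H)$. If $uv\in E(H)$, then $P$ together with the edge $uv$ is a cycle on exactly $|V(P)|$ vertices, and it is induced because $P$ is induced in $H$ apart from possibly the edge $uv$. Since $P$ has length at least $2$, this cycle has at least $3$ vertices, so $C_{>t}$-freeness of $H$ forces $|V(P)|\leq t$, as desired.

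If $uv\notin E(H)$, then, using that $H[X]$ is connected and that $u$ and $v$ each have a neighbour in $X$, I would let $Q$ be a shortest $u$–$v$ path in $H[X\cup\{u,v\}]$. Such a path exists, is induced, has all internal vertices in $X$, and (because $uv\notin E(H)$) has length at least $2$. By the first paragraph, the internal vertices of $P$ (which lie in $C$) and the internal vertices of $Q$ (which lie in $X$) are disjoint and pairwise non-adjacent, and together with the inducedness of $P$ and of $Q$ this shows that $P\cup Q$ is an induced cycle, on $|V(P)|+|V(Q)|-2\geq |V(P)|+1$ vertices. If $|V(P)|>t$, this would be an induced cycle on more than $t$ vertices, contradicting $C_{>t}$-freeness; hence $|V(P)|\leq t$.

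The only point requiring a little care — and the closest thing to an obstacle — is the verification that $P\cup Q$ has no chords in the second case: within $V(P)$ and within $V(Q)$ this is immediate from the definition of ``induced'' (here we use $uv\notin E(H)$ to upgrade the $C$-link condition to genuine inducedness of $P$), and between $V(P)\setminus\{u,v\}\subseteq C$ and $V(Q)\setminus\{u,v\}\subseteq X$ there are no edges by the first paragraph. The corner case $uv\in E(H)$ is what forces the two-case split, but it is in fact the easier of the two.
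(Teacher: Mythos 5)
Your proof is correct and follows essentially the same route as the paper: the paper also takes a shortest $u$--$v$ path $Q$ with all internal vertices in $X$ and observes that $P\cup Q$ is an induced cycle, hence of length at most $t$. Your explicit case split on whether $uv\in E(H)$ and the chord-by-chord verification merely spell out details the paper leaves implicit.
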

\begin{proof}
Let $P$ be the $C$-link in question, let $u,v$ be its endpoints, and let $Q$ be a shortest path with endpoints $u,v$ and all internal vertices in $X$; such $Q$ exists because $u,v\in N[X]$ and $H[X]$ is connected.
Then $P \cup Q$ is an induced cycle in $H$. Thus, both $P$ and $Q$ have at most $t$ vertices.
\end{proof} 

Recall that the setting of the secondary branching strategy is as follows: 
we have
a subproblem $\rcall$ and a set $X \subseteq \VtxActive^\rcall$
such that $\rcall$ is not splittable (in particular, there is a connected component $C_1$ of $G[\VtxActive^\rcall]$ of size at least $0.99|\VtxActive^\rcall|$),
$G[\VtxActive^\rcall]$ contains no $(10^{-8}/t)$-heavy vertex,
$X \subseteq C_1$, $G[X]$ is connected, $|X| \leq t$, and every connected component of $G[C_1] \setminus N[X]$ is of size at most $|C_1|/2$.
Let $n^\rcall = |\VtxActive^\rcall|$ and $K = N_{G[\VtxActive^\rcall]}[X]$. 

As $G[\VtxActive^\rcall]$ contains no $(10^{-8}/t)$-heavy vertex and $|C_1| \geq 0.99n^\rcall$, $G[C_1]$ contains no $(10^{-7}/t)$-heavy vertex. 

If $G[C_1] \setminus K$ contains no connected component with at least $0.4 \cdot |C_1|$ vertices, then by \cref{lem:sep-c3wbs} $X$ is a C3WBS of $G[C_1]$ and \cref{lem:tripodheavy}
implies that $G_1$ contains a $(10^{-6}/t)$-heavy vertex, a contradiction.
Hence, there exists a component $C_2$ of $G[C_1] \setminus K$ with at least $0.4 \cdot |C_1|$ vertices. 

Let $Y$ be the result of the application of \cref{lem:separator} to $G[C_2]$; that is, $|Y| \leq t$,  $G[Y]$ is connected, and every connected component
of $G[C_2] \setminus N[Y]$ is of size at most $0.5|C_2|$. Let $L := N_{G[C_2]}(Y)$.
We make the following two observations:
\begin{lemma}\label{lem:secondary-XYfar}
The distance, in $G[C_1]$, between $X$ and $Y$, is more than $8t$. 
\end{lemma}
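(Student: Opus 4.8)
The plan is to derive a contradiction from the assumption that $\dist_{G[C_1]}(X,Y) \leq 8t$. First I would observe that if the two connected sets $X$ and $Y$ in $G[C_1]$ were at distance at most $8t$, then by taking a shortest $X$–$Y$ path $R$ (which has at most $8t$ internal vertices) we could form the set $X' := X \cup Y \cup V(R)$. This $X'$ is connected in $G[C_1]$ (hence in $G$) and has size at most $|X| + |Y| + 8t \leq 10t = \Oh(t)$. The point of merging $X$ and $Y$ is that the resulting separator $X'$ kills both the "$N[X]$ side" and the component $C_2$ more thoroughly than either $X$ or $Y$ alone.

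Next I would analyze the connected components of $G[C_1] \setminus N[X']$. Every such component is contained in a component of $G[C_1] \setminus N[X]$: those that lie inside $C_2$ are further cut by $N[Y]$, so they have size at most $0.5|C_2|$; all the others are components of $G[C_1]\setminus N[X]$ distinct from $C_2$, hence — since $C_2$ was a component of size at least $0.4|C_1|$ and every component of $G[C_1]\setminus K = G[C_1]\setminus N[X]$ has size at most $0.5|C_1|$ — they are separated from the "big" $C_2$-mass. More precisely, each component of $G[C_1]\setminus N[X']$ has size at most $\max(0.5|C_2|, 0.5|C_1| - 0.4|C_1|) \le 0.5|C_1|$, and in fact one can argue the stronger bound that no component gets more than, say, $0.4|C_1|$ vertices, so that $X'$ fails the precondition of \cref{lem:sep-c3wbs} on $G[C_1]$. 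Then by \cref{lem:sep-c3wbs}, either $X'$ is a C3WBS of $G[C_1]$, or there are exactly two components of $G[C_1]\setminus N[X']$ each with at least $0.4|C_1|$ vertices — but that second alternative is impossible once we check the size bounds above. So $X'$ is a C3WBS of $G[C_1]$, and \cref{lem:tripodheavy} applied to the connected $C_{>t}$-free graph $G[C_1]$ with the C3WBS $X'$ of constant size $|X'|\le 10t$ yields a $(10^{-6}/(10t))$-heavy vertex in $G[C_1]$, hence a $(10^{-7}/t)$-heavy vertex in $G[C_1]$ and therefore a $(10^{-8}/t)$-heavy vertex in $G[\VtxActive^\rcall]$ (using $|C_1| \geq 0.99 n^\rcall$, exactly as in the argument just preceding the lemma). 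This contradicts the standing assumption of the secondary branching strategy that $G[\VtxActive^\rcall]$ has no $(10^{-8}/t)$-heavy vertex.

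I expect the main obstacle to be the bookkeeping of the component sizes of $G[C_1]\setminus N[X']$: one must carefully verify that merging in $Y$ (and the path $R$) genuinely produces a C3WBS and does not, for instance, leave a single component of $G[C_1]\setminus N[X']$ with more than $0.4|C_1|$ vertices — which requires combining "every component of $G[C_1]\setminus N[X]$ other than $C_2$ is small in aggregate" with "$C_2$ is chopped below $0.5|C_2|$" and then re-running the greedy partition of \cref{lem:sep-c3wbs} with the right constants. The constant $8t$ (rather than $10t$) is exactly what makes the size estimate $|X'|\le 10t$ have enough slack to apply \cref{lem:tripodheavy} with the claimed heaviness threshold; I would track the constants at the very end once the qualitative argument is in place. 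Everything else — connectivity of $X'$, the bound $|X'|=\Oh(t)$, and the final "heavy vertex contradicts our assumption" step — is routine given the lemmas already established.
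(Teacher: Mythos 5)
Your overall route is the same as the paper's: join $X$ and $Y$ by a short connecting path to form a connected set $X'$ of size $\Oh(t)$, show that $X'$ is a C3WBS of $G[C_1]$ via \cref{lem:sep-c3wbs}, and then invoke \cref{lem:tripodheavy} to produce a heavy vertex contradicting the standing assumption. (The paper runs this directly rather than by contradiction, concluding $|X'| \geq 10t$ and hence that the connecting path has at least $8t$ vertices, but that difference is cosmetic.)

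However, your component-size bookkeeping contains a genuine error. It is not true that the components of $G[C_1]\setminus K$ other than $C_2$ have size at most $0.5|C_1|-0.4|C_1|=0.1|C_1|$, and it is not true that ``no component of $G[C_1]\setminus N[X']$ gets more than $0.4|C_1|$ vertices.'' Quite the opposite is forced: since $G[C_1]$ has no $(10^{-7}/t)$-heavy vertex, $X$ is not a C3WBS (else \cref{lem:tripodheavy} applies), so \cref{lem:sep-c3wbs} gives \emph{exactly two} components of $G[C_1]\setminus K$ of size at least $0.4|C_1|$; the one different from $C_2$ is disjoint from $C_2\supseteq Y$, need not meet the connecting path, and thus survives in $G[C_1]\setminus N[X']$ with as many as $0.5|C_1|$ vertices. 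So the specific reason you give for ruling out the exceptional case of \cref{lem:sep-c3wbs} fails. What is true, and suffices, is the weaker statement that \emph{at most one} component of $G[C_1]\setminus N[X']$ has size at least $0.4|C_1|$: every component contained in $C_2$ is contained in a component of $G[C_2]\setminus N[Y]$ and hence has at most $0.5|C_2|\leq 0.25|C_1|$ vertices, while among the remaining components (each contained in a component of $G[C_1]\setminus K$ other than $C_2$) at most one can reach $0.4|C_1|$, simply because $|C_2|\geq 0.4|C_1|$ leaves room for only one more such component. This excludes the two-large-components alternative of \cref{lem:sep-c3wbs}; note also that the lemma's hypothesis (every component of $G[C_1]\setminus N[X']$ has at most $0.5|C_1|$ vertices) does hold, since these components refine those of $G[C_1]\setminus K$, so your phrase about $X'$ ``failing the precondition'' is likewise off. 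With that correction, $X'$ is a C3WBS and your heavy-vertex contradiction goes through exactly as in the paper.
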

\begin{proof}
Let $P$ be a shortest path in $G[C_1]$ between a vertex of $K$ and a vertex of $L$. 
Then, $X' := X \cup Y \cup V(P)$ is connected in $G[C_1]$.
Since every connected component of $G[C_1] \setminus K$ is of size at most $0.5 |C_1|$ but $X$ is not a C3WBS of $G[C_1]$, 
\cref{lem:sep-c3wbs} implies that there are exactly two connected components of $G[C_1] \setminus K$ of size at least $0.4|C_1|$, one of which is $C_2$.
Since every connected component of $G[C_2] \setminus N[Y]$ is of size at most $0.5|C_2| \leq 0.25|C_1|$, there is at most one connected component 
of $G[C_1] \setminus N[X']$ that is of size at least $0.4|C_1|$. \cref{lem:sep-c3wbs} implies that $X'$ is a C3WBS of $G[C_1]$. 
As $G[C_1]$ does not admit a $(10^{-7}/t)$-heavy vertex, \cref{lem:tripodheavy} implies that $|X'| \geq 10t$. 
Consequently, $|V(P)| \geq 8t$, as desired.
\end{proof}

\begin{lemma}\label{lem:secondary-comps}
Let $\mathcal{D}$ be the family of connected components of $G[C_2] \setminus L$. 
For every $D \in \mathcal{D}$, we have $N[D] \cap L \neq \emptyset$
and there exists exactly one $D_0 \in \mathcal{D}$ with $N[D_0] \cap K \neq \emptyset$.
\end{lemma}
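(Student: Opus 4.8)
The plan is to read the lemma off from a few structural facts together with \cref{lem:secondary-XYfar}. Throughout, recall that $K = N_{G[\VtxActive^\rcall]}[X] \subseteq C_1$, that $Y, L \subseteq C_2 \subseteq C_1 \setminus K$, and that $\dist$ refers to distances inside $G[C_1]$. First I would record the following, all immediate. Since $G[X]$ and $G[Y]$ are connected and $L = N_{G[C_2]}(Y)$, the graphs $G[K]$ and $G[L \cup Y]$ are connected. The sets $K$, $L \cup Y$, and the members of $\mathcal{D}$ are pairwise disjoint, and $Y$ is itself a member of $\mathcal{D}$ --- it is exactly the component of $G[C_2] \setminus L$ containing it, since every neighbour of $Y$ inside $C_2$ lies in $L$. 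By \cref{lem:secondary-XYfar} and the triangle inequality in $G[C_1]$, no vertex of $K$ is adjacent to a vertex of $L \cup Y$, and no vertex of any $D \in \mathcal{D}$ is adjacent both to $K$ and to $L$ (such a vertex would place $X$ within distance $4$ of $Y$). Finally, distinct members of $\mathcal{D}$ are non-adjacent, and every $D \in \mathcal{D} \setminus \{Y\}$ is non-adjacent to $Y$, because a vertex of $C_2$ adjacent to $Y$ lies in $L$ while $D \cap L = \emptyset$.

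For the first statement I would distinguish whether $D = Y$. If $D = Y$ then $N[D] \cap L = L$, which is nonempty provided $Y \subsetneq C_2$; I would dispose of the degenerate case $Y = C_2$ separately (then $\mathcal{D} = \{C_2\}$). If $D \neq Y$, then $D$ is a connected component of $G[C_2] \setminus (L \cup Y) = G[C_2] \setminus N_{G[C_2]}[Y]$; as $G[C_2]$ is connected and $D \subsetneq C_2$, some vertex of $D$ has a neighbour in $C_2 \setminus D \subseteq L \cup Y$, and since $D$ is non-adjacent to $Y$ this neighbour lies in $L$. For the existence of $D_0$: $G[C_1]$ is connected and $C_2 \subsetneq C_1$ (as $\emptyset \neq X \subseteq K$ while $C_2 \cap K = \emptyset$), so some edge joins a vertex $c \in C_2$ to a vertex $k \in C_1 \setminus C_2$, and $k \in K$ since $C_2$ is a component of $G[C_1] \setminus K$; then $c \notin L$, for $ck$ is not a $K$--$L$ edge, so $c$ lies in some $D_0 \in \mathcal{D}$ with $k \in N(D_0) \cap K$.

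The crux is uniqueness of $D_0$, which I would prove by contradiction: if distinct $D_0, D_1 \in \mathcal{D}$ each have a vertex with a neighbour in $K$, I would construct a long induced cycle in $G[C_1]$, contradicting $C_{>t}$-freeness. Since $K$ is non-adjacent to $Y$, neither $D_i$ is $Y$, so by the first statement each $D_i$ also has a vertex adjacent to $L$. For $i \in \{0,1\}$ let $R_i$ be a shortest path in $G[D_i]$ between $N(L) \cap D_i$ and $N(K) \cap D_i$, with endpoints $a_i \in N(L) \cap D_i$ and $v_i \in N(K) \cap D_i$; these two sets are disjoint (no vertex of $D_i$ sees both $K$ and $L$), and minimality of $R_i$ makes $a_i$ the only vertex of $R_i$ adjacent to $L$ and $v_i$ the only one adjacent to $K$. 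Let $P_K$ be a shortest $v_0$--$v_1$ path in $G[K \cup \{v_0,v_1\}]$ and $P_L$ a shortest $a_0$--$a_1$ path in $G[L \cup Y \cup \{a_0,a_1\}]$; both exist by connectivity of $G[K]$ and $G[L \cup Y]$, both have length at least $2$ (since $D_0$ is non-adjacent to $D_1$), and their interiors lie in $K$ and in $L \cup Y$ respectively. Concatenating $R_0$, $P_K$, the reverse of $R_1$, and the reverse of $P_L$ gives a cycle $O$ in $G[C_1]$ --- genuinely a cycle, since its four arcs lie in the pairwise disjoint sets $D_0$, $K$, $D_1$, $L \cup Y$.

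The main obstacle I anticipate is verifying that $O$ is \emph{induced}, i.e.\ chordless. A chord inside a single arc is impossible since each arc is a shortest path; a chord between the $D_0$-arc and the $D_1$-arc, or between the $K$-arc and the $(L \cup Y)$-arc, is excluded by the non-adjacencies recorded above; and a chord between, say, the $D_0$-arc and the $K$-arc would have to join $v_0$ (the unique vertex of $R_0$ adjacent to $K$) to an interior vertex of $P_K$ other than $v_0$'s neighbour on $O$, contradicting that $P_K$ is a shortest path in $G[K \cup \{v_0,v_1\}]$; the three remaining pairs of arcs are symmetric, using that $a_i$, $v_i$ are the unique vertices of $R_i$ adjacent to $L$, resp.\ to $K$, and that each touches only one interior vertex of $P_L$, resp.\ $P_K$. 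Granting this, $O$ contains a vertex $k \in K \subseteq N[X]$ and a vertex $\ell \in L \subseteq N(Y)$ --- take the interior vertex of $P_L$ adjacent to $a_0$, which lies in $L$ because $D_0$ is non-adjacent to $Y$ --- and the two arcs of $O$ between $k$ and $\ell$ are both paths in $G[C_1]$, so each has length at least $\dist(X,Y) - 2 > 8t - 2 > t$ by \cref{lem:secondary-XYfar}. Hence $O$ has more than $t$ vertices, contradicting that $G$ is $C_{>t}$-free, which proves uniqueness.
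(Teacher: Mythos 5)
Your proof is correct and takes essentially the same route as the paper's: existence follows from the connectivity of $G[C_2]$ and $G[C_1]$, and uniqueness is proved by linking two $K$-adjacent components through $K$ on one side and through $L\cup Y$ on the other, obtaining an induced cycle that must be longer than $t$ by \cref{lem:secondary-XYfar}. Your write-up simply fills in the chordlessness verification (and flags the non-occurring degenerate case $Y=C_2$) that the paper's terse argument leaves implicit.
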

\begin{proof}
The first claim follows from the connectivity of $G[C_2]$. 
The existence of at least one component $D_0 \in \mathcal{D}$ with $N[D_0] \cap K \neq \emptyset$ follows from the connectivity of $G[C_1]$. 

Assume now there are two components $D_0,D_1 \in \mathcal{D}$ with $N[D_i] \cap K \neq \emptyset$ for $i=0,1$. 
Let $P_X$ be a shortest path between $D_0$ and $D_1$ with internal vertices in $K$ and let $P_Y$ be a shortest path between $D_0$ and $D_1$ in $L$. 
Then, connecting the endpoints of $P_X$ and $P_Y$ via $D_0$ and $D_1$ creates an induced cycle in $G[C_1]$ and \cref{lem:secondary-XYfar} implies that this cycle is longer than $t$. 
This is the desired contradiction.
\end{proof}

\begin{figure}[tb]
\begin{center}
\includegraphics{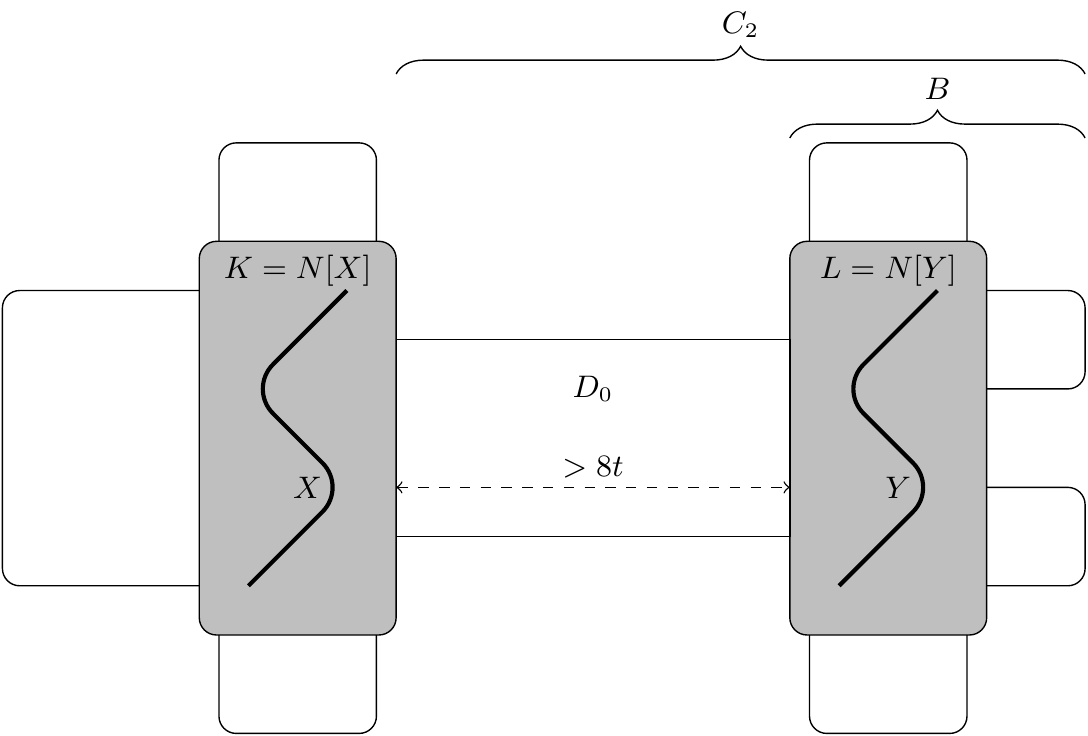}
\caption{The component $C_1$ in the scenario when the secondary branching is invoked.}\label{fig:secondary}
\end{center}
\end{figure}

Let 
$$B = L \cup \bigcup \left(\mathcal{D} \setminus \{D_0\}\right) = C_2 \setminus D_0,$$
where $\mathcal{D}$ and $D_0$ come from \cref{lem:secondary-comps}. See Figure~\ref{fig:secondary} for an illustration.
Recall that $|D_0| \leq 0.25|C_1|$ and $|C_2| \geq 0.4|C_1|$, which implies that $|D_0| \leq 0.5|C_2|$ and thus
we have $|B| \geq 0.2|C_1|$. 
Furthermore, $B$ is connected, by \cref{lem:secondary-comps} $N_{G[C_1]}(C_1 \setminus B) \subseteq L$
and by \cref{lem:secondary-XYfar}, the distance in $G[C_1]$ between $K$ and $B$ is more than $8t-2 > 7t$. 

Using the same argumentation as in the proof of \cref{lem:secondary-comps}, we have the following observation:
\begin{lemma}\label{lem:secondary-oneB}
For every $C' \subseteq C_2$, there exists at most one connected component of $G[C']$ that contains both a vertex of $B$ and a vertex adjacent to a vertex of $K$ in $G[C_1]$.
\end{lemma}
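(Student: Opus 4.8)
The plan is to argue by contradiction, in exactly the style of the proof of \cref{lem:secondary-comps}. Suppose $E_0$ and $E_1$ are two distinct connected components of $G[C']$, each containing a vertex of $B$ and a vertex with a neighbour in $K$ inside $G[C_1]$. From these data I will construct an induced cycle on more than $t$ vertices inside $G[C_1]$; since $G[C_1]$ is an induced subgraph of $G$ it is $C_{>t}$-free, so this yields a contradiction.

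I would assemble the cycle from two arcs. For the first arc, let $R$ be a shortest path in $G[C_1]$ with endpoints $e_0\in V(E_0)$ and $e_1\in V(E_1)$ and all internal vertices in $K$; such a path exists since $G[K]$ is connected (because $G[X]$ is connected and $K=N_{G[\VtxActive^\rcall]}[X]$) and both $E_0$ and $E_1$ send an edge into $K$. As $E_0$ and $E_1$ are distinct components of $G[C']$, there is no edge of $G$ between them, so $R$ has an internal vertex, which then lies in $K$. For the second arc, let $S$ be a shortest path from $e_0$ to $e_1$ in $G[V(E_0)\cup V(E_1)\cup B]$; since $E_0$ and $E_1$ are nonadjacent, $S$ must visit $B$, so $S$ contains a vertex of $B$. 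Because $K\cap C_2=\emptyset$ while $B\cup V(E_0)\cup V(E_1)\subseteq C_2$, the arcs $R$ and $S$ share only $e_0$ and $e_1$, so $C:=R\cup S$ is a cycle in $G[C_1]$.

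It remains to verify that $C$ is induced and long. For the length: $C$ contains a vertex $p\in K$ (on $R$) and a vertex $q\in B$ (on $S$), and $p\neq q$ since $K\cap B=\emptyset$; the two arcs of $C$ between $p$ and $q$ are paths from $p$ to $q$ inside $G[C_1]$, so each has length at least $\dist_{G[C_1]}(K,B)$, which is more than $7t$ by \cref{lem:secondary-XYfar} and the discussion following it; hence $C$ has more than $t$ vertices. For inducedness, the relevant facts are that $V(E_0)$ and $V(E_1)$ are nonadjacent, that $B$ and $K$ are nonadjacent (again since $\dist_{G[C_1]}(K,B)>7t$), and that $K\cap C_2=\emptyset$. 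Using these, every possible chord of $C$ is excluded by the minimality of $R$ or of $S$ — this is the chord bookkeeping carried out in the proof of \cref{lem:secondary-comps}, and I expect it to be the main obstacle of the write-up: one has to choose $R$ as a shortest $V(E_0)$-to-$V(E_1)$ path through $K$ (not a path between fixed endpoints) and $S$ as a shortest path inside $G[V(E_0)\cup V(E_1)\cup B]$, and then go through the short list of where a chord could sit. No idea beyond \cref{lem:secondary-comps} is required.
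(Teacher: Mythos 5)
Your construction is the one the paper intends (route between the two components through $K$ on one side and through $E_0\cup B\cup E_1$ on the other, and use $\dist_{G[C_1]}(K,B)>7t$ to force length), but the chord bookkeeping you defer to the end does not go through with the order of choices you fixed, and that verification is the whole content here. Concretely: the minimality of $R$ (a shortest $V(E_0)$-to-$V(E_1)$ path with interior in $K$) rules out an edge between a vertex $w\in V(S)\cap V(E_0)$ with $w\neq e_0$ and the $j$-th internal vertex $p_j$ of $R$ only when $j\geq 2$, because the rerouted competitor $w,p_j,\dots,e_1$ has length $1+\mathrm{len}(R)-j$, which is strictly shorter than $R$ only for $j\geq 2$. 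An edge from a later $E_0$-vertex of $S$ to the \emph{first} internal vertex $p_1$ (the $K$-neighbour of $e_0$ on $R$) is therefore not excluded, and the minimality of $S$, which is taken inside $G[V(E_0)\cup V(E_1)\cup B]$, says nothing about adjacencies into $K$; the symmetric situation arises at the $E_1$ end. Since nothing prevents $E_0$ from having several vertices with neighbours in $K$, and $S$ may well pass through such a vertex on its way to $B$, the cycle $R\cup S$ need not be induced, and a long non-induced cycle yields no contradiction with $C_{>t}$-freeness.

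The repair is a different order of choices rather than more cases. Let $A_i\subseteq V(E_i)$ be the vertices of $E_i$ with a neighbour in $K$ (nonempty by hypothesis), and choose $S$ \emph{first} as a shortest path from $A_0$ to $A_1$ in $G[V(E_0)\cup V(E_1)\cup B]$, with endpoints $e_0\in A_0$, $e_1\in A_1$; minimality now guarantees that no internal vertex of $S$ lying in $E_0\cup E_1$ has a neighbour in $K$ (otherwise a proper subpath of $S$ would be a shorter $A_0$--$A_1$ path), while internal vertices in $B$ are handled by the distance bound. Then choose $R$ as a shortest $e_0$--$e_1$ path with interior in $K$. Both paths are induced, their interiors are anticomplete (the interior of $R$ lies in $K$), and $e_0e_1\notin E(G)$ since $E_0,E_1$ are distinct components of $G[C']$, so $R\cup S$ is an induced cycle through $K$ and $B$, hence of length more than $t$. (Alternatively, minimizing $|V(R)|+|V(S)|$ jointly over admissible pairs also kills the problematic chord, since rerouting at $w$ shortens $S$ strictly while not lengthening $R$.) With either adjustment your argument is exactly the paper's proof of \cref{lem:secondary-oneB}, which is stated there only as ``going via $D_0$, $K$, $D_1$, and $B$'' with the same reliance on \cref{lem:secondary-XYfar}, in the style of \cref{lem:secondary-comps}.
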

\begin{proof}
If there are two such components $D_0$ and $D_1$, then going via $D_0$, $K$, $D_1$, and $B$, one obtains an induced cycle that is longer than $t$ due to \cref{lem:secondary-XYfar}.
\end{proof}

The secondary branching strategy is allowed only to make 
branch nodes (and subsequent filter and free nodes at success branches)
and it always terminates whenever the current subproblem is splittable. 
The crux is to describe the choice of the branching pivot
when the current subproblem is not splittable. 

For an induced subgraph $H$ of $G[\VtxActive^\rcall]$, 
the {\emph{chip}} in $H$ is the vertex set $C'$ of a connected component of $G[V(H) \cap C_2]$ that contains both a vertex of $B$
and a vertex adjacent to a vertex of $K$. \cref{lem:secondary-oneB} implies that $H$ contains at most one chip. 

Assume that we are now considering a subproblem $\rcall'$ extending $\rcall$. 
Observe that if $H \coloneqq G[\VtxActive^{\rcall'}]$ has no chip, then $\rcall'$ is splittable as $|C_1| \geq 0.99n^\rcall$
and $|B| \geq 0.2|C_1|$. Hence, the secondary branching strategy terminates at $\rcall'$.

If $H$ contains a chip $C'$ with $|N_H(C')| = 1$, we choose
the unique element of $N_H(C')$ as the branching pivot. 
Note that after branching on such a pivot, for every child $z$ of the current branch node,
both in the failure and in the success branches,
(the remainder of) the chip $C'$ in $G[\VtxActive^{\rcall(z)}]$ 
is in a different connected component than any vertex of $K$. 
Hence, the failure child and the success grandchildren of $x$ are splittable. 

We are left with the following case: in $H$, 
there is exactly one chip $C'$ and $C'$ satisfies~$|N_H(C')| \geq 2$. 
We define (secondary branching) buckets as follows. 
The buckets are indexed by an unordered pair $\{u,v\} \in \binom{N_H(C')}{2}$. For such a choice of $u,v$,
the bucket $\Lc_{u,v}$ contains all $C'$-links with endpoints $u$ and $v$.
\cref{lem:tripod4} ensures that every link in a bucket has at most $t$
vertices and, consequently, every bucket has size $\Oh(n^t)$ and can be enumerated
in polynomial time.
Note that $\Lc_{u,v}$ is nonempty for every $\{u,v\} \in \binom{N_H(C')}{2}$.

For $\epsilon > 0$, a vertex $x \in V(H)$ is \emph{$\epsilon$-heavy}
if $N_H[x]$ intersects strictly more than an $\epsilon$ fraction of $C'$-links in at least
an $\epsilon$ fraction of buckets. 
We prove the following.

\begin{lemma}\label{lem:tripod6}
There exists a $\frac{1}{2t}$-heavy vertex.
\end{lemma}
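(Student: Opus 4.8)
The plan is to reproduce, in the setting of $C'$-links, the averaging argument that proves \cref{lem:tripodheavy}: produce a small connected set whose closed neighbourhood meets \emph{all} links in a constant fraction of the buckets, and then average over the at most $t$ vertices of that set. The role played there by a C3WBS of $G$ will here be played by a separator of the auxiliary graph $H' \coloneqq H[N_H[C']]$ that is balanced with respect to the set $N_H(C')$ of endpoints of links.

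First I would record that $H'$ is connected (since $C'$ is connected and every vertex of $N_H(C')$ has a neighbour in $C'$) and $C_{>t}$-free (being an induced subgraph of $G$), and then apply a weighted variant of \cref{lem:separator} to $H'$ with weight $1$ on every vertex of $N_H(C')$ and weight $0$ on every vertex of $C'$. This yields a connected set $Z \subseteq V(H')$ with $|Z| \le t$ such that every connected component of $H' - N_{H'}[Z]$ contains at most $\tfrac12 |N_H(C')|$ vertices of $N_H(C')$. Since every $C'$-link $P$ has $V(P) \subseteq N_H[C'] = V(H')$, for any vertex $z$ we have $V(P) \cap N_H[z] = V(P) \cap N_{H'}[z]$, so it is harmless to work with neighbourhoods inside $H'$.

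Next I would show that $N_{H'}[Z]$ meets every link in at least half of the buckets. Put $N \coloneqq |N_H(C')| \ge 2$, so there are $\binom{N}{2}$ buckets. Call $\Lc_{u,v}$ \emph{good} if $u$ and $v$ do not lie in a common connected component of $H' - N_{H'}[Z]$; then for a good bucket every $u$--$v$ path of $H'$, in particular every link in $\Lc_{u,v}$, meets $N_{H'}[Z]$. A bucket can fail to be good only if both endpoints lie in a single component $K$ of $H' - N_{H'}[Z]$, and since $|K \cap N_H(C')| \le N/2$ one has $\binom{|K\cap N_H(C')|}{2} \le \tfrac{N-2}{4}\,|K\cap N_H(C')|$; summing over the components gives at most $\tfrac{N(N-2)}{4} \le \tfrac12\binom{N}{2}$ non-good buckets, so at least half the buckets are good (all of them when $N=2$).

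Finally I would run the standard double counting. Fix a good bucket $\Lc_{u,v}$: each of its (at least one) links is hit by $N_{H'}[z]$ for some $z\in Z$, so by pigeonhole some $z_{u,v}\in Z$ has $N_{H'}[z_{u,v}]$ meeting at least a $\tfrac{1}{|Z|}\ge\tfrac1t$ fraction of $\Lc_{u,v}$. Assigning each good bucket to such a witness $z_{u,v}$ and using $|Z|\le t$ again, some single $z^\star\in Z$ is the witness of at least $\tfrac1t\cdot\tfrac12\binom{N}{2} = \tfrac{1}{2t}\binom{N}{2}$ buckets, and for each of them $N_H[z^\star]\supseteq N_{H'}[z^\star]$ meets at least a $\tfrac1t$ fraction — hence, since $t\ge 1$, strictly more than a $\tfrac{1}{2t}$ fraction — of the links; thus $z^\star$ is $\tfrac{1}{2t}$-heavy. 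The only ingredient outside this excerpt is the weighted form of \cref{lem:separator}; the only steps needing a little care are the balancedness estimate for small $N$ (covered by the $N=2$ remark) and checking that ``at least a $\tfrac1t$ fraction'' genuinely satisfies the strict ``more than a $\tfrac{1}{2t}$ fraction'' requirement in the definition of heaviness.
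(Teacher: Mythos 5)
Your proof is correct and takes essentially the same route as the paper's: apply the (weighted, subset-balanced) variant of \cref{lem:separator} to $H'=H[N_H[C']]$ with $A=N_H(C')$, note that the closed neighbourhood of the resulting set hits all links in at least half of the buckets, and average over its at most $t$ vertices. The additional details you supply --- the explicit count of buckets with both endpoints in one component and the check that ``at least a $\tfrac1t$ fraction'' of a nonempty bucket is strictly more than a $\tfrac1{2t}$ fraction --- are accurate and only make explicit what the paper leaves implicit.
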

\begin{proof}
Let $H' \coloneqq H[N_H[C']]$; note that $C' \subseteq N_H[C'] \subseteq C' \cup K$. 
Apply \cref{lem:separator} to $H'$ with $A = N_H(C')$, obtaining a set $Y'$ of size at most $t$ such that
every connected component of $H'-N[Y']$ contains at most $|N_H(C')|/2$ vertices of $N_H(C')$.
Consequently, $N_{H'}[Y']$ intersects all links in at least half of the buckets $\Lc_{u,v}$.
We infer that there is $y \in Y'$ such that $N_{H'}[y]$ intersects at least a $\frac{1}{t}$ fraction
of links in at least $\frac{1}{2t}$ fraction of all nonempty buckets.
This completes the proof.
\end{proof}

\cref{lem:tripod6} allows us to choose a $\frac{1}{2t}$-heavy vertex as the branching pivot. 

It remains to show that with this choice of the branching pivot, the subproblem tree
generated by the secondary branching strategy has $\Oh(\log^2 n)$ success branches
on any root-to-leaf path.

As argued, for a branch node $x$, if there exists 
a chip $C'$ in $H\coloneqq G[\VtxActive^{\rcall(x)}]$ with $|N_H(C')| = 1$,
then the pivot is the unique element
of $N_H(C')$ and the secondary branching strategy terminates
both in the failure child and in all success grandchildren.
Thus, on any root-to-leaf path there is at most one such branch node.

For every other branch node $x$, the pivot is a $\frac{1}{2t}$-heavy~vertex. Let $H(x) := G[\VtxActive^{\rcall(x)}]$ and let $C'(x)$ be the chip of $H(x)$.
The \emph{secondary level} of a node $x$ is defined as
\[
\lambda(x) \coloneqq \left\lfloor \log_2\left(1 + \binom{|N_{H(x)}(C'(x))|}{2}\right)\right\rfloor.
\]
Note that the secondary level is positive
if and only if there is a nonempty bucket. 
Since at the root node there are $\Oh((n^\rcall)^2)$ nonempty buckets, there
are $\Oh(\log n^\rcall)$ possible secondary levels. 
Furthermore, during the recursion the buckets can only shrink, so the secondary level of an ancestor
is never lower than the secondary level of a descendant. 
Hence, it suffices to prove that any upward path in the subproblem tree
on which all branching nodes have the same secondary level 
contains $\Oh(\log n^\rcall)$ success branches.

To this end, consider the following potential for a node $x$.
\[
\wh{\mu}(x) \coloneqq \sum_{\{u,v\} \in \binom{N_{H(x)}(C'(x))}{2} }
\log_2\left[
\sum_{Q \in \Lc_{u,v}^{G[\VtxActive^{\rcall(x)}]}} \sum_{q \in Q} \left(1 + \quota(\degord^{\rcall(x)}, q, \poslimit^{\rcall(x)}(q))\right)\right].
\]
For every node $x$ of secondary level $\lambda$, 
there are between $2^{\lambda}-1$ and $2^{\lambda+1}-2$ nonempty buckets of $G[\VtxActive^{\rcall(x)}]$. 
Hence, for node $x$ of secondary level $\lambda$, 
$$\wh{\mu}(x) \leq (2^{\lambda+1}-2) \cdot \Oh(\log n^\rcall) = 2(2^\lambda-1) \cdot \Oh(\log n^\rcall).$$
Let $x$ be a branch node and $y$ its successful grandchild, both with the same secondary level.
Noting that the innermost sums in the definition of $\wh{\mu}$ are lower-bounded by $1$ and upper-bounded by $(d+1)t$, from the definition of a $\frac{1}{2t}$-heavy vertex and \cref{lem:success-quota} we infer that
$$\wh{\mu}(x) - \wh{\mu}(y) \geq \frac{c}{dt^3} \cdot (2^{\lambda}-1),$$
for some universal constant $c>0$.
Since the potential $\mu$ never becomes negative, it can decrease only $\Oh(\log n^\rcall)$ times
at a successful branch when the secondary level $\lambda$ is fixed. Since there are $\Oh(\log n^\rcall)$ secondary levels in total, we conclude that every root-to-leaf path contains $\Oh(\log^2 n^\rcall)$ success branches.

This completes the proof of the properties of the
secondary branching strategy and thus of \cref{lem:branching-strategy}.

\section{$C_{>t}$-free graphs of bounded degeneracy have bounded treewidth}\label{sec:deg2tw}
It is well known that if a graph $G$ has treewidth $k$, then its degeneracy it at most $k$.
However, these parameters can be arbitrarily far away from each other: for instance, 3-regular expanders have degeneracy 3 and treewidth linear in the number of vertices~\cite{DBLP:journals/jct/GroheM09}.
In this section we prove that if we restrict our attention to $C_{>t}$-free graphs, the treewidth is bounded by a function of degeneracy.
In particular, we show \cref{thm:deg2tw}.

\degistw*

Before we proceed to the proof of  \cref{thm:deg2tw}, let us recall the notion of brambles.
Recall that two sets $A,b$ are adjacent if either $A \cap B \neq \emptyset$ or there is an edge with one endpoint in $A$ and the other in $B$.
For brevity, we say that a set $A$ is adjacent to a vertex $v$ if $A$ is adjacent to $\{v\}$, i.e.,  either $v \in A$ or $v$ is adjacent to some vertex of $A$.
A \emph{bramble} of size $p$ in a graph $G$ is a collection  $\Bb = (B_1,B_2,\ldots,B_p)$ of nonempty vertex subsets such that
each $B_i$ induces a connected graph and all $B_i$s are pairwise adjacent. The sets $B_i$ are called \emph{branch sets}.
The \emph{order} of a bramble $\Bb$ is the size of a smallest set of vertices that hits all branch sets.
Observe that the size of a bramble is always at least its order.
We will use the following result of Hatzel \emph{et al.}~\cite{DBLP:journals/corr/abs-2008-02133}, which in a graph of large treewidth constructs a bramble of large order in which no vertex participates in more than two branch sets.

\begin{theorem}[Hatzel \emph{et al.}~\cite{DBLP:journals/corr/abs-2008-02133}]\label{lem:bramble}
There exists a polynomial $\mathsf{p}( \cdot)$ such that for every positive integer $k$,
every graph $G$ of treewidth at least $k$ contains a bramble $\Bb$ of order at least $\sqrt{k}/\mathsf{p}(\log k)$
such that each vertex of $G$ is in at most two branch sets of $\Bb$.
\end{theorem}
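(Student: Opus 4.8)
The plan is to route everything through \emph{well-linked sets}. Say that a vertex set $W$ in a graph $G$ is \emph{well-linked} if for every partition $W = A \cup B$ with $|A| \leq |B|$ there are $|A|$ pairwise vertex-disjoint $A$--$B$ paths in $G$. It is a standard fact of treewidth theory, provable by iterating balanced-separator arguments (and essentially the statement that treewidth, bramble order, tangle order, and largest well-linked set are all within constant factors of each other), that $\tw(G) \geq k$ implies $G$ contains a well-linked set $W$ with $|W| = \Omega(k)$. So the first move is to reduce the theorem to the purely combinatorial claim: \emph{a graph containing a well-linked set of size $h$ contains a bramble of order $\Omega(\sqrt{h}/\mathsf{p}(\log h))$ in which every vertex lies in at most two branch sets}. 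Taking $h = \Theta(k)$ then gives the statement after absorbing constants into $\mathsf{p}$.

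For the combinatorial claim, the clean target is a \emph{grid-like structure}: pairwise vertex-disjoint connected subgraphs $R_1,\dots,R_a$ (``rows'') and pairwise vertex-disjoint connected subgraphs $C_1,\dots,C_b$ (``columns'') such that every row meets every column, with $\min(a,b) = \Omega(\sqrt{h}/\mathsf{p}(\log h))$. From such a structure one reads off the bramble directly: put $\Bb = (B_1,\dots,B_{\min(a,b)})$ with $B_i \coloneqq R_i \cup C_i$. Each $B_i$ is connected since $R_i \cap C_i \neq \emptyset$; the branch sets are pairwise adjacent since $R_i \cap C_j \neq \emptyset$; each vertex lies on at most one row and at most one column, hence in at most two branch sets; and any set meeting all $B_i$ has size at least $\min(a,b)/2$, since one vertex meets at most two of the rows-and-columns. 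Producing the columns is easy: split $W$ into two halves and invoke well-linkedness to obtain a linkage of $h/2$ vertex-disjoint paths between them. All the difficulty is in the rows.

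The main obstacle is to make the rows cross a constant fraction of the columns while remaining pairwise disjoint, and this is also where the $\mathsf{p}(\log k)$ loss enters. A second linkage between an arbitrary new bipartition of $W$ gives disjoint paths but no reason for them to hit the columns at all. The remedy is to pick the sides of the new bipartition interleaved with respect to the columns, so that each column separates the two sides and is therefore crossed by every path of the new linkage; one then iterates, at each of the $\Oh(\log h)$ levels grouping columns into bundles, linking within and across bundles using the well-linkedness inherited by the column endpoints, and discarding the bundles (together with the rows running through them) that cannot be rerouted without pushing some vertex into a third branch set. Balancing the number of surviving rows against the number of surviving columns forces both to stabilize around $\sqrt{h}$, and the constant fraction discarded per level is exactly what costs the polylogarithmic factor. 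I would close with the remark that this is essentially best possible: on bounded-degree expanders, which have treewidth $\Theta(n)$, an isoperimetric/counting argument caps the order of any bramble with at most two branch sets per vertex at $\Oh(\sqrt{n}\cdot\mathrm{polylog}\,n)$, so the bound $\sqrt{k}/\mathsf{p}(\log k)$ cannot be improved beyond $\sqrt{k}$ up to polylogarithmic factors.
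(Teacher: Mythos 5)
The paper does not contain a proof of this theorem: it is quoted directly from Hatzel, Komosa, Pilipczuk, and Sorge~\cite{DBLP:journals/corr/abs-2008-02133} and used as a black box, so there is no ``paper's own proof'' to compare against. Judging your sketch on its own merits: the high-level plan matches the line of work this result comes from (Reed--Wood's grid-like minors, Kreutzer--Tazari's polylogarithmic improvement, and then Hatzel et al.), and both the reduction to a well-linked set $W$ of size $\Theta(k)$ and the final readout --- taking $B_i = R_i \cup C_i$ from a family of rows and columns in which every row meets every column, observing that each vertex lies in at most one row and one column, and concluding that the order is at least $\min(a,b)/2$ --- are correct and standard.

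The gap is in the step you yourself flag as the hard one, and your proposed fix does not work. You suggest picking the new bipartition of $W$ ``interleaved with respect to the columns, so that each column separates the two sides and is therefore crossed by every path of the new linkage.'' But a column is a single path; deleting one path from $G$ need not disconnect anything, no matter how the bipartition of $W$ is chosen, so a row linkage obtained from well-linkedness has no reason to meet any particular column. Forcing the rows to cross a constant fraction of the columns is the entire content of the theorem, and this is exactly where the cited arguments spend all of their effort: Reed and Wood use a Ramsey-type counting argument, while Kreutzer--Tazari and Hatzel et al.\ use delicate recursive partitioning with additional structure to keep the loss polylogarithmic. Your subsequent ``bundling'' paragraph rests on the same false premise and supplies no alternative mechanism for forcing incidences between rows and columns. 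As written, the sketch contains no argument that produces the grid-like structure from the well-linked set, which is the heart of the proof.
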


We now proceed to the proof of \cref{thm:deg2tw}.
Without loss of generality we may assume that $t$ is even, $t\geq 4$, and $d \geq 2$.
For contradiction, suppose that $G$ is a $C_{>t}$-free graph with degeneracy at most $d$ and treewidth larger that 
$$k \coloneqq \left(500~000 \cdot d^2 t^5\right)^{4t+4} \cdot \left[\mathsf{p}\left(\log \left((500~000 \cdot d^2 t^5)^{4t+4} \right)\right)\right]^4,$$ where $\mathsf{p}( \cdot )$ is the polynomial provided by \cref{lem:bramble}.
Thus, by applying \cref{lem:bramble} to $G$
we obtain a bramble $\Bc = (B_1,B_2,\ldots,B_p)$ of order
\[p > \frac{\sqrt{k}}{\mathsf{p}(\log k)} \geq \left(500~000 \cdot d^2 t^5\right)^{2t+2}.\]

Note that we can assume that each branch set of $\Bc$ is inclusion-wise minimal (subject to $\Bc$ being a bramble),
as otherwise we can remove some vertices from branch sets.
Therefore, for each branch set $B_i$ and each vertex $v$ of $B_i$,
either there is some branch set $B_j$ which is adjacent to $v$ but nonadjacent to $B_i \setminus \{v\}$, or
$v$ is a cutvertex in $G[B_i]$ and its role is to keep the branch set connected.

\begin{claim} \label{cl:diameter}
For each $i \in [p]$, and all $u,v \in B_i$, the distance between $u$ and $v$ in $G[B_i]$ is at most $t$.
\end{claim}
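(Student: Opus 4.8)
The plan is to prove \cref{cl:diameter} by contradiction: suppose some branch set $B_i$ contains two vertices $u,v$ whose distance in $G[B_i]$ exceeds $t$. I would first take a shortest $u$--$v$ path $P$ inside $G[B_i]$; by assumption $P$ has more than $t$ edges, hence more than $t+1$ vertices, and since it is a shortest path it is an induced path in $G$. The goal is then to close $P$ into an induced cycle on more than $t$ vertices, contradicting $C_{>t}$-freeness of $G$.

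The natural way to close the path is to use another branch set. Here is where the inclusion-wise minimality of the $B_j$'s (established just before the claim) comes in. Look at the endpoint $u$ of $P$. If $u$ were not a cutvertex of $G[B_i]$ and no other branch set $B_j$ were ``attached only at $u$'', then $u$ could be deleted from $B_i$ while keeping $\Bc$ a bramble, contradicting minimality. So there is some branch set $B_j$ that is adjacent to $u$ but nonadjacent to $B_i \setminus \{u\}$ — in particular, $B_j$ avoids all of $P$ except possibly the vertex $u$, and actually $B_j$ does not contain $u$ either (or $B_j$ would be adjacent to $B_i\setminus\{u\}$ via the neighbor of $u$ on $P$, unless $B_i$ is a single edge, which is a small case). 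Symmetrically, there is $B_{j'}$ attached to $B_i$ only at $v$. I would need a short argument that we may pick a single branch set attached only at $u$, or handle the case $B_j = B_{j'}$ versus $B_j \neq B_{j'}$; in the latter case $B_j$ and $B_{j'}$ are adjacent to each other (as all branch sets are), and $B_j \cup B_{j'}$ together with $P$ gives the cycle; in the former case $B_j$ alone does.

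Concretely: since $B_j$ is connected, adjacent to $u$, and nonadjacent to every other vertex of $P$, one can route a shortest path $R$ inside $G[B_j \cup \{u\}]$ (or $G[B_j \cup B_{j'} \cup \{u,v\}]$) from $u$ back to $v$; taking $R$ shortest makes it induced, and by the nonadjacency of $B_j$ (and $B_{j'}$) to the internal vertices of $P$, the union $P \cup R$ is an induced cycle. Its length is at least $|E(P)| + 1 > t$, since $P$ already has more than $t$ edges and $R$ contributes at least one more vertex. This is the desired contradiction. The main obstacle I anticipate is the bookkeeping around which branch set to use and the degenerate small cases — when $B_i$ is a single vertex or edge (then the distance bound is trivial), when the branch set attached at $u$ happens to contain $u$ or to also be attached at $v$, and making sure that after picking $R$ shortest it genuinely meets $P$ only at $u$ and $v$. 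All of these are routine once one fixes the right formulation, but they require care; the cleanest route is probably to argue directly that $B_i$ being a \emph{minimal} connected set containing its ``attachment vertices'' forces small diameter, rather than tracking individual branch sets, though the claim as stated should follow from either approach.
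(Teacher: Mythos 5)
Your overall plan is the paper's plan, but there is a genuine gap at the crucial step where you produce the external branch sets. From the minimality of the bramble you only get the dichotomy you yourself state: for a vertex $u\in B_i$, \emph{either} $u$ is a cutvertex of $G[B_i]$, \emph{or} some branch set is adjacent to $u$ and nonadjacent to $B_i\setminus\{u\}$. You then conclude ``so there is some branch set $B_j$ attached only at $u$'', silently discarding the cutvertex alternative. For an \emph{arbitrary} pair $u,v$ at distance more than $t$ this alternative really can occur: take $G[B_i]$ to be a long path in which $u$ and $v$ are internal vertices and all other branch sets attach only near the two far ends; then both $u$ and $v$ are cutvertices of $G[B_i]$ and no branch set is attached only at $u$ (or only at $v$), so your construction of the closing path $R$ never gets off the ground. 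Your closing remark that minimality alone ``forces small diameter'' does not repair this: an inclusion-wise minimal connected set containing prescribed attachment vertices is a Steiner-tree-like object and can have arbitrarily large diameter, so the external branch sets and $C_{>t}$-freeness are genuinely needed.

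The paper closes exactly this hole by choosing $u,v$ to be a pair realizing the \emph{maximum} distance in $G[B_i]$ (which still exceeds $t$ under your contradiction hypothesis) and observing that such endpoints are never cutvertices: if $u$ were a cutvertex, any vertex $w$ separated from $v$ by $u$ would satisfy $\dist(v,w)=\dist(v,u)+\dist(u,w)>\dist(u,v)$, contradicting maximality. With that choice, the dichotomy yields branch sets $B_u$ and $B_v$ attached only at $u$ and only at $v$ (necessarily distinct, since a set attached only at $u$ is nonadjacent to $v$), and the rest of your argument — $B_u\cup B_v$ is connected and nonadjacent to $B_i\setminus\{u,v\}$, a shortest $u$--$v$ path through it is induced and meets the shortest $u$--$v$ path inside $B_i$ only at $u,v$, giving an induced cycle on more than $t$ vertices — is exactly the paper's proof. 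So the fix is small, but as written the proof does not go through.
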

\begin{claimproof}
For contradiction, suppose that there is $B_i$ violating the claim.
Let $u,v$ be the vertices at maximum distance in $G[B_i]$, by assumption this distance is at least $t+1$.
As $u$ and $v$ are the ends of a maximal path in $G[B_i]$, none of them is a cutvertex in $G[B_i]$.
Thus there is a branch set $B_u$ which is adjacent only to $u$ in $B_i$, and another branch set which is adjacent only to $v$ in $B_i$.
Recall that $B_u \cup B_v$ is connected and nonadjacent to $B_i \setminus \{u,v\}$.
So by concatenating a shortest $u$-$v$-path in $B_i$ and a shortest $u$-$v$-path in $B_u \cup B_v$, we obtain an induced cycle with at least $t+1$ vertices, a contradiction.
\end{claimproof}

Let $G'$ be the lexicographic product $G \bullet K_2$: the graph obtained from $G$ by introducing, for each $x \in V(G)$, a copy $x'$ of $x$ and making it adjacent to $x$, all neighbors of $x$, and all their copies. Note that in $G'$, the copy $x'$ is a true twin of $x$.
Observe also that the degeneracy of $G'$ is at most $2d+1$: we can modify a $d$-degeneracy ordering of $G$ into a $(2d+1)$-degeneracy ordering of $G'$ by inserting each vertex $x'$ immediately after $x$.

\begin{claim}\label{cl:minor}
The graph $G'$ contains $K_p$ as a depth-$t$ minor.
\end{claim}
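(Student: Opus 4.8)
The plan is to turn the bramble $\Bc=(B_1,\ldots,B_p)$ obtained from \cref{lem:bramble} directly into a depth-$t$ minor model of $K_p$. In $G$ itself the branch sets $B_i$ are connected and pairwise adjacent, which is almost what a $K_p$-minor model needs; the only obstruction is that the $B_i$ may overlap. Since \cref{lem:bramble} guarantees that every vertex of $G$ lies in at most two branch sets, each overlap can be resolved inside $G'=G\bullet K_2$ by routing a shared vertex to one branch set and its true twin to the other. The bound on the radius will come for free from \cref{cl:diameter}.

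\textbf{Construction.} For $v\in V(G)$ let $I(v)=\{i : v\in B_i\}$, so $|I(v)|\le 2$. I would define vertex sets $\widehat{B}_1,\ldots,\widehat{B}_p$ in $V(G')$ as follows: if $I(v)=\{i\}$, put $v$ into $\widehat{B}_i$; if $I(v)=\{i,j\}$ with $i<j$, put $v$ into $\widehat{B}_i$ and its copy $v'$ into $\widehat{B}_j$; if $I(v)=\emptyset$, use neither copy of $v$. By construction each of $v,v'$ lands in at most one of the sets $\widehat{B}_i$, so $\widehat{B}_1,\ldots,\widehat{B}_p$ are pairwise disjoint, and each is nonempty because each $B_i$ is.

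\textbf{Verification.} Two things remain to be checked. First, connectivity and radius: for each $i$, the map $\widehat{B}_i\to B_i$ sending $v\mapsto v$ and $v'\mapsto v$ is a bijection, and because $v$ and $v'$ are true twins in $G'$ it is in fact an isomorphism from $G'[\widehat{B}_i]$ onto $G[B_i]$; hence $G'[\widehat{B}_i]$ is connected and, by \cref{cl:diameter}, has diameter at most $t$, so radius at most $t$. Second, adjacency: fix $i\ne j$. Since $B_i$ and $B_j$ are adjacent, either they share a vertex $v$ — which then lies in exactly these two branch sets, so one of $v,v'$ is in $\widehat{B}_i$ and the other in $\widehat{B}_j$, and $vv'\in E(G')$ — or there is an edge $uv\in E(G)$ with $u\in B_i$, $v\in B_j$, in which case $u,u',v,v'$ induce a $K_4$ in $G'$, so whichever copies of $u$ and $v$ were placed in $\widehat{B}_i$ and $\widehat{B}_j$ are adjacent. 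In both cases there is an edge of $G'$ between $\widehat{B}_i$ and $\widehat{B}_j$. This shows that $(\widehat{B}_1,\ldots,\widehat{B}_p)$ is a minor model of $K_p$ in $G'$ of depth $t$.

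\textbf{Main difficulty.} There is no genuine obstacle here; the argument is short. The one point that deserves care is the observation that true-twinness forces $G'[\widehat{B}_i]\cong G[B_i]$ no matter which of the two copies of each vertex was selected — this is exactly what lets \cref{cl:diameter} bound the radius and what makes the adjacency check succeed in the ``edge'' case, and it is also the reason passing to $G'$ (rather than working in $G$) is needed at all.
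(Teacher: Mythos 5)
Your proof is correct and follows essentially the same route as the paper: resolve each doubly-covered vertex by sending one copy to each of its two branch sets (the paper does this by iteratively swapping $x$ for $x'$ in one of the sets), note that true-twinness makes each modified branch set induce a graph isomorphic to the original so connectivity, pairwise adjacency, and the radius bound from \cref{cl:diameter} carry over. Your explicit verification of the adjacency cases is just a spelled-out version of what the paper leaves implicit.
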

\begin{claimproof}
We construct a family $\Bc' = (B'_1,B'_2,\ldots,B'_p)$ as follows.
We start with $B'_i \coloneqq B_i$ for all $i\in [p]$ and we iteratively inspect every vertex $x$ of $G$. If $x$ belongs to more than one of the sets $\{B_1,\ldots,B_p\}$, then, by the properties given by \cref{lem:bramble}, $x$ must belong to exactly two of them, say $x\in B_i\cap B_j$ for some $i\neq j$. Then replace $x$ with $x'$ in $B_j'$, thus making $B_i'$ and $B_j'$ not overlap on $x$. 

It is clear that once this operation is applied to each vertex of $G$, the resulting sets of $\Bc'$ are pairwise disjoint and pairwise adjacent.
Further, for each $i \in [p]$ the graph $G'[B'_i]$ is isomorphic to $G[B_i]$, as we only replaced some vertices by their true twins, so in particular $G'[B'_i]$ is connected.
Therefore, $\Bc'$ is a minor model of a clique of order $p$ in $G'$. By \cref{cl:diameter}, the radius of each graph $G'[B'_i]$ is at most~$t$, hence this model has depth at most $t$.
\end{claimproof}

The next result binds the maximum size of a bounded-depth clique minor and the maximum size of a bounded-depth topological clique minor that can be found in a graph.
It is a fairly standard fact used in the sparsity theory; for the proof, see e.g.~\cite[Lemma 2.19 and Corollary 2.20]{sparsity-notes}.

\begin{proposition} \label{prop:topominor}
Let $G$ be a graph and let $t,p,p'$ be integers such that $p \geq 1+(p'+1)^{2t+2}$.
If $G$ contains $K_p$ as a depth-$t$ minor, then $G$ contains $K_{p'}$ as a depth-$(3t+1)$ topological minor.
\end{proposition}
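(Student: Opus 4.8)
This is a standard fact in sparsity theory, and the plan is to reprove it by the routing argument underlying~\cite[Lemma~2.19 and Corollary~2.20]{sparsity-notes}. Fix a depth-$t$ minor model of $K_p$ in $G$ with branch sets $A_1,\dots,A_p$; for each $i$ fix a vertex $r_i\in A_i$ witnessing that $G[A_i]$ has radius at most $t$, together with a BFS tree $T_i$ of $G[A_i]$ rooted at $r_i$ (so $T_i$ has depth at most $t$). Since the model realizes $K_p$, any two branch sets are joined by an edge of $G$. The goal is to single out $p'$ of the branch sets as \emph{hubs}, to pick one vertex $v_a$ inside the $a$-th hub to serve as a branch vertex of a topological $K_{p'}$, and to route the $\binom{p'}{2}$ edges of $K_{p'}$ through short paths, each of which leaves one hub, passes through a single further branch set used as a private \emph{relay} for that edge, and enters the second hub; there are $p-p'$ branch sets available to supply the $\binom{p'}{2}\leq p-p'$ relays.

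Granting such a configuration, the depth bound is mechanical: the path realizing an edge $\{a,b\}$ runs inside $T_{i_a}$ from $v_a$ to an exit vertex incident to the relay (length at most $2t$, since a path in a depth-$t$ rooted tree goes up to a common ancestor and down), then one edge into the relay branch set, then inside a BFS tree of the relay between its two attachment vertices (length at most $2t$), then one edge out, then inside $T_{i_b}$ from the entry vertex to $v_b$ (length at most $2t$); in total the length is at most $6t+2=2(3t+1)$, comfortably below the $2(3t+1)+1$ allowed for a depth-$(3t+1)$ topological minor. The realizing paths are pairwise disjoint apart from shared endpoints because the relays are pairwise distinct and disjoint from the hubs, distinct branch sets are vertex-disjoint, and, crucially, for each fixed hub $A_{i_a}$ the at most $p'-1$ subpaths used inside $T_{i_a}$ are chosen to be pairwise internally disjoint (meeting only in $v_a$).

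The substantive part, and the step I expect to be the main obstacle, is producing such a configuration from the assumption $p\geq 1+(p'+1)^{2t+2}$. The plan here is an iterated pigeonhole over the $p$ branch sets: one repeatedly selects a not-yet-used branch set to become a hub together with a vertex $v_a$ in it that admits many pairwise internally disjoint paths, of length at most $2t$, to exit vertices each having an edge into some still-unused branch set; those unused branch sets are then earmarked as relays for the edges incident to the new hub, and one finally matches up, for each pair of hubs, a common relay with consistent exit vertices on both sides. Since a path inside a depth-$t$ BFS tree has at most $t+1$ vertices, and since such a path is spent on each of the two sides of every relay, the number of branch sets that must remain in reserve for this process to succeed is controlled by a quantity of the order of $(p'+1)^{2t+2}$ --- precisely the bound in the hypothesis. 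The delicate point is to guarantee the simultaneous internal disjointness inside every hub while keeping enough branch sets free to serve as relays; this combinatorial bookkeeping is exactly what forces the exponential-in-$t$ dependence, and it is carried out in detail in~\cite{sparsity-notes}.
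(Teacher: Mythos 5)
The paper gives no proof of this proposition of its own --- it is invoked as a standard fact with a pointer to \cite[Lemma 2.19 and Corollary 2.20]{sparsity-notes} --- and your write-up ultimately does the same: the routing and depth arithmetic you supply (paths of length at most $6t+2\leq 2(3t+1)+1$, relays disjoint from hubs and from each other) is correct but easy, while the single genuinely hard step --- choosing the hubs, the branch vertices $v_a$, and the relays so that the at most $p'-1$ tree paths inside each hub are pairwise internally disjoint and the relays are distinct and consistent across all pairs, which is the only place the hypothesis $p\geq 1+(p'+1)^{2t+2}$ enters --- is precisely what you delegate to that same reference. So measured against the paper's treatment your proposal is essentially identical (a citation plus a sketch of the routing), but be aware that as a self-contained proof it would be incomplete, since the deferred selection argument is the entire content of the statement.
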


By combining \cref{cl:minor} and \cref{prop:topominor}, we conclude that $G''$ contains $K_{p'}$ as a topological depth-$(3t+1)$ minor, where
\[p' \coloneqq \left \lfloor \frac{p^{\frac{1}{2t+2}}}{4} \right \rfloor \geq  100~000 \cdot d^2\cdot t^5.\]

Fix some topological depth-$(3t+1)$ minor model of $K_{p'}$ in $G'$.
Let $R$ be the set of roots of the minor model and consider the graph $G'[R]$.
It has $p'$ vertices and, as a subgraph of $G'$, is $(2d+1)$-degenerate.
Therefore, there is an independent set $R'$ in $G'[R]$ of size at least 
\[
p'' \coloneqq \left \lceil \frac{p'}{2d+2} \right \rceil \geq \frac{100~000 \cdot d^2 \cdot t^5}{2d+2} \geq 20~000 \cdot d \cdot t^5.
\]
Observe that restricting our minor model only to the roots that are in $R'$ and paths incident to them gives us a topological depth-$(3t+1)$ minor model of $K_{p''}$ with the additional property that the roots are pairwise nonadjacent.

Let $H$ be the subgraph of $G'$ induced by the vertices used by the topological minor model obtained in the previous step.
Let $X$ be the set of vertices of $H$ with degree larger than $200 \cdot d \cdot t^2$, which are not roots.
Since $H$ is $(2d+1)$-degenerate, we observe that 
\[|X| \leq \frac{(2d+1)|V(H)|}{100 \cdot dt^2} \leq \frac{(2d+1)(6t+3) \binom{p''}{2}}{100 \cdot dt^2} \leq \frac{20}{100t} \binom{p''}{2} = \epsilon \cdot \binom{p''}{2}, \text{ where } \epsilon \coloneqq \frac{1}{5t}.\]

Let $H'$ be obtained from $H$ by removing all vertices in $X$, along with all paths from the topological minor model which contain a vertex from $X$.
Note that thus, we have removed at most $\epsilon \cdot \binom{p''}{2}$ paths.

Observe that $H'$ still contains a depth-$(3t+1)$ topological minor model of some graph $Z$ with $p''$ vertices and at least 
\[\binom{p''}{2} - |X| \geq \binom{p''}{2} - \epsilon \binom{p''}{2} = (1-\epsilon)\binom{p''}{2} \]
edges. Thus, the average degree of a vertex in $Z$ is at least $(1-\epsilon)(p''-1)$.

Let $\Wc = (v_0, v_1, \ldots, v_{t/2})$ be a sequence of vertices of $Z$, chosen independently and uniformly at random. 
In what follows, all arithmetic operations on the indices of the vertices $v_i$ are computed modulo $t/2+1$, in particular $v_{t/2+1} = v_0$. 

We prove that with positive probability, $\Wc$ has the following four properties:
\begin{enumerate}[label=(P\arabic*),ref=(P\arabic*),leftmargin=*]
\item The vertices $v_i$ are pairwise distinct.\label{prop:distinct}
\item For every $0 \leq i \leq t/2$, $v_iv_{i+1}$ is an edge of $Z$; let $P_i$ be the corresponding path in $H'$. \label{prop:closed}
\item For every $0 \leq i \leq t/2$ and $0 \leq j \leq t/2$ such that $j \notin \{i, {i+1}\}$, 
 the internal vertices on the path $P_i$ are anti-adjacent to $v_j$. \label{prop:anti1}
\item For all $0 \leq i < j \leq t/2$, the internal vertices of $P_i$ are anti-adjacent to the internal vertices of $P_j$.\label{prop:anti2}
\end{enumerate}
Observe that these four properties imply that the concatenation of all paths $P_i$ is a hole of length more than $t$ in $G'$ (recall here that the roots of the minor model are independent in $H'$). The assumption that $G$ is $C_{>t}$-free implies that $G'$ is $C_{>t}$-free as well, hence this will be a contradiction.

For~\ref{prop:distinct}, since $p'' \geq 20~000 \cdot d \cdot t^5$, by the union bound the probability that $v_i = v_j$ for some $i \neq j$ is at most $\binom{t}{2}/p'' <0.1$. 


For~\ref{prop:closed}, since $v_i$ and $v_{i+1}$ are independently chosen vertices, and $Z$ has at least $(1-\epsilon)\binom{p''}{2}$ edges, 
the probability that $v_iv_{i+1}$ is not an edge of $Z$ is bounded by $\epsilon = \frac{1}{5t}$.
By the union bound, the probability that \ref{prop:closed} does not hold is bounded by $\epsilon \cdot (t/2+1) \leq 0.2$.

For~\ref{prop:anti1}, fix $0 \leq i \leq t/2$ and assume $v_iv_{i+1} \in E(Z)$ so that $P_i$ is defined. 
Then, the total number of neighbors of the internal vertices of $P_i$ is bounded by $(6t+3) \cdot 200 \cdot d \cdot t^2 \leq 2000 \cdot d \cdot t^3$. 
Since $v_j$ is a vertex of $V(Z)$ chosen at random independently of the choice of $v_i$ and $v_{i+1}$, the probability
that $v_j$ is among these neighbors is bounded by $2000 \cdot dt^3 / p'' \leq 0.1/t^2$. 
By the union bound, \ref{prop:closed} holds but \ref{prop:anti1} does not hold with probability at most $t(t-2) \cdot \frac{0.1}{t^2} \leq 0.1$.

For~\ref{prop:anti2}, fix $0 \leq i < j \leq t/2$.
Note that it may be possible that $i+1 = j$ or $j+1 = i$ (cyclically modulo $t/2+1$), but not both. 
Hence, by symmetry between $i$ and $j$, assume that the choice of $v_{j+1}$ is independent of the choices of $v_i$, $v_{i+1}$, and $v_j$. 
Assume that $v_iv_{i+1} \in E(Z)$ so that $P_i$ is defined. As in the previous paragraph,
there are at most $2000dt^3$ neighbors in $H'$  of the internal vertices of $P_i$.
There are $p'' = |V(Z)|$ choices for $v_{j+1}$, all of them leading to either $v_jv_{j+1} \notin E(Z)$ or to vertex-disjoint (except for $v_j$) choices of the path $P_j$.
Hence, for at most $2000dt^3$ of these choices, we have $v_jv_{j+1} \in E(Z)$ but there is an edge between an internal vertex of $P_j$ and an internal vertex of $P_i$. 
By the union bound, \ref{prop:closed} holds but \ref{prop:anti2} does not hold with probability less than $\binom{t/2+1}{2} \cdot \frac{2000dt^3}{p''} \leq \binom{t/2+1}{2} \cdot \frac{2000dt^3}{20~000 dt^5} \leq 0.1$. 

By the union bound over all the above cases, $\Wc$ satisfies all properties \ref{prop:distinct}--\ref{prop:anti2} with probability at least $1 - 0.1 -  0.2 - 0.1 - 0.1 = 0.5$. 
This gives the desired contradiction and completes the proof.


\section{\msotwo and $C_{> t}$-free graphs}\label{sec:longhole}
In this section we prove \cref{thm:main}
using the branching strategy described in \cref{sec:branching}.

\subsection{Extending subproblems}

Fix integers $k \geq 1$ and $t \geq 3$, \cmsotwo sentence $\phi$, and a $C_{>t}$-free graph $G$.
Let $q$ be the maximum of the quantifier rank of $\phi$, the
quantifier rank of $\phi_{\tw < k}$, where $\phi_{\tw < k}$ comes from \cref{lem:mso-tw-formula}, and $4$ (so that we can use \cref{lem:msotwo-connected}).
Let $p$ be the largest modulus used in $\phi$, or $0$ if $\phi$ does not contain any modular atomic expression.
We also denote $k' \coloneqq 6k$. In what follows, we will be using \cmsotwo types $\msotypes^{k',p,q}$, as given by \cref{lem:mso-type}. Hence, by ``\cmsotwo types'' we mean elements of $\msotypes^{k',p,q}$, and we drop the super- or subscript $k',p,q$ when it is clear from the context. 

Recall that every graph of treewidth less than $k$ is $(k-1)$-degenerate.
Hence, we will use the branching strategy provided \cref{lem:branching-strategy} for $d \coloneqq k-1$ and $t$. More precisely, the algorithm executes the branching, i.e. decides on the types of nodes, chooses branching pivots, etc., exactly as prescribed by the strategy given by \cref{lem:branching-strategy}. However, we enrich the subproblems with some additional piece of information, which intuitively encodes a skeleton of a decomposition of the subgraph induced by the constructed solution.
From the recursive subcalls, we expect returning quite an elaborate result: intuitively, optimum solutions to the subproblem of every possible \cmsotwo type.
As we will check at the end, the properties asserted by \cref{lem:branching-strategy} ensure a quasipolynomial running time bound. 

Let us start by formally augmenting the notion of a subproblem
with extra information. 
At every leaf, split, and branch node $x$, the subproblem $\rcall \coloneqq \rcall(x)$ contains additionally the following:
\begin{enumerate}
\item a tree decomposition $(T^\rcall,\beta^\rcall)$ of $G[\VtxTaken^\rcall]$
with maximum bag size at most $k' = 6k$ and the root node having an empty bag;
\item a labelling $\iota^\rcall \colon \VtxTaken^\rcall \to [k']$ such that
for every $a \in V(T^\rcall)$, $\iota^\rcall$ restricted to $\beta^\rcall(a)$ is injective.
\end{enumerate}
We call such a subproblem an \emph{extended subproblem}. Leaf, split, and branch nodes are called \emph{extendable~nodes}.

For such an extended subproblem $\rcall$, 
a set $S \subseteq \VtxActive^\rcall$ is a \emph{feasible solution}
if for every connected component $C$ of $G[S]$,
we have $|N_G(C) \cap \VtxTaken^\rcall| \leq 4k$ and there exists a 
node $a \in V(T^\rcall)$ such that $N_G(C) \cap \VtxTaken^\rcall \subseteq \beta^\rcall(a)$.
Observe that from the properties of a tree decomposition it follows
that the family of those nodes $a \in V(T^\rcall)$ for which
$N_G(C) \cap \VtxTaken^\rcall \subseteq \beta^\rcall(a)$ is a connected
subtree of $T$. By $a^\rcall(C)$ we denote the highest such node $a$.

For an extended subproblem $\rcall$, a
\emph{type assignment} is a function
$\typetree \colon V(T^\rcall) \to \msotypes$.
A feasible solution $S$ is \emph{of type $\typetree$} if
for every $a \in V(T^\rcall)$, 
the subgraph of $G$ induced by
$$ \extS^\rcall(a, S) \coloneqq \beta^\rcall(a) \cup \bigcup \{C \in \cc(G[S])~|~a^\rcall(C) = a\} $$
equipped with the labelling $\iota^\rcall|_{\beta^\rcall(a)}$ on boundary $\beta^\rcall(a)$
is of \cmsotwo type $\typetree(a)$. 

The branching strategy will return, at every extendable node $x$,
for every type assignment
$\typetree$ at~$x$, a feasible solution $\Stab[x,\typetree]$ of type $\typetree$.
We allow $\Stab[x,\typetree] = \bot$, indicating that no such feasible solution was found,
and we use the convention that the weight of $\bot$ is $-\infty$.
In the algorithm description the following operation will be useful when defining
$\Stab[x,\cdot]$ for a fixed node $x$: given a current state of the table $\Stab[x,\cdot]$
and a feasible solution $S$ of type $\typetree$, \emph{updating $\Stab[x,\typetree]$ with
$S$} is an operation that sets $\Stab[x,\typetree] \coloneqq S$ if the weight of $S$
is larger than the weight of the former value of $\Stab[x,\typetree]$.

In the root of the recursion $r$, the considered extension is the trivial one: the tree decomposition $(T^{\rcall(r)},\beta^{\rcall(r)})$ 
consists of a single node $a^r$ with an empty bag, and labelling $\iota^{\rcall(r)}$ is empty. 
After the computation if finished, we iterate over all type assignments $\typetree$ for $r$
for which $\typetree(a^r) \in \msotype[\phi] \cap \msotype[\phi_{\tw < k}]$
and return the set $\Stab[r,\typetree]$ of maximum weight found (ignoring values $\bot$). 
If no such set $\Stab[r,\typetree]$ is found (all values are $\bot$), the algorithm returns
that there is no such set $S$.

Note that, assuming the recursive strategy indeed maintains the invariant that
$\Stab[x,\typetree]$ is of type $\typetree$, for the returned set $S$ 
we have $G[S] \models \phi$ and $G[S]$ is of treewidth less than $k$. 

\subsection{Extended computation at nodes of the subproblem tree}

Recall that now, the setting is that each extendable node (lead, split, or branch node) is assigned an extended subproblem.
We need to describe (a) how at each node we handle the extended subproblem and what extended subproblems are passed down the subproblem tree; (b) what is the subroutine for handling free nodes; and (c) how the tables $\Stab[\cdot,\cdot]$ are computed along the recursion. As before, each type of a node is handled differently.

\subsubsection{Leaf nodes}

For a leaf node $x$, there is little choice the algorithm could do.

We have $\VtxActive^{\rcall(x)} = \emptyset$, so the only feasible solution is $S = \emptyset$.
There is exactly one type assignment $\typetree$ such that for every $a \in V(T^{\rcall(x)})$ we have that $G[\extS^{\rcall(x)}(a,\emptyset)]$ 
with the labelling $\iota^\rcall|_{\beta^\rcall(a)}$ is of type $\typetree(a)$.
For this labelling, we set $\Stab[x,\typetree] = \emptyset$
and for every other type assignment $\typetree'$ we set $\Stab[x,\typetree'] = \bot$.

\subsubsection{Split nodes}

Let $x$ be a split node. We do not use the power of free nodes below split nodes;
formally, they are dummy free nodes as in the proof of \cref{thm:deg-branching}
that only pass the subproblem to their  single child.
Therefore, in what follows we speak about grandchildren of a split node.

If $x$ has one grandchild $y$, then $\rcall(y)$ differs from $\rcall(x)$ only by its level.
Therefore, given an extension of $\rcall(x)$, we pass the same extension to the grandchild $y$ and for the return value at $x$,
we copy the result returned by the grandchild $y$.

The situation is more interesting if $x$ has two grandchildren $y_1$ and $y_2$. Recall that then $\VtxActive^{\rcall(y_1)}$ and $\VtxActive^{\rcall(y_2)}$ is a partition of $\VtxActive^{\rcall(x)}$, while $\VtxTaken^{\rcall(y_1)}=\VtxTaken^{\rcall(y_2)}=\VtxTaken^{\rcall(x)}$.
Given a subproblem extension of $\rcall(x)$, we pass it without modifications to both $y_1$ and $y_2$.
To compute $\Stab[x,\cdot]$ based on $\Stab[y_1,\cdot]$ and $\Stab[y_2,\cdot]$, we proceed as follows. 

First, we initiate $\Stab[x,\typetree] = \bot$ for every tree assignment $\typetree$ at $x$.
Then, we iterate over all pairs $\typetree_1$ and $\typetree_2$ of type assignments
for $\rcall(x)$ such that both $\Stab[y_1,\typetree_1] \neq \bot$ and $\Stab[y_2,\typetree_2] \neq \bot$. 
Let $\typetree$ be the type assignment for $\rcall(x)$ defined as
$$\typetree(a) = \typetree_1(a) \oplus \typetree_2(a)\qquad\textrm{for every }a \in T^{\rcall(x)}.$$ 
Then observe that as $\VtxActive^{\rcall(y_1)}$ is nonadjacent
to $\VtxActive^{\rcall(y_2)}$, $\Stab[y_1,\typetree_1] \cup \Stab[y_2,\typetree_2]$ is a 
feasible solution for $x$ of type $\typetree$.
We update $\Stab[x,\typetree]$ with $\Stab[y_1,\typetree_1] \cup \Stab[y_2,\typetree_2]$.

\subsubsection{Branch and subsequent free nodes}

Let $x$ be a branch node and recall that we are given a subproblem extension of $\rcall(x)$. For the failure child~$y^x$ of $x$, we pass this subproblem extension to $y^x$ without modifications.

Consider now a success grandchild $s\coloneqq s^x_{\mathcal{D}}$
for $\mathcal{D} = (D,\degord,(D_u)_{u \in D'})$, $D' = D \cup \{\pivot^x\}$. 
We now use the power of the free node $s^x_{\mathcal{D}}$ to guess how the extension at $x$ should be enhanced. 

Formally, we iterate over every possibility of:
\begin{itemize}
\item a partition $\branchpart$ of $D'$ into nonempty subsets;
\item a node $a_B \in V(T^{\rcall(x)})$ and a subset $N_B \subseteq \beta^{\rcall(x)}(a_B)$ of size at most $4k$ for every $B \in \branchpart$;
\item a set $X_B \subseteq \VtxActive^{\rcall(s)}$ of size at most $k$ for every $B \in \branchpart$ so that the sets $(X_B)_{B \in \branchpart}$
are pairwise vertex-disjoint; we denote $X_\branchpart \coloneqq \bigcup_{B \in \branchpart} X_B$;
\item a position guess $\degord^\branchpart$ for $X_\branchpart$ and a left neighbor guess $(D_u^\branchpart)_{u \in X_\branchpart}$ for $X_\branchpart$ and $\degord^\branchpart$.
\end{itemize}
For every choice $\mathcal{C} = (\branchpart, (a_B,N_B,X_B)_{B \in \branchpart},\degord^\branchpart, (D_u^\branchpart)_{u \in X_\branchpart})$ as above, we construct 
a child $\hat{s}^x_{\mathcal{D},\mathcal{C}}$ of $s^x_{\mathcal{D}}$
that is created from $s^x_{\mathcal{D}}$ by 
taking $X_\branchpart$ at positions $\degord^\branchpart$ with left neighbors $(D_u^\branchpart)_{u \in X_\branchpart}$.
Every created child $\hat{s}^x_{\mathcal{D},\mathcal{C}}$ is a filter node;
we denote the resulting grandchild (if present) as $\tilde{s}^x_{\mathcal{D},\mathcal{C}}$.

Denoting $\tilde{s} = \tilde{s}^x_{\mathcal{D},\mathcal{C}}$ for brevity,
the grandchild's subproblem is extended as follows:
\begin{itemize}
\item create $(T^{\rcall(\tilde{s})}, \beta^{\rcall(\tilde{s})})$ from $(T^{\rcall(x)}, \beta^{\rcall(x)})$ by adding, for every $B \in \branchpart$,
a new node $\tilde{a}_B$ with bag $N_B \cup B \cup X_B$ that is a child of $a_B$;
\item create $\iota^{\rcall(\tilde{s})}$ by extending $\iota^{\rcall(x)}$ to the new elements of $\VtxActive^{\rcall(\tilde{s})}$ in any manner
that is injective on the newly created bags. 
\end{itemize}
Observe that every newly created bag is of size bounded by
$4k + k + k = k'$.
Since $\branchpart$ is a partition of $D'$ and we assume that the sets $(X_B)_{B \in \branchpart}$ are pairwise disjoint, for every two newly created nodes
$a_{B_1}'$ and $a_{B_2}'$, we have $\beta^{\rcall(\tilde{s})}(a_{B_1}') \cap \beta^{\rcall(\tilde{s})}(a_{B_2}') \subseteq N_{B_1} \cap N_{B_2} \subseteq \VtxTaken^{\rcall(x)}$,
the pair $(T^{\rcall(\tilde{s})},\beta^{\rcall(\tilde{s})})$ is indeed a tree decomposition of $G[\VtxTaken^{\rcall(\tilde{s})}]$ with maximum bag size at most $k'$, and it is straightforward
to extend $\iota^{\rcall(x)}$ to obtain $\iota^{\rcall(\tilde{s})}$. 

To complete the description of the algorithm, it remains to show how to assemble the table $\Stab[x,\cdot]$ at the branch node $x$ from the tables computed for  
its great-great-grandchildren $\tilde{s}^x_{\mathcal{D},\mathcal{C}}$
and the failure child $y^x$. 

To this end, we initiate $\Stab[x,\typetree] = \Stab[y^x,\typetree]$ for every type assignment $\typetree$ at
$x$ (note that every feasible solution at $y^x$ is also a feasible solution at $x$ and is of the same type). 
Then, for every great-great-grandchild $\tilde{s} \coloneqq \tilde{s}^x_{\mathcal{D},\mathcal{C}}$ and type assignment $\widetilde{\typetree}$ at $\tilde{s}$
such that $\tilde{S} \coloneqq \Stab[\tilde{s}, \widetilde{\typetree}] \neq \bot$, we proceed as follows. 
Let 
$\mathcal{D} = (D,\degord,(D_u)_{u \in D'})$, $D' = D \cup \{\pivot^x\}$,
$\mathcal{C} = (\branchpart, (a_B, N_B, X_B)_{B \in \branchpart}, \degord^\branchpart, (D_u^\branchpart)_{u \in X_\branchpart})$,
  and $X_\branchpart = \bigcup_{B \in \branchpart} X_B$.
Observe that $\VtxTaken^{\rcall(\tilde{s})} \setminus \VtxTaken^{\rcall(x)} = D' \cup X_\branchpart$. 
Let $S \coloneqq \tilde{S} \cup D' \cup X_\branchpart$.

We say that the pair $(\tilde{s}, \tilde{S})$ is \emph{liftable}
if the set $S$ obtained as above is a feasible solution at $x$ and, furthermore, 
for every $B \in \branchpart$ there exists a connected component $\tilde{C}_B$ of $G[S]$ such that
$$\tilde{C}_B = B \cup X_B \cup \bigcup \{\tilde{C} \in \cc(G[\tilde{S}])~|~a^{\tilde{s}}(\tilde{C}) = \tilde{a}_B\},$$
the components $(\tilde{C}_B)_{B \in \branchpart}$ are pairwise distinct, and
$a^x(\tilde{C}_B) = a_B$ for every $B \in \branchpart$. 
For a liftable pair $(\tilde{s}, \tilde{S})$, the set $S$ is called the \emph{lift of $(\tilde{s}, \tilde{S})$}. 

If $(\tilde{s}, \tilde{S})$ is liftable, then we can use \cref{lem:mso-type} to compute 
the type $\typetree_S$ of the lift $S$ at $x$ as follows. 
First, for every $B \in \branchpart$, the \cmsotwo type of
$G[C_B \cup \beta^{\rcall(x)}(a_B)]$ with labelling $\iota^{\rcall(x)}|_{\beta^{\rcall(x)}}(a_B)$
can be computed from $\widetilde{\typetree}(\tilde{a}_B)$ by forgetting the labels
of $B \cup X_B$. 
Second, for each $a \in V(T^{\rcall(x)})$, the type $\typetree_S(a)$ is the composition
of $\widetilde{\typetree}(a)$ and the types of all graphs
$(G[C_B \cup \beta^{\rcall(x)}(a_B)], \iota^{\rcall(x)}|_{\beta^{\rcall(x)}(a_B)})$
for those $B$ for which $a_B = a$. 

For every liftable pair $(\tilde{s}, \tilde{S})$, we compute the lift $S$ and its type $\typetree_S$ at $x$ as above and we update $\Stab[x,\typetree_S]$ with $S$.
This finishes the description of the algorithm at branch nodes.

We conclude this section with an immediate, yet important corollary of the way how we compute the type $\typetree_S$.
\begin{lemma}\label{lem:two-liftable}
Let $x$ be a branch node, $\tilde{s}$ be its great-great-grandchild, and $\tilde{S}_1$ and $\tilde{S}_2$ be two feasible solutions
at $\tilde{s}$ of the same type such that both $(\tilde{s}, \tilde{S}_1)$ and $(\tilde{s},\tilde{S}_2)$ are liftable.
Let $S_i$ be the lift of $(\tilde{s}, \tilde{S}_i)$ for $i=1,2$.
Then $S_1$ and $S_2$ are of the same type at $x$. 
\end{lemma}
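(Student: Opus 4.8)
The plan is to read off, from the way the algorithm computes the type of a lift at the branch node $x$, an explicit formula expressing $\typetree_{S_i}$ purely in terms of the type assignment $\widetilde{\typetree}$ at $\tilde{s}$ together with data that is fixed once the node $\tilde{s}$ is fixed (namely the tuples $\mathcal{D}$ and $\mathcal{C}$ describing the path from $x$ down to $\tilde{s}$), and in no other way depends on the particular solution $\tilde{S}_i$. Since $\tilde{S}_1$ and $\tilde{S}_2$ are of the same type $\widetilde{\typetree}$ by hypothesis, feeding the same input into this formula yields $\typetree_{S_1}=\typetree_{S_2}$, which is exactly the claim.

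To make this precise I would first unwind the structure of $G[S_i]$ guaranteed by liftability, where $S_i=\tilde{S}_i\cup D'\cup X_\branchpart$. Liftability says that for every $B\in\branchpart$ the set $\tilde{C}_B = B\cup X_B\cup\bigcup\{\tilde{C}\in\cc(G[\tilde{S}_i])~|~a^{\tilde{s}}(\tilde{C})=\tilde{a}_B\}$ is a connected component of $G[S_i]$, that these are pairwise distinct, and that $a^x(\tilde{C}_B)=a_B$. Every remaining component $C$ of $G[S_i]$ is a component of $G[\tilde{S}_i]$ with $a^{\tilde{s}}(C)\neq\tilde{a}_B$ for all $B$; since the only nodes of $T^{\rcall(\tilde{s})}$ absent from $T^{\rcall(x)}$ are the leaves $\tilde{a}_B$, and $\VtxTaken^{\rcall(\tilde{s})}\setminus\VtxTaken^{\rcall(x)}=D'\cup X_\branchpart$ occurs only in the newly added bags, one checks that $a^{\tilde{s}}(C)$ is an old node, that $N_G(C)$ is disjoint from $D'\cup X_\branchpart$, and hence $a^x(C)=a^{\tilde{s}}(C)$. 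Consequently, for every $a\in V(T^{\rcall(x)})$,
\[ \extS^{\rcall(x)}(a,S_i) \;=\; \extS^{\rcall(\tilde{s})}(a,\tilde{S}_i) \;\cup\; \bigcup\{\tilde{C}_B~|~a_B=a\}, \]
where the sets $\tilde{C}_B$ with $a_B=a$ are pairwise disjoint, disjoint from $\extS^{\rcall(\tilde{s})}(a,\tilde{S}_i)$, and (being components of $G[S_i]$ with $a^x(\tilde{C}_B)=a_B$) have all their edges inside $\extS^{\rcall(x)}(a,S_i)$ going to $\beta^{\rcall(x)}(a)$. Hence $(G[\extS^{\rcall(x)}(a,S_i)],\iota^{\rcall(x)}|_{\beta^{\rcall(x)}(a)})$ is the gluing of $(G[\extS^{\rcall(\tilde{s})}(a,\tilde{S}_i)],\iota^{\rcall(x)}|_{\beta^{\rcall(x)}(a)})$ with the boundaried graphs $(G[\tilde{C}_B\cup\beta^{\rcall(x)}(a)],\iota^{\rcall(x)}|_{\beta^{\rcall(x)}(a)})$ over all $B$ with $a_B=a$, and each of the latter is obtained from $(G[\extS^{\rcall(\tilde{s})}(\tilde{a}_B,\tilde{S}_i)],\iota^{\rcall(\tilde{s})}|_{\beta^{\rcall(\tilde{s})}(\tilde{a}_B)})$ by forgetting the labels of the vertices of $B\cup X_B$.

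Applying \cref{lem:mso-type} — compositionality of $\oplus$ under gluing, compositionality of the forget operation, and associativity and commutativity of $\oplus$ — it follows that for every $a\in V(T^{\rcall(x)})$,
\[ \typetree_{S_i}(a) \;=\; \widetilde{\typetree}(a)\ \oplus\ \bigoplus_{B\in\branchpart,\; a_B=a} \mathrm{forget}_{B\cup X_B}\bigl(\widetilde{\typetree}(\tilde{a}_B)\bigr), \]
where $\mathrm{forget}_{B\cup X_B}$ denotes forgetting, one by one, the labels of the vertices of $B\cup X_B$, and we used that $\widetilde{\typetree}(\tilde{a}_B)$ is by definition the type of $(G[\extS^{\rcall(\tilde{s})}(\tilde{a}_B,\tilde{S}_i)],\iota^{\rcall(\tilde{s})}|_{\beta^{\rcall(\tilde{s})}(\tilde{a}_B)})$ and $\widetilde{\typetree}(a)$ the type of $(G[\extS^{\rcall(\tilde{s})}(a,\tilde{S}_i)],\iota^{\rcall(x)}|_{\beta^{\rcall(x)}(a)})$. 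The right-hand side depends only on $\widetilde{\typetree}$, the partition $\branchpart$, and the maps $B\mapsto a_B$ and $B\mapsto X_B$ — all of which are identical for $i=1$ and $i=2$, since $x$, $\tilde{s}$, $\mathcal{D}$, and $\mathcal{C}$ are fixed. Therefore $\typetree_{S_1}(a)=\typetree_{S_2}(a)$ for every $a$, i.e. $\typetree_{S_1}=\typetree_{S_2}$, as desired. I expect the only mildly delicate point to be the adjacency bookkeeping of the middle paragraph — in particular that the ``old'' components keep their attachment node and that $N_G(\tilde{C}_B)$ meets $\VtxTaken^{\rcall(x)}$ only inside $\beta^{\rcall(x)}(a_B)$; once those are in place, the conclusion is a mechanical invocation of \cref{lem:mso-type}.
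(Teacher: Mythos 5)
Your core argument is exactly the paper's proof: the recipe that produces $\typetree_{S_i}$ from $\widetilde{\typetree}$ uses only $\widetilde{\typetree}$ and data fixed by $\rcall(x)$ and $\rcall(\tilde{s})$ (namely $\mathcal{D}$ and $\mathcal{C}$), so equal types at $\tilde{s}$ force equal types of the lifts at $x$; the paper states precisely this and stops there, treating the correctness of the type computation as already asserted in the algorithm description. Your proposal additionally re-derives that correctness via the gluing decomposition, which is a fine (and more self-contained) way to phrase it.

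One point in that extra verification needs repair, and you flag the wrong condition as the delicate one. Your step ``each of the latter is obtained from $(G[\extS^{\rcall(\tilde{s})}(\tilde{a}_B,\tilde{S}_i)],\iota^{\rcall(\tilde{s})}|_{\beta^{\rcall(\tilde{s})}(\tilde{a}_B)})$ by forgetting the labels of $B\cup X_B$'' is not literally true: the graph $G[\tilde{C}_B\cup\beta^{\rcall(x)}(a_B)]$ contains the vertices of $\beta^{\rcall(x)}(a_B)\setminus N_B$, which the $\tilde{a}_B$-piece does not. The extra vertices themselves are harmless, since after gluing with the first piece (which carries all of $\beta^{\rcall(x)}(a)$ with its labels and internal edges) they get identified anyway; what really has to be ruled out is an edge of $G$ between $B\cup X_B$ and $\VtxTaken^{\rcall(x)}\setminus N_B$, because such an edge would appear in $G[\extS^{\rcall(x)}(a,S_i)]$ but in none of your pieces, and then your formula would compute the type of a graph missing edges. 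The condition you name — that $N_G(\tilde{C}_B)$ meets $\VtxTaken^{\rcall(x)}$ only inside $\beta^{\rcall(x)}(a_B)$ — follows from liftability but is too weak; the containment in $N_B$ does \emph{not} follow from liftability. It does follow from the requirement that $(T^{\rcall(\tilde{s})},\beta^{\rcall(\tilde{s})})$ is a tree decomposition of $G[\VtxTaken^{\rcall(\tilde{s})}]$: the only bag meeting $B\cup X_B$ is the new bag $N_B\cup B\cup X_B$, so every edge from $B\cup X_B$ to $\VtxTaken^{\rcall(x)}$ must land in $N_B$ (for the hanging components of $G[\tilde{S}_i]$ the same containment is given by feasibility at $\tilde{s}$). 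With that fact inserted, your gluing identity and hence the displayed formula are correct, and the conclusion follows as you state.
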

\begin{proof}
The aforementioned algorithm to compute the type of the lift of $(\tilde{s}, \tilde{S}_i)$ uses only the type 
of $\tilde{S}_i$ at $\tilde{s}$ and the subproblems $\rcall(x)$ and $\rcall(\tilde{s})$. The claim follows.
\end{proof}

\subsection{Correctness}

Fix a subset $S^\ast \subseteq V(G)$ such that $G[S^\ast]$ is of treewidth less than $k$ and $G[S^\ast] \models \phi$. 
We would like to show that the algorithm returns a set $S$ of weight at least the weight of $S^\ast$ (not necessarily $S^\ast$). 
Clearly, since $G[S^\ast]$ is $(k-1)$-degenerate, we can speak about lucky nodes of the subproblem tree, defined in the same manner as in \cref{sec:branching}. 

A lucky extendable node $x$ is called a \emph{gander} if 
$S^\ast \cap \VtxActive^{\rcall(x)}$ is a feasible solution for $\rcall(x)$. 
For a gander~$x$, define the type assignment $\typetree^x$ as the type of the feasible solution $S^\ast \cap \VtxActive^{\rcall(x)}$. 

Note that the root $r$ of the subproblem tree is a gander and 
$\typetree^r(a^r) \in \msotype[\phi] \cap \msotype[\phi_{\tw < k}]$.
Thus, it suffices to show the following:
\begin{lemma}\label{lem:main-correctness}
For every gander $x$, we have that $\Stab[x, \typetree^x] \neq \bot$
and the weight of $\Stab[x,\typetree^x]$ is at least the weight
of $S^\ast \cap \VtxActive^{\rcall(x)}$.
\end{lemma}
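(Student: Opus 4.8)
The plan is to prove Lemma~\ref{lem:main-correctness} by a bottom-up induction on the subproblem tree, handling the three kinds of extendable nodes (leaf, split, branch) separately; the branch case splits further according to whether $\pivot^x\in S^\ast$. Throughout I rely on the companion invariant, maintained by construction, that whenever $\Stab[y,\typetree]\neq\bot$ it is a feasible solution at $y$ of type $\typetree$, and on the fact that the update operation never decreases a stored weight.

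\emph{Leaf and split nodes.} For a leaf node $x$ we have $\VtxActive^{\rcall(x)}=\emptyset$, so $S^\ast\cap\VtxActive^{\rcall(x)}=\emptyset$ is the unique feasible solution and the algorithm stores it under exactly the type $\typetree^x$; both conclusions are immediate. For a split node $x$ the (dummy) free children only pass the subproblem down, so the relevant recursive calls sit at the grandchildren $y$ (one grandchild) or $y_1,y_2$ (two grandchildren). Using that each $\VtxActive^{\rcall(y_i)}$ is a union of connected components of $G[\VtxActive^{\rcall(x)}]$ while $\VtxTaken$, $\degord$ and the extension $(T^{\rcall(x)},\beta^{\rcall(x)},\iota^{\rcall(x)})$ are unchanged, I would first check that each $y_i$ is again a gander; since the two active parts are nonadjacent, $\extS^{\rcall(x)}(a,S^\ast\cap\VtxActive^{\rcall(x)})$ is the gluing of $\extS^{\rcall(y_1)}(a,\cdot)$ and $\extS^{\rcall(y_2)}(a,\cdot)$ along $\beta^{\rcall(x)}(a)$, hence $\typetree^x(a)=\typetree^{y_1}(a)\oplus\typetree^{y_2}(a)$ by \cref{lem:mso-type}. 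By induction $\Stab[y_i,\typetree^{y_i}]\neq\bot$, so the algorithm processes the pair $(\typetree^{y_1},\typetree^{y_2})$ and updates $\Stab[x,\typetree^x]$ with $\Stab[y_1,\typetree^{y_1}]\cup\Stab[y_2,\typetree^{y_2}]$, whose weight is $\geq\wei(S^\ast\cap\VtxActive^{\rcall(y_1)})+\wei(S^\ast\cap\VtxActive^{\rcall(y_2)})=\wei(S^\ast\cap\VtxActive^{\rcall(x)})$. The single-grandchild case is identical with the identity composition.

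\emph{Branch node, failure case.} If $\pivot^x\notin S^\ast$ then by \cref{lem:lucky-branch} the failure child $y^x$ is lucky; it is clean (hence not a filter node) and cannot be a free node, so it is extendable. Since $S^\ast\cap\VtxActive^{\rcall(y^x)}=S^\ast\cap\VtxActive^{\rcall(x)}$ and the extension is passed verbatim, $y^x$ is a gander with $\typetree^{y^x}=\typetree^x$, and the algorithm initialises $\Stab[x,\cdot]$ by copying $\Stab[y^x,\cdot]$, so the claim follows from the inductive hypothesis at $y^x$.

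\emph{Branch node, success case (the crux).} Assume $\pivot^x\in S^\ast$ and let $\mathcal{D}=(D,\degord,(D_u)_{u\in D'})$, $D'=D\cup\{\pivot^x\}$, be the tuple from \cref{lem:lucky-branch}, so $z:=z^x_{\mathcal{D}}$ is a lucky success child; by \cref{lem:filtering} $z$ has no offending vertex in $\VtxTaken$, its child $s:=s^x_{\mathcal{D}}$ exists, is lucky, and the vertices removed by filtering lie outside $S^\ast$, whence $S^\ast\cap\VtxActive^{\rcall(s)}=(S^\ast\cap\VtxActive^{\rcall(x)})\setminus D'$. Let $D^+$ be the (unique, since $G[D']$ is connected) connected component of $G[S^\ast\cap\VtxActive^{\rcall(x)}]$ containing $D'$, let $a_0:=a^x(D^+)$ and $F:=N_G(D^+)\cap\VtxTaken^{\rcall(x)}$; feasibility of $x$ gives $|F|\leq 4k$ and $F\subseteq\beta^{\rcall(x)}(a_0)$. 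I would use the free-node guess $\mathcal{C}$ with trivial partition $\branchpart=\{D'\}$, $a_{D'}:=a_0$, $N_{D'}:=F$, $X_{D'}:=K$ for a set $K\subseteq D^+\setminus D'$ with $|K|\leq k$ obtained from \cref{lem:tw-balanced-sep} applied to a suitable bounded-treewidth graph built from $G[D^+]$ and $F$ so that every connected component of $G[D^+\setminus(D'\cup K)]$ is adjacent to at most $|F|/2\leq 2k$ vertices of $F$, and $\degord^\branchpart$, $(D_u^\branchpart)_{u\in K}$ the correct guesses read off $\degord^\ast$ and $S^\ast$ as in \cref{lem:lucky-take}. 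Then $\mathcal{C}$ is legal and, by \cref{lem:lucky-take} and \cref{lem:filtering}, the node $\tilde s:=\tilde s^x_{\mathcal{D},\mathcal{C}}$ exists, is lucky and extendable, with $S^\ast\cap\VtxActive^{\rcall(\tilde s)}=(S^\ast\cap\VtxActive^{\rcall(x)})\setminus(D'\cup K)$. It remains to verify: (i) $\tilde s$ is a gander — the components of $G[S^\ast\cap\VtxActive^{\rcall(\tilde s)}]$ other than those inside $D^+$ are untouched (nonadjacent to $D'\cup K\subseteq D^+$) and stay feasible, while each component $C'$ of $G[D^+\setminus(D'\cup K)]$ is adjacent to $D'\cup K$ by connectivity of $G[D^+]$, satisfies $N_G(C')\cap\VtxTaken^{\rcall(\tilde s)}\subseteq F\cup D'\cup K=\beta^{\rcall(\tilde s)}(\tilde a_{D'})$, and $|N_G(C')\cap\VtxTaken^{\rcall(\tilde s)}|\leq 2k+|D'|+|K|\leq 4k$ by the choice of $K$; (ii) $(\tilde s,S^\ast\cap\VtxActive^{\rcall(\tilde s)})$ is liftable with lift exactly $S^\ast\cap\VtxActive^{\rcall(x)}$ — since every component of $D^+\setminus(D'\cup K)$ is adjacent to $D'\cup K$ and hence hung under $\tilde a_{D'}$, one gets $\tilde C_{D'}=D^+$ and $a^x(\tilde C_{D'})=a_0$; and (iii) whether a pair $(\tilde s,\tilde S)$ with $\tilde S$ a feasible solution at $\tilde s$ is liftable, and the type of its lift at $x$, depend only on $\rcall(x),\rcall(\tilde s),\mathcal{D},\mathcal{C}$ and the \cmsotwo type of $\tilde S$ (the second half is \cref{lem:two-liftable}; the first half follows from the same compositionality argument, using that connectivity between boundary vertices is type-determined by \cref{lem:msotwo-connected}, since $q\geq 4$). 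Granting (i)--(iii): by induction $\tilde S:=\Stab[\tilde s,\typetree^{\tilde s}]\neq\bot$ has weight $\geq\wei(S^\ast\cap\VtxActive^{\rcall(\tilde s)})$ and type $\typetree^{\tilde s}$, so by (iii) $(\tilde s,\tilde S)$ is liftable with lift of type $\typetree^x$; the algorithm thus updates $\Stab[x,\typetree^x]$ with that lift, of weight $\wei(\tilde S)+\wei(D'\cup K)\geq\wei(S^\ast\cap\VtxActive^{\rcall(\tilde s)})+\wei(D'\cup K)=\wei(S^\ast\cap\VtxActive^{\rcall(x)})$ (disjoint union). This closes the induction.

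\emph{Main obstacle.} The hard part is the success case, specifically steps (i) and (iii): choosing the auxiliary graph to which \cref{lem:tw-balanced-sep} is applied so that $|K|\leq k$ while the new components attach to at most half of $F$ (so that the $\leq 4k$ feasibility invariant survives the larger new bag), and checking that ``being liftable'' is invariant under replacing the $\tilde s$-solution by another of the same \cmsotwo type. The remaining steps are bookkeeping with the gluing and forgetting operations of \cref{lem:mso-type} and with monotonicity of updates.
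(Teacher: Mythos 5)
Your proposal is correct and follows essentially the same route as the paper's proof: a bottom-up induction with identical leaf, split, and failure-branch cases, and in the success case the same canonical guess read off $S^\ast$ (your singleton partition $\{D'\}$ coincides with the paper's component-induced partition, since every vertex of $D$ is adjacent to $\pivot^x$), the same application of \cref{lem:tw-balanced-sep} to the treewidth-$(<k)$ graph on $C_B\cup N_B$ with $A=N_B$ to keep new neighborhoods within $4k$, and the same type-based transfer of liftability from $S^\ast\cap \VtxActive^{\rcall(\tilde s)}$ to $\Stab[\tilde s,\typetree^{\tilde s}]$ via \cref{lem:msotwo-connected} and \cref{lem:two-liftable}. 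The only slack is in your intermediate claim that components of $G[D^+\setminus(D'\cup K)]$ are adjacent to at most $|F|/2$ vertices of $F$: the separator $X^\circ$ may itself meet $F$, giving up to $|X^\circ\cap F|$ further neighbors, but since $|K|+|X^\circ\cap F|\leq |X^\circ|\leq k$ the total still fits the $4k$ budget, which is exactly the accounting the paper uses.
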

\begin{proof}
The proof proceeds by a bottom-up induction on the subproblem tree.

For a leaf gander $x$, since $x$ is lucky, from the definition of $\typetree^x$
the only type assignment
$\typetree$ for which $\Stab[x, \typetree] \neq \bot$ is exactly $\typetree^x$. 
Consequently, $\Stab[x,\typetree^x] = \emptyset$ and the claim is proven.

For a split gander $x$, the claim is straightforward if $x$ has one grandchild.
Assume then $x$ has two grandchildren $y_1$ and $y_2$. 
Since $\VtxActive^{\rcall(y_1)}$ is nonadjacent to $\VtxActive^{\rcall(y_2)}$, every connected component
of $G[S^\ast \cap \VtxActive^{\rcall(x)}]$ is contained either in $\VtxActive^{\rcall(y_1)}$ or in $\VtxActive^{\rcall(y_2)}$.
It follows that both $y_1$ and $y_2$ are ganders, too. 

By induction, for $i=1,2$ the weight of $\Stab[y_i,\typetree^{y_i}]$ is at least the weight of $S^\ast \cap \VtxActive^{\rcall(y_i)}$. 
Denote $$\typetree(a) \coloneqq \typetree^{y_1}(a) \oplus \typetree^{y_2}(a)\qquad\textrm{for }a \in V(T^{\rcall(x)}).$$ 
Then, on one hand from \cref{lem:mso-type} we have that $\typetree = \typetree^x$, and on the other hand
the handling of split nodes will update $\Stab[x,\typetree]$ with $\Stab[y_1,\typetree^{y_1}] \cup \Stab[y_2,\typetree^{y_2}]$. The claim for split ganders follows.

It remains to analyse branch ganders. Let $x$ be a branch gander. 
If $\pivot^x \notin S^\ast$, then $y^x$ is lucky and it is immediate that it is a gander, too. 
By the inductive assumption, $\wei(\Stab[y^x, \typetree^{y^x}]) \geq \wei(S^\ast \cap \VtxActive^{\rcall(y^x)})$.
Since in this case $\typetree^x = \typetree^{y^x}$ and we initiated $\Stab[x,\typetree^x]$ with $\Stab[y^x,\typetree^x]$, the inductive claim follows.

Assume then that $\pivot^x \in S^\ast$. Define $D$, $\degord$, and $(D_u)_{u \in D'}$ for $D' = D \cup \{\pivot^x\}$ as in \cref{lem:lucky-branch}, that is: 
\begin{align*}
D &= \{u \in N_G(\pivot^x) \cap \VtxActive^{\rcall(x)}\cap S^\ast~|~\degord^\ast(u) < \degord^\ast(\pivot^x)\},\\
\degord &= \degord^\ast|_{\VtxTaken^{\rcall(x)} \cup D'},\\
D_u &= \{w \in N_G(u) \cap \VtxActive^{\rcall(x)}\cap S^\ast \setminus D'~|~\degord^\ast(w) < \degord^\ast(u)\}.
\end{align*}
Then, by \cref{lem:lucky-branch}, the branch node $x$ creates
a child $z^x_{\mathcal{D}}$ for $\mathcal{D} = (D,\degord,(D_u)_{u \in D'})$ and
$z^x_{\mathcal{D}}$ is lucky. Hence, the grandchild $s^x_{\mathcal{D}}$ exists and is lucky as well.

We now want to look at some particular child of the free node $s^x_{\mathcal{D}}$. 
Define the partition $\branchpart$ of $D'$ as the partition induced by the connected components
of $G[S^\ast \cap \VtxActive^{\rcall(x)}]$ on $D'$, that is:
$$\branchpart = \{C \cap D'\colon C \in \cc(G[S^\ast \cap \VtxActive^{\rcall(x)}])\textrm{ such that } C \cap D' \neq \emptyset\}.$$
For every $B \in \branchpart$, let $C_B$ be the connected component of
$G[S^\ast \cap \VtxActive^{\rcall(x)}]$ that contains $B$. 
Define $a_B \coloneqq a^x(C_B)$ for $B \in \branchpart$ and $N_B \coloneqq N_G(C_B) \cap \VtxTaken^{\rcall(x)}$. 
Since $x$ is a gander, we have $|N_B| \leq 4k$ and $N_B \subseteq \beta^{\rcall(x)}(a_B)$. 
Let $X_B^\circ \subseteq C_B \cup N_B$ be a set of size at most $k$
promised by \cref{lem:tw-balanced-sep} for the graph $G[C_B \cup N_B]$
and $A = N_B$; note here that $G[C_B \cup N_B]$ is of treewidth less than $k$, because $C_B,N_B \subseteq S^\ast$.
Let $X_B \coloneqq X_B^\circ \cap C_B$. 
Finally, define $\degord^\branchpart$ and $(D_u^\branchpart)_{u \in X_{\branchpart}}$ 
similarly as for the grandchildren of $x$:
\begin{align*}
\degord^\branchpart &= \degord^\ast|_{\VtxTaken^{\rcall(x)} \cup D' \cup X_\branchpart},\\
D_u^\branchpart &= \{w \in N_G(u) \cap \VtxActive^{\rcall(x)}\cap S^\ast \setminus (D' \cup X_\branchpart)~|~\degord^\ast(w) < \degord^\ast(u)\}.
\end{align*}
Let $\mathcal{C} = (\branchpart, (a_B, N_B, X_B)_{B \in \branchpart}, \degord^\branchpart, (D_u^\branchpart)_{u \in X_\branchpart})$.
Observe that $\hat{s} \coloneqq \hat{s}^x_{\mathcal{D},\mathcal{C}}$ is lucky by 
\cref{lem:lucky-take}, as $X_\branchpart \subseteq S^\ast$. 
Hence, $\tilde{s} \coloneqq \tilde{s}^x_{\mathcal{D},\mathcal{C}}$ exists and is lucky as well.
We claim that $\tilde{s}$ is a gander.

To this end, consider a connected component $\tilde{C}$ of $G[S^\ast \cap \VtxActive^{\rcall(\tilde{s})}]$. 
We want to show that $|N_G(\tilde{C}) \cap S^\ast| \leq 4k$
and there exists a node $\tilde{a} \in V(T^{\rcall(\tilde{s})})$ such that 
$N_G(\tilde{C}) \cap S^\ast \subseteq \beta^{\rcall(\tilde{s})}(\tilde{a})$. 
The claim is straightforward if 
$\tilde{C}$ is also a connected component of $G[S^\ast \cap \VtxActive^{\rcall(x)}]$, because $x$ is a gander.
Otherwise, $\tilde{C} \subsetneq C$ for some connected component $C$ of $G[S^\ast \cap \VtxActive^{\rcall(x)}]$. 
Note that there exists $B \in \branchpart$ such that $B \subseteq C$, that is,
$C = C_B$ for some $B \in \branchpart$.

By the choice of $X_B^\circ$, the connected component of $G[C_B \cup N_B] \setminus X_B^\circ$
that contains $\tilde{C}$ contains at most $|N_B|/2 \leq 2k$ vertices of $N_B$. Consequently, 
$$|N_G(\tilde{C}) \cap S^\ast| \leq |X_B| + |B| + |N_B|/2 \leq k + k + 2k = 4k.$$
Furthermore, 
$$N_G(\tilde{C}) \cap S^\ast \subseteq X_B \cup B \cup N_B \subseteq \beta^{\rcall(\tilde{s})}(\tilde{a}_B).$$
We infer that $\tilde{s}$ is indeed a gander.

Since $S^\ast \cap \VtxActive^{\rcall(\tilde{s})}$ is a feasible solution at $\tilde{s}$, it follows immediately
from the definition of $\mathcal{C}$ that $(\tilde{s}, S^\ast \cap \VtxActive^{\rcall(\tilde{s})})$ is 
liftable with the lift $S^\ast \cap \VtxActive^{\rcall(x)}$. 

Let now $\tilde{S} = \Stab[\tilde{s}, \typetree^{\tilde{s}}]$
and $S = \tilde{S} \cup D' \cup X_\branchpart$; 
note that $\VtxTaken^{\rcall(\tilde{s})} \setminus \VtxTaken^{\rcall(x)} = D' \cup X_\branchpart$. 

We claim that $(\tilde{s},\tilde{S})$ is liftable. 
To this end, fix $B \in \branchpart$.
Observe that:
$$\extS^{\tilde{s}}(\tilde{a}_B, S^\ast \cap \VtxActive^{\rcall(\tilde{s})}) = \beta^{\rcall(\tilde{s})}(a) \cup C_B.$$
Consider the connected component $C_B$ of $G[S^\ast\cap C^{\rcall(x)}]$: it contains $X_B\cup B=\beta^{\rcall(\tilde{s})}(a_B)\setminus \beta^{\rcall(\tilde{s})}(\tilde{a}_B)$, which is nonempty due to $B\neq \emptyset$, and its neighborhood in $\VtxTaken^{\rcall(x)}$ is the set $N_B\subseteq \beta^{\rcall(x)}$. 
Observe that the graph induced by $\extS^{\tilde{s}}(\tilde{a}_B, \tilde{S})$
is of the same type as the graph induced by
$\extS^{\tilde{s}}(\tilde{a}_B, S^\ast \cap \VtxActive^{\rcall(\tilde{s})})$
(both with the boundary labelling $\iota^{\tilde{s}}|_{\beta^{\rcall(\tilde{s})}(\tilde{a}_B)}$). Therefore, due to $X_B\cup B$ being nonempty, by \cref{lem:msotwo-connected} we infer that there exists  a connected component $\tilde{C}_B$ of $G[S]$ such that
$$
\tilde{C}_B=X_B\cup B\cup \bigcup\{C' \in \cc(G[\tilde{S}])~|~a^{\tilde{s}}(C') = \tilde{a}_B\}$$
and
 \begin{equation}\label{eq:liftable} 
  N_G(\tilde{C}_B) \cap \VtxTaken^{\rcall(x)} = N_B=N_G(C_B) \cap \VtxTaken^{\rcall(x)}.
  \end{equation}
Consequently, every connected component $C$ of $G[S]$ that is disjoint with $D'$
is also a connected component of $G[\tilde{S}]$, and hence
$N_G(C) \cap \VtxTaken^{\rcall(\tilde{s})} = N_G(C) \cap \VtxTaken^{\rcall(x)}$.
Since $\tilde{S}$ is a feasible solution at $\tilde{s}$, we infer that 
$S$ is a feasible solution at $x$. 
Furthermore, from~\eqref{eq:liftable} if follows that
for every $B \in \branchpart$ we have $a^x(\tilde{C}_B) = a$. 
Hence, $(\tilde{s},\tilde{S})$ is liftable and $S$ is the lift.

By induction, the weight of
$\tilde{S}$ is not smaller than the weight of $S^\ast \cap \VtxActive^{\rcall(\tilde{s})}$. 
As $D' \cup X_\branchpart \subseteq S^\ast$,
the weight of $S$ is not smaller than the weight of $S^\ast \cap \VtxActive^{\rcall(x)}$.
Since $S$ is liftable, the algorithm updates $\Stab[x, \typetree_S]$ with $S$, where $\typetree_S$
is the type assignment of $S$ at node $x$. 

Since $\tilde{S}$ and $S^\ast \cap \VtxActive^{\rcall(\tilde{s})}$ are of the same type $\typetree^{\tilde{s}}$ at $\tilde{s}$,
it follows from \cref{lem:two-liftable} that $S$ and $S^\ast \cap \VtxActive^{\rcall(x)}$ are of the same type
at $x$, that is, $\typetree_S = \typetree^x$. 
This finishes the induction step for branch ganders and completes the proof of the lemma.
\end{proof}

\subsection{Complexity analysis}

We are left with arguing that the time complexity is as promised.
By \cref{lem:branching-strategy}, the subproblem tree generated by the recursion
has depth $\Oh(n)$, $\Oh(\log^2 n)$ or $\Oh(\log^3 n)$ success branches on any root-to-leaf path depending on whether we work in $P_t$-free or $C_{>t}$-free regime,
and $\Oh(\log n)$ split nodes on any root-to-leaf path.

Note that success branches are the only places where we add nodes to the tree decomposition
$(T^\rcall,\beta^\rcall)$. 
Furthermore, a success branch adds at most $k$ nodes to the tree decomposition, one for each
element of~$\branchpart$. Hence, at every node $x$
we have $|V(T^{\rcall(x)})| = \Oh(\log^3 n)$
and $|V(T^{\rcall(x)})| = \Oh(\log^2 n)$ if $G$ is $P_t$-free.
As $|\msotypes|=\Oh(1)$, there are $2^{\Oh(\log^3 n)}$ type assignments to consider
($2^{\Oh(\log^2 n)}$ if $G$ is $P_t$-free). 

At a free node that is a grandchild of a branch node, the sets $D'$, $\branchpart$,
$X_B$, $N_B$ are of constant size. Consequently, 
every free node has a number of children bounded polynomially in $n$. 

We infer that the whole subproblem tree has size $n^{\Oh(\log^3 n)}$. At every node,
the algorithm spends time $n^{\Oh(\log^3 n)}$ inspecting all type assignments
(or pairs of type assignments in the case of a split node). 
Both bounds improve to $n^{\Oh(\log^2 n)}$ if $G$ is $P_t$-free.
The running time bound follows, and hence the proof of \cref{thm:main} is complete.

\subsection{A generalization}\label{sec:generalization}

We now give a slight generalization of \cref{thm:main} that can be useful for expressing some problems that do not fall directly under its regime. The idea is that together with the solution $S$ we would like to distinguish a subset $M\subseteq S$ that satisfies some \cmsotwo-expressible predicate, and only the vertices of $M$ contribute to the weight of the solution. The proof is a simple gadget reduction to \cref{thm:main}.

\begin{theorem}\label{thm:main2}
Fix a pair of integers $d$ and $t$ and a \cmsotwo formula $\phi(X)$ with one free vertex subset variable. Then there exists an algorithm that, given a $C_{>t}$-free $n$-vertex graph $G$
and a weight function $\wei\colon V(G) \to \mathbb{N}$, in time $n^{\Oh(\log^3 n)}$ finds subsets of vertices~$M\subseteq S$ such that $G[S]$ is $d$-degenerate, $G[S]\models \phi(M)$, and, subject to the above, $\wei(M)$ is maximum~possible; the algorithm may also conclude that no such vertex subsets exist. The running time can be improved to $n^{\Oh(\log^2 n)}$ if $G$ is $P_t$-free.
\end{theorem}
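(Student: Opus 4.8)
The plan is to reduce Theorem~\ref{thm:main2} to Theorem~\ref{thm:main} by a gadget construction that encodes the distinguished subset $M$ directly into the structure of the graph, so that an application of the unrestricted version applied to a suitably modified graph recovers the desired $M\subseteq S$. The basic idea is: for each vertex $v\in V(G)$ we attach a small private gadget whose inclusion in the solution signals ``$v\in M$''. Concretely, for each $v$ we add a new pendant vertex $m_v$ adjacent only to $v$, with a large weight (say, the weight $\wei(v)$ of the original formulation), while all original vertices of $G$ are reweighted to $0$. Call the resulting graph $G'$. Since $m_v$ has degree $1$, attaching it changes neither the degeneracy by more than an additive constant nor the $C_{>t}$-freeness: a pendant vertex cannot lie on any induced cycle, so $G'$ is $C_{>t}$-free iff $G$ is, and $P_t$-freeness is preserved as long as $t\ge 3$ (a pendant path of length one cannot extend an induced $P_{t-1}$ to an induced $P_t$ in a way not already present, which one checks directly, or one takes $t$ slightly larger and restricts attention to solutions of the right form). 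Given a solution $S'$ in $G'$, one sets $S\coloneqq S'\cap V(G)$ and $M\coloneqq\{v\in S : m_v\in S'\}$; conversely from $M\subseteq S$ one builds $S'=S\cup\{m_v : v\in M\}$. The weights match exactly: $\wei'(S')=\wei(M)$.

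The next step is to translate the constraint ``$G[S]\models\phi(M)$'' into a \cmsotwo sentence $\phi'$ on $G'[S']$ that does not refer to $M$ as a free variable. In $G'[S']$, the set $M$ is definable: $M=\{v : v$ has a neighbor of degree $1$ in $G'[S']\}$, because the only degree-one vertices that can appear in $G'[S']$ adjacent to original vertices are the pendants $m_v$ (we should also guard against an original vertex of $G$ itself becoming degree one and being mistaken for a pendant, which is handled by marking: give each $m_v$ a private self-identifying feature, e.g. a small fixed gadget, or equivalently add \emph{two} pendant vertices to each $m_v$ so that $m_v$ has degree $3$ in $G'$, making the pendants unambiguously recognizable as ``gadget leaves'' and $m_v$ recognizable as ``marker vertices''). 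With such a marking, there is a \cmsotwo formula $\mathsf{marker}(y)$ true exactly of the $m_v$'s in $G'[S']$, hence $v\in M \iff \exists y\, (\mathsf{marker}(y)\wedge \mathsf{adj}(v,y))$, and we obtain $\phi'$ from $\phi(X)$ by substituting this definition for membership in $X$ and relativizing all quantifiers to the non-gadget vertices (those $v$ for which $\mathsf{marker}(v)$ and ``$v$ is a gadget leaf'' both fail); this is a routine syntactic rewriting that keeps $\phi'$ in \cmsotwo. Since the gadget attached to each $v$ has bounded size, $G'[S']$ is $d'$-degenerate for $d'=\max(d,3)$ whenever $G[S]$ is $d$-degenerate, and conversely a $d'$-degenerate $G'[S']$ of the correct gadget form restricts to a $d$-degenerate $G[S]$; to force the ``correct gadget form'' (each selected $m_v$ comes with its two leaves, and no spurious vertices) one can bake these local consistency conditions into $\phi'$ as well, since they are local and \cmsotwo-expressible.

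Finally, one invokes Theorem~\ref{thm:main} on $(G',\wei',\phi',d')$ in time $|V(G')|^{\Oh(\log^3|V(G')|)}=n^{\Oh(\log^3 n)}$ (or $n^{\Oh(\log^2 n)}$ in the $P_t$-free case, using that $|V(G')|=\Oh(n)$ and that $G'$ is $P_t$-free when $G$ is, for $t$ at least a small constant which we may assume), obtaining a maximum-$\wei'$-weight set $S'$ with $G'[S']$ $d'$-degenerate and $G'[S']\models\phi'$; we then output $S=S'\cap V(G)$ and $M$ as above. Correctness of the reduction in both directions, i.e. that the optimum for the gadgeted instance equals the optimum $\wei(M)$ for the original instance and that feasible solutions correspond bijectively (modulo the forced gadget form), follows from the definitions. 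The one point requiring care—and the main obstacle—is ensuring the two-way correspondence is tight: we must guarantee that Theorem~\ref{thm:main}'s optimum solution $S'$ necessarily has the clean gadget structure (no marker $m_v$ without its leaves, no leaf without its marker, no marker whose $v$ is absent from $S'$), rather than some degenerate partial gadget that happens to satisfy $\phi'$ with higher weight. This is resolved by (a) weighting: only the markers $m_v$ carry weight, so including a leaf or a lone marker never helps, and (b) encoding the local structural invariants into $\phi'$ so that any satisfying $S'$ is automatically in normal form; both are straightforward but need to be stated precisely. Everything else is bookkeeping.
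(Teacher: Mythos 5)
Your overall strategy---encode the distinguished set $M$ by pendant gadgets, rewrite $\phi(X)$ into a \cmsotwo \emph{sentence} about the gadgeted induced subgraph, and invoke \cref{thm:main}---is exactly the route the paper takes. However, your specific gadget is not sound. The sentence $\phi'$ is evaluated only on the abstract graph $G'[S']$, so it cannot force the chosen set $S'$ into your ``normal form'': a weight-carrying marker $m_v$ (or one of its leaves) included \emph{without} $v$ is, inside $G'[S']$, indistinguishable from an ordinary low-degree original vertex. Concretely, take $\phi(X)\equiv\mathrm{true}$ and let $S'$ consist of all markers $m_v$ and nothing else: $G'[S']$ is edgeless, so it satisfies any local normal-form conditions under the reading ``every vertex is an original vertex and the decoded set is empty,'' it is $d'$-degenerate, and its weight is $\sum_{v\in V(G)}\wei(v)$, which can vastly exceed the true optimum (e.g.\ $G$ a large clique and $d$ small); your external decoding then yields $M=V(G)\not\subseteq S=\emptyset$. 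The mimicry also goes the other way: an original vertex of $G[S]$ with two pendant original neighbours is indistinguishable from a marker with its two leaves, so even when markers are present, $\phi'$ may verify $\phi$ against the wrong set while the weight is computed from the actual markers. Your two proposed fixes do not close this gap: (a) weighting does not help, since a lone marker \emph{does} carry weight, and (b) ``local structural invariants'' are only checked up to isomorphism of $G'[S']$ and are satisfied by the cheating interpretations above. The missing idea---supplied by the paper's proof---is to modify $G'$ so that the roles are fixed by $G'$ itself rather than by $S'$: attach three pendant leaves to \emph{every} original vertex, place the weight at the far end of a length-two path attached to $v$, and let $\phi'$ assert that $G'[S']$ is \emph{exactly} such a ``forked'' graph. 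Then every gadget vertex has degree at most $2$ already in $G'$ and so can never be decoded as an original vertex, a weight-carrying endpoint can only appear together with its midpoint and a degree-$\geq 3$ original vertex, and the pair $(S,M)$ is uniquely decodable from $G'[S']$, which is what makes both feasibility and the weight transfer correctly.

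Two smaller points. First, your claim that a $d'$-degenerate $G'[S']$ ``of the correct gadget form restricts to a $d$-degenerate $G[S]$'' is false when $d<d'$: an induced subgraph of a $3$-degenerate graph need not be $1$-degenerate, so for small $d$ you must additionally force degeneracy at most $d$ (or bounded treewidth) of the decoded original part inside $\phi'$; this is \cmsotwo-expressible, and it is precisely the role of the extra conjunct $\phi_{\tw<k}$ in the paper's argument. Second, the remark about $P_t$-freeness is fine in spirit: attaching pendant paths of bounded length increases the longest induced path only by an additive constant, so the $n^{\Oh(\log^2 n)}$ bound survives with $t$ replaced by $t+\Oh(1)$.
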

\begin{proof}
For a graph $G$ and a set $M \subseteq V(G)$, 
    the \emph{forked version} of $(G,M)$ is the graph $\forked{G}^M$ created from 
$G$ by attaching three degree-one neighbors to every vertex of $V(G)$
and, additionally, a two-edge path to every vertex of $M$.
If $G$ is weighted, then we assign weights to the vertices of $\forked{G}^M$ so that all of them are zero, except
that for every $v \in M$ the other endpoint of the attached two-edge path
inherits the weight of~$v$. 

Note that the vertices of $V(\forked{G}^M) \setminus V(G)$ are exactly the vertices
of degree one or two in $\forked{G}^M$; 
the vertices of $V(G)$ are of degree at least three in $\forked{G}^M$.
This implies that if $H'$ is a forked version of some other 
graph $H$ and $M \subseteq V(H)$, then the pair $(H,M)$ is defined uniquely and is easy to decode:
\begin{itemize}
\item The vertices of $H$ are exactly the vertices of $H'$ that are of degree at least three, and $H$ is the induced subgraph of $H'$ induced by those vertices.
\item Every vertex of $H$ needs to be adjacent to exactly three vertices of degree~$1$
or to three vertices of degree $1$ and one vertex of degree $2$, which in turn has another neighbor
of degree $1$. The vertices of the latter category are exactly the vertices of $M$.
\end{itemize}
Note that if $G$ is $C_{>t}$-free for some $t \geq 3$, then $\forked{G}^M$ is $C_{>t}$-free as well. Moreover, $\tw(\forked{G}^M) \leq \max(\tw(G), 1)$.

Let $k$, $t$, and $\phi$ be as in \cref{thm:main}. 
Construct a \cmsotwo sentence $\forked{\phi}$ that for a graph $H'$ behaves as~follows:
\begin{itemize}
\item If $H'$ is a forked version of some pair $(H,M)$, then
 $H' \models \forked{\phi}$ if and only if $H \models \phi(M)$.
\item Otherwise, $H' \not\models \forked{\phi}$.
\end{itemize}
Writing $\forked{\phi}$ in \cmsotwo is straightforward: we distinguish vertices of $H$ and $M$ as described above, and then apply $\phi$ relativized to those subsets of vertices.

Let $\phi' \coloneqq \forked{\phi}$ and $k' = k$ if $k \geq 2$, and $\phi' \coloneqq \forked{\phi} \wedge \phi_{\tw < k}$ and $k' = 2$ if $k < 2$ (where the sentence $\phi_{\tw < k}$ comes from \cref{lem:mso-tw-formula}). 
We apply \cref{thm:main2} to $k'$, $t$, and $\phi'$, 
and, given a graph $G$ with weight function $\wei$,
  apply the asserted algorithm to $G' \coloneqq \forked{G}^{V(G)}$ (with weight function $\wei'$),
    obtaining a set $S'$. 
By the construction of $G'$ and $\phi'$, $G'[S']$ must be a forked version
of $(G[S],M)$ for some $M \subseteq S \subseteq V(G)$ such that $G[S] \models \phi(M)$ and $G[S]$ has treewidth less than $k$. Moreover, $\wei'(S') = \wei(M)$.

We return $(S,M)$. 
To see the correctness of this output, note that 
for every $M \subseteq S \subseteq V(G)$ such that $G[S]$
is of treewidth less than $k$ and $G[S] \models \phi(M)$, 
if we denote $H = G[S]$, then
$\forked{H}^M$ is an induced subgraph of $G'$ of treewidth less than $k'$, $\wei'(V(\forked{H}^M)) = \wei(M)$, 
and $\forked{H}^M \models \phi'$. 
\end{proof}

For an example application of \cref{thm:main2}, consider the {\sc{Maximum Induced Cycle Packing}} problem: given an (unweighted) graph $G$, find the largest (in terms of cardinality) collection of pairwise non-adjacent induced cycles in $G$. Consider the following property of a graph $G$ and a vertex subset $M\subseteq V(G)$: $G$ is a disjoint union of cycles and every connected component of $G$ contains exactly one vertex of $M$. It is straightforward to write a \cmsotwo formula $\varphi(M)$ such that $G\models \varphi(M)$ if and only if $G$ and $M$ have this property. Noting that disjoint unions of cycles are $2$-degenerate, we can apply \cref{thm:main2} for the formula $\phi(X)$ to conclude that the {\sc{Maximum Induced Cycle Packing}} problem admits a $n^{\Oh(\log^3 n)}$-time algorithm on $C_{>t}$-free graphs, for every fixed $t$. Here, we endow the input graph with a weight function that assigns a unit weight to every vertex.

\section{A simple technique for approximation schemes}\label{sec:packing}
In this final section we present a simple technique for turning polynomial-time and quasipolynomial-time algorithms for {\sc{MWIS}} on $P_t$-free and $C_{>t}$-free graphs into PTASes and QPTASes for more general problem, definable as looking for the largest induced subgraph that belongs to some weakly hyperfinite class. Let us stress that this technique works only for unweighted problems.

We define the {\em{blob graph}} of a graph $G$, denoted~$\Blob{G}$, as the graph defined as follows:
\begin{align*}
V(\Blob{G}) & \coloneqq \{ X \subseteq V(G) ~|~ G[X] \text{ is connected} \},\\
E(\Blob{G}) & \coloneqq \{ X_1X_2 ~|~ X_1 \text{ and } X_2 \text{ are adjacent} \}.
\end{align*}
The main combinatorial insight of this section is the following combinatorial property of $\Blob{G}$.
Let us point out that a similar result could be derived from the work of Cameron and Hell~\cite{DBLP:journals/mp/CameronH06}, although it is not stated there explicitly.

\begin{theorem}\label{thm:gstar}
Let $G$ be a graph. The following hold.
\begin{enumerate}[label=(S\arabic*),ref=(S\arabic*),leftmargin=*]
\item The length of a longest induced path in $\Blob{G}$ is equal to the length of a longest induced path in $G$. \label{it:path}
\item The length of a longest induced cycle in $\Blob{G}$ is equal to the length of a longest induced cycle in $G$, with the exception 
that if $G$ has no cycle at all ($G$ is a forest), then $\Blob{G}$ may contain triangles, but it has no induced cycles of length larger than $3$ (i.e. it is a chordal graph). \label{it:cycle}
\end{enumerate}
\end{theorem}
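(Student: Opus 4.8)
The plan is to handle the two statements essentially in parallel, by exhibiting, in each direction, a way to transform an induced path (or cycle) in one graph into an induced path (or cycle) of the same length in the other. The easy direction is to embed $G$ into $\Blob{G}$: the map $v\mapsto \{v\}$ is an isomorphism from $G$ onto the induced subgraph of $\Blob{G}$ on singletons, since $\{u\}$ and $\{v\}$ are adjacent in $\Blob{G}$ precisely when $uv\in E(G)$. Hence any induced path or induced cycle of $G$ lifts verbatim to an induced path or cycle of $\Blob{G}$ of the same length, which gives the ``$\geq$'' inequality in \ref{it:path} and \ref{it:cycle}. So the whole content is the reverse inequality: an induced path/cycle in $\Blob{G}$ cannot be longer than the corresponding parameter of $G$.

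For the reverse direction in \ref{it:path}, I would start from an induced path $X_0,X_1,\dots,X_\ell$ in $\Blob{G}$, where each $X_i\subseteq V(G)$ induces a connected subgraph, consecutive $X_i$'s are adjacent, and non-consecutive ones are non-adjacent. The goal is to pick one vertex $x_i\in X_i$ for each $i$ so that $x_0,x_1,\dots,x_\ell$ is (the vertex set of) an induced path in $G$ of the same length; note that non-adjacency of $X_i,X_j$ for $|i-j|\ge 2$ is automatically inherited by any choices $x_i\in X_i$, $x_j\in X_j$, so the only thing to arrange is that consecutive chosen vertices are adjacent in $G$ while remaining distinct. This is a routing argument inside the ``union blob'' $Y\coloneqq X_0\cup\dots\cup X_\ell$: one shows that $G[Y]$ contains a path visiting the blobs in order, and then extracts from a shortest such walk an induced path whose trace on each $X_i$ is nonempty and which, by minimality, cannot skip a blob or revisit one. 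Concretely, I would build a walk by concatenating, for each $i$, an edge from $X_i$ to $X_{i+1}$ (which exists by adjacency) together with short connecting paths inside each connected $G[X_i]$; then take a shortest $x$--$y$ path in $G[Y]$ from $X_0$ to $X_\ell$ that meets every $X_i$; minimality plus the non-adjacency of far-apart blobs forces this path to be induced and to have exactly $\ell$ edges. The main technical care is the bookkeeping that a shortest ``blob-respecting'' walk really does yield an induced path of length exactly $\ell$ and not shorter — i.e., that it cannot ``shortcut'' by using an edge inside one blob to jump over an intermediate blob, which is exactly where induced-ness of the path in $\Blob{G}$ (forcing $X_i,X_{i+2}$ nonadjacent) is used.

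For \ref{it:cycle} the same routing idea is applied cyclically: from an induced cycle $X_0,X_1,\dots,X_{\ell-1},X_0$ in $\Blob{G}$ with $\ell\geq 4$, form $Y=\bigcup_i X_i$, route a closed blob-respecting walk through the blobs in cyclic order, and extract a shortest such closed walk; minimality together with non-adjacency of $X_i,X_j$ for $i,j$ cyclically far apart forces the result to be an induced cycle in $G$ meeting every blob, hence of length $\geq \ell$ (combined with the easy direction, $=\ell$). The exceptional case is when $G$ is a forest: then $G$ has no induced cycle, yet $\Blob{G}$ can contain a triangle (take three pairwise-adjacent blobs, e.g. three subtrees pairwise joined by edges), so the statement only claims $\Blob{G}$ is chordal in that case. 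To prove chordality of $\Blob{G}$ when $G$ is a forest, I would argue by contradiction from an induced cycle $X_0,\dots,X_{\ell-1}$ in $\Blob{G}$ of length $\ell\geq 4$: running the same extraction produces an induced cycle of length $\geq 4$ in $G$, contradicting that $G$ is a forest. The part I expect to be the genuine obstacle is making the ``shortest blob-respecting walk is induced'' step fully rigorous in the cyclic setting, since one must rule out chords via the induced-cycle structure in $\Blob{G}$ and simultaneously ensure the extracted cycle visits \emph{all} $\ell$ blobs (so that its length is at least $\ell$, not merely at least $3$); handling small $\ell$ (length $4$) carefully is where the ``$\ell\geq 4$'' threshold and the forest exception come from.
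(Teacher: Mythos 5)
Your easy direction (singletons embed $G$ into $\Blob{G}$) and the general idea of routing a shortest path through $Y=\bigcup_i X_i$ and using non-adjacency of far-apart blobs are in the spirit of the paper, but the way you justify the length bound does not work. You claim the extracted path ``meets every $X_i$'' and therefore has ``exactly $\ell$ edges''. Exactly $\ell$ is simply false: take $X_0=\{a\}$, $X_2=\{e\}$, and $X_1=\{b,c,d\}$ inducing a path, with $ab$ and $de$ the only edges leaving $X_1$; then every induced path of $G$ from $X_0$ to $X_2$ has four edges although $\ell=2$. More importantly, ``meets every blob'' does not even imply length at least $\ell$: adjacency in $\Blob{G}$ includes intersection, so consecutive blobs of an induced path in $\Blob{G}$ may share vertices, and the first blob may even be contained in the second; a single vertex of $G$ can then account for two blobs, so visiting all $\ell+1$ blobs forces only about $(\ell+1)/2$ vertices. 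The bound really comes from a cumulative-index argument, namely \cref{cl:creep}: along any path in $G[Y]$ the largest blob index met so far grows by at most one per step (an edge of $G$ between $X_i$ and $X_j$ with $|i-j|\geq 2$ would contradict the blob path being induced), together with care at the starting end — this is exactly why \cref{cl:path} assumes $X_1\not\subseteq X_2$ and why the paper runs a separate surgery (restarting from a component of $X_2\setminus X_1$ and prepending one vertex of $X_1$) in the nested case. Your plan never engages with overlapping or nested blobs, and note also that once you impose ``meets every $X_i$'' as a side constraint, the standard ``shortest path is induced'' argument no longer applies as stated (it can be rescued by showing the constraint is automatic, but you do not do this).

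The cycle case \ref{it:cycle} is the genuine gap. There is no minimality principle making a ``shortest closed blob-respecting walk'' an induced cycle: a chord does not permit a shortcut, because the shortcut may destroy the required cyclic visiting order, so chords can survive minimization, and you offer no substitute argument; the ``visits all blobs, hence length at least $\ell$'' step inherits the overlap problem above, and in the cyclic setting there is not even a free endpoint at which to anchor the index-climbing argument. The paper's proof is built precisely to handle this: since the blob cycle is induced and has length at least $4$, one has $X_t\not\subseteq X_{t-1}$ and $X_t\not\subseteq X_1$; the cycle is cut open at $X_t$, a shortest path $Q$ inside $X_t$ is taken between the vertices of $X_t$ attaching to $X_1$ and those attaching to $X_{t-1}$ (so that internal vertices of $Q$ have no neighbours among the other blobs), and then the strengthened \cref{cl:path} is applied to the blob path $X_1,\dots,X_{t-1}$ with endpoints restricted to $N(x^1)\cap X_1$ and $N(x^{t-1})\cap X_{t-1}$ and with the guarantee that internal vertices avoid these endpoint sets; it is exactly this endpoint control that makes the concatenation of the two pieces chordless and of length at least $t$. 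Some version of this cutting-and-gluing machinery (or another concrete mechanism for chordlessness of the assembled cycle) is unavoidable, and it is absent from your plan; consequently the proposal as written does not establish \ref{it:cycle}, nor the chordality claim for forests, which you reduce to the same unproved extraction step.
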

\begin{proof}
Note that since $G$ is an induced subgraph of $\Blob{G}$ (as witnessed by the mapping $u\mapsto \{u\}$), we only need to upper-bound the length of a longest induced path (resp., cycle) in $\Blob{G}$ by the length of a longest induced path (resp. cycle) in $G$.

Let $\Blob{P}=X_1,X_2,\ldots,X_t$ be an induced path in $\Blob{G}$.
We observe that the graph $G[\bigcup_{j=1}^t X_j]$ is connected and for each $j' \in [t-2]$ the sets $\bigcup_{j=1}^{j'} X_j$ and $\bigcup_{j=j'+2}^t X_j$ are nonadjacent.

Fix an induced path $P=v_1,v_2,\ldots,v_p$ in $G[\bigcup_{i=1}^t X_i]$. We define 
\[\reach(i) \coloneqq \max\{ j ~|~ \{v_1,v_2,\ldots,v_i\} \cap X_j \neq \emptyset\}.\]

\begin{claim}\label{cl:creep}
For all $i \in [p-1]$ it holds that $\reach(i+1) \in \{\reach(i), \reach(i)+1\}$.
\end{claim}
\begin{claimproof}
It is clear that $\reach(i+1) \geq \reach(i)$, so suppose $\reach(i+1) \geq \reach(i)+2$.
Since $v_iv_{i+1}$ is an edge of $G$, we conclude that there is an edge in $\Blob{G}$ between the sets $\{ X_j ~|~ j \leq \reach(i)\}$ and $\{X_j ~|~j \geq \reach(i)+2\}$, a contradiction with $\Blob{P}$ being induced.
\end{claimproof}

The following claim encapsulates the main idea of the proof.

\begin{claim}\label{cl:path}
Let $\Blob{P} = X_1,X_2,\ldots,X_t$ be an induced path in $\Blob{G}$ such that $X_1 \not\subseteq X_2$.
Let $X_1' \subseteq X_1 \setminus X_2$ and $X'_t \subseteq X_t$ be nonempty sets. 
Let $P=v_1,v_2,\ldots,v_p$ be a shortest path in $G[\bigcup_{j=1}^t X_j]$ such that $v_1 \in X_1'$ and $v_p \in X_t'$.
Then $P$ is induced, $p \geq t$, and $\{v_2,v_3,\ldots,v_{p-1}\} \cap (X'_1 \cup X'_t) = \emptyset$.
\end{claim}
\begin{claimproof}
The path $P$ is induced and $\{v_2,v_3,\ldots,v_{p-1}\} \cap (X'_1 \cup X'_t) = \emptyset$ by the minimality assumption.
Recall that $X_1$ must be disjoint with $\bigcup_{j=3}^t X_j$.
Thus $\reach(1) = 1$ and $\reach(p)=t$, so the claim follows from \cref{cl:creep}.
\end{claimproof}

Now we are ready to prove \ref{it:path}. Our goal is to prove that if $\Blob{G}$ contains an induced path on $t$ vertices, then so does $G$. If $t=1$, then the statement is trivial,
so assume that $t \geq 2$ and let $\Blob{P} = X_1,X_2,\ldots,X_t$ be an induced path in $\Blob{G}$.

If $X_1 \not\subseteq X_2$, then we are done by \cref{cl:path} applied to $\Blob{P}$ for $X'_1 = X_1 \setminus X_2$ and $X'_t = X_t$.
So assume that $X_1 \subseteq X_2$ and note that $X_2 \not\subseteq X_1$, for $X_1$ and $X_2$ are two different vertices of $\Blob{P}$.
If $t = 2$, then any edge from $X_1$ to $X_2 \setminus X_1$ is an induced path in $G$ with two vertices;
such an edge exists as $G[X_2]$ is connected. So from now on we may assume $t \geq 3$.

Let $X_2' \subseteq X_2 \setminus X_1$ be such that $G[X_2']$ is a connected component of $G[X_2 \setminus X_1]$
and $X_2'$ and $X_3$ are adjacent.
Such a set exists as $X_3$ is adjacent to $X_2$, but nonadjacent to $X_1$.
Note that $G[X_2]$ being connected implies that there exists a nonempty set $X''_2 \subseteq X'_2$, such that every vertex from $X''_2$ has a neighbor in~$X_1$. Furthermore, $X''_2 \cap X_3 = \emptyset$, as $X_1$ is nonadjacent to $X_3$.
Observe that ${\Blob{\wh{P}}}\coloneqq X_2',X_3,\ldots,X_t$ is an induced path in $\Blob{G}$ with at least $t-1 \geq 2$ vertices, such that $X_2' \not\subseteq X_3$.
Let $P'=v_2,v_3,\ldots,v_p$ be the induced path in $G$ with at least $t-1$ vertices obtained by \cref{cl:path} applied to ${\Blob{\wh{P}}}$, $X''_2$, and $X_t$.
Now recall that $v_2 \in X_2''$, so there is $v_1 \in X_1$ adjacent to $v_2$. Note that $v_1$ is nonadjacent to every $v_i$ for $i > 2$, because $v_i\notin X_2''$ for $i>2$. Thus $P\coloneqq v_1,v_2,\ldots,v_p$ is an induced path in $G$ with at least $t$ vertices.

Now let us prove \ref{it:cycle}. We proceed similarly to the proof of \ref{it:path}.
If $\Blob{G}$ is chordal (every induced cycle is of length $3$), then we are done by the 
exceptional case of the statement. Otherwise, 
let $\Blob{C}=X_1,X_2,\ldots,X_t$ be an induced cycle in $\Blob{G}$ for some $t \geq 4$; we want to find an induced cycle of length at least $t$ in $G$.
Note that $X_t \not\subseteq X_{t-1}$ and $X_t \not\subseteq X_{1}$, as otherwise $\Blob{C}$ is not induced.
We observe that there are nonempty sets $X^1_t \subseteq X_t$ and $X^{t-1}_t \subseteq X_t$,
such that every vertex from $X^1_t$ has a neighbor in $X_1$ and every vertex from $X^{t-1}_t$ has a neighbor in $X_{t-1}$.
Let $Q$ be a shortest path contained in $X_t$ whose one endvertex, say $x^1$ is in $X^1_t$ and the other endvertex,
say $x^{t-1}$ is in $X^{t-1}_t$. Note that it is possible that $x^1=x^{t-1}$.
The minimality of $Q$ implies that no vertex of $Q$, except for $x^1,x^{t-1}$, has a neighbor in $\bigcup_{j=1}^{t-1} X_j$.

Let $\Blob{P}$ be the induced path $X_1,X_2,\ldots,X_{t-1}$.
Denote $X'_1 \coloneqq N(x^1) \cap X_1$ and $X'_{t-1} \coloneqq N(x^{t-1}) \cap X_{t-1}$.
Recall that both these sets are nonempty and $X'_1 \cap X_2 = \emptyset$ and $X'_{t-1} \cap X_{t-2} = \emptyset$.
Let $P=v_1,v_2,\ldots,v_p$ be the induced path given by \cref{cl:path} for $\Blob{P}$, $X'_1$, and $X'_{t-1}$.
Recall that $p \geq t-1$.
Now let $C$ be the cycle obtained by concatenating $P$ and $Q$, and observe that the cycle $C$ is induced.
Furthermore, as $P$ has at least $t-1$ vertices and $Q$ has at least one vertex, $C$ has at least $t$ vertices, which completes the proof.
\end{proof}

Let us define an auxiliary problem called \MIP.
An instance of \MIP is a triple $(G,\Fc, \wei)$, where $G$ is a graph,
$\Fc$ is a family of connected induced subgraph of~$G$,
and $\wei \colon \Fc \to \mathbb{R}_+$ is a weight function.
A {\em{solution}} to $(G,\Fc,\wei)$ is a set $X \subseteq V(G)$, such that
\begin{itemize}
\item each connected component of $G[X]$ belongs to $\Fc$; and
\item $\sum_{C\colon \text{ component of } G[X]} \wei(C)$ is maximized.
\end{itemize}
We observe the following.

\begin{theorem}\label{thm:packing}
Let $(G,\Fc,\wei)$ be an instance of \MIP, where $|\Fc|=N$.
\begin{enumerate}
\item If $G$ is $P_t$-free for some integer $t$, then the instance $(G,\Fc,\wei)$ can be solved in time $N^{\Oh(\log^2 N)}$.
\item If $G$ is $C_{>t}$-free for some integer $t$, then the instance $(G,\Fc,\wei)$ can be solved in time $N^{\Oh(\log^3 N)}$.
\item If $G$ is $P_6$-free or $C_{>4}$-free, then the instance $(G,\Fc,\wei)$ can be solved in time $N^{\Oh(1)}$.
\end{enumerate}
\end{theorem}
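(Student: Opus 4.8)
The plan is to reduce \MIP on $G$ to \MIS on the blob graph $\Blob{G}$ and then to invoke \cref{thm:gstar} together with known algorithms for \MIS. Given an instance $(G,\Fc,\wei)$, let $H$ be the subgraph of $\Blob{G}$ induced by the vertex set $\Fc$; that is, $V(H)=\Fc$ and two members $F_1,F_2\in\Fc$ are adjacent in $H$ if and only if they are adjacent as vertex subsets of $G$. Since $\Fc$ is given explicitly as part of the input, $H$ can be constructed in time polynomial in the input size by testing adjacency for each of the $\binom{N}{2}$ pairs of members of $\Fc$.

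First I would verify that \MIP on $(G,\Fc,\wei)$ is equivalent to \MIS on $(H,\wei)$. If $\{F_1,\dots,F_k\}$ is an independent set in $H$, then the $F_i$ are connected, belong to $\Fc$, and are pairwise nonadjacent in $G$; hence, setting $X\coloneqq\bigcup_i F_i$, the connected components of $G[X]$ are exactly $F_1,\dots,F_k$, so $X$ is a solution to $(G,\Fc,\wei)$ of value $\sum_i\wei(F_i)$. Conversely, the connected components of any \MIP solution $X$ are members of $\Fc$ that are pairwise nonadjacent in $G$, hence form an independent set in $H$ of the same weight. Thus the two optima coincide and an optimal independent set in $H$ yields an optimal \MIP solution in linear time. (We may assume the weights are natural numbers by clearing denominators, so as to match the form of the statements we will cite.)

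Next I would transport the structural restriction from $G$ to $H$ using \cref{thm:gstar}. As $H$ is an induced subgraph of $\Blob{G}$, property~\ref{it:path} gives that $H$ is $P_s$-free whenever $G$ is $P_s$-free, and property~\ref{it:cycle} gives that $H$ is $C_{>\max(s,3)}$-free whenever $G$ is $C_{>s}$-free — the forest exception in \cref{thm:gstar} only introduces triangles and keeps $\Blob{G}$ chordal, i.e.\ $C_{>3}$-free. In particular: if $G$ is $P_t$-free then $H$ is $P_t$-free; if $G$ is $C_{>t}$-free then $H$ is $C_{>\max(t,3)}$-free; if $G$ is $P_6$-free then $H$ is $P_6$-free; and if $G$ is $C_{>4}$-free then $H$ is $C_{>4}$-free.

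Finally I would plug in algorithms for \MIS. Since \MIS is the $d=0$ instantiation of \cref{thm:deg-branching} (a $0$-degenerate graph is exactly an edgeless one), we solve $(H,\wei)$ in time $N^{\Oh(\log^2 N)}$ when $H$ is $P_t$-free and in time $N^{\Oh(\log^3 N)}$ when $H$ is $C_{>t}$-free, which gives parts~1 and~2. For part~3 we instead use the polynomial-time algorithms for \MIS on $P_6$-free graphs~\cite{GrzesikKPP19} and on $C_{>4}$-free graphs (the $k=0$ case of the algorithm of Abrishami et al.~\cite{ACPRzS} for maximum-weight induced subgraphs of treewidth at most~$k$), obtaining running time $N^{\Oh(1)}$. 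There is essentially no combinatorial obstacle here beyond \cref{thm:gstar}, which is already in hand; the only points that need care are the correctness of the blob-graph reduction — namely that the connected components of $G[X]$ coincide with the chosen members of $\Fc$ — and the degenerate forest case of property~\ref{it:cycle}, which is why we pass to $C_{>\max(t,3)}$-freeness rather than $C_{>t}$-freeness.
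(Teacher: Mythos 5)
Your proposal is correct and follows essentially the same route as the paper: pass to the subgraph of $\Blob{G}$ induced by $\Fc$, observe the equivalence with \MIS, transfer $P_t$-freeness/$C_{>t}$-freeness via \cref{thm:gstar}, and invoke the known \MIS algorithms (the paper cites \cref{thm:main} restricted to \MIS and~\cite{GL20,PPR20SOSA,GrzesikKPP19,ACPRzS}, while you use \cref{thm:deg-branching} with $d=0$, which amounts to the same thing). Your extra care about the chordal/forest exception and the explicit correctness check of the reduction only spell out details the paper leaves implicit.
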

\begin{proof}
Let $G'$ be the subgraph of $\Blob{G}$ induced by $\Fc$. Clearly, $G'$ has $N$ vertices.
We observe that solving the instance $(G,\Fc,\wei)$ of \MIP is equivalent to solving the instance $(G',\wei)$ of {\sc{MWIS}}.
Now the theorem follows from \cref{thm:gstar} and the fact that {\sc{MWIS}} can be solved in time $n^{\Oh(\log^2n)}$ in $n$-vertex $P_t$-free graphs~\cite{GL20,PPR20SOSA}, 
in time $n^{\Oh(\log^3n)}$ in $n$-vertex $C_{>t}$-free graphs, using \cref{thm:main} only for \textsc{MWIS}, 
and in polynomial time in $P_6$-free~\cite{GrzesikKPP19} or $C_{>4}$-free graphs~\cite{ACPRzS}.
\end{proof}

As an example of an application of \cref{thm:packing}, we obtain the following corollary.
\begin{corollary}
For every fixed $d$ and $t$, given an $n$-vertex $P_t$-free graph $G$, 
in time $n^{\Oh(\log^2 n)}$ we can find the largest induced subgraph of $G$ with maximum degree at most $d$.
\end{corollary}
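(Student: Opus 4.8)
The statement to prove is the final corollary: for every fixed $d$ and $t$, given an $n$-vertex $P_t$-free graph $G$, in time $n^{\Oh(\log^2 n)}$ we can find the largest induced subgraph of $G$ with maximum degree at most $d$.

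\medskip

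The plan is to reduce this to the \MIP problem and invoke \cref{thm:packing}(1). First I would let $\Cc_d$ denote the class of connected graphs with maximum degree at most $d$. A key observation is that any graph with maximum degree at most $d$ has at most $d+1$ vertices in each connected component that is... no, that is false --- paths have maximum degree $2$ and arbitrarily many vertices. So instead, the right observation is the following: if $H$ is an induced subgraph of $G$ with maximum degree at most $d$, and $G$ is $P_t$-free, then every connected component $C$ of $H$ is $P_t$-free and has maximum degree at most $d$; a connected $P_t$-free graph of maximum degree at most $d$ has at most $\sum_{i=0}^{t-1} d^i \le d^t$ vertices, since doing a BFS from any vertex reaches depth less than $t$ (a longer BFS path would give an induced $P_t$) and branching is bounded by $d$. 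Hence each connected component of a feasible solution is one of at most $n^{d^t} = n^{\Oh(1)}$ possible vertex subsets.

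\medskip

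Next I would set up the \MIP instance. Let $\Fc$ be the family of all vertex subsets $X \subseteq V(G)$ such that $G[X]$ is connected, has maximum degree at most $d$, and $|X| \le d^t$; by the counting above, $|\Fc| = N = n^{\Oh(1)}$, and $\Fc$ can be enumerated in time $n^{\Oh(1)}$ (iterate over all subsets of size at most $d^t$ and check the two properties). Assign $\wei(X) = |X|$ for each $X \in \Fc$. Then a solution to $(G,\Fc,\wei)$ is exactly a set $S \subseteq V(G)$ such that every connected component of $G[S]$ lies in $\Fc$ --- equivalently, $G[S]$ has maximum degree at most $d$ (the size bound on components is automatic by the $P_t$-freeness argument, so it imposes no real restriction) --- maximizing $\sum_C |V(C)| = |S|$. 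Thus the optimum of the \MIP instance is precisely the maximum number of vertices of an induced subgraph of $G$ of maximum degree at most $d$, and the returned set $X \subseteq V(G)$ is such a subgraph.

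\medskip

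Finally I would apply \cref{thm:packing}(1): since $G$ is $P_t$-free and $|\Fc| = N = n^{\Oh(1)}$, the instance $(G,\Fc,\wei)$ is solved in time $N^{\Oh(\log^2 N)} = n^{\Oh(\log^2 n)}$, using that $\log N = \Oh(\log n)$. This yields the claimed running time. I do not anticipate a genuine obstacle here --- the only point that needs a moment's care is the bound on the size of connected components, which crucially uses $P_t$-freeness (without it the family $\Fc$ would be infinite and the reduction would fail); this is exactly why the corollary is stated for $P_t$-free graphs rather than arbitrary graphs. One should also double-check that adding the size cap $|X| \le d^t$ to the definition of $\Fc$ does not lose any feasible solution, which is precisely the content of the BFS argument above.
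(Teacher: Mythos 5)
Your proposal is correct and follows essentially the same route as the paper: bound the size of any connected $P_t$-free graph of maximum degree at most $d$ by $d^t$, enumerate the polynomially many such connected induced subgraphs as the family $\Fc$ with $\wei(F)=|V(F)|$, and invoke \cref{thm:packing} to solve the resulting \MIP instance in time $N^{\Oh(\log^2 N)}=n^{\Oh(\log^2 n)}$. The only difference is that you spell out the BFS/shortest-path argument for the $d^t$ bound, which the paper states without proof.
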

\begin{proof}
Note that every connected $P_t$-free graph with maximum degree at most $d$ has at most $d^t$ vertices.
Thus, the family $\Fc$ of all connected induced subgraphs of $G$ with maximum degree at most $d$ has size at most~$N\coloneqq n^{d^t}$ and can be enumerated in polynomial time. For each $F \in \Fc$ set $\wei(F) \coloneqq |V(F)|$. We may now
apply \cref{thm:packing} to solve the instance $(G,\Fc,\wei)$ of \MIP in time $N^{\Oh(\log ^2 N)} = n^{\Oh(\log^2 n)}$.
\end{proof}

Note that the strategy we used to prove \cref{thm:packing} cannot be used to solve \MIF in quasipolynomial time,
as there can be arbitrarily larger $P_t$-free tree; consider, for instance, the family of stars.
However, it is sufficient to obtain a simple QPTAS for the unweighted version of the problem.

A class of graphs $\Cc$ is called \emph{weakly hyperfinite} if for every $\epsilon > 0$ there is $c(\epsilon) \in \mathbb{N}$,
such that in every graph $F \in \Cc$ there is a subset $X$ of at least $(1-\epsilon) |V(F)|$ vertices
such that every connected component of $F[X]$ has at most $c(\epsilon)$ vertices~\cite[Section 16.2]{sparsity}.
Weakly hyperfinite classes are also known under the name \emph{fragmentable}~\cite{EDWARDS200130}. Every class closed under edge and vertex deletion which has sublinear separators is weakly hyperfinite~\cite[Theorem 16.5]{sparsity}, hence well-known classes of sparse graphs, such as planar graphs, graphs of bounded genus, or in fact all proper minor-closed classes, are weakly hyperfinite.

For a class $\Cc$ of graphs, by \textsc{Largest Induced $\Cc$-Graph} we denote the following problem: given a graph $G$, find a largest induced subgraph of~$G$, which belongs to $\Cc$. To make the problem well defined, we will always assume that $K_1 \in \Cc$. We can now conclude the following.

\begin{theorem} \label{thm:qptas}
Let $\Cc$ be a nonempty, weakly hyperfinite class of graphs, which is closed under vertex deletion and disjoint union operations.
Then, the \textsc{Largest Induced $\Cc$-Graph} problem 
\begin{enumerate}
\item has a QPTAS in $C_{>t}$-free graphs, for every fixed $t$; and
\item has a PTAS in $P_6$-free graphs and in $C_{>4}$-free graphs.
\end{enumerate}
\end{theorem}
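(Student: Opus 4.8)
The plan is to combine \cref{thm:packing} with the weak hyperfiniteness of $\Cc$ in a standard ``shifting/removal'' fashion. Fix $\eps>0$ and let $c=c(\eps)$ be the constant promised by weak hyperfiniteness of $\Cc$. The key observation is: if $H$ is an induced subgraph of $G$ with $H\in\Cc$, then by weak hyperfiniteness there is $X\subseteq V(H)$ with $|X|\ge(1-\eps)|V(H)|$ such that every connected component of $H[X]$ has at most $c$ vertices; and since $\Cc$ is closed under vertex deletion and disjoint unions, $H[X]\in\Cc$ as well, so $H[X]$ is itself a feasible solution. Consequently, the optimum value of \textsc{Largest Induced $\Cc$-Graph} restricted to solutions all of whose connected components have at most $c$ vertices is within a factor $(1-\eps)$ of the true optimum. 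This ``small-component'' version is exactly an instance of \MIP: let $\Fc$ be the family of all connected induced subgraphs of $G$ on at most $c$ vertices that belong to $\Cc$, with $\wei(F)\coloneqq|V(F)|$, and observe that any solution $X$ to $(G,\Fc,\wei)$ has $G[X]\in\Cc$ (again by closure under disjoint unions, since each component of $G[X]$ lies in $\Fc\subseteq\Cc$), and conversely any $\Cc$-subgraph with all components of size $\le c$ yields a solution of the same value. Thus an exact algorithm for this \MIP instance is a $(1-\eps)$-approximation for \textsc{Largest Induced $\Cc$-Graph}.

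First I would bound $N\coloneqq|\Fc|$: every element of $\Fc$ is a connected subgraph on at most $c$ vertices, so $N\le \sum_{i=1}^{c}\binom{n}{i}\le c\cdot n^{c}=n^{\Oh_\eps(1)}$, and moreover $\Fc$ can be enumerated in time $n^{\Oh_\eps(1)}$ (iterate over all vertex subsets of size $\le c$, test connectivity and membership in $\Cc$ --- the latter is decidable since $\Cc$ is fixed and the candidate graphs have bounded size). Then I would invoke \cref{thm:packing}: since $G$ is $C_{>t}$-free, part (2) solves $(G,\Fc,\wei)$ in time $N^{\Oh(\log^3 N)}=n^{\Oh_\eps(\log^3 n)}$, which is quasipolynomial for fixed $\eps$, giving the QPTAS in statement~(1). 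For statement~(2), when $G$ is $P_6$-free or $C_{>4}$-free, part (3) of \cref{thm:packing} solves the same instance in time $N^{\Oh(1)}=n^{\Oh_\eps(1)}$, giving a PTAS. (The hypothesis $K_1\in\Cc$ guarantees $\Fc\neq\emptyset$, so the problem is well-defined and the empty solution is always feasible.)

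I do not expect a serious obstacle here; the argument is a routine reduction. The one point requiring a little care is the ``conversely'' direction --- namely that restricting to small-component solutions only loses a $(1-\eps)$ factor --- which uses that $\Cc$ is closed under \emph{both} vertex deletion (to pass from $H$ to $H[X]$) and disjoint unions (to know $H[X]\in\Cc$ and to reassemble solutions from $\Fc$); both are in the hypotheses. A second minor point is that the running time of \cref{thm:packing} is stated in terms of $N=|\Fc|$, so one must record that $N=n^{\Oh_\eps(1)}$ to convert the bound into one in terms of $n$ (with the implied constants depending on $\eps$, as is allowed for an approximation scheme). With these observations in place the proof is complete.
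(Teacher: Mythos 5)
Your proposal is correct and follows essentially the same route as the paper: apply weak hyperfiniteness to an optimal solution to argue that restricting to solutions whose components have at most $c(\eps)$ vertices loses only a $(1-\eps)$ factor, then reduce to \MIP with $\Fc$ the family of connected induced $\Cc$-subgraphs on at most $c(\eps)$ vertices (weighted by vertex count) and invoke \cref{thm:packing}. Your write-up merely spells out a couple of routine details (the explicit bound on $N$ and the two-way correspondence between \MIP solutions and small-component $\Cc$-solutions) that the paper treats implicitly.
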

\begin{proof}
Let $n$ be the number of vertices of the given graph $G$ and let $\eps$ be the desired accuracy,
i.e., the goal is to find a solution whose size is at least a $(1-\eps)$ fraction of the optimum.
Let $c \coloneqq c(\epsilon)$.

Let $X^*$ be the vertex set of an optimum solution.
By the properties of $\Cc$, there exists $X' \subseteq X^*$ of size at least $(1-\epsilon)|X^*|$
such that each connected component of $G[X']$ has at most $c$ vertices.
Let $\Fc$ be the set of all connected induced subgraphs of $G$ that have at most $c$ vertices and belong to $\Cc$.
Clearly $|\Fc| \leq n^c$ and $\Fc$ can be enumerated in polynomial time. For each $F \in \Fc$, we set $\wei(F)\coloneqq|V(F)|$.

Apply the algorithm of \cref{thm:packing} to solve the  instance $(G,\Fc,\wei)$ of \MIP in time $n^{\Oh(\log^3 n^c)}=n^{\Oh(\log^3 n)}$ if $G$ is $C_{>t}$-free, or in polynomial time if $G$ is $P_6$-free or $C_{>4}$-free.
Let $X$ be the optimum solution found by the algorithm. As $\Cc$ is closed under the disjoint union operation, we observe that $G[X]$ is a feasible solution to \textsc{Largest Induced $\Cc$-Graph}.
Moreover we have $|X| \geq |X'| \geq (1-\epsilon)|X^*|$.
\end{proof}
%

\bibliographystyle{abbrv}
\bibliography{references}

\end{document}